\DeclareFontShape{T1}{lmr}{bx}{sc} { <-> ssub * cmr/bx/sc }{}
\algrenewcommand\alglinenumber[1]{\small #1:}
\newcommand{\cmark}{\ding{51}}%
\newcommand{\xmark}{\ding{55}}%
\newtheorem{fact}[theorem]{Fact}
\newcommand*{\algoname}{tri\`est}
\newcommand*{\algonamecaps}{TRI\`EST}
\newcommand*{\algo}{\textsc{\algoname}\xspace}
\newcommand*{\algobase}{\textsc{\algoname-base}\xspace}
\newcommand*{\algofd}{\textsc{\algoname-fd}\xspace}
\newcommand*{\algoimproved}{\textsc{\algoname-impr}\xspace}
\newcommand*{\neigh}{\mathcal{N}}
\newcommand*{\variance}{\mathrm{Var}}
\newcommand*{\Sam}{\mathcal{S}}
\begin{document}

% Page heads
\markboth{Lorenzo De Stefani et al.}{\algonamecaps: Counting Triangles in
Fully-dynamic Streams with Fixed Memory Size}

\title{\algonamecaps: Counting Local and Global Triangles in Fully-dynamic
Streams with Fixed Memory Size}

\author{LORENZO DE STEFANI and ELI UPFAL
  \affil{Brown University}
  ALESSANDRO EPASTO
  \affil{Google}
  MATTEO RIONDATO
  \affil{Two Sigma Investments, LP}
}

\begin{abstract}
  \emph{``Ogni lassada xe persa''}\footnote{Any missed chance is lost forever.}
  -- Proverb from Trieste, Italy.
  \smallskip

  % 20160121 Matteo: Commenting out as it gives little added value. Let's get
  % immediately to the point.
  %
  %Modern networks are huge and dynamic (edges are inserted and deleted at all
  %times), and computing the exact number of triangles in the graph, and the
  %local number of triangles incident to each vertex, becomes infeasible due to
  %the graph size, and has limited value due to the ever-changing nature of
  %these quantities. Approximations of triangle counts are therefore necessary
  %and sufficient.
  We present \algo, a suite of one-pass streaming algorithms to compute
  unbiased, low-variance, high-quality approximations of the global and local
  (i.e., incident to each vertex) number of triangles in a fully-dynamic graph
  represented as an adversarial stream of edge insertions and deletions.

  Our algorithms use reservoir sampling and its variants to exploit the
  user-specified memory space at all times. This is in contrast with previous
  approaches, which require hard-to-choose parameters (e.g., a fixed sampling
  probability) and offer no guarantees on the amount of memory they use. We
  analyze the variance of the estimations and show novel concentration bounds
  for these quantities.

  Our experimental results on very large graphs demonstrate that \algo
  outperforms state-of-the-art approaches in accuracy and exhibits a small
  update time.
\end{abstract}

% The code below is generated by the tool at http://dl.acm.org/ccs.cfm

\begin{CCSXML}
<ccs2012>
<concept>
<concept_id>10002950.10003624.10003633.10003641</concept_id>
<concept_desc>Mathematics of computing~Graph enumeration</concept_desc>
<concept_significance>500</concept_significance>
</concept>
<concept>
<concept_id>10002950.10003648.10003671</concept_id>
<concept_desc>Mathematics of computing~Probabilistic algorithms</concept_desc>
<concept_significance>500</concept_significance>
</concept>
<concept>
<concept_id>10002951.10003227.10003351.10003446</concept_id>
<concept_desc>Information systems~Data stream mining</concept_desc>
<concept_significance>500</concept_significance>
</concept>
<concept>
<concept_id>10003120.10003130.10003131.10003292</concept_id>
<concept_desc>Human-centered computing~Social networks</concept_desc>
<concept_significance>500</concept_significance>
</concept>
<concept>
<concept_id>10003752.10003809.10003635.10010038</concept_id>
<concept_desc>Theory of computation~Dynamic graph algorithms</concept_desc>
<concept_significance>500</concept_significance>
</concept>
<concept>
<concept_id>10003752.10003809.10010055.10010057</concept_id>
<concept_desc>Theory of computation~Sketching and sampling</concept_desc>
<concept_significance>500</concept_significance>
</concept>
</ccs2012>
\end{CCSXML}

\ccsdesc[500]{Mathematics of computing~Graph enumeration}
\ccsdesc[500]{Mathematics of computing~Probabilistic algorithms}
\ccsdesc[500]{Information systems~Data stream mining}
\ccsdesc[500]{Human-centered computing~Social networks}
\ccsdesc[500]{Theory of computation~Dynamic graph algorithms}
\ccsdesc[500]{Theory of computation~Sketching and sampling}
%
% End generated code
%

\keywords{Cycle counting; Reservoir sampling; Subgraph counting;}
\acmformat{Lorenzo De Stefani, Alessandro Epasto, Matteo Riondato, and Eli
Upfal. 2016. \algonamecaps: Counting Local and Global Triangles in Fully-dynamic
Streams with Fixed Memory Size.}

\begin{bottomstuff}
  A preliminary report of this work appeared in the proceedings of ACM KDD'16
  as~\citep{DeStefaniERU16}.

  This work was supported in part by NSF grant IIS-1247581 and NIH grant
  R01-CA180776.

  Authors' addresses: Lorenzo De Stefani {and} Eli Upfal, Department of Computer
  Science, Brown University, email: \{lorenzo,eli\}@cs.brown.edu; Alessandro
  Epasto, Google Inc., email: aepasto@google.com; Matteo Riondato, Two
  Sigma Investments LP, email: matteo@twosigma.com.
\end{bottomstuff}

\maketitle

\section{Introduction}\label{sec:intro}
Exact computation of characteristic quantities of Web-scale networks is often
impractical or even infeasible due to the humongous size of these graphs. It is
natural in these cases to resort to \emph{efficient-to-compute approximations}
of these quantities that, when of sufficiently high quality, can be used as
proxies for the exact values.

In addition to being huge, many interesting networks are \emph{fully-dynamic}
and can be represented as a \emph{stream} whose elements are edges/nodes
insertions and deletions which occur in an \emph{arbitrary} (even adversarial)
order. Characteristic quantities in these graphs are \emph{intrinsically
volatile}, hence there is limited added value in maintaining them exactly.
The goal is rather to keep track, \emph{at all times}, of a high-quality
approximation of these quantities. For efficiency, the algorithms should
\emph{aim at exploiting the available memory space as much as possible} and they
should \emph{require only one pass over the stream}.

We introduce \algo, a suite of \emph{sampling-based, one-pass algorithms for
adversarial fully-dynamic streams to approximate the global number of triangles
and the local number of triangles incident to each vertex}. Mining local and
global triangles is a fundamental primitive with many applications (e.g.,
community detection~\citep{BerryHLVP11}, topic
mining~\citep{EckmannM02}, spam/anomaly
detection~\citep{BecchettiBCG10,LimK15},
ego-networks mining~\cite{epasto2015ego}
and protein interaction networks analysis~\citep{MiloSOIKA02}.)

Many previous works on triangle estimation in streams also employ sampling (see
Sect.~\ref{sec:relwork}), but they usually require the user to
specify \emph{in advance} an \emph{edge sampling probability $p$} that is fixed
for the entire stream. This approach presents several significant drawbacks.
First, choosing a $p$ that allows to obtain the desired approximation
quality requires to know or guess a number of properties of the input (e.g., the
size of the stream). Second, a fixed $p$ implies that the sample size grows with
the size of the stream, which is problematic when the stream size is not known
in advance: if the user specifies a large $p$, the algorithm may run out of
memory, while for a smaller $p$ it will provide a suboptimal estimation.
Third, even assuming to be able to compute a $p$ that ensures (in expectation)
full use of the available space, the memory would be fully utilized only at the
end of the stream, and the estimations computed throughout the execution would
be suboptimal.

\paragraph{Contributions}
We address all the above issues by taking a significant departure from
the fixed-probability, independent edge sampling approach taken even by
state-of-the-art methods~\citep{LimK15}. Specifically:

\begin{longitem}
  \item We introduce \algo (\emph{TRI}angle \emph{E}stimation from
    \emph{ST}reams), a suite of \emph{one-pass streaming algorithms} to
    approximate, at each time instant, the global and local number of
    triangles in a \emph{fully-dynamic} graph stream (i.e., a sequence of
    edges additions and deletions in arbitrary order) using a \emph{fixed
    amount of memory}. This is the first contribution that enjoys all these
    properties. \algo only requires the user to specify \emph{the amount of
    available memory}, an interpretable parameter that is definitively known
    to the user.
  \item Our algorithms maintain a sample of edges: they use the \emph{reservoir
    sampling}~\citep{Vitter85} and \emph{random pairing}~\citep{GemullaLH08}
    sampling schemes to exploit the available memory as much as possible. To the
    best of our knowledge, ours is the first application of these techniques to
    subgraph counting in fully-dynamic, arbitrarily long, adversarially ordered
    streams. We present an analysis of the unbiasedness and of the variance of
    our estimators, and establish strong concentration results for them. The use
    of reservoir sampling and random pairing requires additional sophistication
    in the analysis, as the presence of an edge in the sample is \emph{not
    independent} from the concurrent presence of another edge. Hence, in our
    proofs we must consider the complex dependencies in events involving sets of
    edges. The gain is worth the effort: we prove that the variance of our
    algorithms is smaller than that of state-of-the-art methods~\citep{LimK15},
    and this is confirmed by our experiments.
  \item We conduct an extensive experimental evaluation of \algo on very large
    graphs, some with billions of edges, comparing the performances of our
    algorithms to those of existing state-of-the-art
    contributions~\citep{LimK15,JhaSP15,PavanTTW13}. \emph{Our algorithms
    significantly and consistently reduce the average estimation error by up to
    $90\%$} w.r.t.~the state of the art, both in the global and local estimation
    problems, while using the same amount of memory. Our algorithms are also
    extremely scalable, showing update times in the order of hundreds of
    microseconds for graphs with billions of edges.
\end{longitem}

In this article, we extend the conference version~\citep{DeStefaniERU16} in
multiple ways. First of all, we include all proofs of our theoretical results
and give many additional details that were omitted from the conference version
due to lack of space. Secondly, we strengthen the analysis of \algo, presenting
tighter bounds to the variance of its variants. Thirdly, we show how to extend
\algo to approximate the count of triangles in multigraphs. Additionally, we
include a whole subsection of discussion of our results, highlighting their
advantages, disadvantages, and limitations. Finally, we expand our experimental
evaluation, reporting the results of additional experiments and giving
additional details on the comparison with existing state-of-the-art methods.

\paragraph{Paper organization}
We formally introduce the settings and the problem in
Sect.~\ref{sec:prelims}. In Sect.~\ref{sec:relwork} we discuss related works.
We present and analyze \algo and discuss our design choices in
Sect.~\ref{sec:algorithms}. The results of our experimental evaluation are
presented in Sect.~\ref{sec:experiments}. We draw our conclusions in
Sect.~\ref{sec:concl}. Some of the proofs of our theoretical results are
deferred to Appendix~\ref{sec:appendix}.

\section{Preliminaries}\label{sec:prelims}
We study the problem of counting global and local triangles in a fully-dynamic
undirected graph as an arbitrary (adversarial) stream of edge insertions and
deletions.

Formally, for any (discrete) time instant $t\ge 0$, let
$G^{(t)}=(V^{(t)},E^{(t)})$ be the graph observed up to and including time $t$.
At time $t=0$ we have $V^{(t)}=E^{(t)}=\emptyset$. For any $t>0$, at time $t+1$
we receive an element $e_{t+1}=(\bullet,(u,v))$  from a stream, where
$\bullet\in\{+,-\}$ and $u,v$ are two distinct vertices. The graph
$G^{(t+1)}=(V^{(t+1)},E^{(t+1)})$ is obtained by \emph{inserting a new edge or
deleting an existing edge} as follows:
\[
  E^{(t+1)}=\left\{
    \begin{array}{ll}
      E^{(t)}\cup\{(u,v)\}\mbox{ if } \bullet=\mbox{``}+\mbox{''}\\
      E^{(t)}\setminus\{(u,v)\}\mbox{ if } \bullet=\mbox{``}-\mbox{''}
    \end{array}
  \right.\enspace.
\]
If $u$ or $v$ do not belong to $V^{(t)}$, they are added to $V^{(t+1)}$. Nodes
are deleted from $V^{(t)}$ when they have degree zero.

Edges can be added and deleted in the graph in an arbitrary adversarial order,
i.e., as to cause the worst outcome for the algorithm, but we assume that the
adversary has \emph{no access to the random bits} used by the algorithm.
We assume that \emph{all operations have effect}: if $e\in E^{(t)}$
(resp.~$e\not\in E^{(t)}$), $(+,e)$ (resp.~$(-,e)$) can not be on the stream at
time $t+1$.

Given a graph $G^{(t)}=(V^{(t)},E^{(t)})$, a \emph{triangle} in $G^{(t)}$ is a
\emph{set} of three edges $\{(u,v),(v,w),(w,u)\}\subseteq E^{(t)}$, with $u$,
$v$, and $w$ being three distinct vertices. We refer to $\{u,v,w\}\subseteq
V^{(t)}$ as the \emph{corners} of the triangle. We denote with $\Delta^{(t)}$
the set of \emph{all} triangles in $G^{(t)}$, and, for any vertex $u\in
V^{(t)}$, with $\Delta^{(t)}_u$ the subset of $\Delta^{(t)}$ containing all and
only the triangles that have $u$ as a corner.

{\em Problem definition.} We study the \emph{Global} (resp.~\emph{Local})
Triangle Counting Problem in Fully-dynamic Streams, which requires to compute,
at each time $t\ge 0$ an estimation of $|\Delta^{(t)}|$ (resp.~for each $u\in V$
an estimation of $|\Delta^{(t)}_u|$).

\paragraph{Multigraphs} Our approach can be further extended to count the number
of global and local triangles on a \emph{multigraph} represented as a stream of
edges. Using a formalization analogous to that discussed for graphs, for any
(discrete) time instant $t\ge 0$, let $G^{(t)}=(V^{(t)},\mathcal{E}^{(t)})$ be
the multigraph observed up to and including time $t$, where $\mathcal{E}^{(t)}$
is now a \emph{bag} of edges between vertices of $V^{(t)}$. The multigraph
evolves through a series of edges additions and deletions according to the same
process described for graphs. The definition of triangle in a multigraph is also
the same. As before we denote with $\Delta^{(t)}$ the set of \emph{all}
triangles in $G^{(t)}$, but now this set may contain multiple triangles with the
same set of vertices, although each of these triangles will be a different set
of three edges among those vertices. For any vertex $u\in V^{(t)}$, we still
denote with $\Delta_u^{(t)}$ the subset of $\Delta^{(t)}$ containing all and
only the triangles that have $u$ as a corner, with a similar caveat as
$\Delta^{(t)}$. The problems of global and local triangle counting in multigraph
edge streams are defined exactly in the same way as for graph edge streams.

\section{Related work}\label{sec:relwork}
The literature on exact and approximate triangle counting is extremely rich,
including exact algorithms, graph
sparsifiers~\citep{TsourakakisKMF09,TsourakakisKM11}, complex-valued
sketches~\citep{ManjunathMPS11,KaneMSS12}, and MapReduce
algorithms~\citep{SuriV11,PaghT12,ParkC13,ParkSKP14,ParkMK16}. Here we restrict the discussion to
the works most related to ours, i.e., to those presenting algorithms for
counting or approximating the number of triangles from data streams. We refer
to the survey by~\citet{Latapy08} for an in-depth discussion of other works.
Table~\ref{table:comparison} presents a summary of the comparison, in terms of
desirable properties, between this work and relevant previous contributions.

Many authors presented algorithms for more restricted (i.e., less generic)
settings than ours, or for which the constraints on the computation are more
lax~\citep{BarYossefKS02,BuriolFSMSS06,JowharyG05,KutzkovP13}. For example,
\citet{BecchettiBCG10} and~\citet{KolountzakisMPT12} present algorithms for
approximate triangle counting from \emph{static} graphs by performing multiple
passes over the data. \citet{PavanTTW13} and \citet{JhaSP15} propose algorithms
for approximating only the global number of triangles from
\emph{edge-insertion-only} streams. \citet{BulteauFKP15} present a one-pass
algorithm for fully-dynamic graphs, but the triangle count estimation is
(expensively) computed only at the end of the stream and the algorithm requires,
in the worst case, more memory than what is needed to store the entire graph.
\citet{AhmedDKN14} apply the sampling-and-hold approach to insertion-only graph
stream mining to obtain, only at the end of the stream and using non-constant
space, an estimation of many network measures including triangles.

None of these works has \emph{all} the features offered by \algo: performs a
single pass over the data, handles fully-dynamic streams, uses a fixed amount of
memory space, requires a single interpretable parameter, and returns an
estimation at each time instant. Furthermore, our experimental results show that
we outperform the algorithms from~\citep{PavanTTW13,JhaSP15} on insertion-only
streams.

\citet{LimK15} present an algorithm for insertion-only streams that is based on
independent edge sampling with a fixed probability:  for each edge on the
stream, a coin with a user-specified fixed tails probability $p$ is flipped,
and, if the outcome is tails, the edge is added to the stored sample and the
estimation of local triangles is updated. Since the memory is not fully utilized
during most of the stream, the variance of the estimate is large. Our approach
handles fully-dynamic streams and makes better use of the available memory space
at each time instant, resulting in a better estimation, as shown by our
analytical and experimental results.

\citet{Vitter85} presents a detailed analysis of the reservoir sampling scheme
and discusses methods to speed up the algorithm by reducing the number of calls to
the random number generator. Random Pairing~\citep{GemullaLH08} is an extension
of reservoir sampling to handle fully-dynamic streams with insertions and
deletions. \citet{CohenCD12} generalize and extend the Random Pairing approach
to the case where the elements on the stream are key-value pairs, where the
value may be negative (and less than $-1$). In our settings, where the value is
not less than $-1$ (for an edge deletion), these generalizations do not apply
and the algorithm presented by~\citet{CohenCD12} reduces essentially to Random
Pairing.

\begin{table}[ht]
  \tbl{Comparison with previous contributions
  }{
    \begin{tabular}{cccccc}
      \toprule
      Work &  \shortstack[c]{Single\\pass} & \shortstack[c]{Fixed\\space} &
      \shortstack[c]{Local\\counts} & \shortstack[c]{Global\\counts} &
      \shortstack[c]{Fully-dynamic\\streams}\\
      \midrule
      \citep{BecchettiBCG10} & \xmark & \cmark/\xmark$^\dagger$ & \cmark & \xmark & \xmark\\
      \citep{KolountzakisMPT12} & \xmark & \xmark & \xmark & \cmark & \xmark \\
      \citep{PavanTTW13} & \cmark & \cmark & \xmark & \cmark & \xmark \\
      \citep{JhaSP15} & \cmark & \cmark & \xmark & \cmark & \xmark \\
      \citep{AhmedDKN14} & \cmark & \xmark & \xmark & \cmark & \xmark \\
      \citep{LimK15} & \cmark & \xmark & \xmark & \xmark & \xmark \\
      This work & \cmark & \cmark & \cmark & \cmark & \cmark \\
      \bottomrule
    \end{tabular}
  }
  \begin{tabnote}
    \tabnoteentry{$^\dagger$}{The required space is $O(|V^{(t)}|)$, which,
      although not dependent on the number of triangles or on the number of
      edges, is not fixed in the sense that it can be fixed a-priori.}
  \end{tabnote}
  \label{table:comparison}
\end{table}

\section{Algorithms}\label{sec:algorithms}
We present \algo, a suite of three novel algorithms for approximate global and
local triangle counting from edge streams. The first two work on insertion-only
streams, while the third can handle fully-dynamic streams where edge deletions
are allowed. We defer the discussion of the multigraph case to
Sect.~\ref{sec:multigraphs}.

\paragraph{Parameters}
Our algorithms keep an edge sample $\Sam$ containing up to $M$ edges from the
stream, where $M$ is a positive integer parameter. For ease of presentation, we
realistically assume $M\ge 6$. In Sect.~\ref{sec:intro} we motivated the design
choice of only requiring $M$ as a parameter and remarked on its advantages over
using a fixed sampling probability $p$. Our algorithms are designed to use the
available space as much as possible.

\paragraph{Counters}
\algo algorithms keep \emph{counters} to compute the estimations of the global
and local number of triangles. They \emph{always} keep one global counter $\tau$
for the estimation of the global number of triangles. Only the global counter is
needed to estimate the total triangle count. To estimate the local triangle
counts, the algorithms keep a set of local counters $\tau_u$ for a subset of the
nodes $u\in V^{(t)}$. The local counters are created on the fly as needed, and
\emph{always} destroyed as soon as they have a value of $0$. Hence our
algorithms use $O(M)$ space (with one exception, see Sect.~\ref{sec:improved}).

\paragraph{Notation}
For any $t\ge 0$, let $G^\Sam=(V^\Sam,E^\Sam)$ be the subgraph of $G^{(t)}$
containing all and only the edges in the current sample $\Sam$. We denote with
$\neigh^{\Sam}_u$ the \emph{neighborhood} of $u$ in $G^\Sam$:
$\neigh^{\Sam}_u=\{v\in V^{(t)} ~:~ (u,v)\in \Sam\}$ and with
$\neigh^{\Sam}_{u,v}= \neigh^{\Sam}_u \cap \neigh^{\Sam}_v$ the \emph{shared
neighborhood} of $u$ and $v$ in $G^\Sam$.

\paragraph{Presentation}
We only present the analysis of our algorithms for the problem of \emph{global}
triangle counting. For each presented result involving the estimation of the
global triangle count (e.g., unbiasedness, bound on variance, concentration
bound) and potentially using other global quantities (e.g., the number of
pairs of triangles in $\Delta^{(t)}$ sharing an edge), it is straightforward to
derive the correspondent variant for the estimation of the local triangle count,
using similarly defined local quantities  (e.g., the number of
pairs of triangles in $\Delta_u^{(t)}$ sharing an edge.)

\subsection{A first algorithm -- \algobase}\label{sec:algobase}
We first present \algobase, which works on insertion-only streams and uses
standard reservoir sampling~\citep{Vitter85} to maintain the edge sample $\Sam$:
\begin{itemize}
  \item If $t\le M$, then the edge $e_t=(u,v)$ on the stream at time $t$ is
    deterministically inserted in $\Sam$.
  \item If $t>M$, \algobase flips a biased coin with heads probability $M/t$. If
    the outcome is heads, it chooses an edge $(w,z)\in\Sam$ uniformly at
    random, removes $(w,z)$ from $\Sam$, and inserts $(u,v)$ in $\Sam$.
    Otherwise, $\Sam$ is not modified.
\end{itemize}
After each insertion (resp.~removal) of an edge $(u,v)$ from $\Sam$, \algobase
calls the procedure \textsc{UpdateCounters} that increments (resp.~decrements)
$\tau$, $\tau_u$ and $\tau_v$ by $|\neigh^\Sam_{u,v}|$, and $\tau_c$ by one, for
each $c\in\neigh^{\Sam}_{u,v}$.

The pseudocode for \algobase is presented in Alg.~\ref{alg:triest-base}.

\begin{algorithm}[ht]
  \small % small font size for algorithms is required by ACM style
  \caption{\algobase}
  \label{alg:triest-base}
  \begin{algorithmic}[1]
    \Statex{\textbf{Input:} Insertion-only edge stream $\Sigma$, integer $M\ge6$}
    \State $\Sam\leftarrow\emptyset$, $t \leftarrow 0$, $\tau \leftarrow 0$
    \For{ {\bf each} element $(+,(u,v))$ from $\Sigma$}
    \State $t\leftarrow t +1$
    \If{\textsc{SampleEdge}$((u,v), t )$}
      \State $\Sam \leftarrow \Sam\cup \{(u,v)\}$
      \State \textsc{UpdateCounters}$(+,(u,v))$
    \EndIf
    \EndFor
    \Statex
    \Function{\textsc{SampleEdge}}{$(u,v),t$}
      \If {$t\leq  M$}
        \State \textbf{return} True
      \ElsIf{\textsc{FlipBiasedCoin}$(\frac{M}{t}) = $ heads}
        \State $(u',v') \leftarrow$ random edge from $\Sam$
        \State $\Sam\leftarrow \Sam\setminus \{(u',v')\}$
        \State \textsc{UpdateCounters}$(-,(u',v'))$
        \State \textbf{return} True
      \EndIf
      \State \textbf{return} False
    \EndFunction
    \Statex
    \Function{UpdateCounters}{$(\bullet, (u,v))$}
      \State $\mathcal{N}^\Sam_{u,v} \leftarrow \mathcal{N}^\Sam_u \cap \mathcal{N}^\Sam_v$
      \ForAll {$c \in \mathcal{N}^\Sam_{u,v}$}
        \State $\tau \leftarrow \tau \bullet 1$
        \State $\tau_c \leftarrow \tau_c \bullet 1$
        \State $\tau_u \leftarrow \tau_u \bullet 1$
        \State $\tau_v \leftarrow \tau_v \bullet 1$
      \EndFor
    \EndFunction
  \end{algorithmic}
\end{algorithm}

\subsubsection{Estimation}
 For any pair of positive integers $a$ and $b$
 such that $a\le\min\{M,b\}$ let
\[
  \xi_{a,b} =
% 20160625 Matteo: Trying to express it this way is a ``continuous'' mess at the
  % boundaries.
%\max\left\{1, \binom{b}{M}\Big /\binom{b-a}{M-a}\right\} = \max\left\{1, \prod_{i=0}^{a-1}\frac{b-i}{M-i}\right\}
  \left\{\begin{array}{ccl} & 1 & \text{if } b\le M\\
  \displaystyle\binom{b}{M}\Big/\displaystyle\binom{b-a}{M-a}&=\displaystyle\prod_{i=0}^{a-1}\frac{b-i}{M-i} & \text{otherwise}\end{array}\right.
  \enspace.
\]
As shown in the following lemma, $\xi_{k,t}^{-1}$ is the probability that $k$
edges of $G^{(t)}$ are all in $\Sam$ at time $t$, i.e., the $k$-th order
inclusion probability of the reservoir sampling scheme. The proof can be found
in App.~\ref{app:algobase}.

\begin{lemma}\label{lem:reservoirhighorder}
  For any time step $t$ and any positive integer $k\le t$, let $B$ be any
  subset of $E^{(t)}$ of size $|B|=k\le t$. Then, at the end of time step $t$,
  \[
    \Pr(B\subseteq\Sam)=\left\{\begin{array}{cl}0 & \text{if } k>M\\
    \xi_{k,t}^{-1} & \text{otherwise}\end{array} \right.\enspace.
  \]
\end{lemma}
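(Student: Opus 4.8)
The plan is to compute the inclusion probability directly from the mechanics of reservoir sampling, proceeding by induction on $t$.

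First, the case $k > M$ is immediate: the sample $\Sam$ never holds more than $M$ edges, so no set of $k > M$ edges can be contained in it, giving probability $0$. Similarly, when $t \le M$ every edge seen so far is in $\Sam$ deterministically, so $\Pr(B \subseteq \Sam) = 1 = \xi_{k,t}^{-1}$ (note $\xi_{k,t} = 1$ in this regime). This handles the base case $t = M$ (and all $t \le M$), so for the inductive step I may assume $t > M$ and that the claim holds at time $t-1$ for all subsets of the appropriate size.

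For the inductive step, fix $B \subseteq E^{(t)}$ with $|B| = k \le M$. The key observation is that since we are on an insertion-only stream and all operations have effect, $E^{(t-1)} = E^{(t)} \setminus \{e_t\}$ where $e_t = (u,v)$ is the edge inserted at time $t$. I split into two cases according to whether $e_t \in B$. If $e_t \notin B$, then $B \subseteq E^{(t-1)}$, and for $B$ to be in $\Sam$ at time $t$ we need $B \subseteq \Sam$ at time $t-1$ (probability $\xi_{k,t-1}^{-1}$ by induction) and then no element of $B$ to be evicted at step $t$. At step $t$, with probability $1 - M/t$ nothing changes, and with probability $M/t$ a uniformly random edge of the (size-$M$) sample is evicted; conditioned on $B \subseteq \Sam$, the probability that the evicted edge lies outside $B$ is $(M-k)/M$. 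So the survival probability of $B$ is $(1 - M/t) + (M/t)(M-k)/M = (t-k)/t$, and $\Pr(B \subseteq \Sam \text{ at } t) = \xi_{k,t-1}^{-1} \cdot (t-k)/t$. If instead $e_t \in B$, write $B = B' \cup \{e_t\}$ with $|B'| = k-1$; we need $B' \subseteq \Sam$ at time $t-1$ (probability $\xi_{k-1,t-1}^{-1}$ by induction), then the coin at step $t$ to come up heads (probability $M/t$) so that $e_t$ actually enters the sample, and then the evicted edge to lie outside $B'$ — conditioned on $B' \subseteq \Sam$ (and $|\Sam| = M$), this has probability $(M-(k-1))/M$. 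Multiplying, $\Pr(B \subseteq \Sam \text{ at } t) = \xi_{k-1,t-1}^{-1} \cdot (M/t) \cdot (M-k+1)/M = \xi_{k-1,t-1}^{-1} \cdot (M-k+1)/t$.

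The remaining work is purely algebraic: verify that both expressions equal $\xi_{k,t}^{-1}$, using the product form $\xi_{a,b} = \prod_{i=0}^{a-1} \frac{b-i}{M-i}$. For the first case one checks $\xi_{k,t-1}^{-1} \cdot \frac{t-k}{t} = \xi_{k,t}^{-1}$, i.e.\ $\xi_{k,t} = \xi_{k,t-1} \cdot \frac{t}{t-k}$, which follows since $\xi_{k,t}/\xi_{k,t-1} = \prod_{i=0}^{k-1}\frac{t-i}{t-1-i} = \frac{t}{t-k}$ (telescoping). For the second case one checks $\xi_{k-1,t-1}^{-1}\cdot\frac{M-k+1}{t} = \xi_{k,t}^{-1}$, i.e.\ $\xi_{k,t} = \xi_{k-1,t-1}\cdot \frac{t}{M-k+1}$, which again follows from the product form by comparing the $k-1$ factors with shifted arguments against the $k$ factors of $\xi_{k,t}$. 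I would also double-check the edge condition $k \le \min\{M, t\}$ is preserved when passing to $k-1$ and $t-1$ (it is, since $t > M \ge k$ implies $k \le t-1$ and $k-1 \le M$). The main obstacle — though a mild one — is getting the conditional eviction probabilities exactly right: one must be careful that when $t > M$ the sample always has exactly $M$ edges before the eviction, so that "a uniformly random edge of $\Sam$" is over a set of size $M$, and that the conditioning event $B \subseteq \Sam$ (resp.\ $B' \subseteq \Sam$) does not perturb the uniformity of the eviction choice, which it does not since the eviction is independent of which edges happen to be in the sample.
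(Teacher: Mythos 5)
Your proof is correct, but it takes a genuinely different route from the paper's. The paper first invokes the classical uniformity property of reservoir sampling (Lemma~\ref{lem:reservoir}, citing~\citet{Vitter85}): for $t>M$ the sample $\Sam$ is a uniformly random $M$-subset of $E^{(t)}$. The inclusion probability of a fixed $k$-subset $B$ then falls out of a one-line counting argument, $\Pr(B\subseteq\Sam)=\binom{t-k}{M-k}\big/\binom{t}{M}=\xi_{k,t}^{-1}$. You instead prove the $k$-th order inclusion probability directly by induction on $t$ from the mechanics of the algorithm, splitting on whether the newly arrived edge $e_t$ belongs to $B$. Your two conditional survival computations are right --- the coin flip and the eviction choice are fresh randomness independent of the sample's contents, so conditioning on $B\subseteq\Sam$ (or $B'\subseteq\Sam$) at the end of step $t-1$ leaves the eviction uniform over the $M$ occupied slots --- and the two telescoping identities $\xi_{k,t}=\xi_{k,t-1}\cdot\frac{t}{t-k}$ and $\xi_{k,t}=\xi_{k-1,t-1}\cdot\frac{t}{M-k+1}$ check out, including at the boundary $t=M+1$ where $\xi_{k,t-1}=\xi_{k-1,t-1}=1$. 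What each approach buys: the paper's proof is shorter and reuses a standard black-box fact, whereas yours is self-contained (it effectively reproves the relevant consequence of that fact) at the cost of more careful conditional-probability bookkeeping; your induction also generalizes more readily to sampling schemes for which full exchangeability of the sample is not available, though for plain reservoir sampling the uniformity route is the more economical one.
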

We make use of this lemma in the analysis of \algobase.

Let, for any $t\ge 0$, $\xi^{(t)}= \xi_{3,t}$ and let $\tau^{(t)}$
(resp.~$\tau_u^{(t)}$) be the value of the counter $\tau$ at the end of time
step $t$ (i.e., after the edge on the stream at time $t$ has been processed by
\algobase) (resp.~the value of the counter $\tau_u$ at the end of time step $t$
if there is such a counter, 0 otherwise). When queried at the end of time $t$,
\algobase returns $\xi^{(t)}\tau^{(t)}$ (resp.~$\xi^{(t)}\tau_u^{(t)}$) as the
estimation for the global (resp.~local for $u\in V^{(t)}$) triangle count.

\subsubsection{Analysis}
We now present the analysis of the estimations computed by \algobase.
Specifically, we prove their unbiasedness (and their exactness for $t\le M$) and
then show an exact derivation of their variance and a concentration result. We
show the results for the global counts, but results analogous to those in
Thms.~\ref{thm:baseunbiased}, \ref{thm:basevariance},
and~\ref{thm:baseconcentration} hold for the local triangle count for any $u\in
V^{(t)}$, replacing the global quantities with the corresponding local ones. We
also compare, theoretically, the variance of \algobase with that of a
fixed-probability edge sampling approach~\citep{LimK15}, showing that \algobase
has smaller variance for the vast majority of the stream.

\subsubsection{Expectation}
We have the following result about the estimations computed by \algobase.

\begin{theorem}\label{thm:baseunbiased}
  We have
  \begin{align*}
    \xi^{(t)}\tau^{(t)}=\tau^{(t)}=|\Delta^{(t)}| &\mbox{ if } t\le M\\
    \mathbb{E}\left[\xi^{(t)}\tau^{(t)}\right]=|\Delta^{(t)}| &\mbox{ if } t>
    M\enspace.
  \end{align*}
\end{theorem}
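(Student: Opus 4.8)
The plan is to split the argument into the two regimes $t \le M$ and $t > M$. For $t \le M$, the reservoir sampling scheme inserts every edge deterministically, so $\Sam = E^{(t)}$ and $\xi^{(t)} = \xi_{3,t} = 1$ (since $t \le M$). I would then argue that the counter $\tau^{(t)}$ maintained incrementally by \textsc{UpdateCounters} exactly equals $|\Delta^{(t)}|$. This requires an induction on $t$: when edge $(u,v)$ is inserted, the procedure adds $|\neigh^\Sam_{u,v}|$ to $\tau$, which counts exactly the number of new triangles completed by $(u,v)$ together with two edges already in $\Sam = E^{(t-1)}$, namely the triangles $\{(u,v),(v,c),(c,u)\}$ for $c \in \neigh^\Sam_{u,v}$. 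Since no edge is ever removed in this regime, $\tau$ accumulates precisely $|\Delta^{(t)}|$. This gives the first line.

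For $t > M$, I would write $\tau^{(t)}$ as a sum of indicator contributions over triangles. The key observation is that a triangle $\lambda = \{e_1, e_2, e_3\} \in \Delta^{(t)}$ contributes $1$ to $\tau^{(t)}$ if and only if all three of its edges are currently in $\Sam$; more precisely, I would show by induction on $t$ that $\tau^{(t)} = \sum_{\lambda \in \Delta^{(t)}} \mathds{1}_{\{\lambda \subseteq \Sam\}}$, where the sum is over all triangles of the current graph. The inductive step uses that when an edge $(u,v)$ enters $\Sam$, \textsc{UpdateCounters} increments $\tau$ by one for each common neighbor $c$ in $G^\Sam$, i.e., for each triangle on $(u,v)$ whose other two edges are already in $\Sam$; symmetrically, when an edge is evicted from $\Sam$, $\tau$ is decremented by one for each triangle through that edge currently fully present in $\Sam$. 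Thus at all times $\tau^{(t)}$ counts exactly the triangles of $G^{(t)}$ all of whose edges lie in the sample. Taking expectations and using linearity,
\begin{equation*}
  \mathbb{E}\!\left[\tau^{(t)}\right] = \sum_{\lambda \in \Delta^{(t)}} \Pr(\lambda \subseteq \Sam) = \sum_{\lambda \in \Delta^{(t)}} \xi_{3,t}^{-1} = |\Delta^{(t)}| \, (\xi^{(t)})^{-1},
\end{equation*}
where the middle equality is exactly Lemma~\ref{lem:reservoirhighorder} applied with $k = 3 \le t$ (valid since $t > M \ge 6 > 3$, so $3 \le \min\{M,t\}$ and the inclusion probability is $\xi_{3,t}^{-1}$, not $0$). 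Multiplying by $\xi^{(t)}$ yields $\mathbb{E}[\xi^{(t)}\tau^{(t)}] = |\Delta^{(t)}|$.

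The main obstacle is the bookkeeping in the induction establishing $\tau^{(t)} = \sum_{\lambda \in \Delta^{(t)}} \mathds{1}_{\{\lambda \subseteq \Sam\}}$: one must carefully track what happens at a time step $t > M$ where the incoming edge is first inserted (possibly after an eviction) and verify that both the eviction decrement and the insertion increment correctly adjust $\tau$ so that the invariant is restored, and also handle the fact that the edge $e_t$ on the stream is a genuinely new edge of $G^{(t)}$ so the triangle set itself changes. A subtle point worth spelling out is that \textsc{UpdateCounters} operates on $\neigh^\Sam_{u,v}$ computed with respect to the sample \emph{at the moment of the call}, so the order of operations (evict, then update-on-eviction, then insert, then update-on-insertion) matters and must be reconciled with the definition of $G^\Sam$ at the end of the step. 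Once the invariant is in hand, the expectation computation is immediate from Lemma~\ref{lem:reservoirhighorder} and linearity of expectation, with no independence assumptions needed.
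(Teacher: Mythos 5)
Your proposal is correct and follows essentially the same route as the paper: the invariant $\tau^{(t)}=\sum_{\lambda\in\Delta^{(t)}}\mathds{1}_{\{\lambda\subseteq\Sam\}}=|\Delta^{\Sam}|$ that you establish by induction is exactly the paper's Lemma~\ref{lem:baseunbiasedaux}, and the expectation step via Lemma~\ref{lem:reservoirhighorder} with $k=3$ and linearity matches the paper's argument (the paper just packages $\xi^{(t)}\mathds{1}_{\{\lambda\subseteq\Sam\}}$ into a single random variable $\delta_\lambda^{(t)}$ with unit expectation). Your attention to the order of evict/update/insert operations is the same bookkeeping the paper handles in the proof of that auxiliary lemma.
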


The \algobase estimations are not only \emph{unbiased in all cases}, but
actually \emph{exact for $t\le M$}, i.e., for $t\le M$, they are the true
global/local number of triangles in $G^{(t)}$.

To prove Thm.~\ref{thm:baseunbiased}, we need to introduce a technical
lemma. Its proof can be found in Appendix~\ref{app:algobase}. We denote with
$\Delta^{\Sam}$ the set of triangles in $G^{\Sam}$.

\begin{lemma}\label{lem:baseunbiasedaux}
  After each call to \textsc{UpdateCounters}, we have $\tau=|\Delta^\Sam|$
  and $\tau_v=|\Delta_v^\Sam|$ for any $v\in V_\Sam$ s.t. $|\Delta_v^\Sam|\ge
  1$.
\end{lemma}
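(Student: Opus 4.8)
The plan is to prove the statement by induction on the number of completed calls to \textsc{UpdateCounters}. For the base case, before the first call we have $\Sam=\emptyset$, so $G^\Sam$ contains no edges and hence no triangles: $\tau=0=|\Delta^\Sam|$, and since the local counters are created only when needed and destroyed as soon as they reach $0$, no local counter exists, which vacuously satisfies the claim.

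For the inductive step, consider a call \textsc{UpdateCounters}$(\bullet,(u,v))$ and assume the invariant held after the previous call (or at the base case). By inspection of Alg.~\ref{alg:triest-base}, this call is issued immediately \emph{after} $(u,v)$ has been inserted into $\Sam$ (when $\bullet=+$) or immediately \emph{after} $(u,v)$ has been removed from $\Sam$ (when $\bullet=-$); in either case the set $\neigh^\Sam_{u,v}=\neigh^\Sam_u\cap\neigh^\Sam_v$ computed inside the procedure is evaluated on the sample resulting from this modification. The key combinatorial observation is that a triangle of $G^\Sam$ either does not use the edge $(u,v)$ --- in which case its membership in $\Delta^\Sam$ is unaffected by inserting or removing $(u,v)$ --- or it has the form $\{(u,v),(v,c),(u,c)\}$ for some vertex $c$ with $(v,c),(u,c)\in\Sam$, i.e., $c\in\neigh^\Sam_{u,v}$, and distinct choices of $c$ yield distinct triangles. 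Since $c\neq u,v$, the presence or absence of $(u,v)$ in $\Sam$ does not change $\neigh^\Sam_{u,v}$. Hence inserting $(u,v)$ creates exactly the $|\neigh^\Sam_{u,v}|$ triangles $\{(u,v),(v,c),(u,c)\}$ with $c\in\neigh^\Sam_{u,v}$ and destroys none, while removing $(u,v)$ destroys exactly those and creates none.

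It then remains to match this change against the updates performed by the procedure. If $\bullet=+$, by the inductive hypothesis $\tau$ equalled $|\Delta^{\Sam\setminus\{(u,v)\}}|=|\Delta^\Sam|-|\neigh^\Sam_{u,v}|$ before the call, and \textsc{UpdateCounters} adds $1$ to $\tau$ for each of the $|\neigh^\Sam_{u,v}|$ vertices $c$, giving $\tau=|\Delta^\Sam|$ afterwards; the case $\bullet=-$ is symmetric with $\tau$ decremented by $|\neigh^\Sam_{u,v}|$. For the local counters, each created (resp.\ destroyed) triangle $\{(u,v),(v,c),(u,c)\}$ has $u$, $v$, and $c$ as its corners, so $|\Delta_u^\Sam|$ and $|\Delta_v^\Sam|$ each change by $|\neigh^\Sam_{u,v}|$, $|\Delta_c^\Sam|$ changes by $1$ for each $c\in\neigh^\Sam_{u,v}$, and $|\Delta_w^\Sam|$ is unchanged for every other vertex $w$; this is exactly what the increments/decrements of $\tau_u$, $\tau_v$, and the $\tau_c$'s do, leaving all other counters untouched. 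Interpreting a non-existent counter as having value $0$ --- consistent with the rule that counters are created on the fly and destroyed when they hit $0$ --- this shows that after the call $\tau_v=|\Delta_v^\Sam|$ for every $v\in V_\Sam$ with $|\Delta_v^\Sam|\ge1$, completing the induction.

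I expect the only delicate point to be the bookkeeping around the exact moment $\neigh^\Sam_{u,v}$ is read relative to the insertion or removal of $(u,v)$, together with the on-the-fly creation and destruction of local counters; once one observes that $c\in\neigh^\Sam_{u,v}$ is insensitive to whether $(u,v)\in\Sam$, the argument reduces to the elementary fact that adding or removing a single edge changes only the triangles through that edge, and by exactly the claimed amount.
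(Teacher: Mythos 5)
Your proof is correct and follows essentially the same route as the paper's: induction on the calls to \textsc{UpdateCounters}, using the fact that inserting or removing $(u,v)$ changes $\Delta^\Sam$ by exactly the $|\neigh^\Sam_{u,v}|$ triangles through $(u,v)$ and that this set is unaffected by whether $(u,v)$ itself is in $\Sam$. You even spell out the local-counter bookkeeping more explicitly than the paper, which simply states that it follows the same steps.
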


From here, the proof of Thm.~\ref{thm:baseunbiased} is a straightforward
application of Lemma~\ref{lem:baseunbiasedaux} for the case $t\le M$ and of that
lemma, the definition of expectation, and Lemma~\ref{lem:reservoirhighorder}
otherwise. The complete proof can be found in App.~\ref{app:algobase}.

\subsubsection{Variance}
We now analyze the variance of the estimation returned by \algobase for $t>M$
(the variance is $0$ for $t\le M$.)

Let $r^{(t)}$ be the \emph{total} number of unordered pairs of distinct
triangles from $\Delta^{(t)}$ sharing an edge,\footnote{Two distinct triangles
can share at most one edge.} and $w^{(t)}=\binom{|\Delta^{(t)}|}{2}-r^{(t)}$ be
the number of unordered pairs of distinct triangles that do not share any edge.

\begin{theorem}\label{thm:basevariance}
  For any $t>M$, let $f(t) = \xi^{(t)}-1$,
  \[
    g(t) = \xi^{(t)}\frac{(M-3)(M-4)}{(t-3)(t-4)} -1
  \]
  and
  \[
    h(t) = \xi^{(t)}\frac{(M-3)(M-4)(M-5)}{(t-3)(t-4)(t-5)}
    -1\enspace(\le 0).
  \]
  We have:
  \begin{equation}\label{eq:globvariancebase}
    \variance\left[\xi(t)\tau^{(t)}\right] = |\Delta^{(t)}|
    f(t)+r^{(t)}g(t)+w^{(t)}h(t).
  \end{equation}
\end{theorem}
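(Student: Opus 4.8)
\emph{Proof plan.} The plan is to write $\tau^{(t)}$ as a sum of indicator variables, one per triangle of $G^{(t)}$, and compute its variance by the standard expansion into diagonal and covariance terms, evaluating all inclusion probabilities via Lemma~\ref{lem:reservoirhighorder}. Concretely: by Lemma~\ref{lem:baseunbiasedaux} we have $\tau^{(t)}=|\Delta^\Sam|$ at the end of step $t$, and since $\Sam\subseteq E^{(t)}$, a triangle lies in $G^\Sam$ exactly when it is a triangle of $G^{(t)}$ all three of whose edges are in $\Sam$. So, letting $\delta_\lambda$ be the indicator of the event ``the three edges of $\lambda$ are in $\Sam$ at the end of step $t$'', we get $\tau^{(t)}=\sum_{\lambda\in\Delta^{(t)}}\delta_\lambda$ and hence $\variance[\xi^{(t)}\tau^{(t)}]=(\xi^{(t)})^2\bigl(\sum_{\lambda}\variance[\delta_\lambda]+\sum_{\lambda\neq\mu}\mathrm{Cov}(\delta_\lambda,\delta_\mu)\bigr)$, the inner sum ranging over ordered pairs of distinct triangles.

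For the diagonal terms, Lemma~\ref{lem:reservoirhighorder} with $k=3$ gives $\mathbb{E}[\delta_\lambda]=\mathbb{E}[\delta_\lambda^2]=\Pr(\lambda\subseteq\Sam)=\xi_{3,t}^{-1}=(\xi^{(t)})^{-1}$, so $(\xi^{(t)})^2\variance[\delta_\lambda]=\xi^{(t)}-1=f(t)$ and the diagonal contributes $|\Delta^{(t)}|f(t)$. For the covariance terms, the key structural observation — the footnote — is that two \emph{distinct} triangles share either zero or one edge (two shared edges would already force the third, making them equal). Hence for an ordered pair $(\lambda,\mu)$ of distinct triangles, $\lambda\cup\mu$ is a subset of $E^{(t)}$ of size $5$ (if they share an edge) or $6$ (if they are edge-disjoint); applying Lemma~\ref{lem:reservoirhighorder} with $k=5$ and $k=6$ — this is exactly where the hypothesis $M\ge6$ is needed, so that a $6$-edge set can fit in $\Sam$ — yields $\mathbb{E}[\delta_\lambda\delta_\mu]=\Pr(\lambda\cup\mu\subseteq\Sam)$ equal to $\xi_{5,t}^{-1}$ or $\xi_{6,t}^{-1}$, and thus $\mathrm{Cov}(\delta_\lambda,\delta_\mu)=\xi_{5,t}^{-1}-(\xi^{(t)})^{-2}$ or $\xi_{6,t}^{-1}-(\xi^{(t)})^{-2}$.

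It then remains to count the pairs in each of the two cases ($r^{(t)}$ and $w^{(t)}$, with the appropriate ordered-pair factor) and to simplify $(\xi^{(t)})^2\xi_{5,t}^{-1}$ and $(\xi^{(t)})^2\xi_{6,t}^{-1}$. Here I would use the product form $\xi_{a,b}=\prod_{i=0}^{a-1}\frac{b-i}{M-i}$: since $\xi^{(t)}=\xi_{3,t}$, one gets $\xi_{5,t}=\xi^{(t)}\frac{(t-3)(t-4)}{(M-3)(M-4)}$ and $\xi_{6,t}=\xi^{(t)}\frac{(t-3)(t-4)(t-5)}{(M-3)(M-4)(M-5)}$, so $(\xi^{(t)})^2\xi_{5,t}^{-1}-1=g(t)$ and $(\xi^{(t)})^2\xi_{6,t}^{-1}-1=h(t)$ after telescoping; collecting the three groups of terms gives \eqref{eq:globvariancebase}. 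Finally, the parenthetical $h(t)\le0$ follows by writing $\xi^{(t)}\frac{(M-3)(M-4)(M-5)}{(t-3)(t-4)(t-5)}$ as the product over $j\in\{0,1,2\}$ of the factors $\frac{t-j}{t-3-j}\cdot\frac{M-3-j}{M-j}$, each of which is $\le1$ because $t>M$. There is no deep obstacle here: once Lemma~\ref{lem:reservoirhighorder} and the shared-edge dichotomy are available the argument is essentially bookkeeping, and the only steps requiring care are tracking the ordered-versus-unordered pair counts and carrying out the telescoping simplification of the $\xi$-ratios into the closed forms $g(t),h(t)$.
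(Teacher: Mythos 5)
Your proposal is correct and follows essentially the same route as the paper's proof: both decompose the variance of a sum of per-triangle random variables into diagonal and covariance terms, evaluate the joint inclusion probabilities via Lemma~\ref{lem:reservoirhighorder} with $k=3,5,6$ using the fact that two distinct triangles share at most one edge, and telescope the ratios of $\xi_{3,t},\xi_{5,t},\xi_{6,t}$ into $f(t)$, $g(t)$, $h(t)$. The only cosmetic difference is that you work with $\{0,1\}$-indicators and factor out $(\xi^{(t)})^2$, whereas the paper pre-scales each $\delta_\lambda^{(t)}$ to take values in $\{0,\xi^{(t)}\}$; the ordered-versus-unordered pair bookkeeping that you rightly flag as needing care is handled no more explicitly in the paper's own proof.
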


In our proofs, we carefully account for the fact that, as we use reservoir
sampling~\citep{Vitter85}, the presence of an edge $a$ in $\Sam$ is \emph{not
independent} from the concurrent presence of another edge $b$ in $\Sam$. This is
not the case for samples built using fixed-probability independent edge
sampling, such as \textsc{mascot}~\citep{LimK15}. When computing the variance,
we must consider not only pairs of triangles that share an edge, as in the case
for independent edge sampling approaches, but also pairs of triangles sharing no
edge, since their respective presences in the sample are not independent events.
The gain is worth the additional sophistication needed in the analysis, as the
contribution to the variance by triangles no sharing edges is
\emph{non-positive} ($h(t)\le 0$), i.e., it reduces the variance. A comparison
of the variance of our estimator with that obtained with a fixed-probability
independent edge sampling approach, is discussed in Sect.~\ref{sec:comparison}.

\begin{proof}[of Thm.~\ref{thm:basevariance}]
  Assume $|\Delta^{(t)}|>0$, otherwise the estimation is deterministically
  correct and has variance 0 and the thesis holds. Let $\lambda\in\Delta^{(t)}$
  and $\delta_\lambda^{(t)}$ be as in the proof of Thm.~\ref{thm:baseunbiased}.
  We have $\variance[\delta_\lambda^{(t)}]=\xi^{(t)}-1$ and from this and the
  definition of variance and covariance we obtain
  \begin{align}
    \variance\left[\xi^{(t)}\tau^{(t)}\right] &= \variance\left[\sum_{\lambda\in\Delta^{(t)}}\delta_\lambda^{(t)}\right]
    = \sum_{\lambda\in \Delta^{(t)}} \sum_{\gamma \in
    \Delta^{(t)}}\textrm{Cov}\left[\delta_\lambda^{(t)} ,\delta_\gamma^{(t)}\right]
    \nonumber\\
    &=  \sum_{\lambda\in \Delta^{(t)}}
    \variance\left[\delta_\lambda^{(t)}\right] + \sum_{\substack{\lambda,
    \gamma\in \Delta^{(t)}\\ \lambda \neq \gamma}}
    \textrm{Cov}\left[\delta_\lambda^{(t)}
    ,\delta_\gamma^{(t)}\right]\nonumber\\
    &= |\Delta^{(t)}|(\xi^{(t)}-1) + \sum_{\substack{\lambda,
    \gamma\in \Delta^{(t)}\\ \lambda \neq \gamma}}
    \textrm{Cov}\left[\delta_\lambda^{(t)}
    ,\delta_\gamma^{(t)}\right]\nonumber\\
    &= |\Delta^{(t)}|(\xi^{(t)}-1) + \sum_{\substack{\lambda,
    \gamma\in \Delta^{(t)}\\ \lambda \neq \gamma}}
    \left(\mathbb{E}\left[\delta_\lambda^{(t)}\delta_\gamma^{(t)}\right] -
    \mathbb{E}\left[\delta_\lambda^{(t)}\right]\mathbb{E}\left[\delta_\gamma^{(t)}\right]\right)\nonumber\\
    &= |\Delta^{(t)}|(\xi^{(t)}-1) + \sum_{\substack{\lambda,
    \gamma\in \Delta^{(t)}\\ \lambda \neq \gamma}}\left(
    \mathbb{E}\left[\delta_\lambda^{(t)}\delta_\gamma^{(t)}\right] - 1\right)
    \enspace.\label{eq:proofvariancebase}
  \end{align}

  Assume now $|\Delta^{(t)}|\ge 2$, otherwise we have $r^{(t)}=w^{(t)}=0$ and
  the thesis holds as the second term on the
  r.h.s.~of~\eqref{eq:proofvariancebase} is 0. Let $\lambda$ and $\gamma$ be
  two distinct triangles in $\Delta^{(t)}$. If $\lambda$ and $\gamma$ do not
  share an edge, we have
  $\delta_\lambda^{(t)}\delta_\gamma^{(t)}=\xi^{(t)}\xi^{(t)}=\xi_{3,t}^2$
  if all \emph{six} edges composing $\lambda$ and $\gamma$ are in $\Sam$ at
  the end of time step $t$, and $\delta_\lambda^{(t)}\delta_\gamma^{(t)}=0$
  otherwise. From Lemma~\ref{lem:reservoirhighorder} we then have that
  \begin{align}
    \mathbb{E}\left[\delta_\lambda^{(t)}\delta_\gamma^{(t)}\right]&=\xi_{3,t}^{2}\Pr\left(\delta_\lambda^{(t)}\delta_\gamma^{(t)}=\xi_{3,t}^{2}\right)=\xi_{3,t}^{2}\frac{1}{\xi_{6,t}}=\xi_{3,t}\prod_{j=3}^{5}\frac{M-j}{t-j}\nonumber\\
    &=\xi^{(t)}\frac{(M-3)(M-4)(M-5)}{(t-3)(t-4)(t-5)}\label{eq:expectprodnoshare}\enspace.
  \end{align}
  If instead $\lambda$ and $\gamma$ share exactly an edge we have
  $\delta_\lambda^{(t)}\delta_\gamma^{(t)}=\xi_{3,t}^{2}$ if all \emph{five}
  edges composing $\lambda$ and $\gamma$ are in $\Sam$ at the end of time step
  $t$, and $\delta_\lambda^{(t)}\delta_\gamma^{(t)}=0$ otherwise. From
  Lemma~\ref{lem:reservoirhighorder} we then have that
  \begin{align}
    \mathbb{E}\left[\delta_\lambda^{(t)}\delta_\gamma^{(t)}\right]&=\xi_{3,t}^{2}\Pr\left(\delta_\lambda^{(t)}\delta_\gamma^{(t)}=\xi_{3,t}^{2}\right)=\xi_{3,t}^{2}\frac{1}{\xi_{5,t}}=\xi_{3,t}\prod_{j=3}^4\frac{M-j}{t-j}\nonumber\\
    &=\xi^{(t)}\frac{(M-3)(M-4)}{(t-3)(t-4)}\enspace.\label{eq:expectprodshare}
  \end{align}
  The thesis follows by combining~\eqref{eq:proofvariancebase},
  \eqref{eq:expectprodnoshare}, \eqref{eq:expectprodshare}, recalling the
  definitions of $r^{(t)}$ and $w^{(t)}$, and slightly reorganizing the terms.
\end{proof}

\subsubsection{Concentration}
We have the following concentration result on the estimation returned by
\algobase. Let $h^{(t)}$ denote the maximum number of triangles sharing a single
edge in $G^{(t)}$.

\begin{theorem}\label{thm:baseconcentration}
  Let $t\ge 0$ and assume $|\Delta^{(t)}| >0$.\footnote{If $|\Delta^{(t)}|=0$,
  our algorithms correctly estimate $0$ triangles.}
  For any $\varepsilon,\delta\in(0,1)$, let
  \[
    \Phi = \sqrt[3]{8\varepsilon^{-2}\frac{3h^{(t)}+1}{|\Delta^{(t)}|}\ln
    \left(\frac{(3h^{(t)}+1)e}{\delta}\right)}\enspace.
  \]
  If
  \[
    M \ge \max \left\{t \Phi \left(1+\frac{1}{2}\ln^{2/3} \left (t \Phi
    \right)\right), 12\varepsilon^{-1}+e^2, 25\right\},
  \]
  then $|\xi^{(t)}\tau^{(t)}-|\Delta^{(t)}||<\varepsilon|\Delta^{(t)}|$ with probability
  $>1-\delta$.
\end{theorem}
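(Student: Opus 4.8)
The plan is to reduce the statement to a multiplicative concentration bound for the raw counter $\tau^{(t)}$ around its expectation, and then to prove that bound with a Chernoff-type argument that respects the dependency structure of reservoir sampling. By Thm.~\ref{thm:baseunbiased} we have $\mathbb{E}[\tau^{(t)}]=|\Delta^{(t)}|/\xi^{(t)}$, so $\xi^{(t)}\tau^{(t)}-|\Delta^{(t)}| = \xi^{(t)}\big(\tau^{(t)}-\mathbb{E}[\tau^{(t)}]\big)$ and it suffices to show $\Pr\big(|\tau^{(t)}-\mathbb{E}[\tau^{(t)}]| \ge \varepsilon\,\mathbb{E}[\tau^{(t)}]\big) \le \delta$. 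We may assume $t>M$ (for $t\le M$ the estimate is exact, again by Thm.~\ref{thm:baseunbiased}) and $|\Delta^{(t)}|>0$. By Lemma~\ref{lem:baseunbiasedaux}, $\tau^{(t)}=\sum_{\lambda\in\Delta^{(t)}}\mathbb{1}[\lambda\subseteq\Sam]$, a sum of $0/1$ variables, each with mean $\xi_{3,t}^{-1}=1/\xi^{(t)}$ by Lemma~\ref{lem:reservoirhighorder}. The difficulty is the dependence among these indicators: reservoir sampling keeps a sample of \emph{fixed} size $M$, so $\{e\in\Sam\}$ is correlated with $\{e'\in\Sam\}$ for every other edge $e'$; one therefore cannot directly invoke a Chernoff bound, while Chebyshev's inequality applied to Thm.~\ref{thm:basevariance} gives only polynomial, not exponential, concentration.

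To control the dependencies I would separate the ``strong'' interactions --- those between triangles sharing an edge --- from the global but harmless ones. First, at time $t>M$ the sample $\Sam$ is a uniform random $M$-subset of the $t$ edges of $E^{(t)}$ (the defining guarantee of reservoir sampling, consistent with Lemma~\ref{lem:reservoirhighorder}), hence the edge-indicators $(\mathbb{1}[e\in\Sam])_{e\in E^{(t)}}$ are \emph{negatively associated}. Second, form the conflict graph on $\Delta^{(t)}$ that connects two triangles iff they share an edge; since each edge of a triangle lies in at most $h^{(t)}-1$ other triangles, this graph has maximum degree $<3h^{(t)}$, so a greedy coloring partitions $\Delta^{(t)}=\Delta_1\cup\dots\cup\Delta_c$ into $c\le 3h^{(t)}+1$ classes of pairwise edge-disjoint triangles. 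Within a class $\Delta_j$ the indicators $\{\mathbb{1}[\lambda\subseteq\Sam]:\lambda\in\Delta_j\}$ are monotone functions of \emph{disjoint} blocks of the $(\mathbb{1}[e\in\Sam])$, hence themselves negatively associated; consequently the partial sum $\tau_j=\sum_{\lambda\in\Delta_j}\mathbb{1}[\lambda\subseteq\Sam]$ obeys the Chernoff-type tail bounds of a sum of independent Bernoulli variables of the same means. This is precisely where the non-positivity of $h(t)$ in Thm.~\ref{thm:basevariance} (pairs of triangles sharing no edge contribute $\le 0$) makes the decomposition essentially lossless.

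To combine the pieces I would bound the moment generating function of $\tau^{(t)}=\sum_{j=1}^{c}\tau_j$ by Hölder's inequality, $\mathbb{E}[e^{s\tau^{(t)}}]\le\prod_{j=1}^{c}\big(\mathbb{E}[e^{cs\tau_j}]\big)^{1/c}$, and inside each factor use the negative association within $\Delta_j$ to replace $\mathbb{E}[e^{cs\tau_j}]$ by the product of the per-triangle generating functions; since every triangle has inclusion probability $1/\xi^{(t)}$, this yields $\mathbb{E}[e^{s\tau^{(t)}}]\le\exp\big(\tfrac{1}{c}\,\mathbb{E}[\tau^{(t)}](e^{cs}-1)\big)$, the generating function of $c$ times a Poisson variable of mean $\mathbb{E}[\tau^{(t)}]/c$. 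A standard Chernoff optimization then gives $\Pr\big(|\tau^{(t)}-\mathbb{E}[\tau^{(t)}]|\ge\varepsilon\,\mathbb{E}[\tau^{(t)}]\big)\le 2\exp\big(-\varepsilon^2\mathbb{E}[\tau^{(t)}]/(3c)\big)$, so the claim holds as soon as $\mathbb{E}[\tau^{(t)}]=|\Delta^{(t)}|/\xi^{(t)}\gtrsim\varepsilon^{-2}(3h^{(t)}+1)\ln\!\big((3h^{(t)}+1)e/\delta\big)$ --- the inequality that defines $\Phi$. The final step converts this into a lower bound on $M$: since $\xi^{(t)}=\xi_{3,t}=\prod_{i=0}^{2}\frac{t-i}{M-i}\le\big(\tfrac{t}{M-2}\big)^3$, the requirement becomes a cubic inequality in $M$ whose solution has the form $M\gtrsim t\Phi$, and carrying through the lower-order terms and the slack needed to absorb the appearance of $\xi^{(t)}$ outside the leading power inflates it to $M\ge t\Phi\big(1+\tfrac12\ln^{2/3}(t\Phi)\big)$, with the remaining terms in the $\max$ arising from the same bookkeeping. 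I expect the two genuine obstacles to be: (i) making the negative-association argument across the color classes fully rigorous, and in particular combining the $c$ class-wise generating-function bounds so that only a \emph{single} factor $3h^{(t)}+1$ --- not its square --- enters the condition on $M$; and (ii) the somewhat delicate arithmetic of the inversion step that produces the $\ln^{2/3}(t\Phi)$ correction.
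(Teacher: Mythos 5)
Your proposal is correct, but it takes a genuinely different route from the paper. The paper never works with the reservoir sample directly: it introduces two auxiliary algorithms, \textsc{indep} (independent edge sampling with probability $p$) and \textsc{mix} (reservoir sampling with the $(t/M)^3$ estimator), proves a Chernoff-type bound for \textsc{indep} via a Hajnal--Szem\'eredi \emph{equitable} coloring of the same conflict graph plus a per-color Chernoff bound and a union bound over the $3h^{(t)}+1$ colors (Lemma~\ref{lem:concentrationindependent}), transfers that bound to \textsc{mix} through a Poisson-approximation lemma stating $\Pr(f(\Sam_\textsc{mix})=1)\le e\sqrt{M}\,\Pr(f(\Sam_\textsc{in})=1)$ (Lemma~\ref{lem:equivMp}), and finally relates the \textsc{mix} and \algobase estimators (Lemma~\ref{lem:estimatorsrelationship}). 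You instead exploit the fact that the indicators of a uniform random $M$-subset are negatively associated, that monotone functions of disjoint blocks of NA variables remain NA, and a H\"older/MGF aggregation across color classes; this lets you use a greedy (not equitable) coloring, avoids the $e\sqrt{M}$ loss entirely, and yields the cleaner sufficient condition $M\gtrsim 2+t\sqrt[3]{3\varepsilon^{-2}(3h^{(t)}+1)\ln(2/\delta)/|\Delta^{(t)}|}$, which is comfortably implied by the stated hypothesis (one checks $t\Phi\ge\sqrt{t}\ge 5$ as in the paper's own argument, so the additive $2$ is absorbed). Two small corrections to your narrative, neither of which is a gap: in your reduction the factor $\xi^{(t)}$ cancels exactly (since $\varepsilon|\Delta^{(t)}|=\varepsilon\xi^{(t)}\mathbb{E}[\tau^{(t)}]$), so no slack is needed to ``absorb'' it; and the $\ln^{2/3}(t\Phi)$ correction and the $12\varepsilon^{-1}+e^2$ term are not produced by your inversion at all --- they are artifacts of the paper's $e\sqrt{M}$ transfer and of relating the two estimators, and in your route they are simply unused slack. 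Your obstacle~(i) is real but resolved by standard facts (Joag-Dev--Proschan for NA of without-replacement sampling and its closure properties, Dubhashi--Ranjan for Chernoff under NA), which you should cite explicitly; the H\"older step $\mathbb{E}[e^{s\tau}]\le\prod_j(\mathbb{E}[e^{cs\tau_j}])^{1/c}$ does deliver the single factor of $3h^{(t)}+1$ you want. What the paper's longer route buys in exchange is the standalone concentration bound for independent edge sampling (applicable to \textsc{mascot-c}), which the authors flag as being of independent interest.
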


The roadmap to proving Thm.~\ref{thm:baseconcentration} is the following:
\begin{longenum}
  \item we first define two simpler algorithms, named \textsc{indep} and
    \textsc{mix}. The algorithms use, respectively, fixed-probability
    independent sampling of edges and reservoir sampling (but with a different
    estimator than the one used by \algobase);
  \item we then prove concentration results on the estimators of
    \textsc{indep} and \textsc{mix}. Specifically, the concentration result for
    \textsc{indep} uses a result by~\citet{hajnal1970proof} on graph coloring,
    while the one for \textsc{mix} will depend on the concentration result for
    \textsc{indep} and on a Poisson-approximation-like technical result stating
    that probabilities of events when using reservoir sampling are close to the
    probabilities of those events when using fixed-probability independent
    sampling;
  \item we then show that the estimates returned by \algobase are close to
    the estimates returned by \textsc{mix};
  \item finally, we combine the above results and show that, if $M$ is large
    enough, then the estimation provided by \textsc{mix} is likely to be
    close to $|\Delta^{(t)}|$ and since the estimation computed by \algobase
    is close to that of \textsc{mix}, then it must also be close to
    $|\Delta^{(t)}|$.
\end{longenum}
\emph{Note:} for ease of presentation, in the following we use $\phi^{(t)}$ to
denote the estimation returned by \algobase, i.e.,
$\phi^{(t)}=\xi^{(t)}\tau^{(t)}$.

\paragraph{The \textsc{indep} algorithm}
The \textsc{indep} algorithm works as follows: it creates a sample
$\Sam_\textsc{in}$ by sampling edges in $E^{(t)}$ independently with a fixed
probability $p$. It estimates the global number of triangles in $G^{(t)}$ as
\[
  \phi_\textsc{in}^{(t)}= \frac{\tau_\textsc{in}^{(t)}}{p^3},
\]
where $\tau_\text{\sc in}^{(t)}$ is the number of triangles in
$\Sam_\textsc{in}$. This is for example the approach taken
by \textsc{mascot-c}~\citep{LimK15}.

\paragraph{The \textsc{mix} algorithm}
The \textsc{mix} algorithm works as follows: it uses reservoir sampling (like
\algobase) to create a sample $\Sam_\textsc{mix}$ of $M$ edges from $E^{(t)}$,
but uses a different estimator than the one used by \algobase. Specifically,
\textsc{mix} uses
\[
  \phi_\textsc{mix}^{(t)}=\left(\frac{t}{M}\right)^3\tau^{(t)}
\]
as an estimator for $|\Delta^{(t)}|$, where $\tau^{(t)}$ is, as in \algobase,
the number of triangles in $G^\Sam$ (\algobase uses
$\phi^{(t)}=\frac{t(t-1)(t-2)}{M(M-1)(M-2)}\tau^{(t)}$ as an estimator.)

We call this algorithm \textsc{mix} because it uses reservoir sampling to create
the sample, but computes the estimate as if it used fixed-probability
independent sampling, hence in some sense it ``mixes'' the two approaches.

\paragraph{Concentration results for \textsc{indep} and \textsc{mix}}
We now show a concentration result for \textsc{indep}. Then we show a technical
lemma (Lemma~\ref{lem:equivMp}) relating the probabilities of events when using
reservoir sampling to the probabilities of those events when using
fixed-probability independent sampling.  Finally, we use these results to show
that the estimator used by \textsc{mix} is also concentrated
(Lemma~\ref{lem:concentrationmix}).

\begin{lemma}\label{lem:concentrationindependent}
  Let $t\ge 0$ and assume $|\Delta^{(t)}|>0$.\footnote{For $|\Delta^{(t)}|=0$,
    \textsc{indep} correctly and deterministically returns $0$ as the
  estimation.} For any $\varepsilon,\delta\in(0,1)$, if
  \begin{equation}\label{eq:requirementp}
    p \ge \sqrt[3]{2\varepsilon^{-2}\ln \left(\frac{3h^{(t)}+1}{\delta}\right)\frac{3h^{(t)}+1}{|\Delta^{(t)}|}}
  \end{equation}
  then
  \[
    \Pr\left(|\phi_\textsc{in}^{(t)}-\Delta^{(t)}||<\varepsilon|\Delta^{(t)}|\right)> 1-\delta\enspace.
  \]
\end{lemma}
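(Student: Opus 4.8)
The plan is to rewrite $\tau_\textsc{in}^{(t)}$ as a sum of $\{0,1\}$-valued random variables, one per triangle, and then control the dependencies among them via an \emph{equitable} graph colouring. For each $\lambda\in\Delta^{(t)}$ let $Y_\lambda$ be the indicator of the event that all three edges of $\lambda$ are sampled into $\Sam_\textsc{in}$. Since edges are sampled independently with probability $p$, we have $\Pr(Y_\lambda=1)=p^3$ and $\tau_\textsc{in}^{(t)}=\sum_{\lambda\in\Delta^{(t)}}Y_\lambda$, so $\mathbb{E}[\tau_\textsc{in}^{(t)}]=p^3|\Delta^{(t)}|$ and $\mathbb{E}[\phi_\textsc{in}^{(t)}]=|\Delta^{(t)}|$. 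The target statement $|\phi_\textsc{in}^{(t)}-|\Delta^{(t)}||<\varepsilon|\Delta^{(t)}|$ is equivalent to $|\tau_\textsc{in}^{(t)}-p^3|\Delta^{(t)}||<\varepsilon p^3|\Delta^{(t)}|$, i.e.\ to a multiplicative deviation bound for the sum $\tau_\textsc{in}^{(t)}$ around its mean.

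The obstacle is that the $Y_\lambda$ are not mutually independent: $Y_\lambda$ and $Y_\gamma$ are dependent exactly when $\lambda$ and $\gamma$ share an edge (and, as noted, two distinct triangles share at most one edge). I would therefore introduce the dependency graph $\mathcal{D}$ on vertex set $\Delta^{(t)}$, joining $\lambda$ and $\gamma$ whenever they share an edge of $G^{(t)}$. Each triangle has $3$ edges, and each edge of $G^{(t)}$ lies in at most $h^{(t)}$ triangles, so the maximum degree of $\mathcal{D}$ is at most $3h^{(t)}$. By the Hajnal--Szemer\'edi theorem on equitable colourings~\citep{hajnal1970proof}, $\mathcal{D}$ then admits a proper colouring with $N\le 3h^{(t)}+1$ colours in which the sizes of any two colour classes differ by at most one; in particular every class has size at least $|\Delta^{(t)}|/(3h^{(t)}+1)-1$, which is essentially $|\Delta^{(t)}|/(3h^{(t)}+1)$ because~\eqref{eq:requirementp} together with $p\le1$ already forces $|\Delta^{(t)}|\ge 3h^{(t)}+1$. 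The reason for insisting on an \emph{equitable} colouring rather than an arbitrary proper one is exactly this size lower bound: a plain proper colouring could leave tiny classes, on which no useful concentration is available.

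Within a single colour class $C$ the triangles are pairwise edge-disjoint, so $\{Y_\lambda:\lambda\in C\}$ is a family of independent Bernoulli$(p^3)$ variables, and a standard multiplicative Chernoff bound gives $\Pr\!\big(|\sum_{\lambda\in C}Y_\lambda-p^3|C||\ge\varepsilon p^3|C|\big)\le 2e^{-\varepsilon^2 p^3|C|/3}$. Substituting $|C|\gtrsim|\Delta^{(t)}|/(3h^{(t)}+1)$ and taking a union bound over the $N\le 3h^{(t)}+1$ classes shows that, with high probability, every class sum is within an $\varepsilon$-fraction of its mean; on that event the triangle inequality yields $|\tau_\textsc{in}^{(t)}-p^3|\Delta^{(t)}||<\varepsilon\sum_{C}p^3|C|=\varepsilon p^3|\Delta^{(t)}|$, which is the desired bound.

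The last step is then purely a matter of bookkeeping: verify that hypothesis~\eqref{eq:requirementp} on $p$ drives the aggregate failure probability below $\delta$, by plugging the lower bound on $p^3$ from~\eqref{eq:requirementp} into the union-bounded tail and simplifying. I expect this numerical reconciliation — matching the Chernoff exponent, the union-bound multiplicity $\Theta(h^{(t)})$, and the $\pm1$ rounding slack of the equitable colouring against the exact constants in~\eqref{eq:requirementp} (possibly by treating the two tails with their sharper separate constants, or by replacing the crude union bound with a large-deviation inequality for partly dependent sums keyed on the same colouring) — to be the only delicate part of the argument; the structural heart is the dependency-graph/equitable-colouring reduction above.
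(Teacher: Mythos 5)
Your proposal follows essentially the same route as the paper's own proof: the dependency graph on triangles with maximum degree $3h^{(t)}$, the Hajnal--Szemer\'edi equitable colouring into at most $3h^{(t)}+1$ classes of independent Bernoulli$(p^3)$ indicators, a Chernoff bound per class, a union bound over classes, and the triangle inequality to aggregate. The only differences are cosmetic bookkeeping (your Chernoff exponent uses $/3$ where the paper uses $/2$, and you are in fact slightly more careful than the paper about the $\pm 1$ slack in the equitable class sizes), so the argument is correct as proposed.
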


\begin{proof}
  Let $H$ be a graph built as follows: $H$ has one node for each triangle in
  $G^{(t)}$ and there is an edge between two nodes in $H$ if the corresponding
  triangles in $G^{(t)}$ share an edge. By this construction, the maximum degree
  in $H$ is $3h^{(t)}$. Hence by the Hajanal-Szem\'eredi's
  theorem~\citep{hajnal1970proof} there is a proper coloring of $H$ with at most
  $3h^{(t)}+1$ colors such that for each color there are at least $L =
  \frac{|\Delta^{(t)}|}{3h^{(t)}+1}$ nodes with that color.

  Assign an arbitrary numbering to the triangles of $G^{(t)}$
  (and, therefore, to the nodes of $H$) and let $X_i$ be a Bernoulli random
  variable, indicating whether the triangle $i$ in $G^{(t)}$ is in the sample at time $t$.
  From the properties of independent sampling of edges we have
  $\Pr(X_i=1)=p^3$ for any triangle $i$.  For any color $c$ of the coloring of
  $H$, let $\mathcal{X}_c$ be the set of r.v.'s $X_i$ such that the node $i$
  in $H$ has color $c$. Since the coloring of $H$ which we are considering is
  proper, the r.v.'s in $\mathcal{X}_c$ are independent, as they correspond to
  triangles which do not share any edge and edges are sampled independent of
  each other. Let $Y_c$ be the sum of the r.v.'s in $\mathcal{X}_c$. The
  r.v.~$Y_c$ has a binomial distribution with parameters $|\mathcal{X}_c|$ and
  $p_t^3$. By the Chernoff bound for binomial r.v.'s, we have that
  \begin{align*}
    \Pr\left(|p^{-3}Y_c - |\mathcal{X}_c||>\varepsilon
    |\mathcal{X}_c|\right)&<
      2\exp\left(-\varepsilon^2p^3|\mathcal{X}_c|/2\right)\\
      &<2\exp\left(-\varepsilon^2p^3L/2\right)\\
      &\le\frac{\delta}{3h^{(t)}+1},
    \end{align*}
    where the last step comes from the requirement
    in~\eqref{eq:requirementp}.Then by applying the union bound over all the (at
    most) $3h^{(t)}+1$ colors we get
    \[
      \Pr(\exists \mbox{ color } c \mbox{ s.t. } |p^{-3}Y_c -
      |\mathcal{X}_c||>\varepsilon |\mathcal{X}_c| ) < \delta\enspace.
    \]
    Since $\phi_\textsc{in}{(t)}=p^{-3}\displaystyle\sum_{\mbox{\tiny color } c} Y_c$,
    from the above equation we have that, with probability at least $1-\delta$,
    \begin{align*}
      |\phi_\textsc{in}^{(t)}-|\Delta^{(t)}||&\le \left|\sum_{\mbox{\tiny color } c}
      p^{-3}Y_c-\sum_{\mbox{\tiny color } c} |\mathcal{X}_c|\right| \\
      &\le \sum_{\mbox{\tiny color } c} |p^{-3}Y_c - |\mathcal{X}_c||
      \le \sum_{\mbox{\tiny color } c}\varepsilon|\mathcal{X}_c|\le\varepsilon
      |\Delta^{(t)}|\enspace.
    \end{align*}
\end{proof}

The above result is of independent interest and can be used, for example, to
give concentration bounds to the estimation computed by
\textsc{mascot-c}~\citep{LimK15}.

We remark that we can not use the same approach from
Lemma~\ref{lem:concentrationindependent} to show a concentration result for
\algobase because it uses reservoir sampling, hence the event of having a
triangle $a$ in $\Sam$ and the event of having another triangle $b$ in $\Sam$
are not independent.

We can however show the following general result, similar in spirit to the
well-know Poisson approximation of balls-and-bins
processes~\citep{mitzenmacher2005probability}. Its proof can be found in
App.~\ref{app:algobase}.

Fix the parameter $M$ and a time $t>M$. Let $\Sam_\textsc{mix}$ be a sample of
$M$ edges from $E^{(t)}$ obtained through reservoir sampling (as \textsc{mix}
would do), and let $\Sam_\textsc{in}$ be a sample of the edges in $E^{(t)}$
obtained by sampling edges independently with probability $M/t$ (as
\textsc{indep} would do). We remark that the size of $\Sam_\textsc{in}$ is in
$[0,t]$ but not necessarily $M$.

\begin{lemma}\label{lem:equivMp}
  Let $f~:~2^{E^{(t)}}\to \{0,1\}$ be an arbitrary binary function from the
  powerset of $E^{(t)}$ to $\{0,1\}$ . We have
  \[
    \Pr\left(f(\Sam_\textsc{mix}) = 1\right) \le e\sqrt{M}
    \Pr\left(f(\Sam_\textsc{in}) = 1\right)\enspace.
  \]
\end{lemma}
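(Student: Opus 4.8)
The plan is to condition on the number of edges actually sampled. Let $K$ be the size of $\Sam_\textsc{in}$, the independent sample, so $K$ is a binomial random variable with parameters $t$ and $p = M/t$. The key observation is that, conditioned on the event $\{K = M\}$, the independent sample $\Sam_\textsc{in}$ is distributed exactly as a uniformly random $M$-subset of $E^{(t)}$, which is precisely the distribution of the reservoir sample $\Sam_\textsc{mix}$ (this is the standard fact that reservoir sampling yields a uniform $M$-subset, already implicit in Lemma~\ref{lem:reservoirhighorder}). Therefore
\[
  \Pr\left(f(\Sam_\textsc{mix}) = 1\right) = \Pr\left(f(\Sam_\textsc{in}) = 1 \mid K = M\right)
  = \frac{\Pr\left(f(\Sam_\textsc{in}) = 1 \text{ and } K = M\right)}{\Pr(K = M)}
  \le \frac{\Pr\left(f(\Sam_\textsc{in}) = 1\right)}{\Pr(K = M)}\enspace.
\]
So the entire lemma reduces to the purely numerical claim that $\Pr(K = M) \ge \frac{1}{e\sqrt{M}}$, i.e., that the probability a $\mathrm{Binomial}(t, M/t)$ variable hits its mean exactly is at least $1/(e\sqrt M)$.

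To establish that lower bound on $\Pr(K=M) = \binom{t}{M}(M/t)^M(1-M/t)^{t-M}$, I would invoke a standard non-asymptotic form of Stirling's approximation: $n! \le e\sqrt{n}\,(n/e)^n$ and $n! \ge \sqrt{2\pi n}\,(n/e)^n \ge \sqrt{n}\,(n/e)^n$, valid for all $n\ge 1$. Substituting the three factorials in $\binom{t}{M} = t!/(M!\,(t-M)!)$ and combining with the $(M/t)^M(1-M/t)^{t-M}$ factor, the $(\cdot/e)^{(\cdot)}$ terms cancel exactly (this is the familiar "entropy cancellation" in the central binomial term), leaving a ratio of $\sqrt{\cdot}$ factors times a residual constant. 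Tracking the constants from the upper Stirling bound in the denominator (the $M!$ and $(t-M)!$ terms) and the lower bound in the numerator ($t!$), one gets $\Pr(K=M) \ge c\sqrt{\frac{t}{M(t-M)}}$ for an explicit constant $c$; since $t \ge M$, this is at least $c/\sqrt{M}$, and checking the constant shows $c \ge 1/e$ suffices (one may also absorb slack by noting $t>M$ means $t-M\le t$ is not quite what's needed, so instead use $\sqrt{t/(t-M)}\ge 1$ to drop that factor cleanly). This yields $\Pr(K = M)\ge \frac{1}{e\sqrt M}$, completing the argument.

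I expect the main obstacle to be purely bookkeeping: getting the Stirling constants to land on exactly $e\sqrt M$ rather than some larger constant times $\sqrt M$. One has to be careful about which direction each Stirling bound points (upper bound on the factorials in the denominator of $\binom{t}{M}$, lower bound on the one in the numerator) and to handle the regime $t - M$ possibly small — if $t = M+O(1)$ the factor $\sqrt{t/(t-M)}$ is large and actually helps, whereas if $t \gg M$ it tends to $1$, so bounding it below by $1$ is safe and sufficient in all cases. The conditioning argument in the first paragraph is conceptually the heart of the lemma and is short; everything after it is elementary, so no genuinely hard step remains once the reduction to $\Pr(K=M)$ is in place.
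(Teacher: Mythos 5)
Your proposal is essentially identical to the paper's proof: the paper applies the law of total probability to $|\Sam_\textsc{in}|$, keeps only the $k=M$ term, uses the fact that $\Sam_\textsc{in}$ conditioned on having size $M$ is a uniform $M$-subset (Lemma~\ref{lem:reservoir}), and then lower-bounds $\Pr(|\Sam_\textsc{in}|=M)$ by $1/(e\sqrt{M})$ via the same Stirling bounds you cite. The constant-chasing worry you flag is real but shared by the paper's own computation (the raw Stirling ratio gives $\sqrt{2\pi}\,\sqrt{t}/(e^2\sqrt{M}\sqrt{t-M})$, which needs the $\sqrt{t/(t-M)}>1$ factor, or a slightly sharper binomial-mode bound, to reach $1/(e\sqrt{M})$), so there is no substantive difference between the two arguments.
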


We now use the above two lemmas to show that the estimator
$\phi_\textsc{mix}^{(t)}$ computed by \textsc{mix} is concentrated. We will
first need the following technical fact.

\begin{fact}\label{fact:loglog}
  For any $x\ge 5$, we have
  \[
    \ln\left(x(1+\ln^{2/3}x)\right) \le \ln^2 x\enspace.
  \]
\end{fact}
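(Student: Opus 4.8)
\textbf{Proof proposal for Fact~\ref{fact:loglog}.}
The plan is to substitute $y=\ln x$ (so $x=e^y$ and $y\ge\ln 5$), rewrite the claim as $y+\ln\!\left(1+y^{2/3}\right)\le y^2$, and split the range of $y$ at the point $y=2$ (i.e.\ $x=e^2$). For the ``large'' regime $y\ge 2$, I would first show the elementary auxiliary inequality $1+y^{2/3}\le e^{y}=x$ for $y\ge 2$: the function $\psi(y)=e^y-y^{2/3}-1$ is positive at $y=2$ and has derivative $e^y-\tfrac{2}{3}y^{-1/3}>0$ there and beyond, so $\psi$ is increasing and stays positive. Consequently $\ln\!\left(1+y^{2/3}\right)\le \ln x = y$, which gives $y+\ln\!\left(1+y^{2/3}\right)\le 2y\le y^2$ since $y\ge 2$. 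This settles all $x\ge e^2$.

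For the ``small'' regime $\ln 5\le y<2$, I would use monotonicity of $y\mapsto y^{2/3}$ to bound $\ln\!\left(1+y^{2/3}\right)<\ln\!\left(1+\sqrt[3]{4}\right)=:c$, a fixed constant which one checks numerically satisfies $c<0.96$. Then it suffices to verify $y+c\le y^2$ on $[\ln 5,2)$, i.e.\ $y^2-y-c\ge 0$, which holds precisely for $y\ge \tfrac{1}{2}\!\left(1+\sqrt{1+4c}\right)$; plugging in $c<0.96$ shows this threshold is strictly below $1.60$, while $y\ge\ln 5>1.609$. Combining the two regimes yields the claim for all $x\ge 5$.

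An equally viable alternative is a single monotonicity argument: set $F(x)=\ln^2 x-\ln x-\ln\!\left(1+\ln^{2/3}x\right)$ and compute $F'(x)=\tfrac1x\!\left(2\ln x-1-\tfrac{2}{3}\tfrac{\ln^{-1/3}x}{1+\ln^{2/3}x}\right)$; for $x\ge 5$ the term $2\ln x-1$ exceeds $2$ while the subtracted term is below $\tfrac23$, so $F'>0$ and $F$ is increasing, reducing everything to checking $F(5)>0$.

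In either approach the only delicate point is that the inequality is genuinely tight near $x=5$ (the slack $F(5)$ is only about $0.1$), so the main obstacle is carrying enough numerical precision on $\ln 5$ and on the constant $\ln\!\left(1+\sqrt[3]{4}\right)$ to guarantee that the relevant threshold stays strictly below $\ln 5$; every other estimate in the argument is comfortably loose and elementary.
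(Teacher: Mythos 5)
The paper states Fact~\ref{fact:loglog} without any proof (as it does for Facts~\ref{fact:first} and~\ref{fact:second}), so there is no in-paper argument to compare against; your proposal actually supplies a justification the paper omits. Both of your routes are correct. In the two-regime version, the split at $y=\ln x=2$ works: for $y\ge 2$ the auxiliary bound $1+y^{2/3}\le e^y$ holds (indeed $1+2^{2/3}\approx 2.59<e^2$ and $\psi'(y)=e^y-\tfrac{2}{3}y^{-1/3}>0$ there), giving $y+\ln(1+y^{2/3})\le 2y\le y^2$; and on $[\ln 5,2)$ the constant $c=\ln\bigl(1+\sqrt[3]{4}\bigr)\approx 0.9506<0.96$ yields the threshold $\tfrac12\bigl(1+\sqrt{1+4c}\bigr)<1.6<\ln 5\approx 1.6094$, so $y^2-y\ge c$ throughout. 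The single-monotonicity alternative is equally sound: for $x\ge 5$ one has $2\ln x-1>2$ while $\tfrac{2}{3}\ln^{-1/3}x/(1+\ln^{2/3}x)<\tfrac23$, so $F'>0$, and $F(5)\approx 0.12>0$. Your closing caveat about tightness near $x=5$ is exactly the right thing to flag, since the margin between $1.6$ and $\ln 5$ is only about $0.009$; the numerical constants you cite are accurate enough to close that gap.
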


\begin{lemma}\label{lem:concentrationmix}
  Let $t\ge 0$ and assume $|\Delta^{(t)}|<0$. For any
  $\varepsilon,\delta\in(0,1)$, let
  \[
    \Psi = 2\varepsilon^{-2}\frac{3h^{(t)}+1}{|\Delta^{(t)}|} \ln
    \left(e\frac{3h^{(t)}+1}{\delta}\right) \enspace.
  \]
  If
  \[
    M \ge\max\left\{ t \sqrt[3]{\Psi} \left(1 + \frac{1}{2}\ln^{2/3} \left (t \sqrt[3]{\Psi}
    \right)\right), 25\right\}
  \]
  then
  \[
    \Pr\left(|\phi_\textsc{mix}^{(t)}-|\Delta^{(t)}||<\varepsilon|\Delta^{(t)}|\right)\ge
    1-\delta\enspace.
  \]
\end{lemma}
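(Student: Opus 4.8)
The plan is to piggy-back on the concentration bound already established for \textsc{indep} (Lemma~\ref{lem:concentrationindependent}) together with the transfer inequality of Lemma~\ref{lem:equivMp}, paying the $e\sqrt{M}$ factor that the latter costs and then absorbing it into the requirement on $M$. I would work throughout under the assumption $t>M$ --- the regime in which reservoir sampling actually discards edges and the one for which Lemma~\ref{lem:equivMp} is stated --- and under $|\Delta^{(t)}|>0$ as in the statement.

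First I would observe that \textsc{mix} and \textsc{indep} produce estimators of the same shape, only from differently-sampled edge sets: writing $\tau(S)$ for the number of triangles in the subgraph of $G^{(t)}$ induced by an edge set $S$, we have $\phi_\textsc{mix}^{(t)}=(t/M)^3\tau(\Sam_\textsc{mix})$, while choosing the independent sampling probability $p=M/t$ makes $\phi_\textsc{in}^{(t)}=p^{-3}\tau(\Sam_\textsc{in})=(t/M)^3\tau(\Sam_\textsc{in})$. Hence the ``failure'' event $\{\,|\phi-|\Delta^{(t)}||\ge\varepsilon|\Delta^{(t)}|\,\}$ is one and the same binary function $f$ of the stored edge set in both cases, so Lemma~\ref{lem:equivMp} immediately yields $\Pr(f(\Sam_\textsc{mix})=1)\le e\sqrt{M}\,\Pr(f(\Sam_\textsc{in})=1)$.

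Next I would apply Lemma~\ref{lem:concentrationindependent} to \textsc{indep} with $p=M/t$ but with the confidence parameter set to $\delta'=\delta/(e\sqrt{M})\in(0,1)$ (legitimate since $M\ge25$): if $p=M/t$ meets the hypothesis of that lemma with $\delta$ replaced by $\delta'$, then $\Pr(f(\Sam_\textsc{in})=1)<\delta'$, hence $\Pr(f(\Sam_\textsc{mix})=1)\le e\sqrt{M}\,\delta'=\delta$, which is exactly the claim. Spelling out that hypothesis with $\delta'$, using $\ln\!\bigl((3h^{(t)}+1)/\delta'\bigr)=\ln\!\bigl(e(3h^{(t)}+1)/\delta\bigr)+\tfrac12\ln M$ together with $\ln\!\bigl(e(3h^{(t)}+1)/\delta\bigr)\ge1$, and cubing, reduces everything to showing that the stated lower bound on $M$ implies
\[
  M^3 \;\ge\; t^3\,\Psi\bigl(1+\tfrac12\ln M\bigr).
\]

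The hard part will be this last, self-referential inequality, since $M$ occurs on both sides. The plan there is: set $c=t\sqrt[3]{\Psi}$, note that $M\mapsto M^3-c^3(1+\tfrac12\ln M)$ is increasing for $M\ge c$, so it suffices to verify the inequality at the smallest admissible value $M_0=\max\{c(1+\tfrac12\ln^{2/3}c),25\}$; when $M_0=25$ the hypothesis pins $c$ small enough for a direct numerical check, and when $M_0=c(1+\tfrac12\ln^{2/3}c)$ one expands the cube $\bigl(1+\tfrac12\ln^{2/3}c\bigr)^3$ and bounds $\ln M_0$ in terms of $c$ via Fact~\ref{fact:loglog}, reducing to an elementary inequality in $u=\ln c$ --- the role of the $\tfrac12\ln^{2/3}c$ correction in the hypothesized bound on $M$ is precisely to make the cube dominate the $\tfrac12\ln M$ term. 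All the probabilistic substance sits in Lemmas~\ref{lem:concentrationindependent} and~\ref{lem:equivMp}; the only genuine obstacle is this calculus estimate. This result then feeds into the proof of Thm.~\ref{thm:baseconcentration} once one additionally argues that \algobase's estimator is close to $\phi_\textsc{mix}^{(t)}$.
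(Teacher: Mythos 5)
Your proposal matches the paper's proof essentially step for step: the same failure function $f$ of the stored edge set, the same transfer via Lemma~\ref{lem:equivMp} at a cost of $e\sqrt{M}$ absorbed by invoking Lemma~\ref{lem:concentrationindependent} at confidence $\delta/(e\sqrt{M})$, the same reduction to $M^3\ge t^3\Psi\bigl(1+\tfrac{1}{2}\ln M\bigr)$ using $\ln\bigl(e(3h^{(t)}+1)/\delta\bigr)\ge 1$, and the same use of Fact~\ref{fact:loglog} together with $t\sqrt[3]{\Psi}\ge 5$ (from $M\ge 25$ and $|\Delta^{(t)}|\le t^{3/2}$) to dispose of the self-referential inequality. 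Your explicit monotonicity-plus-check-at-the-minimum treatment of that last inequality is, if anything, a slight tightening of the paper's $(A+B)^3>A^3+B^3$ split, which tacitly evaluates $M$ at its boundary value.
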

\begin{proof}
  For any $S\subseteq E^{(t)}$ let $\tau(S)$ be the number of triangles in $S$,
  i.e., the number of triplets of edges in $S$ that compose a triangle in
  $G^{(t)}$. Define the function $g ~:~ 2^{E^{(t)}}\to \mathbb{R}$ as
  \[
    g(S) =\left(\frac{t}{M}\right)^3 \tau(S)\enspace.
  \]
  Assume that we run \textsc{indep} with $p=M/t$, and let
  $\Sam_\textsc{in}\subseteq E^{(t)}$ be the sample built by \textsc{indep}
  (through independent sampling with fixed probability $p$). Assume also that
  we run \textsc{mix} with parameter $M$, and let $\Sam_\textsc{mix}$ be the
  sample built by \textsc{mix} (through reservoir sampling with a reservoir of
  size $M$). We have that $\phi_\textsc{in}^{(t)}=g(\Sam_\textsc{in})$ and
  $\phi_\textsc{mix}^{(t)}=g(\Sam_\textsc{mix})$. Define now the binary
  function $f ~:~ 2^{E^{(t)}}\to \{0,1\}$ as
  \[
    f(S)=\left\{\begin{array}{cl}1 & \text{if }
  |g(S)-|\Delta^{(t)}||>\varepsilon|\Delta^{(t)}|\\ 0 & \text{otherwise} \end{array}\right.\enspace.
  \]
  We now show that, for $M$ as in the hypothesis, we have
  \begin{equation}\label{eq:whattoshow}
    p \ge \sqrt[3]{2\varepsilon^{-2}\frac{3h^{(t)}+1}{|\Delta^{(t)}|} \ln \left(e\sqrt{M}\frac{3h^{(t)}+1}{\delta}\right)}\enspace.
  \end{equation}
  Assume for now that the above is true. From this, using
  Lemma~\ref{lem:concentrationindependent} and the above fact about $g$ we get
  that
  \[
    \Pr\left(|\phi_\textsc{in}^{(t)}-|\Delta^{(t)}||>\varepsilon|\Delta^{(t)}|\right)
    = \Pr\left(f(\Sam_\textsc{in})=1\right) < \frac{\delta}{e\sqrt{M}}\enspace.
  \]
  From this and Lemma~\ref{lem:equivMp}, we get that
  \[
    \Pr\left(f(\Sam_\textsc{mix})=1\right) \le \delta
  \]
  which, from the definition of $f$ and the properties of $g$, is equivalent to
  \[
    \Pr\left(|\phi_\textsc{mix}^{(t)}-|\Delta^{(t)}||>\varepsilon|\Delta^{(t)}|\right) \le \delta
  \]
  and the proof is complete. All that is left is to show
  that~\eqref{eq:whattoshow} holds for $M$ as in the hypothesis.

  Since $p=M/t$, we have that~\eqref{eq:whattoshow} holds for
  \begin{align}
    M^3 &\ge t^3 2\varepsilon^{-2}\frac{3h^{(t)}+1}{|\Delta^{(t)}|}  \ln
    \left(\sqrt{M}e\frac{3h^{(t)}+1}{\delta}\right)\nonumber\\
    & =  t^3 2\varepsilon^{-2}\frac{3h^{(t)}+1}{|\Delta^{(t)}|}  \left (\ln
    \left(e\frac{3h^{(t)}+1}{\delta}\right) + \frac{1}{2}\ln M
  \right)\enspace.\label{eq:M}
\end{align}
We now show that~\eqref{eq:M} holds.

Let $A=t\sqrt[3]{\Psi}$ and let $B=t\sqrt[3]{\Psi}\ln^{2/3} \left (t
\sqrt[3]{\Psi} \right)$. We now show that $A^3+B^3$ is greater or equal to the
r.h.s.~of~\eqref{eq:M}, hence $M^3=(A+B)^3> A^3 + B^3$ must also be greater or
equal to the r.h.s.~of~\eqref{eq:M}, i.e., \eqref{eq:M} holds. This really
reduces to show that
\begin{equation}\label{eq:bcube}
  B^3\ge t^3 2\varepsilon^{-2}\frac{3h^{(t)}+1}{|\Delta^{(t)}|}\frac{1}{2}\ln M
\end{equation}
as the r.h.s.of~\eqref{eq:M} can be written as
\[
  A^3+ t^3
  2\varepsilon^{-2}\frac{3h^{(t)}+1}{|\Delta^{(t)}|}\frac{1}{2}\ln
  M\enspace.
\]
We actually show that
\begin{equation}\label{eq:bcubesecond}
  B^3 \ge t^3\Psi\frac{1}{2}\ln M
\end{equation}
which implies~\eqref{eq:bcube} which, as discussed, in turn
implies~\eqref{eq:M}. Consider the ratio
\begin{equation}\label{eq:ratio}
  \frac{B^3}{t^3\Psi\frac{1}{2}\ln M} =
  \frac{\frac{1}{2}t^3\Psi\ln^2(t\sqrt[3]{\Psi})}{t^3\Psi\frac{1}{2}\ln M} =
  \frac{\ln^2(t\sqrt[3]{\Psi})}{\ln M} \ge
  \frac{\ln^2(t\sqrt[3]{\Psi})}{\ln\left(t \sqrt[3]{\Psi}
  \left(1 + \ln^{2/3} \left (t \sqrt[3]{\Psi} \right)\right)\right)
  }\enspace.
\end{equation}
We now show that  $t\sqrt[3]{\Psi} \ge 5$. By the assumptions $t> M \ge 25$ and by
\[
  t \sqrt[3]{\Psi}\ge \frac{t}{\sqrt[3]{|\Delta^{(t)}|}} \ge \sqrt{t}
\]
which holds because $|\Delta^{(t)}| \le t^{3/2}$ (in a graph with $t$ edges there can
not be more than $t^{3/2}$ triangles) we have that $t\sqrt[3]{\Psi} \ge 5$.
Hence Fact~\ref{fact:loglog} holds and we can write, from~\eqref{eq:ratio}:
\[
  \frac{\ln^2(t\sqrt[3]{\Psi})}{\ln\left(t \sqrt[3]{\Psi}
  \left(1 + \ln^{2/3} \left (t \sqrt[3]{\Psi} \right)\right)\right)}\ge
  \frac{\ln^2(t\sqrt[3]{\Psi})}{\ln^2\left(t
  \sqrt[3]{\Psi}\right)}\ge 1,
\]
which proves~\eqref{eq:bcubesecond}, and in cascade~\eqref{eq:bcube},
\eqref{eq:M}, \eqref{eq:whattoshow}, and the thesis.
\end{proof}

\paragraph{Relationship between \algobase and \textsc{mix}} When both \algobase
and \textsc{mix} use a sample of size $M$, their respective estimators
$\phi^{(t)}$ and $\phi_\textsc{mix}^{(t}$ are related as discussed in the
following result, whose straightforward proof is deferred to App.~\ref{app:algobase}.

\begin{lemma}\label{lem:estimatorsrelationship}
  For any $t>M$ we have
  \[
    \left|\phi^{(t)}-\phi_\textsc{mix}^{(t)}\right|\le
    \phi_\textsc{mix}^{(t)}\frac{4}{M-2}\enspace.
  \]
\end{lemma}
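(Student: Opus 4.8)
The plan is to directly compare the two estimators by writing out their closed forms and bounding their ratio. Recall that $\algobase$ uses $\phi^{(t)} = \xi^{(t)}\tau^{(t)}$ with $\xi^{(t)} = \xi_{3,t} = \frac{t(t-1)(t-2)}{M(M-1)(M-2)}$, while $\textsc{mix}$ uses $\phi_\textsc{mix}^{(t)} = \left(\frac{t}{M}\right)^3\tau^{(t)}$. Since both estimators are applied to the \emph{same} quantity $\tau^{(t)}$ (the number of triangles in $G^\Sam$ — here we are comparing them as functions, or equivalently on a common sample), the difference is
\[
  \left|\phi^{(t)} - \phi_\textsc{mix}^{(t)}\right| = \tau^{(t)}\left|\frac{t(t-1)(t-2)}{M(M-1)(M-2)} - \frac{t^3}{M^3}\right| = \phi_\textsc{mix}^{(t)}\left|\frac{M^3(t-1)(t-2)}{t^2(M-1)(M-2)} - 1\right|\enspace,
\]
so the whole lemma reduces to showing that the quantity $\rho := \frac{M^3(t-1)(t-2)}{t^2(M-1)(M-2)}$ satisfies $|\rho - 1| \le \frac{4}{M-2}$.

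**Next I would** estimate $\rho$ by splitting it as a product of the two factors $\frac{M^3}{(M-1)(M-2)}$ and $\frac{(t-1)(t-2)}{t^2}$. The second factor is $1 - \frac{3}{t} + \frac{2}{t^2} < 1$, so it only pushes $\rho$ down. The first factor exceeds $M$: writing $\frac{M^3}{(M-1)(M-2)} = M\cdot\frac{M^2}{M^2-3M+2} = M\left(1 + \frac{3M-2}{M^2-3M+2}\right)$, and similarly the combined expression can be written as $M\cdot(1+a)(1-b)$ with $a = \frac{3M-2}{(M-1)(M-2)} \le \frac{3}{M-2}$ (using $M \ge 6 > 1$) and $b = \frac{3t-2}{t^2} \le \frac{3}{t}$. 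The cleanest route is to bound $\rho$ from above and below separately: since $t > M$, one gets $\rho < \frac{M^3}{(M-1)(M-2)} = M + \frac{3M-2}{(M-1)(M-2)} \le M + \frac{3}{M-2}\cdot\frac{M}{M-1}$ — wait, that is not of the form $1 + \frac{4}{M-2}$, which signals that the comparison must be with $\phi_\textsc{mix}^{(t)}$, i.e. after dividing by $\rho$ itself, not treating $\rho$ as an absolute multiplier. So instead I bound $\left|1 - \rho^{-1}\right|$: we have $\rho^{-1} = \frac{t^2(M-1)(M-2)}{M^3(t-1)(t-2)}$, and since $\frac{t^2}{(t-1)(t-2)} > 1$ while $\frac{(M-1)(M-2)}{M^3} = \frac{1}{M}\cdot\frac{(M-1)(M-2)}{M^2} < \frac{1}{M}$, careful two-sided bounds give $\left|\rho^{-1} - 1\right| \le \frac{3}{M} + \frac{3}{t} + (\text{lower-order}) \le \frac{4}{M-2}$ for $M \ge 6$, and then $\left|\phi^{(t)} - \phi_\textsc{mix}^{(t)}\right| = \phi_\textsc{mix}^{(t)}|\rho - 1| = \phi_\textsc{mix}^{(t)} \cdot \rho \cdot |\rho^{-1} - 1|$; to keep the bound clean one works with the $\phi_\textsc{mix}^{(t)}$-relative form directly.

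**The main obstacle** is purely bookkeeping: getting the constant to be exactly $\frac{4}{M-2}$ (rather than, say, $\frac{5}{M-3}$ or $\frac{4}{M}$) requires being slightly careful about which of the several "$1 + O(1/M)$" factors are bounded above versus below, and exploiting that the $t$-dependent factor $\frac{(t-1)(t-2)}{t^2}$ works \emph{in our favor} (it is $< 1$) so that the positive $O(1/t)$ corrections partially cancel the positive $O(1/M)$ corrections coming from $\frac{M^3}{(M-1)(M-2)}$. The assumption $M \ge 6$ (equivalently $M - 2 \ge 4$, $M-1 \ge 5$) is what converts the crude sum of correction terms into the stated $\frac{4}{M-2}$; one verifies, e.g., that $\frac{3M-2}{(M-1)(M-2)} \le \frac{4}{M-2}$ reduces to $3M - 2 \le 4(M-1)$, i.e. $M \ge 2$, which holds comfortably, and that the residual negative contribution from the $t$-factor only helps. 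Everything else is a one-line substitution of the definitions of $\xi^{(t)}$, $\phi^{(t)}$, and $\phi_\textsc{mix}^{(t)}$ followed by elementary algebra, which is why the paper relegates it to the appendix.
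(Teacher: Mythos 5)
Your overall strategy is exactly the paper's: factor out $\phi_\textsc{mix}^{(t)}$ and bound the ratio of the two normalizing constants, using the inequality $\frac{3M-2}{(M-1)(M-2)}\le\frac{4}{M-2}$ (the paper's Fact~\ref{fact:first} in disguise). However, there is a concrete algebra slip that derails the middle of your argument: dividing $\xi^{(t)}=\frac{t(t-1)(t-2)}{M(M-1)(M-2)}$ by $\frac{t^3}{M^3}$ gives $\frac{M^2(t-1)(t-2)}{(M-1)(M-2)t^2}$, with $M^2$ in the numerator, not the $M^3$ in your $\rho$. Your extra factor of $M$ is precisely why you then find that ``the first factor exceeds $M$'' and that the ratio ``is not of the form $1+\frac{4}{M-2}$''; that observation is a symptom of the miscomputation, not a sign that a different comparison is needed. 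The detour you take in response does not repair it: with your $\rho\approx M$ one has $\rho^{-1}\approx 1/M$, so $|\rho^{-1}-1|\approx 1$, and the claimed estimate $|\rho^{-1}-1|\le\frac{3}{M}+\frac{3}{t}+\cdots\le\frac{4}{M-2}$ is false, as is the identity $|\phi^{(t)}-\phi_\textsc{mix}^{(t)}|=\phi_\textsc{mix}^{(t)}|\rho-1|$ for that $\rho$.

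Once the stray factor of $M$ is removed, your own ingredients finish the proof exactly as the paper does: the correct ratio $R=\frac{M^2}{(M-1)(M-2)}\cdot\frac{(t-1)(t-2)}{t^2}$ satisfies $R\le\frac{M^2}{(M-1)(M-2)}=1+\frac{3M-2}{(M-1)(M-2)}\le1+\frac{4}{M-2}$ (your $M\ge2$ computation), since the $t$-factor is $<1$. For the lower bound you should also verify $R\ge1$, which holds because $x\mapsto\frac{(x-1)(x-2)}{x^2}$ is increasing for $x\ge2$ and $t>M$, so $\frac{(t-1)(t-2)}{t^2}\ge\frac{(M-1)(M-2)}{M^2}$; this gives $|R-1|\le\frac{4}{M-2}$ and hence the lemma. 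So the gap is entirely in the bookkeeping of the normalization, not in the idea.
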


\paragraph{Tying everything together} Finally we can use the previous lemmas
to prove our concentration result for \algobase.

\begin{proof}[of Thm.~\ref{thm:baseconcentration}]
  For $M$ as in the hypothesis we have, from Lemma~\ref{lem:concentrationmix},
  that
  \[
    \Pr\left(\phi_\textsc{mix}^{(t)}\le
    (1+\varepsilon/2)|\Delta^{(t)}|\right)\ge
    1-\delta\enspace.
  \]
  Suppose the event $\phi_\textsc{mix}^{(t)}\le
  (1+\varepsilon/2)|\Delta^{(t)}|$ (i.e., $|\phi_\textsc{mix}^{(t)} -
  |\Delta^{(t)}| |\le \varepsilon|\Delta^{(t)}|/2$) is indeed verified. From this and
  Lemma~\ref{lem:estimatorsrelationship} we have
  \[
    |\phi^{(t)}-\phi_\textsc{mix}^{(t)}|\le
    \left(1+\frac{\varepsilon}{2}\right)|\Delta^{(t)}|\frac{4}{M-2}\le
    |\Delta^{(t)}|\frac{6}{M-2},
  \]
  where the last inequality follows from the fact that $\varepsilon<1$. Hence,
  given that $M\ge 12\varepsilon^{-1}+e^2\ge 12\varepsilon^{-1}+2$, we have
  \[
    |\phi^{(t)}-\phi_\textsc{mix}^{(t)}|\le
    |\Delta^{(t)}|\frac{\varepsilon}{2}\enspace.
  \]
  Using the above, we can then write:
  \begin{align*}
    |\phi^{(t)} - |\Delta^{(t)}| | &=
    |\phi^{(t)} - \phi_\textsc{mix}^{(t)} +  \phi_\textsc{mix}^{(t)} - |\Delta^{(t)}| | \\
    &\le |\phi^{(t)} - \phi_\textsc{mix}^{(t)}| +  |\phi_\textsc{mix}^{(t)} - |\Delta^{(t)}| | \\
    &\le \frac{\varepsilon}{2} |\Delta^{(t)}|  +
    \frac{\varepsilon}{2}|\Delta^{(t)}|  = \varepsilon |\Delta^{(t)}|
  \end{align*}
  which completes the proof.
\end{proof}

\subsubsection{Comparison with fixed-probability
approaches}\label{sec:comparison}
We now compare the variance of \algobase to the variance of the fixed
probability sampling approach \textsc{mascot-c}~\citep{LimK15}, which samples
edges \emph{independently} with a fixed probability $p$ and uses
$p^{-3}|\Delta_\Sam|$ as the estimate for the global number of triangles at time
$t$. As shown by~\citet[Lemma 2]{LimK15}, the variance of this estimator is
\[
  \variance[p^{-3}|\Delta_\Sam|] = |\Delta^{(t)}|\bar{f}(p) + r^{(t)}
  \bar{g}(p),
\]
where $\bar{f}(p) = p^{-3} - 1$ and $\bar{g}(p) = p^{-1} - 1$.

Assume that we give \textsc{mascot-c} the additional information that the stream
has finite length $T$, and assume we run \textsc{mascot-c} with $p=M/T$
so that the expected sample size at the end of the stream is
$M$.\footnote{We are giving \textsc{mascot-c} a significant advantage: if
only space $M$ were available, we should run \textsc{mascot-c} with a sufficiently
smaller $p'<p$, otherwise there would be a constant probability that
\textsc{mascot-c} would run out of memory.} Let $\mathbb{V}^{(t)}_\text{fix}$ be
the resulting variance of the \textsc{mascot-c} estimator at time $t$, and let
$\mathbb{V}^{(t)}$ be the variance of our estimator at time $t$
(see~\eqref{eq:globvariancebase}).  For $t\le M$, $\mathbb{V}^{(t)}=0$, hence
$\mathbb{V}^{(t)}\le \mathbb{V}^{(t)}_\text{fix}$.

For $M< t<T$, we can show the following result. Its proof is more tedious than
interesting so we defer it to App.~\ref{app:algobase}.
\begin{lemma}\label{lem:variancecomparison}
  Let $0<\alpha<1$ be a constant. For any constant
  $M>\max(\frac{8\alpha}{1-\alpha}, 42)$ and any $t \le \alpha T$ we have
  $\mathbb{V}^{(t)} < \mathbb{V}^{(t)}_\mathrm{fix}$.
\end{lemma}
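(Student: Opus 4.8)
The plan is to compare the two variances term by term, after first discarding the one piece of $\mathbb{V}^{(t)}$ that already works in our favour. Write $p = M/T$. By Theorem~\ref{thm:basevariance}, $\mathbb{V}^{(t)} = |\Delta^{(t)}|f(t) + r^{(t)}g(t) + w^{(t)}h(t)$ with $h(t)\le 0$; and by~\citet[Lemma 2]{LimK15} (recalled above), $\mathbb{V}^{(t)}_\mathrm{fix} = |\Delta^{(t)}|\bar f(p) + r^{(t)}\bar g(p)$ with $\bar f(p) = p^{-3}-1 = (T/M)^3 - 1$ and $\bar g(p) = p^{-1}-1 = T/M - 1$. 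Since $w^{(t)}\ge 0$ and $h(t)\le 0$, we have $\mathbb{V}^{(t)} \le |\Delta^{(t)}|f(t) + r^{(t)}g(t)$, so (using $|\Delta^{(t)}|, r^{(t)}\ge 0$, and assuming as we may that $|\Delta^{(t)}|>0$, since otherwise both variances vanish) it suffices to prove the two scalar inequalities $f(t) < \bar f(p)$ and $g(t) \le \bar g(p)$ in the regime $M < t \le \alpha T$.

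\emph{The $f$-term.} As $f(t) = \xi^{(t)}-1$, the claim is $\xi^{(t)} < (T/M)^3$. Bounding the numerator of $\xi^{(t)} = \frac{t(t-1)(t-2)}{M(M-1)(M-2)}$ above by $t^3$, its denominator below by $(M-2)^3$, and using $t\le\alpha T$, gives $\xi^{(t)} \le \left(\frac{\alpha T}{M-2}\right)^3$, which is $<(T/M)^3$ precisely when $\alpha M < M-2$, i.e.\ $(1-\alpha)M > 2$; this is comfortably implied by $M > \max\{\frac{8\alpha}{1-\alpha},\,42\}$ (a one-line case split on whether $\alpha\ge 5/8$).

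\emph{The $g$-term.} Since $g(t) = \xi^{(t)}\frac{(M-3)(M-4)}{(t-3)(t-4)}-1$, the inequality $g(t)\le\bar g(p)$ is equivalent to $\frac tM\,P \le \frac TM$, where $P = \frac{(t-1)(t-2)}{(t-3)(t-4)}\cdot\frac{(M-3)(M-4)}{(M-1)(M-2)}$. Because $t\le\alpha T$, it is enough to show $P \le \frac1\alpha$. I would write $P = \left(1+\frac2{t-3}\right)\left(1+\frac2{t-4}\right)\left(1-\frac2{M-1}\right)\left(1-\frac2{M-2}\right)$, invoke $t\ge M+1$ to replace $\frac2{t-3}$ by $\frac2{M-2}$ and $\frac2{t-4}$ by $\frac2{M-3}$, then pair $\left(1+\frac2{M-2}\right)\left(1-\frac2{M-2}\right) < 1$ and drop $1-\frac2{M-1} < 1$, which leaves $P < 1 + \frac2{M-3}$; finally $1+\frac2{M-3}\le\frac1\alpha$ holds iff $M \ge 3 + \frac{2\alpha}{1-\alpha}$, again implied by the hypothesis.

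Putting the pieces together, $\mathbb{V}^{(t)} \le |\Delta^{(t)}|f(t) + r^{(t)}g(t) < |\Delta^{(t)}|\bar f(p) + r^{(t)}\bar g(p) = \mathbb{V}^{(t)}_\mathrm{fix}$, with strictness supplied by $|\Delta^{(t)}|>0$ and $f(t)<\bar f(p)$. The only mildly delicate step is the bound on $P$: the $t$-ratio exceeds $1$ while the $M$-ratio is below $1$, so one must retain exactly enough of the shrinking $M$-ratio to absorb the growth of the $t$-ratio — the pairing above keeps the single factor $1-\frac2{M-2}$ that is needed and discards everything else — and then verify that the constants $\frac{8\alpha}{1-\alpha}$ and $42$ dominate every threshold on $M$ produced along the way. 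This is routine but tedious bookkeeping, which is why it is only sketched here.
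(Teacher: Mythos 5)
Your proof is correct and follows essentially the same route as the paper's: discard the non-positive $w^{(t)}h(t)$ term and compare $f(t)$ with $\bar{f}(M/T)$ and $g(t)$ with $\bar{g}(M/T)$ coefficient by coefficient, reducing each to a threshold on $M$ that is implied by $M>\max\{\frac{8\alpha}{1-\alpha},42\}$. The paper's algebra differs only cosmetically---it routes the two comparisons through the ad hoc facts $\frac{x^2}{(x-1)(x-2)}\le 1+\frac{4}{x-2}$ and $\frac{x^2}{(x-3)(x-4)}\le 1+\frac{8}{x}$ where you use a direct bound and a four-factor factorization of the ratio $P$---and both arguments need $|\Delta^{(t)}|>0$ for the strict inequality, a point the paper leaves implicit and you state explicitly.
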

For example, if we set $\alpha = 0.99$ and run \algobase with $M\ge 400$ and
\textsc{mascot-c} with $p=M / T$, we have that \algobase has strictly
smaller variance than \textsc{mascot-c} for $99\%$ of the stream.

What about $t=T$? The difference between the definitions of
$\mathbb{V}^{(t)}_\text{fix}$ and $\mathbb{V}^{(t)}$ is in the presence of
$\bar{f}(M/T)$ instead of $f(t)$ (resp.~$\bar{g}(M/T)$ instead of $g(t)$) as
well as the additional term $w^{(t)}h(M,t)\le 0$ in our $\mathbb{V}^{(t)}$. Let
$M(T)$ be an arbitrary slowly increasing function of $T$. For $T \to \infty$ we
can show that $\lim_{T \to \infty}\frac{\bar{f}(M(T)/T)}{f(T)} = \lim_{T \to
\infty}\frac{\bar{g}(M(T)/T)}{g(T)} = 1$, hence, \emph{informally},
$\mathbb{V}^{(T)}\to\mathbb{V}^{(T)}_\text{fix}$, for $T\to\infty$.

A similar discussion also holds for the method we present in
Sect.~\ref{sec:improved}, and explains the results of our experimental
evaluations, which shows that our algorithms have strictly lower (empirical)
variance than fixed probability approaches for most of the stream.

\subsubsection{Update time}
The time to process an element of the stream is dominated by
the computation of the shared neighborhood $\mathcal{N}_{u,v}$ in
\textsc{UpdateCounters}. A \textsc{Mergesort}-based algorithm for the
intersection requires $O\left(\deg(u) + \deg(v)\right)$ time, where the degrees
are w.r.t.~the graph $G_\Sam$. By storing the neighborhood of each vertex in a
Hash Table (resp.~an AVL tree), the update time can be reduced to
$O(\min\{\deg(v),\deg(u)\})$ (resp.~amortized time
$O(\min\{\deg(v),\deg(u)\}+\log\max\{\deg(v),\deg(u)\})$).

\subsection{Improved insertion algorithm -- \algoimproved}\label{sec:improved}
\algoimproved is a variant of \algobase with small modifications that result in
higher-quality (i.e., lower variance) estimations. The changes are:
\begin{longenum}
  \item \textsc{UpdateCounters} is called \emph{unconditionally for each element
    on the stream}, before the algorithm decides whether or not to insert the
    edge into $\Sam$. W.r.t.~the pseudocode in Alg.~\ref{alg:triest-base}, this
    change corresponds to moving the call to \textsc{UpdateCounters} on line 6
    to \emph{before} the {\bf if} block. \textsc{mascot}~\citep{LimK15} uses a
    similar idea, but \algoimproved is significantly different as \algoimproved
    allows edges to be removed from the sample, since it uses reservoir sampling.
  \item \algoimproved \emph{never} decrements the counters when an edge is
    removed from $\Sam$. W.r.t.~the pseudocode in Alg.~\ref{alg:triest-base}, we
    remove the call to \textsc{UpdateCounters} on line 13.
  \item \textsc{UpdateCounters} performs a \emph{weighted}
    increase of the counters using $\eta^{(t)}=\max\{1,(t-1)(t-2)/(M(M-1))\}$ as
    weight. W.r.t.~the pseudocode in Alg.~\ref{alg:triest-base}, we replace
    ``$1$'' with $\eta^{(t)}$ on lines 19--22 (given change 2 above, all the
    calls to \textsc{UpdateCounters} have $\bullet=+$).
\end{longenum}
The resulting pseudocode for \algoimproved is presented in
Alg.~\ref{alg:triest-impr}.

\begin{algorithm}[ht]
  \small % small font size for algorithms is required by ACM style
  \caption{\algoimproved}
  \label{alg:triest-impr}
  \begin{algorithmic}[1]
    \Statex{\textbf{Input:} Insertion-only edge stream $\Sigma$, integer $M\ge6$}
    \State $\Sam\leftarrow\emptyset$, $t \leftarrow 0$, $\tau \leftarrow 0$
    \For{ {\bf each} element $(+,(u,v))$ from $\Sigma$}
    \State $t\leftarrow t +1$
    \State \textsc{UpdateCounters}$(u,v)$
    \If{\textsc{SampleEdge}$((u,v), t )$}
      \State $\Sam \leftarrow \Sam\cup \{(u,v)\}$
    \EndIf
    \EndFor
    \Statex
    \Function{\textsc{SampleEdge}}{$(u,v),t$}
      \If {$t\leq  M$}
        \State \textbf{return} True
      \ElsIf{\textsc{FlipBiasedCoin}$(\frac{M}{t}) = $ heads}
        \State $(u',v') \leftarrow$ random edge from $\Sam$
        \State $\Sam\leftarrow \Sam\setminus \{(u',v')\}$
        \State \textbf{return} True
      \EndIf
      \State \textbf{return} False
    \EndFunction
    \Statex
    \Function{UpdateCounters}{$u,v$}
      \State $\mathcal{N}^\Sam_{u,v} \leftarrow \mathcal{N}^\Sam_u \cap \mathcal{N}^\Sam_v$
      \State $\eta=\max\{1,(t-1)(t-2)/(M(M-1))\}$
      \ForAll {$c \in \mathcal{N}^\Sam_{u,v}$}
        \State $\tau \leftarrow \tau + \eta$
        \State $\tau_c \leftarrow \tau_c + \eta$
        \State $\tau_u \leftarrow \tau_u + \eta$
        \State $\tau_v \leftarrow \tau_v + \eta$
      \EndFor
    \EndFunction
  \end{algorithmic}
\end{algorithm}

\paragraph{Counters} If we are interested only in estimating the global number of
triangles in $G^{(t)}$, \algoimproved needs to maintain only the counter $\tau$
and the edge sample $\Sam$ of size $M$, so it still requires space $O(M)$. If
instead we are interested in estimating the local triangle counts, at any time $t$
\algoimproved maintains (non-zero) local counters \emph{only} for the nodes $u$
such that at least one triangle with a corner $u$  has been detected by the
algorithm up until time $t$. The number of such nodes may be greater than
$O(M)$, but this is the price to pay to obtain estimations with lower variance
(Thm.~\ref{thm:improvedvariance}).

\subsubsection{Estimation}
When queried for an estimation, \algoimproved returns the value of the
corresponding counter, unmodified.

\subsubsection{Analysis}
We now present the analysis of the estimations computed by \algoimproved,
showing results involving their unbiasedness, their variance, and their
concentration around their expectation. Results analogous to those in
Thms.~\ref{thm:improvedunbiased}, \ref{thm:improvedvariance},
and~\ref{thm:improvedconcentration} hold for the local triangle count for any
$u\in V^{(t)}$, replacing the global quantities with the corresponding local
ones.

\subsubsection{Expectation}
As in \algobase, the estimations by \algoimproved are \emph{exact} at time
$t\le M$ and unbiased for $t>M$. The proof of the following theorem follows the
same steps as the one for Thm~\ref{thm:baseunbiased}, and can be found in
App.~\ref{app:algoimproved}.
\begin{theorem}\label{thm:improvedunbiased}
  We have $\tau^{(t)}=|\Delta^{(t)}|$ if $t\le M$ and
  $\mathbb{E}\left[\tau^{(t)}\right]$ $=|\Delta^{(t)}|$ if $t> M$.
\end{theorem}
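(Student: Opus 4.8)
\textbf{Proof proposal for Theorem~\ref{thm:improvedunbiased}.}

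The plan is to mirror the structure used for Thm.~\ref{thm:baseunbiased}, but to account for the three modifications that distinguish \algoimproved from \algobase: counters are updated \emph{before} the sampling decision, they are never decremented, and each increment is weighted by $\eta^{(t)}$. First I would dispose of the case $t\le M$: here $\eta^{(t)}=\max\{1,(t-1)(t-2)/(M(M-1))\}=1$ since $(t-1)(t-2)\le M(M-1)$, so \textsc{UpdateCounters} behaves exactly as in \algobase, and since every edge seen so far is deterministically in $\Sam$ we have $G^\Sam=G^{(t)}$; an argument identical to Lemma~\ref{lem:baseunbiasedaux} (with no decrements ever occurring because no edge is ever evicted when $t\le M$) gives $\tau^{(t)}=|\Delta^{\Sam}|=|\Delta^{(t)}|$ exactly.

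For $t>M$, the key observation is that a triangle $\lambda=\{e_1,e_2,e_3\}\subseteq E^{(t)}$ with edges arriving at times $t_1<t_2<t_3$ contributes to $\tau$ \emph{exactly once}, and precisely at time $t_3$, namely during the unconditional call to \textsc{UpdateCounters}$(u,v)$ where $(u,v)=e_3$: at that moment $c\in\neigh^\Sam_{u,v}$ detects the triangle if and only if the \emph{other two} edges $e_1,e_2$ are both currently in $\Sam$. (Before time $t_3$ the third edge is not yet in the graph so the triangle cannot be closed; after time $t_3$, since \algoimproved never decrements, no further bookkeeping for $\lambda$ occurs.) Hence, writing $\delta_\lambda^{(t)}$ for the total contribution of $\lambda$ to $\tau^{(t)}$, we have $\delta_\lambda^{(t)}=\eta^{(t_3)}\cdot\mathbf{1}[\,\{e_1,e_2\}\subseteq\Sam\text{ just before time }t_3\,]$, and $\tau^{(t)}=\sum_{\lambda\in\Delta^{(t)}}\delta_\lambda^{(t)}$ by linearity.

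It then remains to compute $\mathbb{E}[\delta_\lambda^{(t)}]=\eta^{(t_3)}\Pr(\{e_1,e_2\}\subseteq\Sam\text{ just before }t_3)$. Just before the arrival of $e_3$ at time $t_3$, the sample $\Sam$ is a uniform reservoir sample of size $M$ drawn from the $t_3-1$ edges seen so far (all of $e_1,e_2$ among them), so by Lemma~\ref{lem:reservoirhighorder} applied at time step $t_3-1$ with $k=2$ this probability is $\xi_{2,t_3-1}^{-1}$. Since $t_3>M$ (because $t>M$ and $\lambda$ being countable at all requires at least three edges to have arrived, and an unbiasedness statement at time $t$ concerns triangles closed by time $t$; the case $t_3\le M$ was already covered above as exact), we have $\xi_{2,t_3-1}=\frac{(t_3-1)(t_3-2)}{M(M-1)}=\eta^{(t_3)}$, so $\mathbb{E}[\delta_\lambda^{(t)}]=\eta^{(t_3)}/\eta^{(t_3)}=1$. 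Summing over all $\lambda\in\Delta^{(t)}$ gives $\mathbb{E}[\tau^{(t)}]=|\Delta^{(t)}|$.

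The main obstacle — and the only place demanding care — is making the ``contributes exactly once, at the arrival of its last edge'' claim fully rigorous: one must check that \textsc{UpdateCounters} is never invoked in a way that would recount $\lambda$ (it is not, because after change~2 the only calls are the unconditional ones on line~4, each tied to the newly-arriving edge), and that the reservoir sample seen by that call is genuinely a uniform $M$-subset of the first $t_3-1$ edges (which is exactly the invariant established in the proof of Lemma~\ref{lem:reservoirhighorder}, since the unconditional counter update does not alter $\Sam$ and the sampling step on lines 9--18 is the standard reservoir step). Once this is pinned down, the weighting $\eta^{(t_3)}$ is seen to be precisely the inverse second-order inclusion probability, and unbiasedness is immediate. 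I would present the full details of the exactly-once argument and the reservoir-invariant check in App.~\ref{app:algoimproved}.
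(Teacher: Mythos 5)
Your proof is correct and follows essentially the same route as the paper's: you decompose $\tau^{(t)}$ into per-triangle contributions, observe that each triangle is counted exactly once at the arrival of its last edge with weight $\eta^{(t_\lambda)}=\xi_{2,t_\lambda-1}$, and cancel this against the second-order inclusion probability from Lemma~\ref{lem:reservoirhighorder}. The one slight imprecision is your claim that $t_3>M$ for every relevant triangle when $t>M$ --- a triangle in $\Delta^{(t)}$ may well have closed at some $t_3\le M+1$ --- but there $\eta^{(t_3)}=\xi_{2,t_3-1}=1$ and the first two edges are deterministically in $\Sam$, so $\mathbb{E}\bigl[\delta_\lambda^{(t)}\bigr]=1$ holds in that case as well and the argument is unaffected.
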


\subsubsection{Variance}
We now show an \emph{upper bound} to the variance of the \algoimproved
estimations for $t>M$. The proof relies on a very careful analysis of the
covariance of two triangles which depends on the order of arrival of the edges
in the stream (which we assume to be adversarial). For any
$\lambda\in\Delta^{(t)}$ we denote as $t_\lambda$ the time at which the last
edge of $\lambda$ is observed on the stream. Let $z^{(t)}$ be the number of
unordered pairs $(\lambda,\gamma)$ of distinct triangles in $G^{(t)}$ that share
an edge $g$ and are such that:
\begin{enumerate}
  \item $g$ is neither the last edge of $\lambda$ nor $\gamma$ on the stream;
    and
  \item $\min\{t_\lambda,t_\gamma\}>M+1$.
\end{enumerate}

\begin{theorem}\label{thm:improvedvariance}
  Then, for any time $t>M$, we have
  \begin{equation}\label{eq:improvedvariance}
    \variance\left[\tau^{(t)}\right]\le
    |\Delta^{(t)}|(\eta^{(t)}-1)+z^{(t)}\frac{t-1-M}{M}\enspace.
  \end{equation}
\end{theorem}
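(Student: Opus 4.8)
The plan is to write $\tau^{(t)} = \sum_{\lambda \in \Delta^{(t)}} \kappa_\lambda^{(t)}$, where $\kappa_\lambda^{(t)}$ is the (random) contribution that triangle $\lambda$ makes to the global counter $\tau$ by the end of time step $t$. Since \algoimproved calls \textsc{UpdateCounters} \emph{before} deciding whether to insert the current edge, and never decrements, the triangle $\lambda$ with last-edge-arrival time $t_\lambda$ contributes exactly $\eta^{(t_\lambda)}$ to $\tau$ precisely when, at the moment $t_\lambda$ is processed, the other two edges of $\lambda$ are already in $\Sam$; otherwise it contributes $0$. (For $t_\lambda \le M$ the two earlier edges are deterministically in $\Sam$ and $\eta^{(t_\lambda)}=1$, so $\kappa_\lambda^{(t)}=1$ deterministically.) Thus $\kappa_\lambda^{(t)} \in \{0,\eta^{(t_\lambda)}\}$, with the nonzero value taken with probability $\xi_{2,t_\lambda-1}^{-1}$ by Lemma~\ref{lem:reservoirhighorder} applied to the two-edge set at time $t_\lambda - 1$; one checks $\eta^{(t_\lambda)} = \xi_{2,t_\lambda-1}$ when $t_\lambda > M$, which re-proves unbiasedness (Thm.~\ref{thm:improvedunbiased}) and gives $\variance[\kappa_\lambda^{(t)}] = |\Delta^{(t)}|$-term style contribution $\eta^{(t_\lambda)} - 1 \le \eta^{(t)} - 1$, since $\eta$ is nondecreasing. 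Summing these diagonal terms yields the first term $|\Delta^{(t)}|(\eta^{(t)}-1)$ on the right-hand side of~\eqref{eq:improvedvariance}.

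Next I would bound the off-diagonal covariances $\sum_{\lambda \ne \gamma} \mathrm{Cov}[\kappa_\lambda^{(t)},\kappa_\gamma^{(t)}]$. The key structural observation is a case analysis on the pair $(\lambda,\gamma)$. If $\lambda$ and $\gamma$ share no edge, then the event ``$\lambda$'s two non-last edges are in $\Sam$ at time $t_\lambda - 1$'' and the analogous event for $\gamma$ involve disjoint edge sets, and a negative-correlation property of reservoir sampling (the same phenomenon exploited in Thm.~\ref{thm:basevariance}, where $h(t) \le 0$) gives $\mathrm{Cov}[\kappa_\lambda^{(t)},\kappa_\gamma^{(t)}] \le 0$. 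If $\lambda$ and $\gamma$ share an edge $g$, there are two sub-cases. When $g$ is the last edge of $\lambda$ (or of $\gamma$) on the stream, or when $\min\{t_\lambda,t_\gamma\} \le M+1$, a careful but elementary argument shows the covariance is again $\le 0$ (intuitively, the ``trigger'' event for one triangle does not positively help the other, because the shared edge is already committed or the sampling is still deterministic/near-deterministic). The only pairs contributing a positive covariance are exactly those counted by $z^{(t)}$: $g$ is a non-last edge for both $\lambda$ and $\gamma$, and both last edges arrive after time $M+1$. For such a pair, $\kappa_\lambda^{(t)}\kappa_\gamma^{(t)}$ is nonzero only if, at the relevant times, all five edges involved are in $\Sam$; bounding $\mathbb{E}[\kappa_\lambda^{(t)}\kappa_\gamma^{(t)}]$ via Lemma~\ref{lem:reservoirhighorder} (a fifth-order inclusion probability) against $\mathbb{E}[\kappa_\lambda^{(t)}]\mathbb{E}[\kappa_\gamma^{(t)}]$, and simplifying the resulting ratio of $\xi$'s, yields a per-pair bound of $\frac{t-1-M}{M}$ (after the arithmetic collapses $\xi_{2,\cdot},\xi_{3,\cdot}$ factors). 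Multiplying by $z^{(t)}$ gives the second term.

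The main obstacle I anticipate is the shared-edge case analysis when the arrival order matters. Because the adversary controls the order of the stream, $\kappa_\lambda^{(t)}$ depends on \emph{which} of $\lambda$'s three edges arrives last, and the joint event for a shared-edge pair $(\lambda,\gamma)$ couples a third-order and a second-order inclusion probability at two \emph{different} times $t_\lambda - 1$ and $t_\gamma - 1$ (assume w.l.o.g. $t_\lambda \le t_\gamma$). One must condition on the state of $\Sam$ at the earlier trigger time and propagate it forward to the later one, using that reservoir sampling at a later time is a sub-sample of an intermediate state; this is where the bound $\frac{t-1-M}{M}$ (rather than something smaller) comes from, and it is where the condition ``$g$ is not the last edge of either triangle'' is essential — if $g$ were the last edge of, say, $\lambda$, then conditioning on $\lambda$'s trigger already fixes $g \in \Sam$, killing one factor and flipping the sign of the covariance. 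Getting these conditionings exactly right, and verifying that every excluded pair genuinely yields nonpositive covariance so it can be safely dropped from the upper bound, is the delicate part; the rest is bookkeeping with the $\xi_{a,b}$ identities.
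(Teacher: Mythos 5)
Your proposal follows essentially the same route as the paper's proof: the same decomposition of $\tau^{(t)}$ into per-triangle contributions triggered at the arrival of each triangle's last edge, the same diagonal bound $\eta^{(t_\lambda)}-1\le\eta^{(t)}-1$, and the same case analysis showing that all pairs except the $z^{(t)}$ ones (shared edge not last for either triangle, both triggers after time $M+1$) have nonpositive covariance, with the remaining pairs bounded by $(t-1-M)/M$ via conditioning on the earlier trigger and propagating the sample state forward. The negative-correlation step you invoke for disjoint triangles is exactly the paper's Lemma~\ref{lem:negativedependence}, so your plan matches the paper's argument in both structure and the delicate points you flag.
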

The bound to the variance presented in~\eqref{eq:improvedvariance} is extremely
pessimistic and loose. Specifically, it does not contain the \emph{negative}
contribution to the variance given by the $\binom{|\Delta^{(t)}|}{2}-z^{(t)}$
triangles that do not satisfy the requirements in the definition of $z^{(t)}$.
Among these pairs there are, for example, all pairs of triangles not sharing any
edge, but also many pairs of triangles that share an edge, as the definition of
$z^{(t)}$ consider only a subsets of these. All these pairs would give a
negative contribution to the variance, i.e., decrease the
r.h.s.~of~\eqref{eq:improvedvariance}, whose more correct form would be
\[
  |\Delta^{(t)}|(\eta^{(t)}-1)+z^{(t)}\frac{t-1-M}{M}+\left(\binom{|\Delta^{(t)}|}{2}-z^{(t)}\right)\omega_{M,t}
\]
where $\omega_{M,t}$ is (an upper bound to) the minimum \emph{negative}
contribution of a pair of triangles that do not satisfy the requirements in the
definition of $z^{(t)}$. Sadly, computing informative upper bounds to
$\omega_{M,t}$ is not straightforward, even in the restricted setting where only
pairs of triangles not sharing any edge are considered.

To prove Thm.~\ref{thm:improvedvariance} we first need
Lemma~\ref{lem:negativedependence}, whose proof is deferred to
App.~\ref{app:algoimproved}.

For any time step $t$ and any edge $e\in E^{(t)}$, we denote with $t_e$ the time
step at which $e$ is on the stream. For any $\lambda\in\Delta^{(t)}$, let $\lambda=(\ell_1,\ell_2,\ell_3)$, where the edges are numbered in
order of appearance on the stream. We define the event $D_\lambda$ as the event
that $\ell_1$ and $\ell_2$ are both in the edge sample $\Sam$ at the end of time
step $t_\lambda-1$.

\begin{lemma}\label{lem:negativedependence}
  Let $\lambda=(\ell_1,\ell_2,\ell_3)$ and $\gamma=(g_1,g_2,g_3)$ be two
  \emph{disjoint} triangles, where the edges are numbered in order of
  appearance on the stream, and assume, w.l.o.g., that the last edge of
  $\lambda$ is on the stream \emph{before} the last edge of $\gamma$. Then
  \[
    \Pr(D_\gamma~|~D_\lambda) \le \Pr(D_\gamma)\enspace.
  \]
\end{lemma}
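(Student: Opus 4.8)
The plan is to prove the negative dependence statement $\Pr(D_\gamma \mid D_\lambda) \le \Pr(D_\gamma)$ by a careful analysis of the reservoir sampling dynamics. Recall $D_\lambda$ is the event that the two earliest edges $\ell_1,\ell_2$ of $\lambda$ are both in $\Sam$ at the end of time step $t_\lambda - 1$, and similarly $D_\gamma$ concerns $g_1,g_2$ at the end of $t_\gamma - 1$. Since $\lambda$ and $\gamma$ are disjoint (six distinct edges) and $t_\lambda < t_\gamma$, I would first set up the timeline and split into the relevant regimes depending on where these times fall relative to $M$.

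First I would dispose of the easy cases. If $t_\gamma - 1 \le M$, then $g_1,g_2$ are deterministically in $\Sam$ (nothing has been evicted yet), so $\Pr(D_\gamma) = 1$ and the inequality is trivial; likewise if $t_\lambda - 1 \le M$ then $\Pr(D_\lambda) = 1$ and conditioning on it changes nothing. So assume $M < t_\lambda - 1 < t_\gamma - 1$. The key structural fact I would establish is that, conditioned on any particular trajectory of the reservoir up through time $t_\lambda - 1$ — in particular conditioned on $D_\lambda$ — the reservoir from time $t_\lambda$ onward still evolves as a standard reservoir sample of size $M$ over the remaining stream, just with a possibly different (conditioned) initial contents. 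What matters is: given the contents of $\Sam$ at time $t_\lambda - 1$, the probability that a fixed pair of edges $g_1, g_2$ (arriving at known times and currently in $\Sam$, since $D_\lambda$ forces them in once they've arrived... actually I need to be careful here) survives until time $t_\gamma - 1$ depends only on how many of $g_1, g_2$ are currently in the reservoir. Conditioning on $D_\lambda$ pins down that $\ell_1, \ell_2$ occupy two of the $M$ slots, which — intuitively — can only make it \emph{harder}, not easier, for $g_1$ and $g_2$ to simultaneously be retained, because they are competing for the same finite set of slots against edges that $D_\lambda$ guarantees are present.

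The main technical step, and the hardest part, is making that competition argument rigorous. I would proceed by conditioning on the state of the reservoir at the last time before $t_\gamma$ at which both $g_1$ and $g_2$ have arrived — call it time $s$ where $s = \max\{t_{g_1}, t_{g_2}\}$, noting $s \le t_\gamma - 1$. One clean way: express $\Pr(D_\gamma)$ via Lemma~\ref{lem:reservoirhighorder}-style inclusion probabilities, but now conditionally. Concretely, for $s \le r$, the probability that a specific pair of edges present in $\Sam$ at time $s$ is still present at time $r$ equals $\frac{\binom{M-2}{?}\cdots}{\cdots}$ — a ratio that is decreasing in the number of ``protected'' slots. The cleanest approach is a coupling: couple the conditioned process (given $D_\lambda$) with the unconditioned process so that at every step the set of edges among $\{g_1,g_2\}$ retained in the coupled conditioned run is a subset of those retained in the unconditioned run. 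Since eviction picks a uniformly random slot, and in the conditioned run two slots are occupied by $\ell_1,\ell_2$ (which are never among $g_1,g_2$), the random edge chosen for eviction is at least as likely to be one of $g_1,g_2$ in the conditioned run; a standard monotone coupling on which of $g_1, g_2$ remain then gives the domination. Summing over the coupling yields $\Pr(D_\gamma \mid D_\lambda) \le \Pr(D_\gamma)$. I expect the bookkeeping around the interval $[t_\lambda, s]$ — where $g_1$ or $g_2$ may not yet have arrived, and where $\ell_3$ arrives and may displace things — to be the fiddly part, but the disjointness of the six edges keeps all the relevant indicator events cleanly separated, and the coupling is robust to exactly which arrivals interleave.
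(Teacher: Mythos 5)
Your reduction is sound up to a point. The easy cases ($t_\gamma-1\le M$, $t_\lambda-1\le M$) are handled correctly, and the observation that, given the contents of $\Sam$ at the end of time $t_\lambda-1$, the subsequent evolution --- and hence the survival of $g_1,g_2$ to time $t_\gamma-1$ and the insertion of any $g_j$ arriving after $t_\lambda$ --- is independent of $D_\lambda$ is exactly the right structural step; it is what the paper formalizes by writing $D_\gamma=A_i\cap B_i\cap C$ and showing $\Pr(B_i\cap C\mid A_i\cap D_\lambda)=\Pr(B_i\cap C\mid A_i)$. After that reduction, all that remains is a statement about the reservoir at the single time $t_{\ell_3}-1$, namely $\Pr(\{g_1,\dots,g_i\}\subseteq\Sam\mid\{\ell_1,\ell_2\}\subseteq\Sam)\le\Pr(\{g_1,\dots,g_i\}\subseteq\Sam)$. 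The paper gets this in two lines from the uniformity of the reservoir at a fixed time (Lemma~\ref{lem:reservoir}): conditioned on containing $\ell_1,\ell_2$, the sample is uniform over the $\binom{t_{\ell_3}-3}{M-2}$ such $M$-subsets, so the conditional inclusion probability is $\prod_{j=0}^{i-1}\frac{M-2-j}{t_{\ell_3}-3-j}$, which is at most $\Pr(A_i)=\prod_{j=0}^{i-1}\frac{M-j}{t_{\ell_3}-1-j}$.

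The gap is in how you propose to prove that remaining step. Your coupling is never constructed, and the mechanism offered for it --- ``eviction picks a uniformly random slot, and in the conditioned run two slots are occupied by $\ell_1,\ell_2$, so the evicted edge is at least as likely to be one of $g_1,g_2$'' --- is false as stated: the evicted edge is uniform over all $M$ slots regardless of what occupies them, so a specific in-sample edge is evicted at step $j$ with probability exactly $1/j$ whether or not $\ell_1,\ell_2$ are present, and mere occupancy by $\ell_1,\ell_2$ biases nothing. The negative dependence actually comes from conditioning on the \emph{future} event that $\ell_1,\ell_2$ survive until time $t_{\ell_3}-1$; under that conditioning (a Doob $h$-transform of the chain) the per-step eviction probability of any other in-sample edge rises from $1/j$ to $1/(j-2)$ and the per-step acceptance probability of a newly arriving edge drops from $M/j$ to $(M-2)/(j-2)$ --- and only during the window in which $\ell_1,\ell_2$ must be protected; after $t_{\ell_3}-1$ the dynamics are unbiased again, so carrying a domination argument all the way to $t_\gamma-1$ is both unnecessary and, as you describe it, based on the wrong picture of where the dependence lives. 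A monotone coupling could be built on top of the transformed dynamics (including the biased insertions of $g_1,g_2$ themselves, which you do not address), but your sketch does not supply it. The quickest repair is to drop the trajectory coupling and use the fixed-time uniformity computation above, which is the paper's route.
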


We can now prove Thm.~\ref{thm:improvedvariance}.

\begin{proof}[of Thm.~\ref{thm:improvedvariance}]
  Assume $|\Delta^{(t)}|>0$, otherwise \algoimproved estimation is
  deterministically correct and has variance 0 and the thesis holds. Let
  $\lambda\in\Delta^{(t)}$ and let $\delta_\lambda$ be a random variable that
  takes value $\xi_{2,t_\lambda-1}$ if both $\ell_1$ and $\ell_2$ are in $\Sam$
  at the end of time step $t_\lambda-1$, and $0$ otherwise. Since
  \[
    \variance\left[\delta_{\lambda}\right] = \xi_{2,t_\lambda-1}-1\le
    \xi_{2,t-1},
  \]
  we have:
  \begin{align}
    \variance\left[\tau^{(t)}\right] &=
    \variance\left[\sum_{\lambda\in\Delta^{(t)}} \delta_{\lambda}\right]=
    \sum_{\lambda\in \Delta^{(t)}} \sum_{\gamma \in
    \Delta^{(t)}}\text{Cov}\left[\delta_{\lambda},\delta_{\gamma}\right]
    \nonumber\\
    &= \sum_{\lambda\in \Delta^{(t)}}
    \variance\left[\delta_{\lambda}\right] + \sum_{\substack{\lambda,
    \gamma\in \Delta^{(t)}\\\lambda \neq \gamma}}
    \text{Cov}\left[\delta_{\lambda},\delta_{\gamma}\right]\nonumber\\
    &\le |\Delta^{(t)}|(\xi_{2,t-1}-1) + \sum_{\substack{\lambda,
    \gamma\in \Delta^{(t)}\\\lambda \neq \gamma}}\left(
    \mathbb{E}[\delta_\lambda\delta_\gamma]-\mathbb{E}[\delta_\lambda]\mathbb{E}[\delta_\gamma]\right)\nonumber\\
    &\le |\Delta^{(t)}|(\xi_{2,t-1}-1) + \sum_{\substack{\lambda,
    \gamma\in \Delta^{(t)}\\\lambda \neq \gamma}}
    \left(\mathbb{E}[\delta_\lambda\delta_\gamma] - 1\right) \enspace.\label{eq:variance2improved}
  \end{align}

  For any $\lambda\in\Delta^{(t)}$ define $q_\lambda=\xi_{2,t_\lambda-1}$.
  Assume now $|\Delta^{(t)}|\ge 2$, otherwise we have $r^{(t)}=w^{(t)}=0$ and
  the thesis holds as the second term on the
  r.h.s.~of~\eqref{eq:variance2improved} is 0. Let now $\lambda$ and $\gamma$
  be two distinct triangles in $\Delta^{(t)}$
  (hence $t\ge 5$). We have
  \begin{equation*}
    \mathbb{E}\left[\delta_{\lambda}\delta_{\gamma}\right]=
    q_\lambda q_\gamma\Pr\left(\delta_\lambda\delta_\gamma=q_\lambda q_\gamma\right)
  \end{equation*}
  The event ``$\delta_\lambda\delta_\gamma=q_\lambda q_\gamma$'' is the
  intersection of events $D_\lambda\cap D_\gamma$, where $D_\lambda$ is the
  event that the first two edges of $\lambda$ are in $\Sam$ at the end of time
  step $t_\lambda-1$, and similarly for $D_\gamma$. We now look at
  $\Pr(D_\lambda\cap D_\gamma)$ in the various possible cases.

  Assume that $\lambda$ and $\gamma$ do not share any edge, and, w.l.o.g.,
  that the third (and last) edge of $\lambda$ appears on the stream before the
  third (and last) edge of $\gamma$, i.e., $t_\lambda< t_\gamma$. From
  Lemma~\ref{lem:negativedependence} and Lemma~\ref{lem:reservoirhighorder} we
  then have
  \[
    \Pr(D_\lambda\cap D_\gamma)=\Pr(D_\gamma|D_\lambda)\Pr(D_\lambda)\le
    \Pr(D_\gamma)\Pr(D_\lambda)\le \frac{1}{q_\lambda q_\gamma}\enspace.
  \]

  Consider now the case where $\lambda$ and $\gamma$ share an edge $g$.
  W.l.o.g., let us assume that $t_\lambda\le t_\gamma$ (since the shared edge
  may be the last on the stream both for $\lambda$ and for $\gamma$, we may
  have $t_\lambda=t_\gamma$). There are the following possible sub-cases :
  \begin{description}
    \item[$g$ is the last on the stream among all the edges of $\lambda$ and
      $\gamma$] In this case we have $t_\lambda=t_\gamma$. The event
      ``$D_\lambda\cap D_\gamma$'' happens if and only if the \emph{four}
      edges that, together with $g$, compose $\lambda$ and $\gamma$ are
      all in $\Sam$ at the end of time step $t_\lambda-1$. Then, from
      Lemma~\ref{lem:reservoirhighorder} we have
      \[
        \Pr(D_\lambda\cap D_\gamma)= \frac{1}{\xi_{4,t_\lambda-1}}
        \le
        \frac{1}{q_\lambda}\frac{(M-2)(M-3)}{(t_\lambda-3)(t_\lambda-4)}\le\frac{1}{q_\lambda}\frac{M(M-1)}{(t_\lambda-1)(t_\lambda-2)}\le
        \frac{1}{q_\lambda q_\gamma}\enspace.
      \]
    \item[$g$ is the last on the stream among all the edges of $\lambda$ and
      the first among all the edges of $\gamma$] In this case, we
      have  that $D_\lambda$ and $D_\gamma$ are independent. Indeed
      the fact that the first two edges of $\lambda$ (neither of which
      is $g$) are in $\Sam$ when $g$ arrives on the stream has no
      influence on the probability that $g$ and the second edge of
      $\gamma$ are inserted in $\Sam$ and are not evicted until
      the third edge of $\gamma$ is on the stream. Hence we have
      \[
        \Pr(D_\lambda\cap D_\gamma)=
        \Pr(D_\gamma)\Pr(D_\lambda)=\frac{1}{q_\lambda q_\gamma}\enspace.
      \]
    \item[$g$ is the last on the stream among all the edges of $\lambda$ and
      the second among all the edges of $\gamma$]
      In this case we can follow an approach similar to the one in the
      proof for Lemma~\ref{lem:negativedependence} and have that
      \[
        \Pr(D_\lambda\cap D_\gamma) \le
        \Pr(D_\gamma)\Pr(D_\lambda)\le \frac{1}{q_\lambda q_\gamma}\enspace.
      \]
      The intuition behind this is that if the first two edges of
      $\lambda$ are in $\Sam$ when $g$ is on the stream, their presence
      lowers the probability that the first edge of $\gamma$ is in $\Sam$
      at the same time, and hence that the first edge of $\gamma$ and $g$
      are in $\Sam$ when the last edge of $\gamma$ is on the stream.
    \item[$g$ is not the last on the stream among all the edges of
      $\lambda$]
      There are two situations to consider, or better, one situation and
      all other possibilities. The situation we consider is that
      \begin{enumerate}
        \item $g$ is the first edge of $\gamma$ on the stream; and
        \item the second edge of $\gamma$ to be on the stream is on the
          stream at time $t_2>t_\lambda$.
      \end{enumerate}
      Suppose this is the case. Recall that if $D_\lambda$ is verified,
      than we know that $g$ is in $\Sam$ at the
      beginning of time step $t_\lambda$. Define the following events:
      \begin{itemize}
        \item $E_1$: ``the set of edges evicted from $\Sam$ between the
          beginning of time step $t_\lambda$ and the beginning of time step
          $t_2$ does not contain $g$.''
        \item $E_2$: ``the second edge of $\gamma$, which is on the stream at
          time $t_2$, is inserted in $\Sam$ and the edge that is evicted is not
          $g$.''
        \item $E_3$: ``the set of edges evicted from $\Sam$ between the
          beginning of time step $t_2+1$ and the beginning of time
          step $t_\gamma$ does not contain either $g$ or the second edge of
          $\gamma$.''
      \end{itemize}
      We can then write
      \[
        \Pr(D_\gamma~|~D_\lambda)= \Pr(E_1~|~D_\lambda)
        \Pr(E_2~|~E_1\cap D_\lambda)
        \Pr(E_3~|~E_2\cap E_1\cap D_\lambda)\enspace.
      \]
      We now compute the probabilities on the r.h.s., where we denote with
      $\mathds{1}_{t_2>M}(1)$ the function that has value $1$ if $t_2>M$,
      and value $0$ otherwise:
      \begin{align*}
        \Pr(E_1~|~D_\lambda) &=
        \prod_{j=\max\{t_\lambda,M+1\}}^{t_2-1}\left(\left(1-\frac{M}{j}\right)+\frac{M}{j}\left(\frac{M-1}{M}\right)\right)\\
        &=\prod_{j=\max\{t_\lambda,M+1\}}^{t_2-1}\frac{j-1}{j}=
        \frac{\max\{t_\lambda-1, M\}}{\max\{M,t_2-1\}}\enspace;\\
        \Pr(E_2~|~E_1\cap D_\lambda) &=
        \frac{M}{\max\{t_2,M\}}\frac{M-\mathds{1}_{t_2>M}(1)}{M}=
        \frac{M-\mathds{1}_{t_2>M}(1)}{\max\{t_2,M\}}\enspace;\\
        \Pr(E_3~|~E_2\cap E_1\cap D_\lambda)&=
        \prod_{j=\max\{t_2+1,M+1\}}^{t_\gamma-1}\left(\left(1-\frac{M}{j}\right)+\frac{M}{j}\left(\frac{M-2}{M}\right)\right)\\
        &= \prod_{j=\max\{t_2+1,M+1\}}^{t_\gamma-1}\frac{j-2}{j}
        =\frac{\max\{t_2,M\}\max\{t_2-1,M-1\}}{\max\{t_\gamma-2,M-1\}\max\{t_\gamma-1,M\}}\enspace.
      \end{align*}
      Hence, we have
      \[
        \Pr(D_\gamma~|~D_\lambda) =
        \frac{\max\{t_\lambda-1,M\} (M-\mathds{1}_{t_2>M}(1))
        \max\{t_2-1,M-1\}}{ \max\{M,t_2-1\}
        \max\{(t_\gamma-2)(t_\gamma-1),M(M-1)\}}\enspace.
      \]
      With a (somewhat tedious) case analysis we can verify that
      \[
        \Pr(D_\gamma~|~D_\lambda)\le
        \frac{1}{q_\gamma}\frac{\max\{M,t_\lambda-1\}}{M}\enspace.
      \]

      Consider now the complement of the situation we just analyzed. In
      this case, two edges of $\gamma$, that is, $g$ and another edge $h$ are on
      the stream before time $t_\lambda$, in some non-relevant order (i.e., $g$
      could be the first or the second edge of $\gamma$ on the stream). Define
      now the following events:
      \begin{itemize}
        \item $E_1$: ``$h$ and $g$ are both in $\Sam$ at the beginning
          of time step $t_\lambda$.''
        \item $E_2$: ``the set of edges evicted from $\Sam$ between the
          beginning of time step $t_\lambda$ and the beginning of time
          step $t_\gamma$ does not contain either $g$ or $h$.''
      \end{itemize}
      We can then write
      \[
        \Pr(D_\gamma~|~D_\lambda) =
        \Pr(E_1~|~D_\lambda)\Pr(E_2~|~E_1\cap D_\lambda)\enspace.
      \]
      If $t_\lambda \le M+1$, we have that $\Pr(E_1~|~D_\lambda)=1$.
      Consider instead the case $t_\lambda > M+1$. If $D_\lambda$ is
      verified, then both $g$ and the other edge of $\lambda$ are in
      $\Sam$ at the beginning of time step $t_\lambda$. At this time,
      all subsets of $E^{(t_\lambda-1)}$ of size $M$ and containing both
      $g$ and the other edge of $\lambda$ have an equal probability of
      being $\Sam$, from Lemma~\ref{lem:reservoir}. There are
      $\binom{t_\lambda-3}{M-2}$ such sets. Among these, there are
      $\binom{t_\lambda-4}{M-3}$ sets that also contain $h$. Therefore, if
      $t_\lambda > M+1$, we have
      \[
        \Pr(E_1~|~D_\lambda)=\frac{\binom{t_\lambda-4}{M-3}}{\binom{t_\lambda-3}{M-2}}=\frac{M-2}{t_\lambda-3}\enspace.
      \]
      Considering what we said before for the case $t_\lambda\le M+1$, we
      then have
      \[
        \Pr(E_1~|~D_\lambda)=\min\left\{1,\frac{M-2}{t_\lambda-3}\right\}\enspace.
      \]
      We also have
      \[
        \Pr(E_2~|~E_1\cap D_\lambda)=
        \prod_{j=\max\{t_\lambda,M+1\}}^{t_\gamma-1}\frac{j-2}{j}=
        \frac{\max\{(t_\lambda-2)(t_\lambda-1),
        M(M-1)\}}{\max\{(t_\gamma-2)(t_\gamma-1), M(M-1)\}}\enspace.
      \]
      Therefore,
      \[
        \Pr(D_\gamma~|~D_\lambda)=
        \min\left\{1,\frac{M-2}{t_\lambda-3}\right\}\frac{\max\{(t_\lambda-2)(t_\lambda-1), M(M-1)\}}{\max\{(t_\gamma-2)(t_\gamma-1), M(M-1)\}}\enspace.
      \]
      With a case analysis, one can show that
      \[
        \Pr(D_\gamma~|~D_\lambda)\le\frac{1}{q_\gamma}\frac{\max\{M,t_\lambda-1\}}{M}\enspace.
      \]
  \end{description}
  To recap we have the following two scenarios when considering two distinct
  triangles $\gamma$ and $\lambda$:
  \begin{longenum}
    \item if $\lambda$ and $\gamma$ share an edge and, assuming
      w.l.o.g.~that the third edge of $\lambda$ is on the stream no later
      than the third edge of $\gamma$, and the shared edge is neither the
      last among all edges of $\lambda$ to appear on the stream nor the
      last among all edges of $\gamma$ to appear on the stream, then we
      have
      \begin{align*}
        \mathrm{Cov}[\delta_\lambda,\delta_\gamma] &\le
        \mathbb{E}[\delta_\lambda\delta_\gamma]-1
        =q_\lambda q_\gamma\Pr(\delta_\lambda\delta_\gamma
        =q_\lambda q_\gamma)-1\\
        &\le q_\lambda q_\gamma \frac{1}{q_\lambda q_\gamma}\frac{\max\{M,t_\lambda-1\}}{M} -1\le
        \frac{\max\{M,t_\lambda-1\}}{M} -1\le \frac{t-1-M}{M};
      \end{align*}where the last inequality follows from the fact that $t_\lambda\le
      t$ and $t-1\ge M$.

      For the pairs $(\lambda,\gamma)$ such that $t_\lambda\le M+1$, we have
      $\max\{M,t_\lambda-1\}/M=1$ and therefore
      $\mathrm{Cov}[\delta_\lambda,\delta_\gamma]\leq 0$. We should therefore
      only consider the pairs for which $t_\lambda> M+1$. Their number is given
      by $z^{(t)}$.
    \item in all other cases, including when $\gamma$ and $\lambda$ do not
      share an edge, we have
      $\mathbb{E}[\delta_\lambda\delta_\gamma]\le 1$, and since
      $\mathbb{E}[\delta_\lambda]\mathbb{E}[\delta_\gamma]=1$, we have
      \[
        \mathrm{Cov}[\delta_\lambda,\delta_\gamma] \le 0\enspace.
      \]
  \end{longenum}
  Hence, we can bound
  \[
    \sum_{\substack{\lambda,\gamma\in\Delta^{(t)}\\\lambda\neq\gamma}}\mathrm{Cov}[\delta_\lambda,\delta_\gamma]\le
    z^{(t)}\frac{t-1-M}{M}
  \]
  and the thesis follows by combining this into~\eqref{eq:variance2improved}.
\end{proof}

\subsubsection{Concentration}
We now show a concentration result on the estimation of \algoimproved, which
relies on Chebyshev's
inequality~\citep[Thm.~3.6]{mitzenmacher2005probability} and
Thm.~\ref{thm:improvedvariance}.
\begin{theorem}\label{thm:improvedconcentration}
  Let $t\ge 0$ and assume $|\Delta^{(t)}| >0$. For any
  $\varepsilon,\delta\in(0,1)$, if
  \[
    M> \max\left\{\sqrt{\frac{2(t-1)(t-2)}{\delta
      \varepsilon^2|\Delta^{(t)}|+2}+\frac{1}{4}}+\frac{1}{2},\frac{2z^{(t)}(t-1)}{\delta
    \varepsilon^2 |\Delta^{(t)}|^2+2z^{(t)}}\right\}
  \]
  then $|\tau^{(t)}-|\Delta^{(t)}||<\varepsilon|\Delta^{(t)}|$ with
  probability $>1-\delta$.
\end{theorem}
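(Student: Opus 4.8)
The plan is to derive Theorem~\ref{thm:improvedconcentration} as a direct application of Chebyshev's inequality to the estimator $\tau^{(t)}$, using the unbiasedness from Theorem~\ref{thm:improvedunbiased} and the variance bound from Theorem~\ref{thm:improvedvariance}. First I would recall that since $t>M$ (which must hold under the hypothesis on $M$, as we will see), $\mathbb{E}[\tau^{(t)}]=|\Delta^{(t)}|$, so Chebyshev gives
\[
  \Pr\left(\left|\tau^{(t)}-|\Delta^{(t)}|\right|\ge\varepsilon|\Delta^{(t)}|\right)\le\frac{\variance[\tau^{(t)}]}{\varepsilon^2|\Delta^{(t)}|^2}\le\frac{|\Delta^{(t)}|(\eta^{(t)}-1)+z^{(t)}\frac{t-1-M}{M}}{\varepsilon^2|\Delta^{(t)}|^2}\enspace.
\]
It therefore suffices to show that, for $M$ as in the hypothesis, the right-hand side is at most $\delta$. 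I would split this into two halves: require each of the two summands in the numerator to be at most $\frac{\delta}{2}\varepsilon^2|\Delta^{(t)}|^2$, and show that the two lower bounds on $M$ in the statement are exactly what is needed to guarantee this.

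For the first summand, note that $\eta^{(t)}-1\le\frac{(t-1)(t-2)}{M(M-1)}-1$ (and in the regime where $\eta^{(t)}=1$ the contribution is zero and there is nothing to prove, so we may assume $M< t$ roughly). The condition $|\Delta^{(t)}|(\eta^{(t)}-1)\le\frac{\delta}{2}\varepsilon^2|\Delta^{(t)}|^2$ becomes $\frac{(t-1)(t-2)}{M(M-1)}-1\le\frac{\delta\varepsilon^2|\Delta^{(t)}|}{2}$, i.e. $M(M-1)\ge\frac{2(t-1)(t-2)}{\delta\varepsilon^2|\Delta^{(t)}|+2}$. Solving the quadratic $M^2-M-\frac{2(t-1)(t-2)}{\delta\varepsilon^2|\Delta^{(t)}|+2}\ge 0$ for $M$ yields precisely $M\ge\sqrt{\frac{2(t-1)(t-2)}{\delta\varepsilon^2|\Delta^{(t)}|+2}+\frac14}+\frac12$, the first term in the max. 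For the second summand, the condition $z^{(t)}\frac{t-1-M}{M}\le\frac{\delta}{2}\varepsilon^2|\Delta^{(t)}|^2$ can be rearranged: $z^{(t)}(t-1-M)\le\frac{\delta\varepsilon^2|\Delta^{(t)}|^2}{2}M$, i.e. $z^{(t)}(t-1)\le M\left(z^{(t)}+\frac{\delta\varepsilon^2|\Delta^{(t)}|^2}{2}\right)$, which gives $M\ge\frac{z^{(t)}(t-1)}{z^{(t)}+\frac{\delta\varepsilon^2|\Delta^{(t)}|^2}{2}}=\frac{2z^{(t)}(t-1)}{2z^{(t)}+\delta\varepsilon^2|\Delta^{(t)}|^2}$, the second term in the max. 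Adding the two bounds on the numerator gives total probability at most $\delta$, and passing to the complement gives the claimed $>1-\delta$.

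There is essentially no hard obstacle here; the main things to be careful about are bookkeeping details rather than a genuine difficulty. Specifically, I need to double-check that the hypothesis indeed forces $t>M$ so that the unbiasedness branch of Theorem~\ref{thm:improvedunbiased} applies (if $t\le M$ the estimate is exact and the claim is trivial, so one can dispose of that case up front); and I need to handle the degenerate case $z^{(t)}=0$ separately, where the second term in the max should be read as $0$ and only the first constraint matters. One should also note that the strict inequality ``$M>\cdots$'' in the hypothesis versus the non-strict inequalities used in the derivation is harmless — strict in the hypothesis gives strict slack, hence strictly less than $\delta$ failure probability, consistent with the ``probability $>1-\delta$'' conclusion. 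I would write the proof by first stating the Chebyshev step, then doing the two-way split of $\delta$, then solving the two quadratic/linear inequalities for $M$, and finally remarking on the edge cases.
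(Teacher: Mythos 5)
Your proposal is correct and follows essentially the same route as the paper's own proof: Chebyshev's inequality combined with the variance bound of Thm.~\ref{thm:improvedvariance}, a split of the failure probability into two $\delta/2$ conditions for the two summands, and solving the resulting quadratic and linear inequalities for $M$ (your closed form $\sqrt{\tfrac{2(t-1)(t-2)}{\delta\varepsilon^2|\Delta^{(t)}|+2}+\tfrac14}+\tfrac12$ is algebraically identical to the paper's $\tfrac12(\sqrt{4X+1}+1)$). Your extra remarks on the $t\le M$, $z^{(t)}=0$, and strict-inequality edge cases are sound and, if anything, slightly more careful than the paper's write-up.
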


\begin{proof}
  By Chebyshev’s inequality it is sufficient to prove that
  $$\frac{\variance[\tau^{(t)}]}{\varepsilon^2|\Delta^{(t)}|^2} <\delta\enspace.$$
  We can write
  $$\frac{\variance[\tau^{(t)}]}{\varepsilon^2|\Delta^{(t)}|^2}\le
  \frac{1}{\varepsilon^2|\Delta^{(t)}|}\left((\eta(t)-1)+z^{(t)}\frac{t-1-M}{M|\Delta^{(t)}|}\right)\enspace.
  $$
  Hence it is sufficient to impose the following two conditions:
  \begin{description}
    \item[Condition 1]
      \begin{align}
        \frac{\delta}{2} &> \frac{\eta(t)-1}{\varepsilon^2|\Delta^{(t)}|}\label{eq:cond1}\\
        &>\frac{1}{\varepsilon^2|\Delta^{(t)}|}\frac{(t-1)(t-2)-M(M-1)}{M(M-1)}\nonumber,
      \end{align}
      which is verified for:
      \begin{equation*}
        M(M-1) > \frac{2(t-1)(t-2)}{\delta \varepsilon^2|\Delta^{(t)}|+2}.
      \end{equation*}
      As $t> M$, we have $\frac{2(t-1)(t-2)}{\delta \varepsilon^2|\Delta^{(t)}|+2}>0$.
      The condition in~\eqref{eq:cond1} is thus verified for:
      \begin{equation*}
        M>\frac{1}{2}\left(\sqrt{4\frac{2(t-1)(t-2)}{\delta \varepsilon^2|\Delta^{(t)}|+2}+1}+1\right)
      \end{equation*}
    \item[Condition 2]
      \begin{align*}
        \frac{\delta}{2} &> z^{(t)}\frac{t-1-M}{M\varepsilon^2|\Delta^{(t)}|^2},
      \end{align*}
      which is verified for:
      \begin{equation*}
        M > \frac{2z^{(t)}(t-1)}{\delta \varepsilon^2 |\Delta^{(t)}|^2+2z^{(t)}}.
      \end{equation*}
  \end{description}
  The theorem follows.
\end{proof}

In Thms.~\ref{thm:improvedvariance} and~\ref{thm:improvedconcentration}, it is
possible to replace the value $z^{(t)}$ with the more interpretable $r^{(t)}$,
which is agnostic to the order of the edges on the stream but gives a looser
upper bound to the variance.

\subsection{Fully-dynamic algorithm -- \algofd}\label{sec:fulldyn}
\algofd computes unbiased estimates of the global and local triangle counts in a
\emph{fully-dynamic stream where edges are inserted/deleted in any arbitrary,
adversarial order}. It is based on \emph{random pairing}
(RP)~\citep{GemullaLH08}, a sampling scheme that extends reservoir sampling and
can handle deletions. The idea behind the RP scheme is that edge deletions seen
on the stream will be ``compensated'' by future edge insertions. Following RP,
\algofd keeps a counter $d_\mathrm{i}$ (resp.~$d_\mathrm{o}$) to keep track of
the number of uncompensated edge deletions involving an edge $e$ that was
(resp.~was \emph{not}) in $\Sam$ at the time the deletion for $e$ was on the
stream.

When an edge deletion for an edge $e\in E^{(t-1)}$ is on the stream at the
beginning of time step $t$, then, if $e\in\Sam$ at this time, \algofd
removes $e$ from $\Sam$ (effectively decreasing the number of edges stored in
the sample by one) and increases $d_\mathrm{i}$ by one. Otherwise, it just
increases $d_\mathrm{o}$ by one. When an edge insertion for an edge $e\not\in\
E^{(t-1)}$ is on the stream at the beginning of time step $t$, if
$d_\mathrm{i}+d_\mathrm{o}=0$, then \algofd follows the standard reservoir
sampling scheme. If $|\Sam|<M$, then $e$ is deterministically inserted in
$\Sam$ without removing any edge from $\Sam$ already in $\Sam$, otherwise it
is inserted in $\Sam$ with probability $M/t$, replacing an uniformly-chosen edge
already in $\Sam$. If instead $d_\mathrm{i}+d_\mathrm{o}>0$, then $e$ is
inserted in $\Sam$ with probability $d_\mathrm{i}/(d_\mathrm{i}+d_\mathrm{o})$;
since it must be $d_\mathrm{i}>0$, then it must be $|\Sam|<M$ and no edge
already in $\Sam$ needs to be removed. In any case, after having handled the
eventual insertion of $e$ into $\Sam$, the algorithm decreases $d_\mathrm{i}$ by
$1$ if $e$ was inserted in $\Sam$, otherwise it decreases $d_\mathrm{o}$ by $1$.
\algofd also keeps track of $s^{(t)}=|E^{(t)}|$ by appropriately incrementing or
decrementing a counter by $1$ depending on whether the element on the stream is
an edge insertion or deletion. The pseudocode for \algofd is presented in
Alg.~\ref{algo: TRIEST-FD} where the {\sc UpdateCounters} procedure is the one
from Alg.~\ref{alg:triest-base}.

\begin{algorithm}[t]
  \small % small font size for algorithms is required by ACM style
  \caption{\algofd}
  \label{algo: TRIEST-FD}
  \begin{algorithmic}[1]
    \Statex{\textbf{Input:} Fully-dynamic edge stream $\Sigma$, integer $M\ge 6$}
    \State $\Sam \leftarrow \emptyset$, $d_\mathrm{i} \leftarrow 0$, $d_\mathrm{o} \leftarrow 0$, $t\leftarrow
    0$, $s\leftarrow 0$
    \For{\textbf{each} element $\left(\bullet, \left(u,v\right)\right)$ from $\Sigma$}
      \State{$t\leftarrow t +1$}
      \State{$s\leftarrow s \bullet 1$}
      \If{$\bullet = +$}
        \If{$\textsc{SampleEdge}\left(u,v\right)$}
          \State \textsc{UpdateCounters}$\left(+,\left(u,v\right)\right)$
          \Comment \textsc{UpdateCounters} is defined as in
          Alg.~\ref{alg:triest-base}.
        \EndIf
      \ElsIf{$\left(u,v\right)\in \Sam$}
        \State \textsc{UpdateCounters}$\left(-,\left(u,v\right)\right)$
        \State $\Sam\leftarrow \Sam\setminus\{(u,v)\}$
        \State $d_\mathrm{i} \leftarrow d_\mathrm{i} +1$
      \Else
        $\quad d_\mathrm{o} \leftarrow d_\mathrm{o} +1$
      \EndIf
    \EndFor
    \Statex
    \Function{\textsc{SampleEdge}}{$u,v$}
      \If{$d_\mathrm{o}+d_\mathrm{i}=0$}
        \If{$|\Sam|<M$}
          \State{$\Sam \leftarrow \Sam \cup \{\left(u,v\right)\}$}
          \State{\textbf{return}} True
          \ElsIf{\textsc{FlipBiasedCoin}$(\frac{M}{t}) = $ heads}
          \State Select $(z,w)$ uniformly at random from $\Sam$
          \State \textsc{UpdateCounters}$\left(-,(z,w)\right)$
          \State{$\Sam\leftarrow \left(\Sam \setminus \{(z,w)\}\right) \cup \{ \left(u,v\right)\}$}
          \State{\textbf{return}} True
        \EndIf
        \ElsIf{\textsc{FlipBiasedCoin}$\left(\frac{d_\mathrm{i}}{d_\mathrm{i}+d_\mathrm{o}}\right) = $ heads}
        \State{$\Sam \leftarrow \Sam \cup \{\left(u,v\right)\}$}
        \State $d_\mathrm{i} \leftarrow d_\mathrm{i} -1$
        \State{\textbf{return}} True
      \Else
        \State $d_\mathrm{o} \leftarrow d_\mathrm{o} -1$
        \State{\textbf{return}} False
      \EndIf
    \EndFunction
  \end{algorithmic}
\end{algorithm}

\subsubsection{Estimation}
We denote as $M^{(t)}$ the size of $\Sam$ at the end of time $t$ (we always have
$M^{(t)}\le M$). For any time $t$, let $d_\mathrm{i}^{(t)}$ and
$d_\mathrm{o}^{(t)}$ be the value of the counters $d_\mathrm{i}$ and
$d_\mathrm{o}$ at the end of time $t$ respectively, and let
$\omega^{(t)}=\min\{M, s^{(t)}+d_\mathrm{i}^{(t)}+d_\mathrm{o}^{(t)}\}$. Define
\begin{equation}\label{eq:kappa}
  \kappa^{(t)}=1-\sum_{j=0}^2\binom{s^{(t)}}{j}\binom{d_\mathrm{i}^{(t)}+d_\mathrm{o}^{(t)}}{\omega^{(t)}-j}\bigg/\binom{s^{(t)}+d_\mathrm{i}^{(t)}+d_\mathrm{o}^{(t)}}{\omega^{(t)}}\enspace.
\end{equation}
For any three positive integers $a,b,c$ s.t.~$a\le b\le c$, define\footnote{We
follow the convention that $\binom{0}{0}=1$.}
\[
  \psi_{a,b,c}=\binom{c}{b}\Big/\binom{c-a}{b-a}= \prod_{i=0}^{a-1}\frac{c-i}{b-i}\enspace.
\]
When queried at the end of time $t$, for an estimation of the global
number of triangles, \algofd returns
\[
  \rho^{(t)}=\left\{\begin{array}{l} 0 \text{ if } M^{(t)}<3 \\
  \frac{\tau^{(t)}}{\kappa^{(t)}}\psi_{3,M^{(t)},s^{(t)}}=\frac{\tau^{(t)}}{\kappa^{(t)}}\frac{s^{(t)}(s^{(t)}-1)(s^{(t)}-2)}{M^{(t)}(M^{(t)}-1)(M^{(t)}-2)}
  \text{ othw.}
\end{array}\right.
\]
\algofd can keep track of $\kappa^{(t)}$ during the execution, each update of
$\kappa^{(t)}$ taking time $O(1)$. Hence the time to return the estimations is
still $O(1)$.

\subsubsection{Analysis}
We now present the analysis of the estimations computed by \algoimproved,
showing results involving their unbiasedness, their variance, and their
concentration around their expectation. Results analogous to those in
Thms.~\ref{thm:fdunbiased}, \ref{thm:fdvariance}, and~\ref{thm:fdconcentration}
hold for the local triangle count for any $u\in V^{(t)}$, replacing the global
quantities with the corresponding local ones.

\subsubsection{Expectation}
Let $t^*$ be the first $t\ge M+1$ such that $|E^{(t)}|=M+1$, if such a time step
exists (otherwise $t^*=+\infty$).
\begin{theorem}\label{thm:fdunbiased}
  We have $\rho^{(t)}=|\Delta^{(t)}|$ for all $t<t^*$, and
  $\mathbb{E}\left[\rho^{(t)}\right] = |\Delta^{(t)}|$ for $t\ge t^*$.
\end{theorem}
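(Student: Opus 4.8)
\emph{Proof proposal.} The plan is to isolate one structural property of \algofd's random‑pairing sampling and then close with linearity of expectation and Vandermonde's identity. Write $s=s^{(t)}$, $d=d_\mathrm{i}^{(t)}+d_\mathrm{o}^{(t)}$ and $\omega=\omega^{(t)}=\min\{M,s+d\}$. A preliminary remark is that $s$ and $d$ (hence $\omega$ and $\kappa^{(t)}$) are \emph{deterministic} functions of the stream: each deletion raises $d_\mathrm{i}+d_\mathrm{o}$ by exactly one, and each insertion either lowers it by one (when positive) or leaves it at $0$, so only the \emph{split} of $d$ between $d_\mathrm{i}$ and $d_\mathrm{o}$ is random. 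Thus $\kappa^{(t)}$ and the factors $\psi_{\cdot,\cdot,s^{(t)}}$ can be pulled out of expectations as constants. The key lemma I would prove (by induction on $t$, in an appendix, re‑deriving the correctness of random pairing~\citep{GemullaLH08} in our setting) is the following \emph{urn description} of the sample: at the end of time $t$, $\Sam$ is distributed as the set of ``real'' balls obtained by drawing $\omega$ balls uniformly without replacement from an urn containing $s$ distinguishable real balls (one per edge of $E^{(t)}$) and $d$ indistinguishable dummy balls; consequently, for any $B\subseteq E^{(t)}$ with $|B|=k$ and any admissible $m$,
\[
  \Pr\!\left(B\subseteq\Sam,\ M^{(t)}=m\right)=\binom{s-k}{m-k}\binom{d}{\omega-m}\Big/\binom{s+d}{\omega}\enspace,
\]
and $M^{(t)}$ is hypergeometric with $\Pr(M^{(t)}=m)=\binom{s}{m}\binom{d}{\omega-m}/\binom{s+d}{\omega}$.

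Granting the lemma, unbiasedness follows for \emph{every} $t$. Since \algofd uses the same \textsc{UpdateCounters} as \algobase, Lemma~\ref{lem:baseunbiasedaux} gives $\tau^{(t)}=|\Delta^\Sam|=\sum_{\lambda\in\Delta^{(t)}}\mathds{1}\{\lambda\subseteq\Sam\}$, so by linearity it suffices to show that each $\lambda\in\Delta^{(t)}$ contributes $1$ in expectation, i.e.\ $\mathbb{E}\big[\tfrac{1}{\kappa^{(t)}}\psi_{3,M^{(t)},s}\,\mathds{1}\{\lambda\subseteq\Sam\}\,\mathds{1}\{M^{(t)}\ge 3\}\big]=1$. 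Using the joint law above and $\psi_{3,m,s}=\binom{s}{m}/\binom{s-3}{m-3}$ to cancel the $\binom{s-3}{m-3}$ factor, then summing over $m\ge 3$,
\[
  \mathbb{E}\!\left[\psi_{3,M^{(t)},s}\,\mathds{1}\{\lambda\subseteq\Sam\}\,\mathds{1}\{M^{(t)}\ge 3\}\right]
  =\frac{1}{\binom{s+d}{\omega}}\sum_{m\ge 3}\binom{s}{m}\binom{d}{\omega-m}
  =1-\frac{\sum_{m=0}^{2}\binom{s}{m}\binom{d}{\omega-m}}{\binom{s+d}{\omega}}=\kappa^{(t)},
\]
where the middle equality is Vandermonde's identity $\sum_{m}\binom{s}{m}\binom{d}{\omega-m}=\binom{s+d}{\omega}$. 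Dividing by $\kappa^{(t)}$ and summing over $\lambda$ yields $\mathbb{E}[\rho^{(t)}]=|\Delta^{(t)}|$. (If $\kappa^{(t)}=0$ then $\Pr(M^{(t)}\ge3)=0$, which forces $\min\{s,\omega\}\le 2$ and, since $M\ge 6$, $s^{(t)}\le 2$; then $|\Delta^{(t)}|=0=\rho^{(t)}$ and there is nothing to prove.)

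For the sharper claim that $\rho^{(t)}=|\Delta^{(t)}|$ (not merely in expectation) when $t<t^*$: since $|E^{(t)}|$ starts at $0$ and changes by $\pm1$, the definition of $t^*$ gives $s^{(t')}\le M$ for all $t'\le t<t^*$; one then checks by the same induction that in this regime no fresh insertion ever triggers the reservoir swap (the sample is never full) and every uncompensated deletion is a deletion of a sampled edge, immediately compensated by deterministically re‑inserting the next edge, so $\Sam=E^{(t)}$ surely (equivalently, $d_\mathrm{o}^{(t)}=0$ and $\omega^{(t)}=s^{(t)}+d_\mathrm{i}^{(t)}$, i.e.\ the urn draw takes \emph{all} balls). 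Hence $M^{(t)}=s^{(t)}$ deterministically, a direct evaluation gives $\kappa^{(t)}=1$ and $\psi_{3,M^{(t)},s^{(t)}}=\psi_{3,s^{(t)},s^{(t)}}=1$ when $s^{(t)}\ge 3$, so $\rho^{(t)}=\tau^{(t)}=|\Delta^\Sam|=|\Delta^{(t)}|$, while $\rho^{(t)}=0=|\Delta^{(t)}|$ when $s^{(t)}<3$. I expect the urn‑preservation induction to be the only laborious part: one must verify case by case — deterministic insertion, the reservoir swap with probability $M/t$, the pairing insertion with probability $d_\mathrm{i}/(d_\mathrm{i}+d_\mathrm{o})$, deletion of a sampled edge, deletion of a non‑sampled edge — that drawing from the updated urn for $E^{(t)}$ exactly reproduces the conditional law of $\Sam$ given $\Sam$ at time $t-1$; everything downstream is linearity plus Vandermonde.
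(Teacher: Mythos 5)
Your proposal is correct and follows essentially the same route as the paper: the paper also reduces to the two random-pairing facts you package into your ``urn lemma'' (conditional uniformity of $\Sam$ given $M^{(t)}$, and the hypergeometric law of $M^{(t)}$, both cited from~\citep{GemullaLH08}), decomposes $\tau^{(t)}$ into per-triangle indicators via Lemma~\ref{lem:baseunbiasedaux}, and handles $t<t^*$ by the same induction showing $\Sam=E^{(t)}$ and $\kappa^{(t)}=1$. The only cosmetic difference is that you derive $\kappa^{(t)}=\Pr(M^{(t)}\ge 3)$ explicitly via Vandermonde's identity, whereas the paper takes this identity as a corollary of~\citep[Thm.~2]{GemullaLH08} and organizes the expectation computation through the law of total expectation conditioned on $M^{(t)}$.
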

The proof, deferred to App.~\ref{app:algobase}, relies on properties of RP and
on the definitions of $\kappa^{(t)}$ and $\rho^{(t)}$. Specifically, it uses
Lemma~\ref{lem:gemullahighorder}, which is the correspondent of
Lemma~\ref{lem:reservoirhighorder} but for RP, and some additional technical
lemmas (including an equivalent of Lemma~\ref{lem:baseunbiasedaux} but for RP)
and combine them using the law of total expectation by conditioning on the value
of $M^(t)$.

\subsubsection{Variance}
\begin{theorem}\label{thm:fdvariance}
  Let $t>t^*$ s.t.~$|\Delta^{(t)}| >0$ and $s^{(t)}\geq M$. Suppose we have
  $d^{(t)}=d_o^{(t)}+d_i^{(t)}\leq \alpha s^{(t)}$ total unpaired deletions at
  time $t$, with $0\leq\alpha< 1$. If $M\geq
  \frac{1}{2\sqrt{\alpha'-\alpha}}7\ln s^{(t)}$ for some $\alpha< \alpha'<1$, we
  have:
  \begin{align*}
    \mathrm{Var}\left[\rho^{(t)}\right] &\leq
    (\kappa^{(t)})^{-2}|\Delta^{(t)}|\left(\psi_{3,M(1-\alpha'),s^{(t)}}-1 \right) +(\kappa^{(t)})^{-2}2\\
    &+(\kappa^{(t)})^{-2}r^{(t)}\left(\psi_{3,M(1-\alpha'),s^{(t)}}^2\psi_{5,M(1-\alpha'),s^{(t)}}^{-1}-1\right)
  \end{align*}
\end{theorem}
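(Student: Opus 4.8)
The plan is to condition on the realized sample size $M^{(t)}$, which reduces the problem to the analysis already carried out for \algobase. Since $t>t^*$, Thm.~\ref{thm:fdunbiased} gives $\mathbb{E}[\rho^{(t)}]=|\Delta^{(t)}|$, so $\variance[\rho^{(t)}]=\mathbb{E}[(\rho^{(t)})^2]-|\Delta^{(t)}|^2$. The structural fact I would use is the defining property of random pairing (Lemma~\ref{lem:gemullahighorder}, the RP counterpart of Lemma~\ref{lem:reservoirhighorder}): conditioned on $M^{(t)}=m$, the sample $\Sam$ is a uniformly random $m$-subset of $E^{(t)}$; moreover, because $s^{(t)}\ge M$ we have $\omega^{(t)}=M$, so $M^{(t)}$ is hypergeometric on $\{0,\dots,M\}$ with $\Pr(M^{(t)}=j)=\binom{s^{(t)}}{j}\binom{d^{(t)}}{M-j}\big/\binom{s^{(t)}+d^{(t)}}{M}$, $\mathbb{E}[M^{(t)}]=M s^{(t)}/(s^{(t)}+d^{(t)})$, and $\kappa^{(t)}=\Pr(M^{(t)}\ge 3)$. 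Then I would split the variance with the law of total variance, $\variance[\rho^{(t)}]=\variance\bigl[\mathbb{E}[\rho^{(t)}\mid M^{(t)}]\bigr]+\mathbb{E}\bigl[\variance[\rho^{(t)}\mid M^{(t)}]\bigr]$.

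For the first term, conditioned on $M^{(t)}=m\ge 3$ one has $\mathbb{E}[\tau^{(t)}\mid M^{(t)}=m]=|\Delta^{(t)}|/\psi_{3,m,s^{(t)}}$, hence $\mathbb{E}[\rho^{(t)}\mid M^{(t)}=m]=|\Delta^{(t)}|/\kappa^{(t)}$ independently of $m$, while $\mathbb{E}[\rho^{(t)}\mid M^{(t)}=m]=0$ for $m<3$; therefore $\variance[\mathbb{E}[\rho^{(t)}\mid M^{(t)}]]=|\Delta^{(t)}|^2(1-\kappa^{(t)})/\kappa^{(t)}$. For the second term, observe that, conditioned on $M^{(t)}=m\ge 3$, \algofd behaves exactly like \algobase with a reservoir of size $m$ running over the $s^{(t)}$ edges of $E^{(t)}$, up to the deterministic rescaling of the estimator by $1/\kappa^{(t)}$; re-running the computation in the proof of Thm.~\ref{thm:basevariance} with $M\mapsto m$, $t\mapsto s^{(t)}$ and $\xi^{(t)}\mapsto\psi_{3,m,s^{(t)}}$ yields
\[
  \variance[\rho^{(t)}\mid M^{(t)}=m]=(\kappa^{(t)})^{-2}\bigl(|\Delta^{(t)}|f_m+r^{(t)}g_m+w^{(t)}h_m\bigr),
\]
with $f_m=\psi_{3,m,s^{(t)}}-1$, $g_m=\psi_{3,m,s^{(t)}}^2\psi_{5,m,s^{(t)}}^{-1}-1$, $h_m=\psi_{3,m,s^{(t)}}^2\psi_{6,m,s^{(t)}}^{-1}-1\le 0$ (the sign of $h_m$ follows, as for $h(t)$ in Thm.~\ref{thm:basevariance}, from $s^{(t)}\ge m$ and monotonicity of $x\mapsto(x-3)(x-4)(x-5)/(x(x-1)(x-2))$). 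I would drop the non-positive $w^{(t)}h_m$ term, and, since $f_m$ and $g_m$ are non-increasing in $m$ in the relevant range, bound the contribution of the ``good'' event $G=\{M^{(t)}\ge M(1-\alpha')\}$ by $(\kappa^{(t)})^{-2}(|\Delta^{(t)}|f_{M(1-\alpha')}+r^{(t)}g_{M(1-\alpha')})$ --- precisely the two main terms of the claimed bound (using $\Pr(G)\le 1$).

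It then remains to show that the leftover --- the contribution of the ``bad'' event $\overline G\cap\{M^{(t)}\ge 3\}$ to the second term, plus the whole first term $|\Delta^{(t)}|^2(1-\kappa^{(t)})/\kappa^{(t)}$ --- is at most $2(\kappa^{(t)})^{-2}$. On the bad event I would use the order-independent deterministic estimate $\rho^{(t)}\le\binom{s^{(t)}}{3}/\kappa^{(t)}$, which holds because $\tau^{(t)}\le\min\{|\Delta^{(t)}|,\binom{M^{(t)}}{3}\}$ and $\binom{m}{3}\psi_{3,m,s^{(t)}}=\binom{s^{(t)}}{3}$; combined with $1-\kappa^{(t)}=\Pr(M^{(t)}<3)\le\Pr(\overline G)$ and $|\Delta^{(t)}|\le\binom{s^{(t)}}{3}$, both leftover pieces are bounded by $(\kappa^{(t)})^{-2}\cdot\mathrm{poly}(s^{(t)})\cdot\Pr(M^{(t)}<M(1-\alpha'))$. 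Here the hypotheses finally come into play: since $d^{(t)}\le\alpha s^{(t)}$ implies $\mathbb{E}[M^{(t)}]=M s^{(t)}/(s^{(t)}+d^{(t)})\ge M/(1+\alpha)\ge M(1-\alpha)$, a Chernoff/Hoeffding tail bound for the hypergeometric $M^{(t)}$ gives $\Pr(M^{(t)}<M(1-\alpha'))\le\Pr(M^{(t)}\le\mathbb{E}[M^{(t)}]-M(\alpha'-\alpha))\le\exp(-\Omega(M(\alpha'-\alpha)^2))$, and substituting $M\ge\tfrac{7\ln s^{(t)}}{2\sqrt{\alpha'-\alpha}}$ makes this an inverse polynomial in $s^{(t)}$ of high enough degree to absorb the $\mathrm{poly}(s^{(t)})$ prefactor, leaving a remainder at most $2(\kappa^{(t)})^{-2}$. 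Summing the good-event and leftover contributions gives the theorem. I expect the delicate point to be exactly this last calibration: controlling the tail of $M^{(t)}$ finely enough (as a function of $s^{(t)}$) to dominate the polynomial worst-case blow-up of $\rho^{(t)}$ on the rare event that too many sampled edges have been deleted --- this is what dictates the peculiar $\tfrac{7\ln s^{(t)}}{2\sqrt{\alpha'-\alpha}}$ threshold on $M$.
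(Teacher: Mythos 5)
Your proposal is correct and follows essentially the same route as the paper: condition on $M^{(t)}$, import the \algobase variance computation via the RP uniformity lemma (Lemma~\ref{lem:gemullahighorder}), drop the non-positive $w^{(t)}$ term, bound the good event $\{M^{(t)}\ge M(1-\alpha')\}$ by monotonicity in $m$, and absorb the bad event plus the between-group variance into the $2(\kappa^{(t)})^{-2}$ slack using the hypergeometric lower-tail bound (Corollary~\ref{corol:samplesizebound}). The only (harmless) differences are bookkeeping: you evaluate $\variance[\mathbb{E}[\rho^{(t)}\mid M^{(t)}]]=|\Delta^{(t)}|^2(1-\kappa^{(t)})/\kappa^{(t)}$ exactly and use the deterministic bound $\rho^{(t)}\le\binom{s^{(t)}}{3}/\kappa^{(t)}$ on the bad event, whereas the paper expands that variance term by term and uses the worst-case conditional variance at $j=3$ — both land within the same budget.
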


The proof of Thm.~\ref{thm:fdvariance} is deferred to App.~\ref{app:algofd}. It
uses two results on the variance of $\rho^{(t)}$ conditioned on a specific value of
$M^{(t)}$ (Lemmas~\ref{lem:confdvariance} and~\ref{lem:boundvarfd}), and an
analysis of the probability distribution of $M^{(t)}$ (Lemma~\ref{lem:mtprops}
and Corollary~\ref{corol:samplesizebound}). These results are then combined
using the law of total variance.

\subsubsection{Concentration}
The following result relies on Chebyshev's inequality and on
Thm.~\ref{thm:fdvariance}, and the proof (reported in App.~\ref{app:algofd})
follows the steps similar to those in the proof for Thm.~\ref{thm:improvedvariance}.
\begin{theorem}\label{thm:fdconcentration}
  Let $t\ge t^*$ s.t.~$|\Delta^{(t)}| >0$ and $s^{(t)}\geq M$. Let
  $d^{(t)}=d_o^{(t)}+d_i^{(t)}\leq \alpha s^{(t)}$ for some $0\leq\alpha< 1$.
  For any $\varepsilon,\delta\in(0,1)$, if for some $\alpha< \alpha'<1$
  \begin{align*}
    M >&\max\Bigg\{\frac{1}{\sqrt{a'-\alpha}}7\ln s^{(t)},\\
    &(1-\alpha')^{-1}\left(\sqrt[3]{\frac{2s^{(t)}(s^{(t)}-1)(s^{(t)}-2)}{\delta \varepsilon^2|\Delta^{(t)}|(\kappa^{(t)})^{2}+2\frac{|\Delta^{(t)}|-2}{|\Delta^{(t)}|}}}+2 \right),\\
  &\frac{(1-\alpha')^{-1}}{3} \left( \frac{r^{(t)}s^{(t)}}{\delta \varepsilon^2 |\Delta^{(t)}|^2(\kappa^{(t)})^{-2}+2r^{(t)}}\right)\Bigg\}
  \end{align*}
  then $|\rho^{(t)}-|\Delta^{(t)}||<\varepsilon|\Delta^{(t)}|$ with
  probability $>1-\delta$.
\end{theorem}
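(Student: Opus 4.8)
The plan is to follow the template of the proof of Thm.~\ref{thm:improvedconcentration}: use unbiasedness to set up Chebyshev's inequality, reduce the claim to an upper bound on the variance, substitute the bound of Thm.~\ref{thm:fdvariance}, and then solve for $M$ one term at a time. Since $t\ge t^*$, Thm.~\ref{thm:fdunbiased} gives $\mathbb{E}[\rho^{(t)}]=|\Delta^{(t)}|$, so by Chebyshev's inequality~\citep[Thm.~3.6]{mitzenmacher2005probability} it suffices to prove that $\variance[\rho^{(t)}]<\delta\varepsilon^2|\Delta^{(t)}|^2$. The first term in the stated lower bound on $M$ (the one proportional to $\ln s^{(t)}$) is present precisely so that the hypothesis on $M$ in Thm.~\ref{thm:fdvariance} holds; since by assumption $s^{(t)}\ge M$ and $d^{(t)}\le\alpha s^{(t)}$ as well, Thm.~\ref{thm:fdvariance} applies and (for $t>t^*$; the boundary case $t=t^*$ is handled analogously) bounds $\variance[\rho^{(t)}]$ by the sum of the three terms
\[
  (\kappa^{(t)})^{-2}|\Delta^{(t)}|\big(\psi_{3,M',s^{(t)}}-1\big),\qquad 2(\kappa^{(t)})^{-2},\qquad (\kappa^{(t)})^{-2}r^{(t)}\big(\psi_{3,M',s^{(t)}}^{2}\psi_{5,M',s^{(t)}}^{-1}-1\big)
\]
where I abbreviate $M'=M(1-\alpha')$.

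I would then split the budget $\delta$ between the first-plus-second term and the third term, imposing one sufficient condition for each, exactly as Conditions~1 and~2 are imposed in the proof of Thm.~\ref{thm:improvedconcentration}. Forcing the sum of the first two terms below $\tfrac{\delta}{2}\varepsilon^2|\Delta^{(t)}|^2$ rearranges to $\psi_{3,M',s^{(t)}}<\tfrac12\big(\delta\varepsilon^2|\Delta^{(t)}|(\kappa^{(t)})^{2}+2\tfrac{|\Delta^{(t)}|-2}{|\Delta^{(t)}|}\big)$; using $\psi_{3,M',s^{(t)}}=\tfrac{s^{(t)}(s^{(t)}-1)(s^{(t)}-2)}{M'(M'-1)(M'-2)}\le\tfrac{s^{(t)}(s^{(t)}-1)(s^{(t)}-2)}{(M'-2)^{3}}$, this is implied by $M'-2>\sqrt[3]{2s^{(t)}(s^{(t)}-1)(s^{(t)}-2)\big/\big(\delta\varepsilon^2|\Delta^{(t)}|(\kappa^{(t)})^{2}+2(|\Delta^{(t)}|-2)/|\Delta^{(t)}|\big)}$, which after dividing through by $1-\alpha'$ is exactly the second term in the stated lower bound on $M$. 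Symmetrically, forcing the third term below $\tfrac{\delta}{2}\varepsilon^2|\Delta^{(t)}|^2$ and using a clean upper estimate of the form $\psi_{3,M',s^{(t)}}^{2}\psi_{5,M',s^{(t)}}^{-1}-1\le c\,s^{(t)}/M'$ for a small absolute constant $c$ (the analogue, for $\psi$, of the estimate $g(t)\le c\,t/M$ used in the \algobase variance analysis) gives a lower bound on $M'$, hence on $M$, matching the third term. Adding the two conditions yields $\variance[\rho^{(t)}]<\delta\varepsilon^2|\Delta^{(t)}|^2$, as required.

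I expect the main obstacle to be those two ``clean'' estimates in the last step: expressing $\psi_{3,M(1-\alpha'),s^{(t)}}$ and $\psi_{3,M(1-\alpha'),s^{(t)}}^{2}\psi_{5,M(1-\alpha'),s^{(t)}}^{-1}-1$ in terms of the single ratio $s^{(t)}/(M(1-\alpha'))$ while controlling the low-order correction factors — this is what produces the additive ``$+2$'' inside the cube root and the constant multiplying the $r^{(t)}$ term — and doing so while carrying the $\kappa^{(t)}$ factors through and keeping track of the sign of the denominators that appear (which, as already for \algoimproved, needs $|\Delta^{(t)}|$ not too small). The remaining ingredients — Thm.~\ref{thm:fdunbiased}, Thm.~\ref{thm:fdvariance}, Chebyshev's inequality, and the per-term algebra — are routine and mirror the proof of Thm.~\ref{thm:improvedconcentration}.
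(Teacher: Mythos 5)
Your proposal matches the paper's proof essentially step for step: Chebyshev's inequality via the unbiasedness of $\rho^{(t)}$, invoking Thm.~\ref{thm:fdvariance} (enabled by the $\ln s^{(t)}$ term in the bound on $M$), and then splitting $\delta$ into two halves with the $|\Delta^{(t)}|(\psi_{3,M(1-\alpha'),s^{(t)}}-1)$ term grouped with the additive $2(\kappa^{(t)})^{-2}$ for Condition~1 and the $r^{(t)}$ term for Condition~2, yielding exactly the cube-root and linear lower bounds on $M(1-\alpha')$. The one place you flag as an obstacle — the estimate $\psi_{3,M(1-\alpha'),s^{(t)}}^{2}\psi_{5,M(1-\alpha'),s^{(t)}}^{-1}-1\le s^{(t)}/(6M(1-\alpha'))-1$ — is asserted in the paper with no more detail than you give, so your treatment is at the same level of rigor.
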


\subsection{Counting global and local triangles in
multigraphs}\label{sec:multigraphs}
We now discuss how to extend \algo to approximate the local and global triangle
counts in multigraphs.

\subsubsection{\algobase on multigraphs}
\algobase can be adapted to work on multigraphs as follows. First of all, the
sample $\Sam$ should be considered a \emph{bag}, i.e., it may contain multiple
copies of the same edge. Secondly, the function \textsc{UpdateCounters} must be
changed as presented in Alg.~\ref{alg:triest-base-multi}, to take into account
the fact that inserting or removing an edge $(u,v)$ from $\Sam$ respectively
increases or decreases the global number of triangles in $G^\Sam$ by a quantity
that depends on the product of the number of edges $(c,u)\in\Sam$ and
$(c,v)\in\Sam$, for $c$ in the shared neighborhood (in $G^\Sam$) of $u$ and $v$
(and similarly for the local number of triangles incidents to $c$).

\begin{algorithm}[ht]
  \small % small font size for algorithms is required by ACM style
  \caption{\textsc{UpdateCounters} function for \algobase on multigraphs}
  \label{alg:triest-base-multi}
  \begin{algorithmic}[1]
    \Statex
    \Function{UpdateCounters}{$(\bullet, (u,v))$}
      \State $\mathcal{N}^\Sam_{u,v} \leftarrow \mathcal{N}^\Sam_u \cap \mathcal{N}^\Sam_v$
      \ForAll {$c \in \mathcal{N}^\Sam_{u,v}$}
        \State $y_{c,u} \leftarrow $ number of edges between $c$ and $u$ in
        $\Sam$
        \State $y_{c,v} \leftarrow $ number of edges between $c$ and $v$ in
        $\Sam$
        \State $y_{c} \leftarrow y_{c,u}\cdot y_{c,v}$
        \State $\tau \leftarrow \tau \bullet y_c$
        \State $\tau_c \leftarrow \tau_c \bullet y_c$
        \State $\tau_u \leftarrow \tau_u \bullet y_c$
        \State $\tau_v \leftarrow \tau_v \bullet y_c$
      \EndFor
    \EndFunction
  \end{algorithmic}
\end{algorithm}

For this modified version of \algobase, that we call \textsc{\algobase-m}, an
equivalent version of Lemma~\ref{lem:baseunbiasedaux} holds. Therefore, we can
prove a result on the unbiasedness of \textsc{\algobase-m} equivalent (i.e.,
with the same statement) as Thm.~\ref{thm:baseunbiased}. The proof of such
result is also the same as the one for Thm.~\ref{thm:baseunbiased}.

To analyze the variance of \textsc{\algobase-m}, we need to take into
consideration the fact that, in a multigraph, a pair of triangles may share
\emph{two} edges, and the variance depends (also) on the number of such pairs.
Let $r_1^{(t)}$ be the number of unordered pairs of distinct triangles from
$\Delta^{(t)}$ sharing an edge and let $r_2^{(t)}$ be the number of unordered
pairs of distinct triangles from $\Delta^{(t)}$ sharing \emph{two} edges (such
pairs may exist in a multigraph, but not in a simple graph). Let
$q^{(t)}=\binom{|\Delta^{(t)}|}{2}-r_1^{(t)}-r_2^{(t)}$ be the number of
unordered pairs of distinct triangles that do not share any edge.

\begin{theorem}\label{thm:basevariance-multi}
  For any $t>M$, let $f(t) = \xi^{(t)}-1$,
  \[
    g(t) = \xi^{(t)}\frac{(M-3)(M-4)}{(t-3)(t-4)} -1
  \]
  and
  \[
    h(t) = \xi^{(t)}\frac{(M-3)(M-4)(M-5)}{(t-3)(t-4)(t-5)}
    -1\enspace(\le 0),
  \]
  and
  \[
    j(t) = \xi^{(t)}\frac{M-3}{t-3}
    -1\enspace.
  \]
  We have:
  \[
    \variance\left[\xi(t)\tau^{(t)}\right] = |\Delta^{(t)}|
    f(t)+r_1^{(t)}g(t)+r_2^{(t)}j(t)+q^{(t)}h(t).
  \]
\end{theorem}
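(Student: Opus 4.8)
The plan is to mirror the proof of Theorem~\ref{thm:basevariance}, re-deriving the covariance $\mathbb{E}[\delta_\lambda^{(t)}\delta_\gamma^{(t)}]-1$ for every pair of distinct triangles $\lambda,\gamma\in\Delta^{(t)}$, but now with the extra case in which $\lambda$ and $\gamma$ share \emph{two} edges. We start exactly as in~\eqref{eq:proofvariancebase}: writing $\phi^{(t)}=\xi^{(t)}\tau^{(t)}=\sum_{\lambda\in\Delta^{(t)}}\delta_\lambda^{(t)}$, where $\delta_\lambda^{(t)}$ takes value $\xi^{(t)}$ when the three edges of $\lambda$ are in $\Sam$ and $0$ otherwise (this much carries over verbatim from \textsc{triest-base}, and unbiasedness of \textsc{triest-base-m} was already asserted above), we obtain
\[
  \variance[\phi^{(t)}] = |\Delta^{(t)}|(\xi^{(t)}-1) + \sum_{\substack{\lambda,\gamma\in\Delta^{(t)}\\\lambda\neq\gamma}}\left(\mathbb{E}[\delta_\lambda^{(t)}\delta_\gamma^{(t)}]-1\right)\enspace.
\]
The first term gives the $|\Delta^{(t)}|f(t)$ contribution. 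It then remains to evaluate the inner expectation in each of the four exhaustive cases: $\lambda$ and $\gamma$ share $0$, $1$, or $2$ edges (three distinct edges cannot be shared by two distinct triangles).

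\textbf{The case analysis.} For the no-shared-edge case ($q^{(t)}$ pairs) and the one-shared-edge case ($r_1^{(t)}$ pairs) the computations are identical to~\eqref{eq:expectprodnoshare} and~\eqref{eq:expectprodshare} respectively, because $\delta_\lambda^{(t)}\delta_\gamma^{(t)}=(\xi_{3,t})^2$ iff all the $6$ (resp.~$5$) distinct edges of the two triangles lie in $\Sam$, and Lemma~\ref{lem:reservoirhighorder} gives $\Pr(\cdot)=\xi_{6,t}^{-1}$ (resp.~$\xi_{5,t}^{-1}$). This yields the $q^{(t)}h(t)$ and $r_1^{(t)}g(t)$ terms. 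The new case is when $\lambda$ and $\gamma$ share \emph{two} edges: then the union of their edge sets has only $4$ distinct edges, so $\delta_\lambda^{(t)}\delta_\gamma^{(t)}=(\xi_{3,t})^2$ iff those $4$ edges are all in $\Sam$, which by Lemma~\ref{lem:reservoirhighorder} happens with probability $\xi_{4,t}^{-1}$. Hence
\[
  \mathbb{E}[\delta_\lambda^{(t)}\delta_\gamma^{(t)}] = (\xi_{3,t})^2\,\xi_{4,t}^{-1} = \xi_{3,t}\,\frac{M-3}{t-3} = \xi^{(t)}\frac{M-3}{t-3},
\]
using $\xi_{3,t}/\xi_{4,t}=\prod_{i=0}^{3}\frac{t-i}{M-i}\big/\prod_{i=0}^{2}\frac{t-i}{M-i}$ inverted appropriately; subtracting $1$ gives exactly $j(t)$. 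Summing the per-pair contributions over the $q^{(t)}$, $r_1^{(t)}$, $r_2^{(t)}$ pairs and adding the diagonal term yields the claimed identity.

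\textbf{Main obstacle.} There is no deep obstacle; the only thing to get right is the bookkeeping of ``how many distinct edges does the union of two triangles sharing $k$ edges contain'' — namely $6-k$ — and then correctly simplifying $(\xi_{3,t})^2/\xi_{6-k,t}$ via the product formula $\xi_{a,b}=\prod_{i=0}^{a-1}\frac{b-i}{M-i}$ to land on the stated $f,g,h,j$. One subtle point worth a sentence in the writeup is that in a multigraph two ``distinct'' triangles can even use the \emph{same} multiset of three vertices yet be different edge-sets; this is already folded into the definition of $\Delta^{(t)}$ given in the preliminaries, so the counting of pairs by number of shared edges is still well-defined, and the argument above goes through unchanged. (One should also note $t\ge 5$ whenever $r_1^{(t)}$ or $r_2^{(t)}$ is nonzero and $t\ge 6$ whenever $q^{(t)}>0$, exactly as in the base proof, so that the binomials in Lemma~\ref{lem:reservoirhighorder} are well-formed; for smaller $t$ the corresponding counts vanish and the identity holds trivially.)
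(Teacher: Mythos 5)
Your proposal is correct and follows exactly the route the paper indicates for this theorem: repeat the covariance decomposition from Theorem~\ref{thm:basevariance}, reuse the six-edge and five-edge cases verbatim, and add the new four-distinct-edge case for pairs sharing two edges, where $\xi_{3,t}^2\xi_{4,t}^{-1}=\xi^{(t)}\frac{M-3}{t-3}$ yields $j(t)$. The edge-count bookkeeping and the remark about distinct triangles on the same vertex multiset are both handled appropriately.
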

The proof follows the same lines as the one for Thm.~\ref{thm:basevariance},
with the additional steps needed to take into account the contribution of the
$r_2^{(t)}$ pairs of triangles in $G^{(t)}$ sharing two edges.

\subsubsection{\algoimproved on multigraphs}
A variant \textsc{\algoimproved-m} of \algoimproved for multigraphs can be
obtained by using the function \textsc{UpdateCounters} defined in
Alg.~\ref{alg:triest-base-multi}, modified to increment\footnote{As in
\algoimproved, all calls to \textsc{UpdateCounters} in \textsc{\algoimproved-m}
have $\bullet=+$. See also Alg.~\ref{alg:triest-impr}.} the counters by
$\eta^{(t)}y_c^{(t)}$, rather than $y_c^{(t)}$, where
$\eta^{(t)}=\max\{1,(t-1)(t-2)/(M(M-1))\}$. The result stated in
Thm.~\ref{thm:improvedunbiased} holds also for the estimations computed by
\textsc{\algoimproved-m}. An upper bound to the variance of the estimations,
similar to the one presented in Thm.~\ref{thm:improvedvariance} for
\algoimproved, could potentially be obtained, but its derivation would involve a
high number of special cases, as we have to take into consideration the order of
the edges in the stream.

\subsubsection{\algofd on multigraphs}
\algofd can be modified in order to provide an approximation of the number of
global and local triangles on multigraphs observed as a stream of edge deletions
and deletions. It is however necessary to clearly state the data model. We
assume that for all pairs of vertices $u,v \in V^{(t)}$, each edge connecting
$u$ and $v$ is assigned a label that is unique among the edges connecting $u$
and $v$. An edge is therefore uniquely identified by its endpoints and its label
as $(u,v), label)$. Elements of the stream are now in the form $(\bullet, (u,v),
label)$ (where $\bullet$ is either $+$ or $-$). This assumption, somewhat
strong, is necessary in order to apply the \emph{random pairing} sampling
scheme~\citep{GemullaLH08} to fully-dynamic multigraph edge streams.

Within this model, we can obtain an algorithm \textsc{\algofd-m} for multigraphs
by adapting \algofd as follows. The sample $\Sam$ is a \emph{set} of elements
$((u,v), label)$. When a deletion $(-, (u,v), label)$ is on the stream, the
sample $\Sam$ is modified if and only if $((u,v), label)$ belongs to $\Sam$.
This change can be implemented in the pseudocode from Alg.~\ref{algo: TRIEST-FD}
by modifying line 8 to be
\[
  \mbox{``\bf else if } ((u,v), label)\in \Sam \mbox{ {\bf then}''}\enspace.
\]
Additionally, the function \textsc{UpdateCounters} to be used is the one
presented in Alg.~\ref{alg:triest-base-multi}.

We can prove a result on the unbiasedness of \textsc{\algofd-m} equivalent
(i.e., with the same statement) as Thm.~\ref{thm:fdunbiased}. The proof of such
result is also the same as the one for Thm.~\ref{thm:fdunbiased}. An upper bound
to the variance of the estimations, similar to the one presented in
Thm.~\ref{thm:fdvariance} for \algofd, could be obtained by considering the fact
that in a multigraph two triangles can share two edges, in a fashion similar to
what we discussed in Thm.~\ref{thm:basevariance-multi}.

\subsection{Discussion}\label{sec:discussion}
We now briefly discuss over the algorithms we just presented, the techniques
they use, and the theoretical results we obtained for \algo, in order to
highlight advantages, disadvantages, and limitations of our approach.

\paragraph{On reservoir sampling}
Our approach of using reservoir sampling to keep a random sample of edges can be
extended to many other graph mining problems, including approximate counting of
other subgraphs more or less complex than triangles (e.g., squares, trees with a
specific structure, wedges, cliques, and so on). The estimations of such counts
would still be unbiased, but as the number of edges composing the subgraph(s) of
interest increases, the variance of the estimators also increases, because the
probability that all edges composing a subgraph are in the sample (or all but
the last one when the last one arrives, as in the case of \algoimproved),
decreases as their number increases. Other works in the triangle counting
literature~\citep{PavanTTW13,JhaSP15} use samples of wedges, rather than edges.
They perform worse than \algo in both accuracy and runtime (see
Sect.~\ref{sec:experiments}), but the idea of sampling and storing more complex
structures rather than simple edges could be a potential direction for
approximate counting of larger subgraphs.

\paragraph{On the analysis of the variance}
We showed an exact analysis of the variance of \algobase but for the other
algorithms we presented \emph{upper bounds} to the variance of the estimates.
These bounds can still be improved as they are not currently tight. For example,
we already commented on the fact that the bound in~\eqref{eq:improvedvariance}
does not include a number of negative terms that would tighten it
(i.e., decrease the bound), and that could potentially be no smaller than the
term depending on $z^{(t)}$. The absence of such terms is due to the fact that
it seems very challenging to obtain non-trivial \emph{upper bounds} to them
that are valid for every $t>M$. Our proof for this bound uses a careful case-by-case
analysis, considering the different situations for pair of triangles (e.g.,
sharing or not sharing an edge, and considering the order of edges on the
stream). It may be possible to obtain tighter bounds to the variance by
following a more holistic approach that takes into account the fact that the
sizes of the different classes of triangle pairs are highly dependent on each
other.

Another issue with the bound to the variance from~\eqref{eq:improvedvariance} is
that the quantity $z^{(t)}$ depends on the order of edges on the stream. As
already discussed, the bound can be made independent of the order by loosening
it even more. Very recent developments in the sampling theory
literature~\citep{DevilleT04} presented sampling schemes and estimators whose
second-order sampling probabilities do not depend on the order of the stream, so
it should be possible to obtain such bounds also for the triangle counting
problem, but a sampling scheme different than reservoir sampling would have to
be used, and a careful analysis is needed to establish its net advantages in
terms of performances and scalability to billion-edges graphs.

\paragraph{On the trade-off between speed and accuracy}
We concluded both previous paragraphs in this subsection by mentioning
techniques different than reservoir sampling of edges as potential directions to
improve and extend our results. In both cases these techniques are more complex
not only in their analysis but also \emph{computationally}. Given that the main
goal of algorithms like \algo is to make it possible to analyze graphs with
billions (and possibly more) nodes, the gain in accuracy
need to be weighted against expected slowdowns in execution. As we show in our
experimental evaluation in the next section, \algo, especially in the
\algoimproved variant, actually seems to strike the right balance between
accuracy and tradeoff, when compared with existing contributions.

\section{Experimental evaluation}\label{sec:experiments}

We evaluated \algo on several real-world graphs with up to a billion edges. The
algorithms were implemented in \verb C++, and ran on the Brown University CS
department
cluster.\footnote{\url{https://cs.brown.edu/about/system/services/hpc/grid/}}
Each run employed a single core and used at most $4$ GB of RAM. The code is
available from \url{http://bigdata.cs.brown.edu/triangles.html}.

\paragraph{Datasets} We created the streams from the following publicly
available graphs (properties in Table~\ref{table:graphs}).

\begin{description}
  \item[Patent (Co-Aut.) and Patent (Cit.)] The {\it Patent (Co-Aut.)} and {\it
    Patent (Cit.)} graphs are obtained from a dataset of $\approx 2$ million
    U.S.~patents granted between '75 and '99~\citep{hjt01}. In {\it Patent
    (Co-Aut.)}, the nodes represent inventors and there is an edge with
    timestamp $t$ between two co-inventors of a patent if the patent was granted
    in year $t$. In {\it Patent (Cit.)}, nodes are patents and there is an edge
    $(a,b)$ with timestamp $t$ if patent $a$ cites $b$ and $a$ was granted in
    year $t$.
  \item[LastFm] The LastFm graph is based on a dataset~\citep{celma2009music,
    koblenz} of $\approx20$ million \url{last.fm} song listenings, $\approx1$
    million songs and $\approx1000$ users. There is a node for each song and an
    edge between two songs if $\ge3$ users listened to both on day $t$.
  \item[Yahoo!-Answers] The Yahoo!~Answers graph is obtained from a sample of
    $\approx 160$ million answers to $\approx25$ millions questions posted on
    Yahoo!~Answers~\citep{yahoo-ans}.  An edge connects two users at time
    $max(t_1, t_2)$ if they both answered the same question at times $t_1$,
    $t_2$ respectively. We removed $6$ outliers questions with more than $5000$
    answers.
  \item[Twitter] This is a snapshot~\citep{kwak2010twitter,brsllp} of the
    Twitter followers/following network with $\approx 41$ million nodes and
    $\approx 1.5$ billions edges. We do not have time information for the edges,
    hence we assign a random timestamp to the edges (of which we ignore the
    direction).
\end{description}

\paragraph{Ground truth}
To evaluate the accuracy of our algorithms, we computed
the \emph{ground truth} for our smaller graphs (i.e., the exact number of
global and local triangles for each time step), using an exact algorithm. The
entire current graph is stored in memory and when an edge $u,v$ is inserted (or
deleted) we update the current count of local and global triangles by checking
how many triangles are completed (or broken). As exact algorithms are not
scalable, computing the exact triangle count is feasible only for small graphs
such as Patent (Co-Aut.), Patent (Cit.) and LastFm. Table~\ref{table:graphs}
reports the exact total number of triangles at the end of the stream for those
graphs (and an estimate for the larger ones using \algoimproved with
$M=1000000$).

\begin{table}[ht]
  \tbl{Properties of the dynamic graph streams analyzed. $|V|$, $|E|$, $|E_u|$,
    $|\Delta|$ refer respectively to the number of nodes in the graph, the
    number of edge addition events, the number of distinct edges additions, and
    the maximum number of triangles in the graph (for Yahoo!~Answers and Twitter
    estimated with \algoimproved with $M=1000000$, otherwise computed exactly with
    the na\"ive algorithm).
  }{
    \begin{tabular}{cccccc}
    \toprule
    Graph  & $|V|$ & $|E|$ &$|E_u|$ & $|\Delta|$ \\
    \midrule
    Patent (Co-Aut.) & $1{,}162{,}227$ & $3{,}660{,}945$ & 2{,}724{,}036 & \num{3.53e+06} \\
    \midrule
    Patent (Cit.) & $2{,}745{,}762$ & $13{,}965{,}410$  & 13{,}965{,}132 &  \num{6.91e+06}\\
    \midrule
    LastFm & $681{,}387$ & $43{,}518{,}693$  & 30{,}311{,}117 & \num{1.13e+09}\\
    \midrule
    Yahoo!~Answers & $2{,}432{,}573$ & $\num{1.21e9}$&\num{1.08e9} & \num{7.86e10} \\
    \midrule
    Twitter &  $41{,}652{,}230$& $\num{1.47e9}$ & \num{1.20e9}& \num{3.46e10}\\
    \end{tabular}
  } % end \tbl
  \label{table:graphs}
\end{table}

\subsection{Insertion-only case}
We now evaluate \algo on insertion-only streams and compare its performances
with those of state-of-the-art approaches~\citep{LimK15,JhaSP15,PavanTTW13},
showing that \algo has an average estimation error significantly smaller than
these methods both for the global and local estimation problems, while using the
same amount of memory.

\paragraph{Estimation of the global number of triangles}
Starting from an empty graph we add one edge at a time, in timestamp order.
Figure~\ref{fig:add-only} illustrates the evolution, over time, of the
estimation computed by \algoimproved with $M=1{,}000{,}000$. For smaller graphs
for which the ground truth can be computed exactly, the curve of the exact count
is practically indistinguishable from \algoimproved estimation, showing the
precision of the method. The estimations have very small variance even on the
very large Yahoo!\ Answers and Twitter graphs (point-wise max/min estimation
over ten runs is almost coincident with the average estimation). These results
show that \algoimproved is very accurate even when storing less than a $0.001$
fraction of the total edges of the graph.

\begin{figure}[ht]
%\subfigure[Patent (Co-Aut.)]{\includegraphics[scale=0.23]{patent-coaut-add-only-01.pdf}}
\subfigure[Patent
(Cit.)]{\includegraphics[width=0.49\textwidth,keepaspectratio]{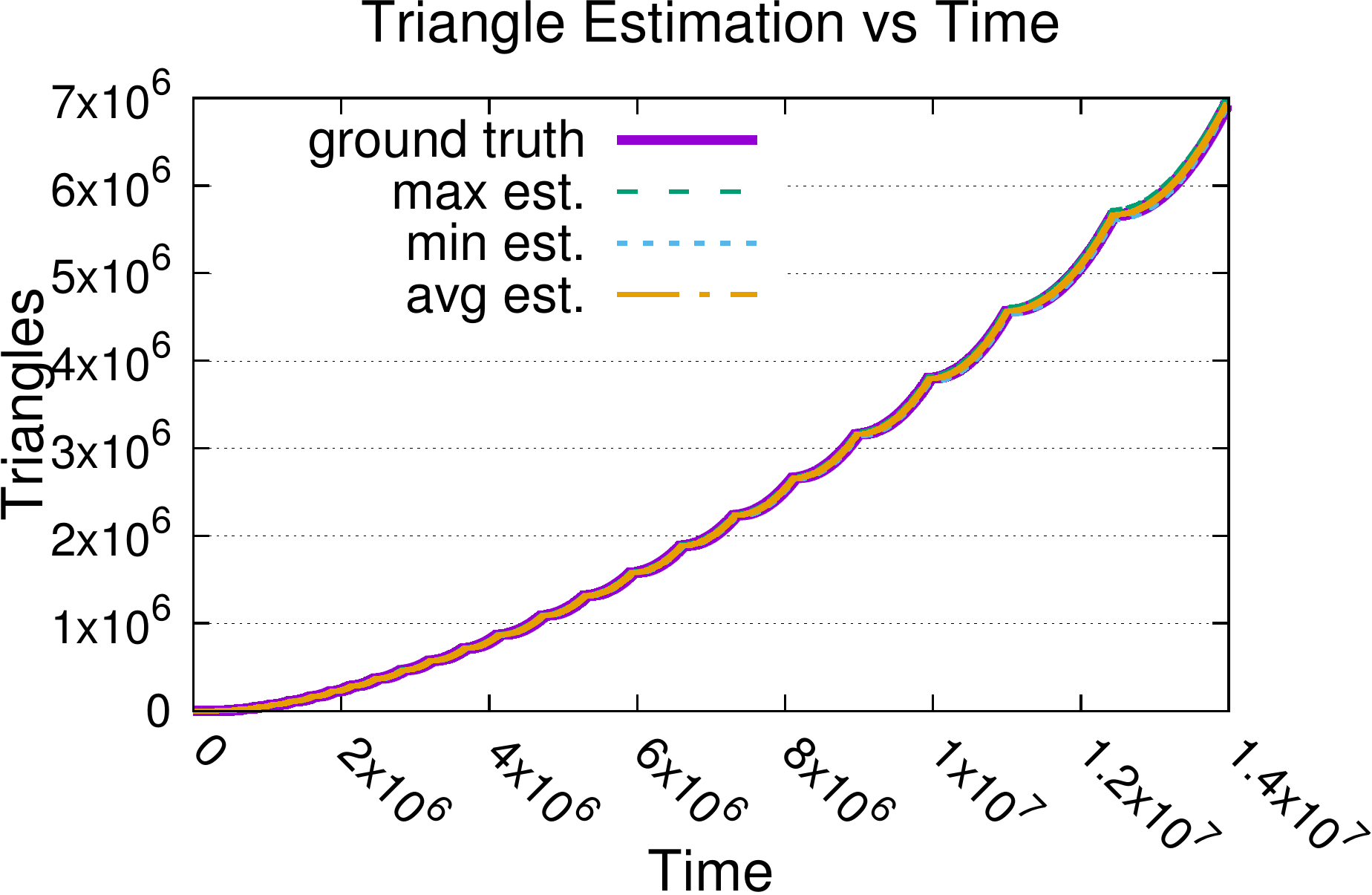}}
\subfigure[LastFm]{\includegraphics[width=0.49\textwidth,keepaspectratio]{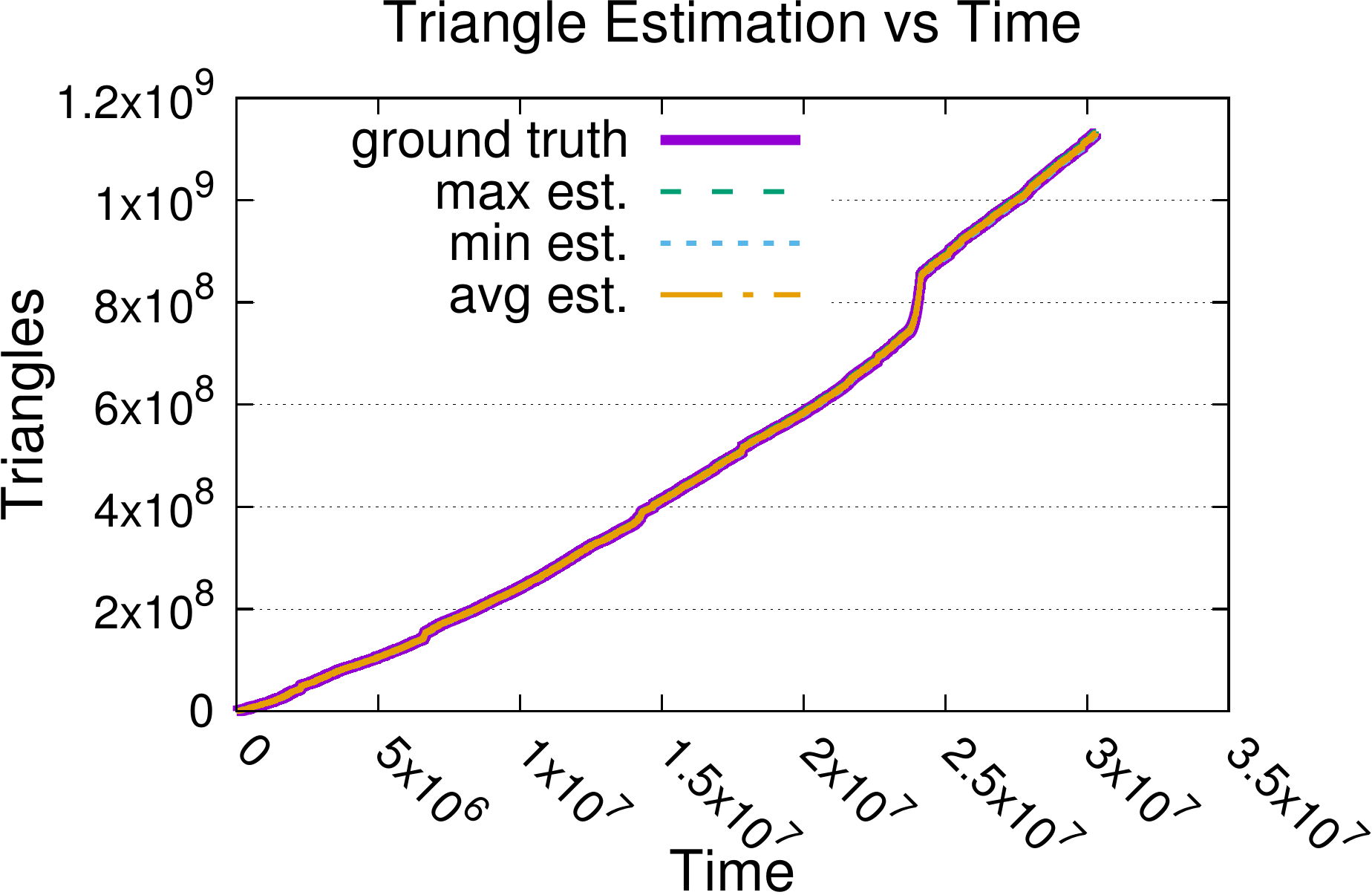}}
\subfigure[Yahoo!\
Answers]{\includegraphics[width=0.49\textwidth,keepaspectratio]{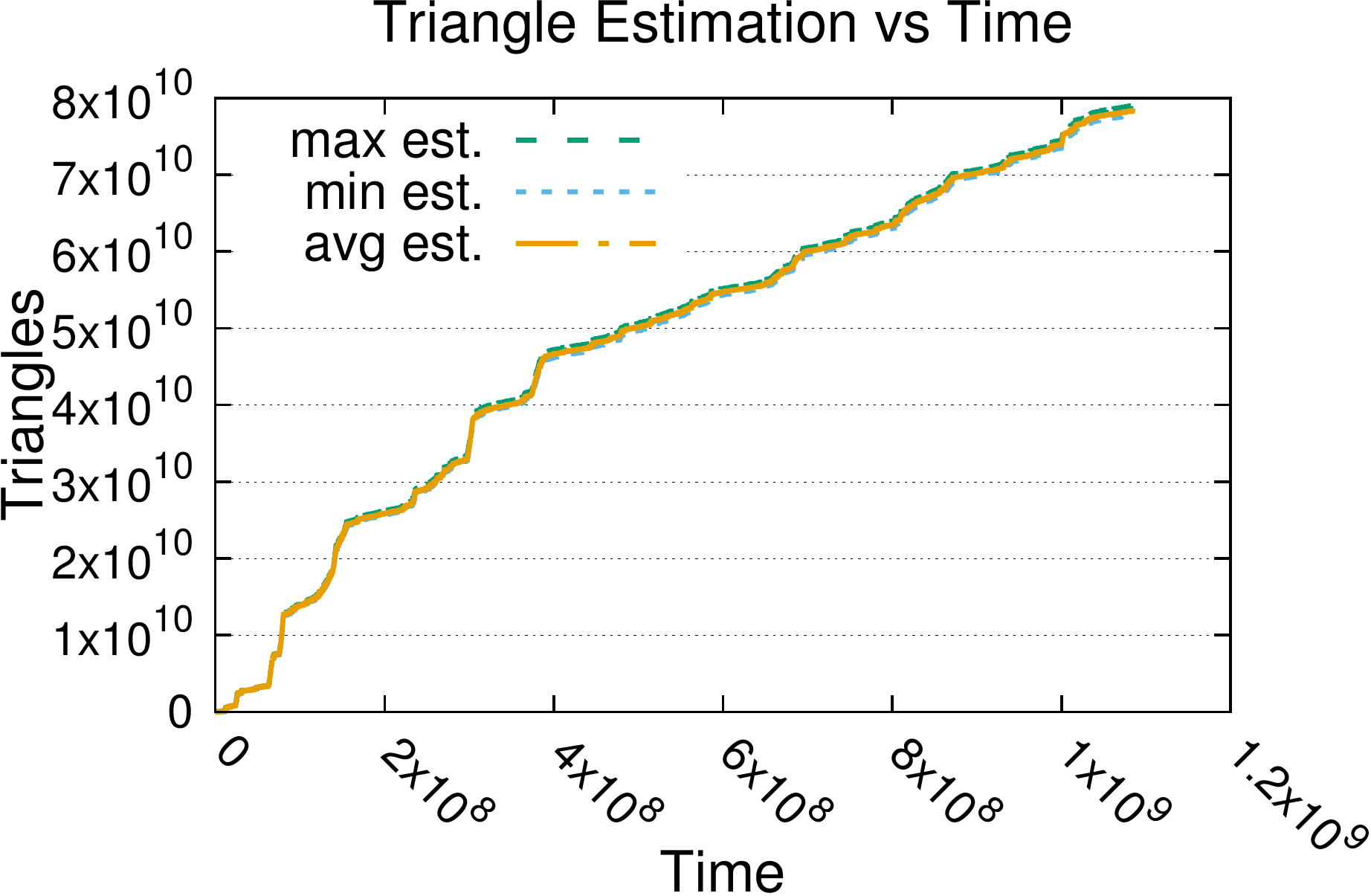}}
\subfigure[Twitter]{\includegraphics[width=0.49\textwidth,keepaspectratio]{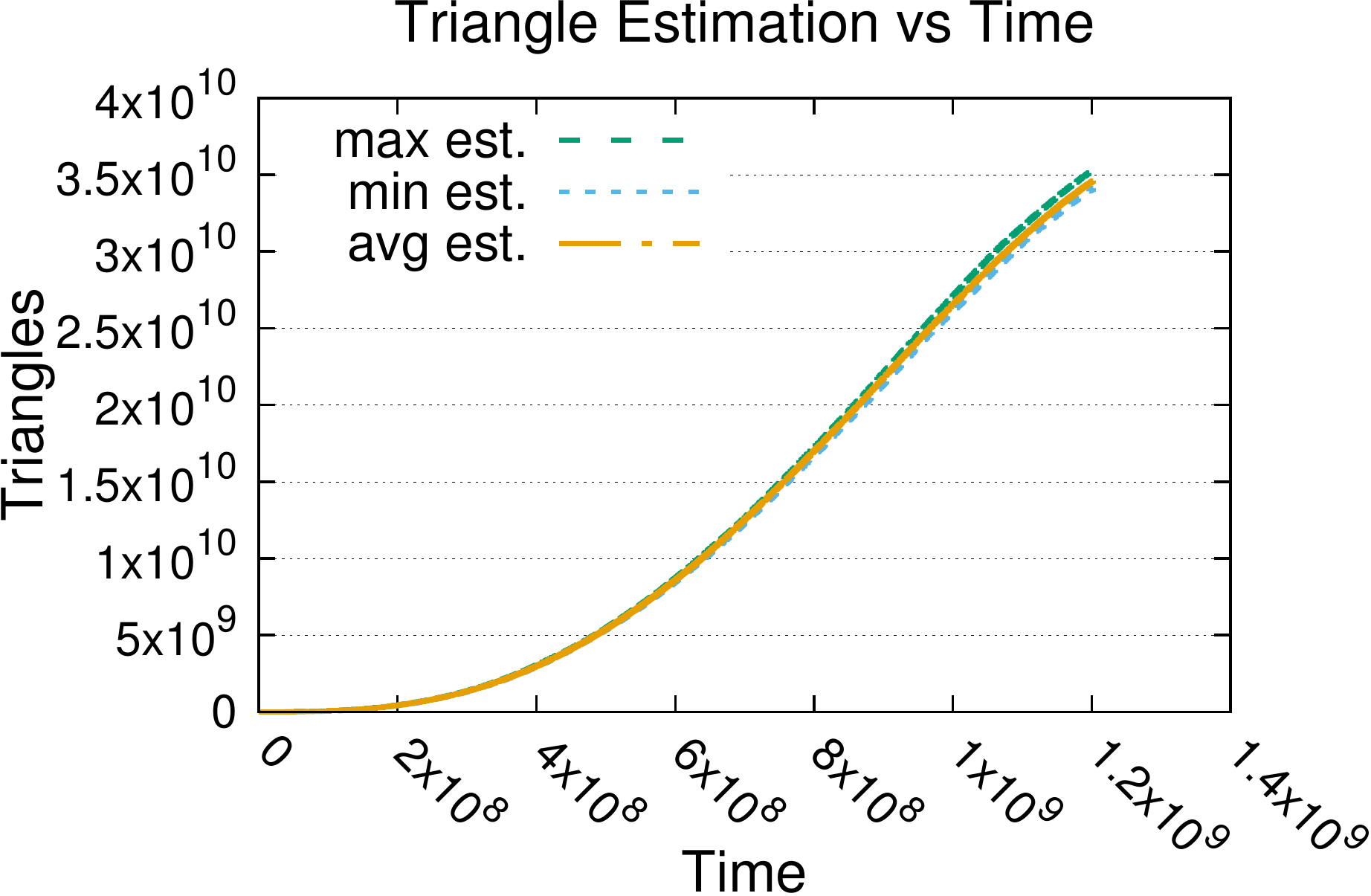}}
\caption{Estimation by \algoimproved of the global number of triangles over
  time (intended as number of elements seen on the stream). The max, min, and
  avg are taken over 10 runs. The curves are \emph{indistinguishable on
  purpose}, to highlight the fact that \algoimproved estimations have very small
  error and variance. For example, the ground truth (for graphs for which it is
available) is indistinguishable even from the max/min point-wise estimations
over ten runs. For graphs for which the ground truth is not available, the small
deviations from the avg suggest that the estimations are also close to the true
value, given that our algorithms gives unbiased estimations.}
\label{fig:add-only}
\end{figure}

\paragraph{Comparison with the state of the art}
We compare quantitatively with three state-of-the-art methods:
\textsc{mascot}~\citep{LimK15}, \textsc{Jha et al.}~\citep{JhaSP15} and
\textsc{Pavan et al.}~\citep{PavanTTW13}. \textsc{mascot} is a suite of local
triangle counting methods (but provides also a global estimation). The other two
are global triangle counting approaches. None of these can handle fully-dynamic
streams, in contrast with \algofd. We first compare the three methods to \algo
for the global triangle counting estimation. \textsc{mascot} comes in two memory
efficient variants: the basic \textsc{mascot-c} variant and an improved
\textsc{mascot-i} variant.\footnote{In the original work~\citep{LimK15}, this
variant had no suffix and was simply called \textsc{mascot}. We add the
\textsc{-i} suffix to avoid confusion. Another variant \textsc{mascot-A}
can be forced to store the entire graph with probability $1$ by appropriately
selecting the edge order (which we assume to be adversarial) so we do not
consider it here.} Both variants sample edges with fixed probability $p$, so
there is no guarantee on the amount of memory used during the execution. To
ensure fairness of comparison, we devised the following experiment. First, we
run both \textsc{mascot-c} and \textsc{mascot-i} for $\ell=10$ times with a
fixed $p$ using the same random bits for the two algorithms run-by-run (i.e. the
same coin tosses used to select the edges) measuring each time the number of
edges $M'_i$ stored in the sample at the end of the stream (by construction this
the is same for the two variants run-by-run). Then, we run our algorithms using
$M=M'_i$ (for $i\in[\ell]$). We do the same to fix the size of the edge memory
for \textsc{Jha et al.}~\citep{JhaSP15} and \textsc{Pavan et
al.}~\citep{PavanTTW13}.\footnote{More precisely, we use $M'_i/2$ estimators in
\textsc{Pavan et al.} as each estimator stores two edges. For \textsc{Jha et
al.} we set the two reservoirs in the algorithm to have each size $M'_i/2$. This
way, all algorithms use $M'_i$ cells for storing (w)edges.} This way, all
algorithms use the same amount of memory for storing edges (run-by-run).

We use the \emph{MAPE} (Mean Average Percentage Error) to assess the accuracy of
the global triangle estimators over time. The MAPE measures the average
percentage of the prediction error with respect to the ground truth, and is
widely used in the prediction literature~\citep{hk06}. For $t=1,\dotsc,T$, let
$\overline{\Delta}^{(t)}$ be the estimator of the number of triangles at time
$t$, the MAPE is defined as
$ \frac{1}{T}{\sum_{t=1}^T \left|\frac{|\Delta^{(t)}| -
\overline{\Delta}^{(t)}}{|\Delta^{(t)}|}\right|}$.\footnote{The MAPE is not
defined for $t$ s.t.~$\Delta^{(t)}=0$ so we compute it only for $t$
s.t.~$|\Delta^{(t)}| > 0$. All algorithms we consider are guaranteed to output the
correct answer for $t$ s.t.~$|\Delta^{(t)}|=0$.}

In Fig.~\ref{fig:mape-all-add}, we compare the average MAPE of
\algobase and \algoimproved as well as the two \textsc{mascot} variants and the other two streaming
algorithms for the Patent (Co-Aut.) graph, fixing $p=0.01$.  \algoimproved has
the smallest error of all the algorithms compared.

We now turn our attention to the efficiency of the methods. Whenever we refer to
one operation, we mean handling one element on the stream, either one edge
addition or one edge deletion. The average update time per operation is obtained
by dividing the total time required to process the entire stream by the number
of operations (i.e., elements on the streams).

Figure~\ref{fig:time-all-add} shows the average update time per operation in
Patent (Co-Aut.) graph, fixing $p=0.01$. Both \textsc{Jha et
al.}~\citep{JhaSP15} and \textsc{Pavan et al.}~\citep{PavanTTW13}
are up to $\approx 3$ orders of magnitude slower than the \textsc{mascot} variants and \algo.
This is expected as both algorithms have an update complexity of $\Omega(M)$
(they have to go through the entire reservoir graph at each step),
while both \textsc{mascot} algorithms and \algo need only to access the
neighborhood of the nodes involved in the edge addition.\footnote{We observe
that \textsc{Pavan et al.}~\citep{PavanTTW13} would be more efficient with batch
updates. However, we want to estimate the triangles continuously at each update.
In their experiments they use batch sizes of million of updates for efficiency.}
This allows both algorithms to efficiently exploit larger memory sizes. We can
use efficiently $M$ up to $1$ million edges in our experiments, which only
requires few megabytes of RAM.\footnote{The experiments by~\citet{JhaSP15}
use $M$ in the order of $10^3$, and in those by~\citet{PavanTTW13}, large $M$
values require large batches for efficiency.}
\textsc{mascot} is one order of magnitude faster than \algo
(which runs in $\approx 28$ micros/op), because it does not have to handle edge
removal from the sample, as it offers no guarantees on the used memory. As we
will show, \algo has much higher precision and scales well on billion-edges
graphs.

\begin{table}[ht]
  \tbl{Global triangle estimation MAPE for \algo and \textsc{mascot}. The
    rightmost column shows the reduction in terms of the avg.~MAPE obtained by
    using \algo.  Rows with $Y$ in column ``Impr.'' refer to improved algorithms
    (\algoimproved and \textsc{mascot-i}) while those with $N$ to basic
    algorithms (\algobase and \textsc{mascot-c}).
  }{
    \begin{tabular}{lcp{5pt}ccccc}
      \toprule
      \multicolumn{3}{c}{}&\multicolumn{2}{c}{Max.~MAPE}&\multicolumn{3}{c}{Avg.~MAPE}\\
      \cmidrule(l{2pt}r{2pt}){4-5}\cmidrule(l{2pt}r{2pt}){6-8}
      Graph & Impr. &$p$ & \textsc{mascot} & \algo & \textsc{mascot} &
      \algo & Change\\
      \midrule
      \multirow{4}{.7cm}{Patent (Cit.)} & N &0.01& 0.9231 & 0.2583 & 0.6517 & 0.1811&-72.2\%\\
      & Y &0.01 & 0.1907&0.0363&	0.1149&	0.0213&	-81.4\%\\
      &N&	0.1& 0.0839&0.0124 &0.0605&0.0070&-88.5\%\\
      & Y &0.1& 0.0317&0.0037& 0.0245&0.0022&	-91.1\%\\
      \midrule
      \multirow{4}{.7cm}{Patent (Co-A.)}	&N&0.01&	2.3017&	0.3029	&0.8055	&0.1820&-77.4\%\\
      & Y &0.01&0.1741 &	0.0261&	0.1063&	0.0177&	-83.4\%\\
      &N&0.1&	0.0648&	0.0175&	0.0390&	0.0079&	-79.8\%\\
      & Y &0.1&0.0225	& 0.0034&0.0174	& 0.0022	&	-87.2\%\\
      \midrule
      \multirow{4}{.7cm}{LastFm} &N&	0.01&	0.1525&	0.0185& 0.0627&	 0.0118&-81.2\%\\
      & Y &0.01&	0.0273&	0.0046 &0.0141	&0.0034&	-76.2\%\\
      &N&	0.1	&	0.0075&	0.0028&0.0047& 0.0015&		-68.1\%\\
      & Y &0.1&0.0048		&0.0013 & 0.0031	&0.0009&	-72.1\%\\
      \bottomrule
    \end{tabular}
  } % end \tbl
  \label{table:fix-p-vs-res-add}
\end{table}

Given the slow execution of the other algorithms on the larger datasets we
compare in details \algo only with \textsc{mascot}.\footnote{We attempted to run
the other two algorithms but they did not complete after $12$ hours for the
larger datasets in Table~\ref{table:fix-p-vs-res-add} with the prescribed $p$
parameter setting.} Table~\ref{table:fix-p-vs-res-add} shows the average MAPE of
the two approaches. The results confirm the pattern observed in
Figure~\ref{fig:mape-all-add}: \algobase and \algoimproved both have an average
error significantly smaller than that of the basic \textsc{mascot-c} and
improved \textsc{mascot} variant respectively. We achieve up to a 91\% (i.e.,
$9$-fold) reduction in the MAPE while using the same amount of memory. This
experiment confirms the theory: reservoir sampling has overall lower or equal
variance in all steps for the same expected total number of sampled edges.

\begin{figure}[ht]
\subfigure[MAPE]{\label{fig:mape-all-add}\includegraphics[width=0.49\textwidth]{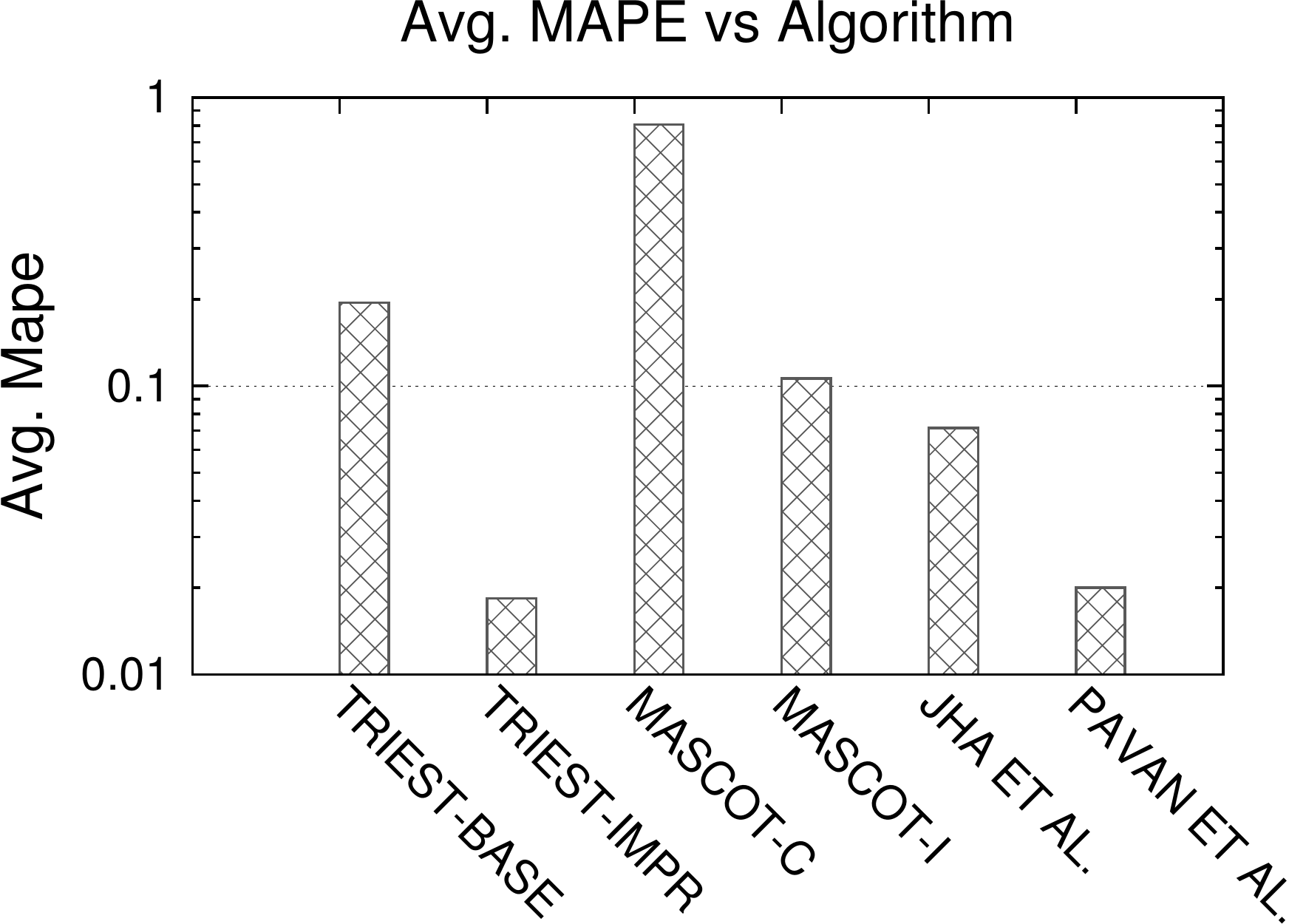}}
\subfigure[Update
Time]{\label{fig:time-all-add}\includegraphics[width=0.49\textwidth]{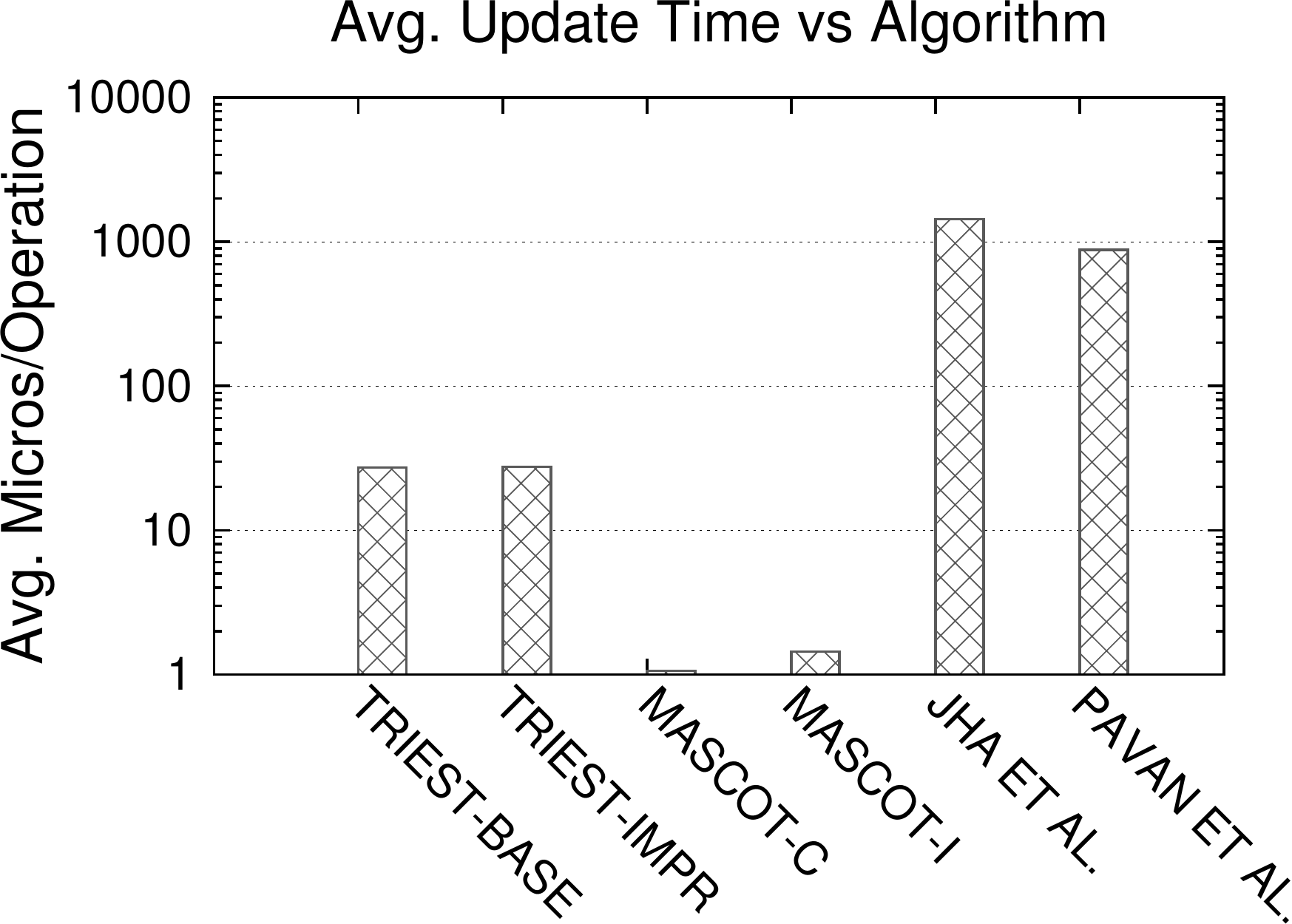}}
\caption{Average MAPE and average update time of the various methods on
the Patent (Co-Aut.) graph with $p=0.01$ (for \textsc{mascot}, see the main text
for how we computed the space used by the other algorithms) -- insertion only.
\algoimproved has the lowest error. Both \textsc{Pavan et al.} and \textsc{Jha
et al.} have very high update times compared to our method and the two
\textsc{mascot} variants.}
\label{fig:mape-time-all-add}
\end{figure}

To further validate this observation we run \algoimproved and the improved
\textsc{mascot-i} variant using the same (expected memory) $M=10000$.
Figure~\ref{fig:variance-improv-vs-mascot-sh} shows the max-min estimation over
$10$ runs and the standard deviation of the estimation over those runs.
\algoimproved shows significantly lower standard deviation (hence variance) over
the evolution of the stream, and the max and min lines are also closer to the
ground truth.  This confirms our theoretical observations in the previous
sections. Even with very low $M$ (about $2/10000$ of the size of the graph)
\algo gives high-quality estimations.

\begin{figure}[ht]
  \subfigure[Ground truth, max, and min]{\includegraphics[width=0.49\textwidth]{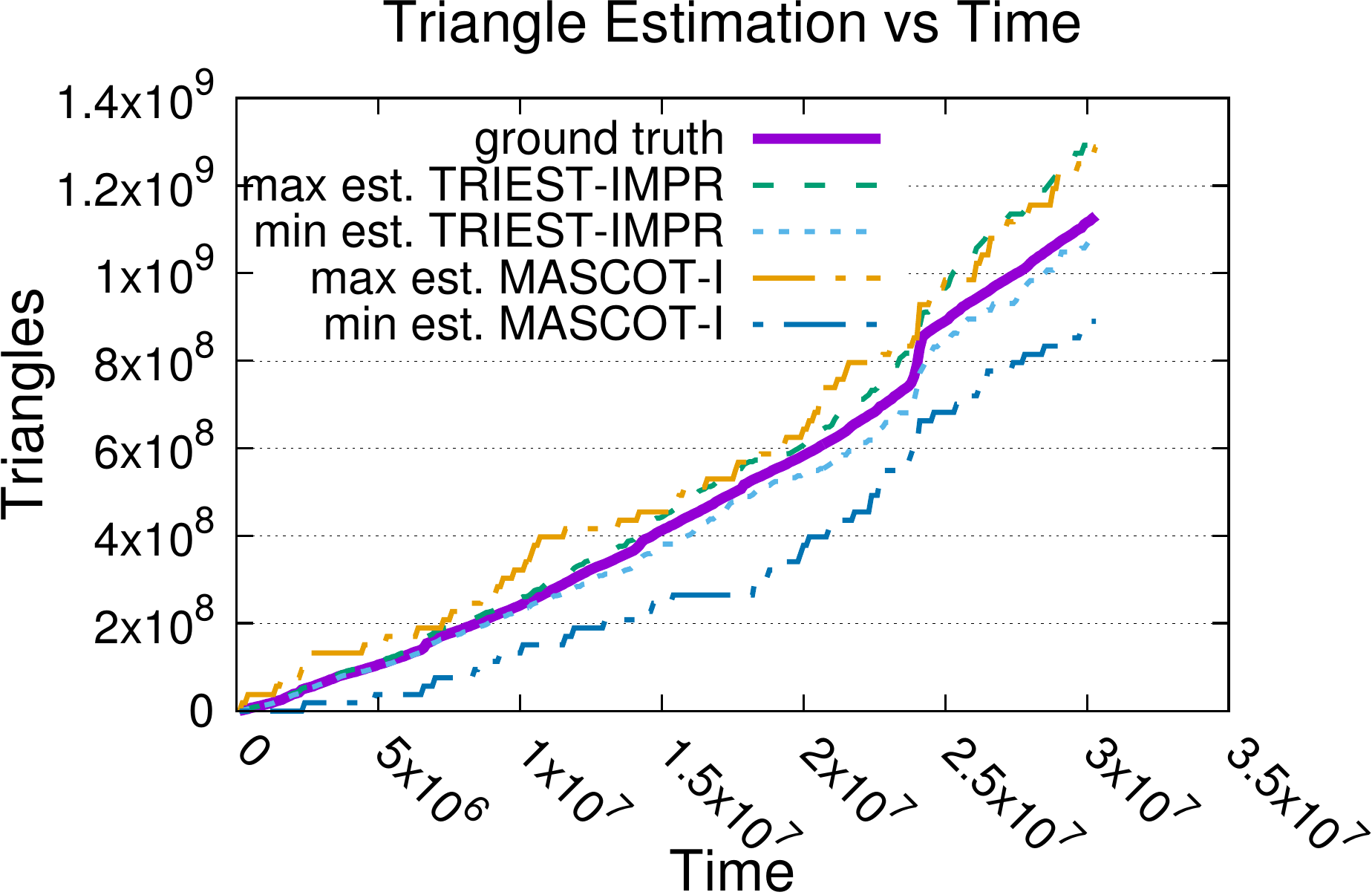}}
  \subfigure[Standard deviation]{\includegraphics[width=0.49\textwidth]{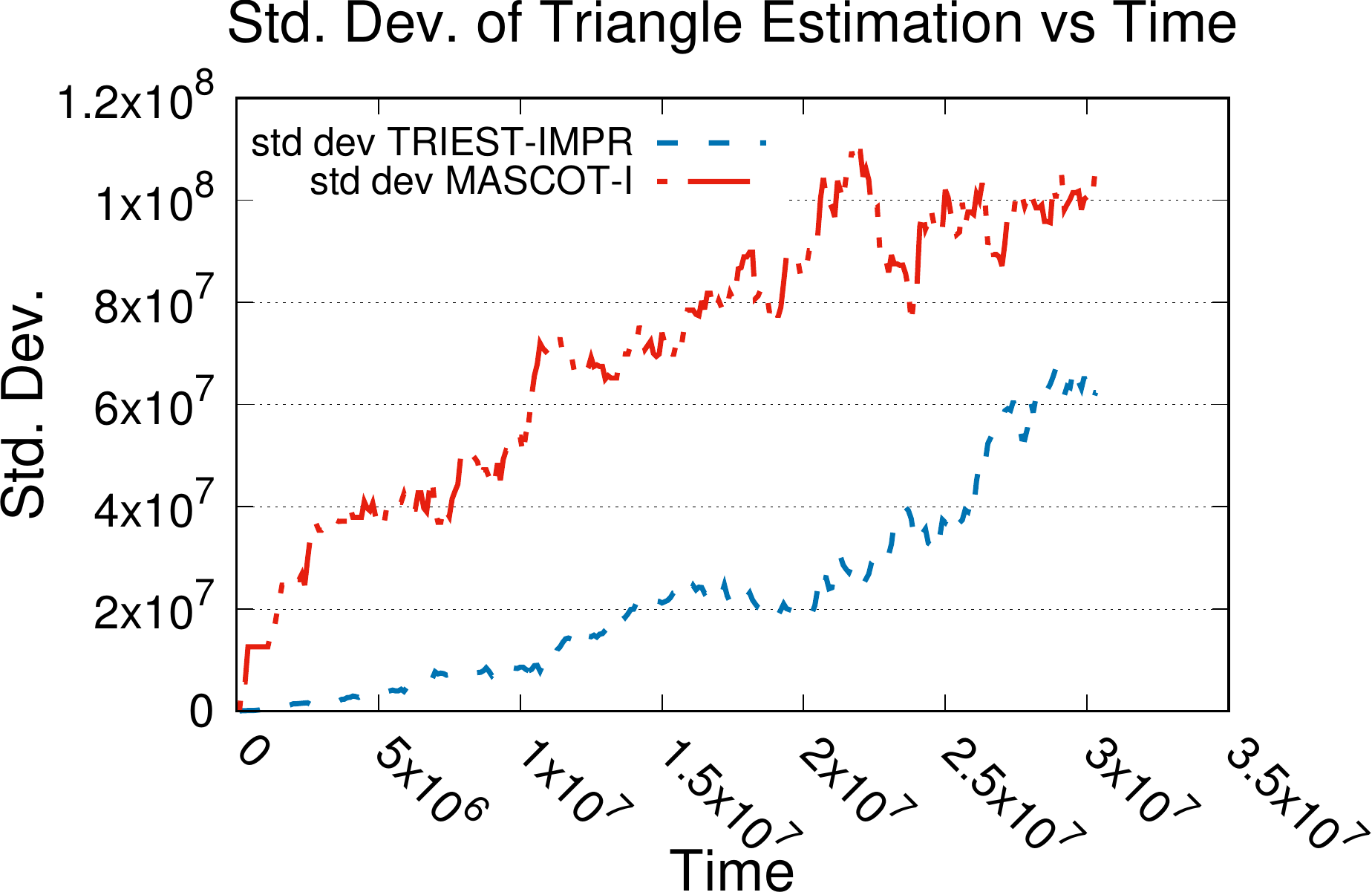}}
  \caption{Accuracy and stability of the estimation of \algoimproved with
    $M=10000$ and of \textsc{mascot-i} with same expected memory, on LastFM,
    over 10 runs. \algoimproved has a smaller standard deviation and moreover
    the max/min estimation lines are closer to the ground truth. Average
    estimations not shown as they are qualitatively similar.}
  \label{fig:variance-improv-vs-mascot-sh}
\end{figure}

\paragraph{Local triangle counting}
 We compare the precision in local triangle count estimation of \algo
 with that of \textsc{mascot}~\citep{LimK15} using the same approach of the
 previous experiment. We can not compare with \textsc{Jha et al.} and
 \textsc{Pavan et al.} algorithms as they provide only global estimation. As
 in~\citep{LimK15}, we measure the Pearson coefficient and the average
 $\varepsilon$ error (see~\citep{LimK15} for definitions). In
 Table~\ref{table:fix-p-vs-res-local-add} we report the Pearson coefficient and
 average $\varepsilon$ error over all timestamps for the smaller
 graphs.\footnote{For efficiency, in this test we evaluate the local number of
 triangles of all nodes every $1000$ edge updates.} \algo (significantly)
 improves (i.e., has higher correlation and lower error) over the
 state-of-the-art \textsc{mascot}, using the same amount of memory.

\begin{table}[ht]
  \tbl{Comparison of the quality of the local triangle estimations between our
    algorithms and the state-of-the-art approach in~\citep{LimK15}. Rows with
    $Y$ in column ``Impr.'' refer to improved algorithms (\algoimproved and
    \textsc{mascot-i}) while those with $N$ to basic algorithms (\algobase and
    \textsc{mascot-c}). In virtually all cases we significantly outperform
    \textsc{mascot} using the same amount of memory.
  }{
    \begin{tabular}{ccccccccc}
      \toprule
      \multicolumn{3}{c}{}& \multicolumn{3}{c}{Avg.~Pearson} &
      \multicolumn{3}{c}{Avg.~$\varepsilon$ Err.}\\
      \cmidrule(l{2pt}r{2pt}){4-6} \cmidrule(l{2pt}r{2pt}){7-9}
      Graph & Impr. &$p$ & \textsc{mascot} & \algo & Change &
      \textsc{mascot} & \algo &Change\\
      \midrule
    \multirow{6}{*}{LastFm}	&\multirow{3}{*}{Y}	&0.1	&0.99&	1.00&	+1.18\%&	0.79&	0.30&	-62.02\%\\
        &		&0.05	&0.97	&1.00&	+2.48\%	&0.99&	0.47&	-52.79\%\\
      &		&0.01&	0.85&	0.98&	+14.28\%	& 1.35&	0.89&	-34.24\%\\
      \cmidrule{2-9}
      &	\multirow{3}{*}{N}	&0.1&	0.97&	0.99&	+2.04\%&	1.08&	0.70&	-35.65\%\\
      &	&	0.05&	0.92	&0.98&	+6.61\%	&1.32	&0.97&	-26.53\%\\
      &		&0.01&	0.32&	0.70&	+117.74\% &1.48&	1.34&		-9.16\%\\
      \midrule
      \multirow{6}{*}{Patent (Cit.)}	&\multirow{3}{*}{Y} 	&0.1&	0.41&	0.82& +99.09\%	&0.62&	0.37&-39.15\%\\
      &&0.05	&0.24	&0.61&	+156.30\%&0.65&	0.51&		-20.78\%\\
      &&	0.01&	0.05&	0.18&	+233.05\%	&0.65&	0.64&	-1.68\%\\
      \cmidrule{2-9}
      &\multirow{3}{*}{N} &	0.1	&0.16	&0.48&	+191.85\%	&0.66&	0.60&	-8.22\%\\
      &&	0.05&	0.06&	0.24&	+300.46\%&	0.67&	0.65&	-3.21\%\\
      &&	0.01&	0.00&	0.003&	+922.02\%&	0.86&	0.68&	-21.02\%\\
      \midrule
      \multirow{6}{*}{Patent (Co-aut.)}	&\multirow{3}{*}{Y} &	0.1	&0.55	&0.87& +58.40\%	&0.86&	0.45&	-47.91\%\\
      &&	0.05&	0.34&	0.71&	+108.80\%&	0.91&	0.63&	-31.12\%\\
      &&0.01	&0.08	&0.26&	+222.84\% &0.96	&0.88&		-8.31\%\\
      \cmidrule{2-9}
      &\multirow{3}{*}{N} &	0.1&	0.25&	0.52&	+112.40\%&	0.92&	0.83&	-10.18\%\\
      &&	0.05&	0.09&	0.28&	+204.98\%	&0.92&	0.92&	0.10\%\\
      &&	0.01&	0.01&	0.03&	+191.46\%&	0.70&	0.84&	20.06\%\\
    \bottomrule
    \end{tabular}
  } % end \tbl
  \label{table:fix-p-vs-res-local-add}
\end{table}

\paragraph{Memory vs accuracy trade-offs}

We study the trade-off between the sample size $M$ vs the running time and
accuracy of the estimators. Figure~\ref{fig:mape-vs-m-add} shows the tradeoffs
between the accuracy of the estimation (as MAPE) and the size $M$ for the
smaller graphs for which the ground truth number of triangles can be computed
exactly using the na\"ive algorithm. Even with small $M$, \algoimproved achieves
very low MAPE value. As expected, larger $M$ corresponds to higher accuracy and
for the same $M$ \algoimproved outperforms \algobase.

Figure~\ref{fig:time-vs-m-add} shows the average time per update in microseconds
($\mu$s) for \algoimproved as function of $M$. Some considerations on the
running time are in order. First, a larger edge sample (larger $M$) generally
requires longer average update times per operation. This is expected as a larger
sample corresponds to a larger sample graph on which to count triangles. Second,
on average a few hundreds microseconds are sufficient for handling any update
even in very large graphs with billions of edges. Our algorithms can handle
hundreds of thousands of edge updates (stream elements) per second, with very
small error (Fig.~\ref{fig:mape-vs-m-add}), and therefore \algo can be used
efficiently and effectively in high-velocity contexts. The larger average time
per update for Patent (Co-Auth.) can be explained by the fact that the graph is
relatively dense and has a small size (compared to the larger Yahoo!\ and
Twitter graphs). More precisely, the average time per update (for a fixed $M$)
depends on two main factors: the average degree and the length of the stream.
The denser the graph is, the higher the update time as more operations are
needed to update the triangle count every time the sample is modified. On the
other hand, the longer the stream, for a fixed $M$, the lower is the frequency of
updates to the reservoir (it can be show that the expected number of updates
to the reservoir is $O(M(1+\log(\frac{t}{M})))$ which grows sub-linearly in the
size of the stream $t$). This explains why the average update time for the large
and dense Yahoo!\ and Twitter graphs is so small, allowing the algorithm to
scale to billions of updates.

\begin{figure}[ht]
  \subfigure[$M$ vs
  MAPE]{\label{fig:mape-vs-m-add}\includegraphics[width=0.49\textwidth]{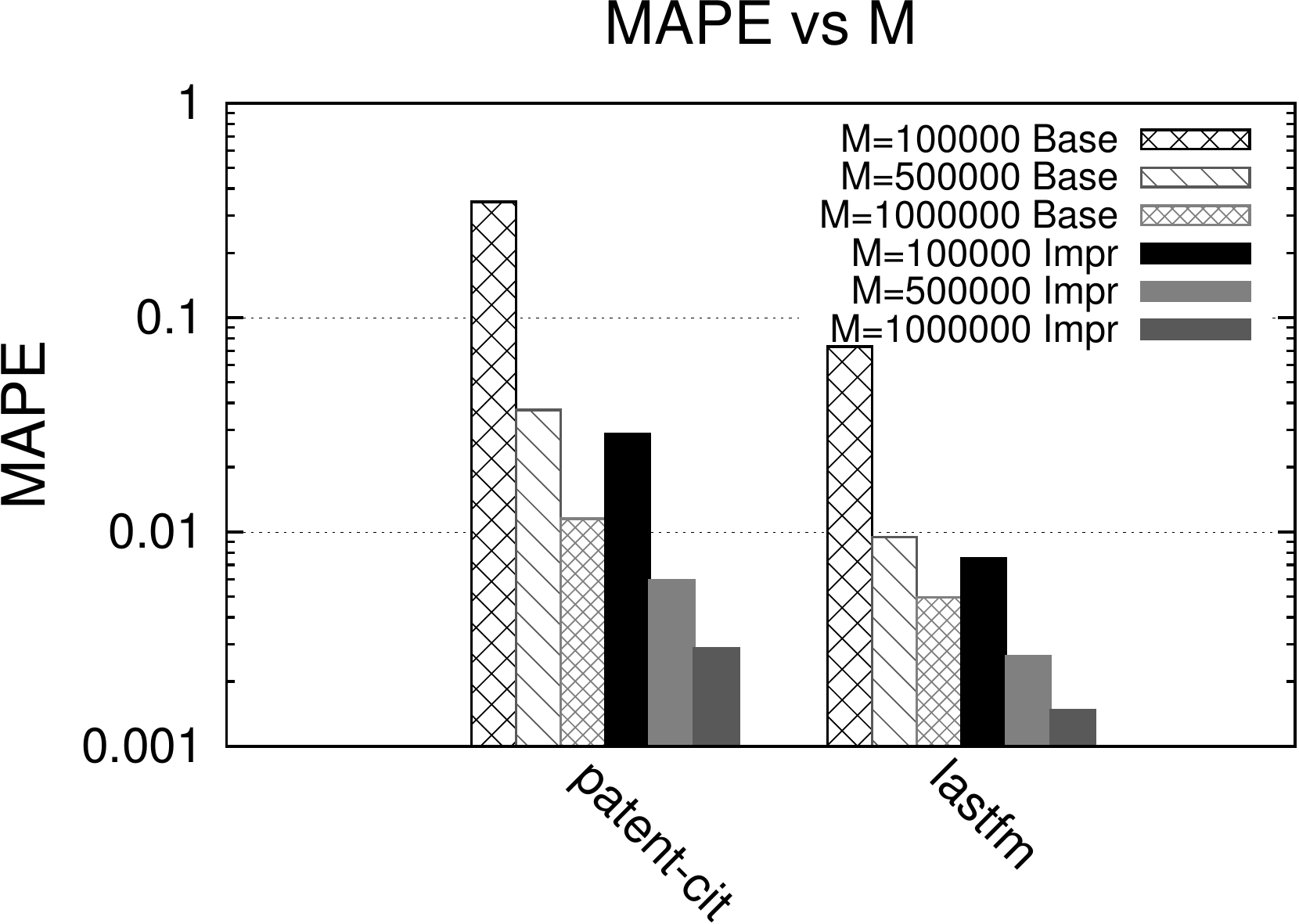}}
  \subfigure[$M$ vs Update
  Time (\algoimproved)]{\label{fig:time-vs-m-add}\includegraphics[width=0.49\textwidth]{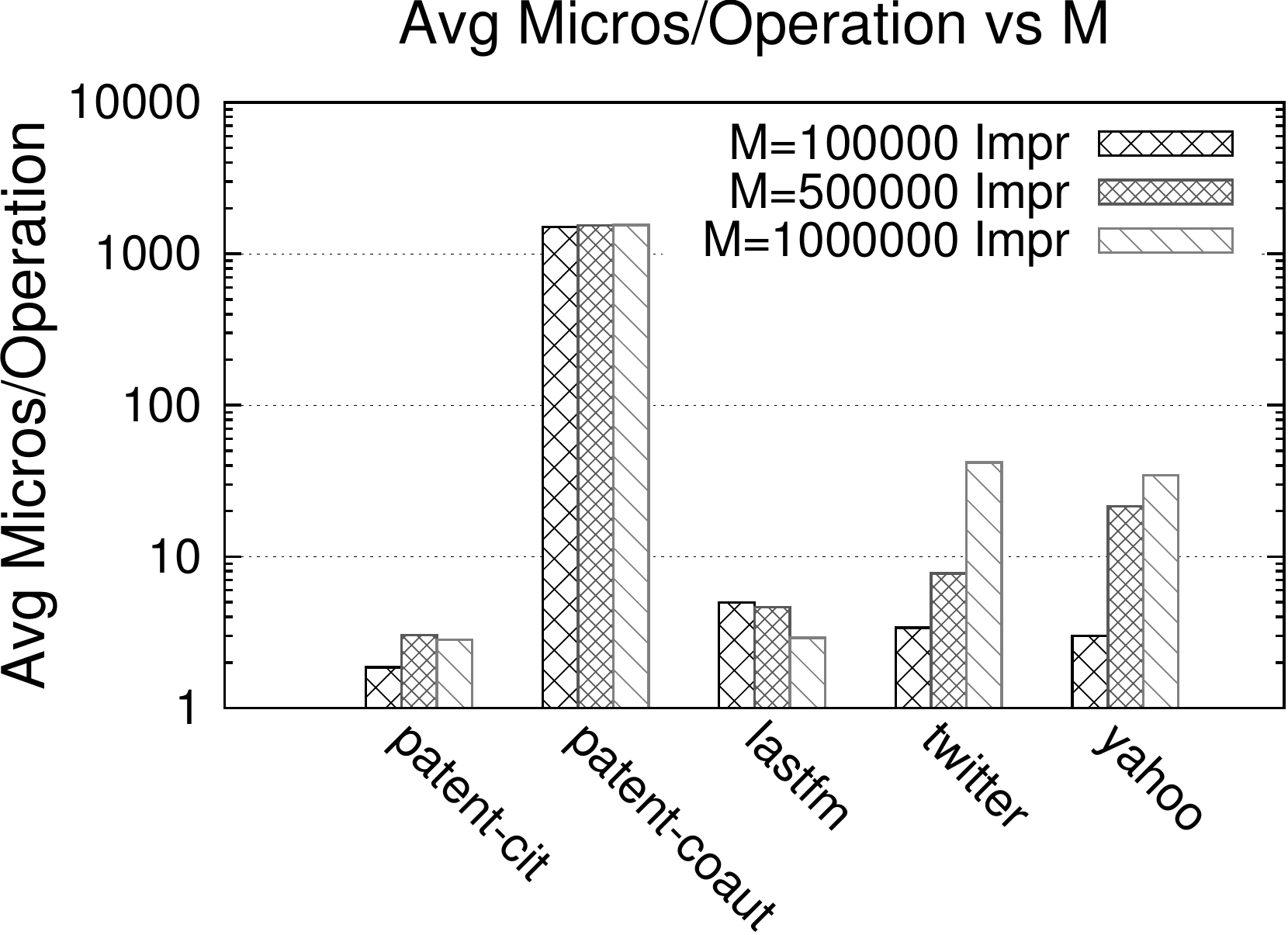}}
  \caption{Trade-offs between $M$ and MAPE and avg.~update time ($\mu$s) -- edge
  insertion only. Larger $M$ implies lower errors but generally higher update times.}
  \label{fig:mape-time-vs-m-add}
\end{figure}

\paragraph{Alternative edge orders}
In all previous experiments the edges are added in their natural order (i.e., in
order of their appearance).\footnote{Excluding Twitter for which we used the
random order, given the lack of timestamps.} While the natural order is the most
important use case, we have assessed the impact of other ordering on the
accuracy of the algorithms. We experiment with both the uniform-at-random
(u.a.r.)~order of the edges and the random BFS order: until all the graph is
explored a BFS is started from a u.a.r.~unvisited node and edges are added in
order of their visit (neighbors are explored in u.a.r.~order). The results for
the random BFS order and u.a.r.~order (Fig.~\ref{fig:mape-all-orders-add}) confirm
that \algo has the lowest error and is very scalable in every tested ordering.

\begin{figure}[ht]
  \centering
  \subfigure[BFS order]{\includegraphics[width=0.49\textwidth]{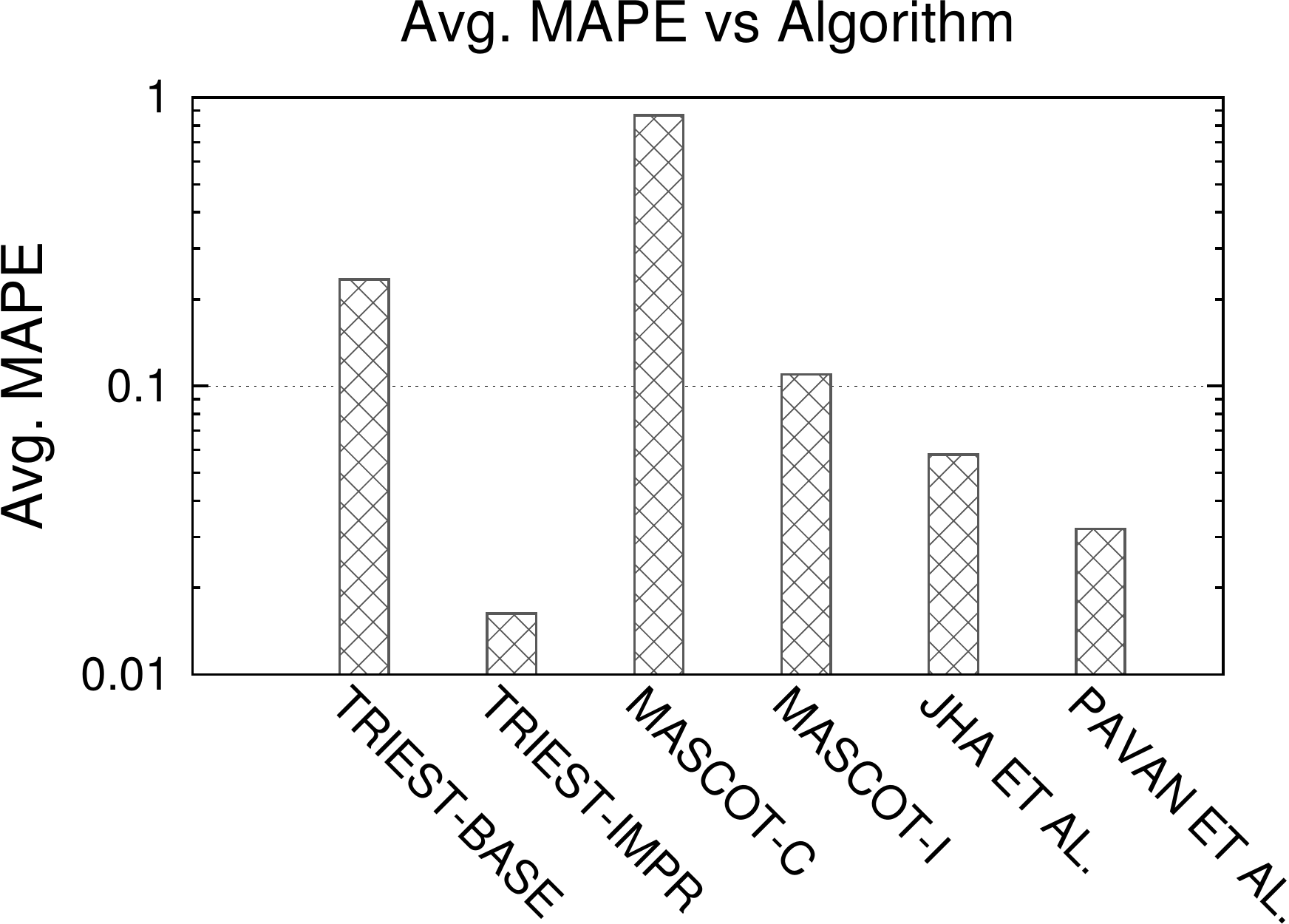}}
  \subfigure[U.a.r. order]{\includegraphics[width=0.49\textwidth]{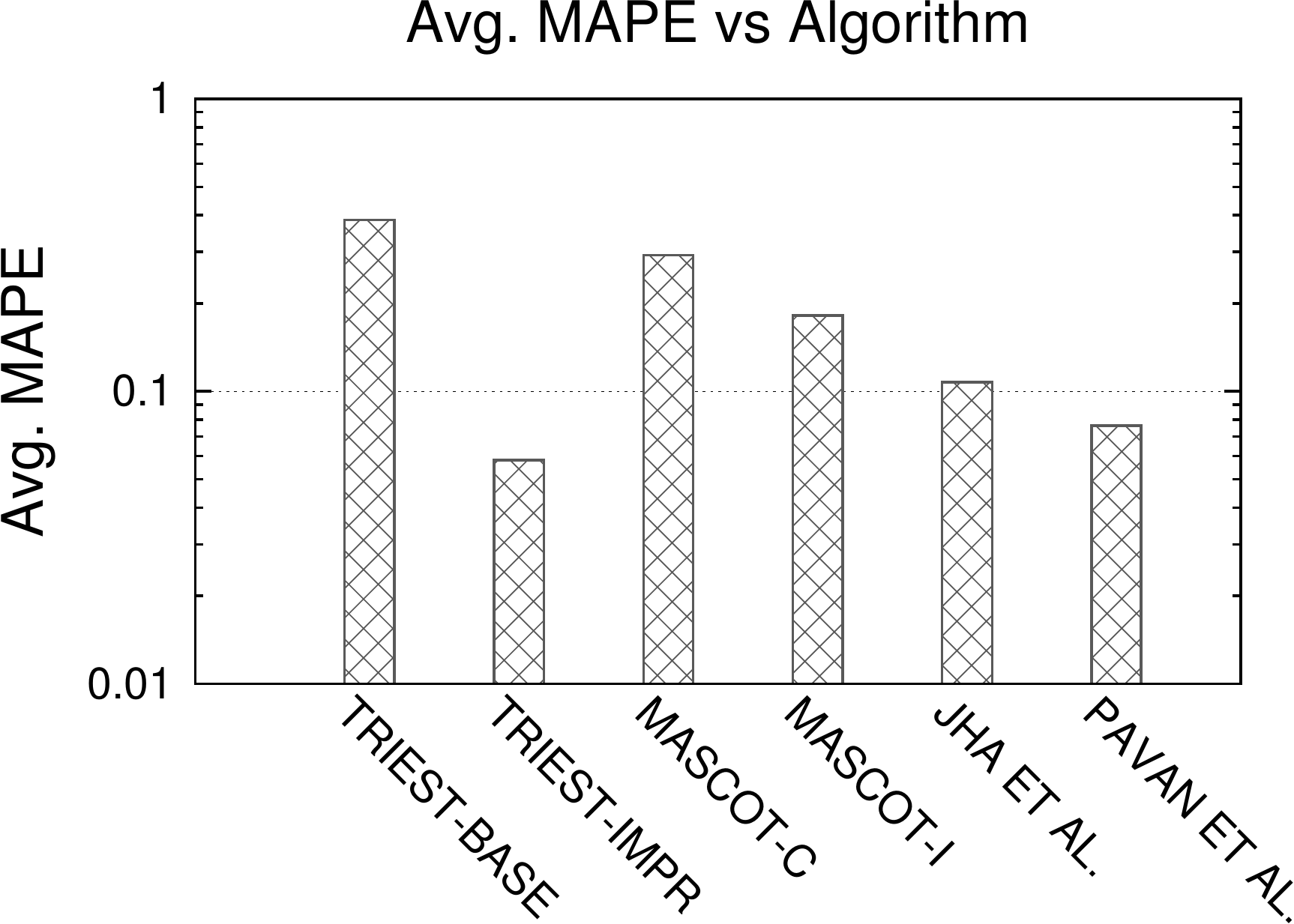}}
  \caption{Average MAPE on Patent (Co-Aut.), with $p=0.01$ (for \textsc{mascot},
    see the main text for how we computed the space used by the other
    algorithms) -- insertion only in Random BFS order and in u.a.r. order. \algoimproved has the
  lowest error.}
\label{fig:mape-all-orders-add}
\end{figure}

\subsection{Fully-dynamic case}
We evaluate \algofd on fully-dynamic streams. We
cannot compare \algofd with the algorithms previously
used~\citep{JhaSP15, PavanTTW13,LimK15} as they only handle insertion-only streams.

In the first set of experiments we model deletions using the widely used \textit{sliding window
model}, where a sliding window of the most recent edges defines the current
graph. The sliding window model is of practical interest as it allows to observe recent trends in the stream. For Patent (Co-Aut.) \& (Cit.) we keep in the sliding window the edges
generated in the last $5$ years, while for LastFm we keep the edges generated in
the last $30$ days. For Yahoo!\ Answers we keep the last $100$ millions edges in
the window\footnote{The sliding window model is not interesting for the Twitter
dataset as edges have random timestamps. We omit the results for Twitter but
\algofd is fast and
has low variance.}.

Figure~\ref{fig:add-rem} shows the evolution of the global number of triangles
in the sliding window model using \algofd using $M=200{,}000$ ($M=1{,}000{,}000$
for Yahoo!\ Answers). The sliding window scenario is significantly more
challenging than the addition-only case (very often the entire sample of edges
is flushed away) but \algofd maintains good variance and scalability even when,
as for LastFm and Yahoo!\ Answers, the global number of triangles
varies quickly.

\begin{figure}[ht]
\subfigure[Patent (Co-Aut.)]{\includegraphics[width=0.49\textwidth]{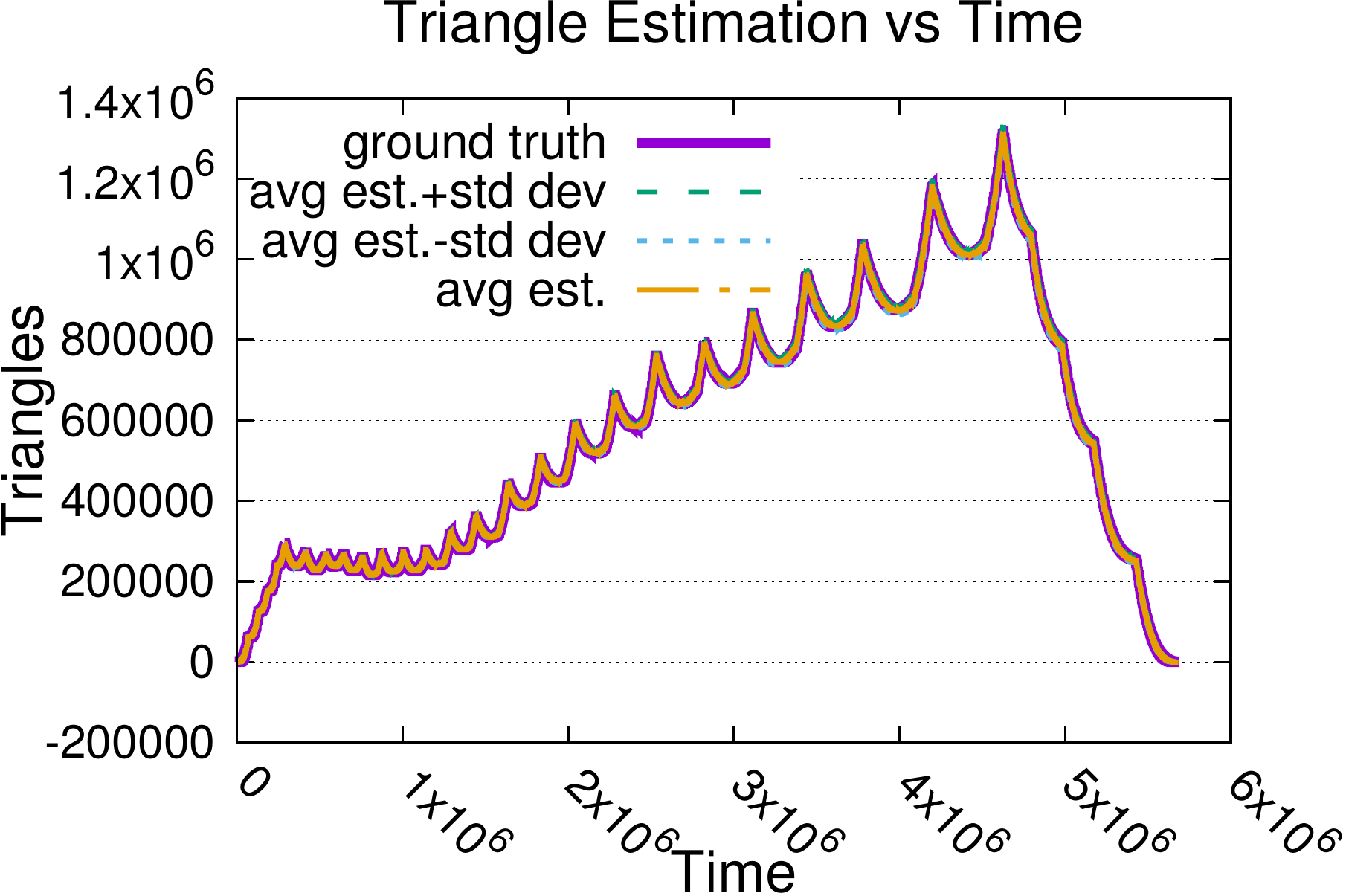}}
\subfigure[Patent (Cit.)]{\includegraphics[width=0.49\textwidth]{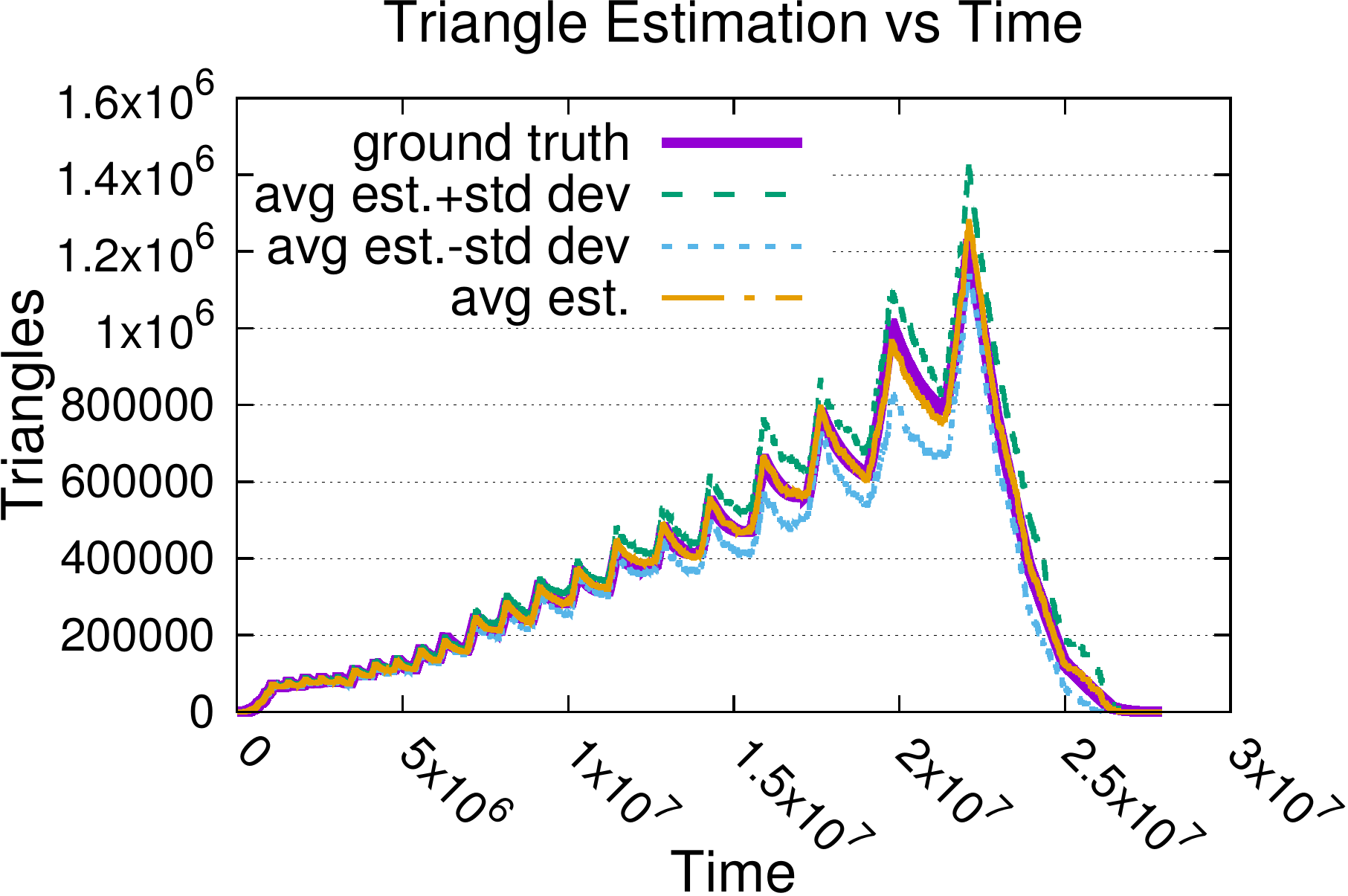}}
\subfigure[LastFm]{\includegraphics[width=0.49\textwidth]{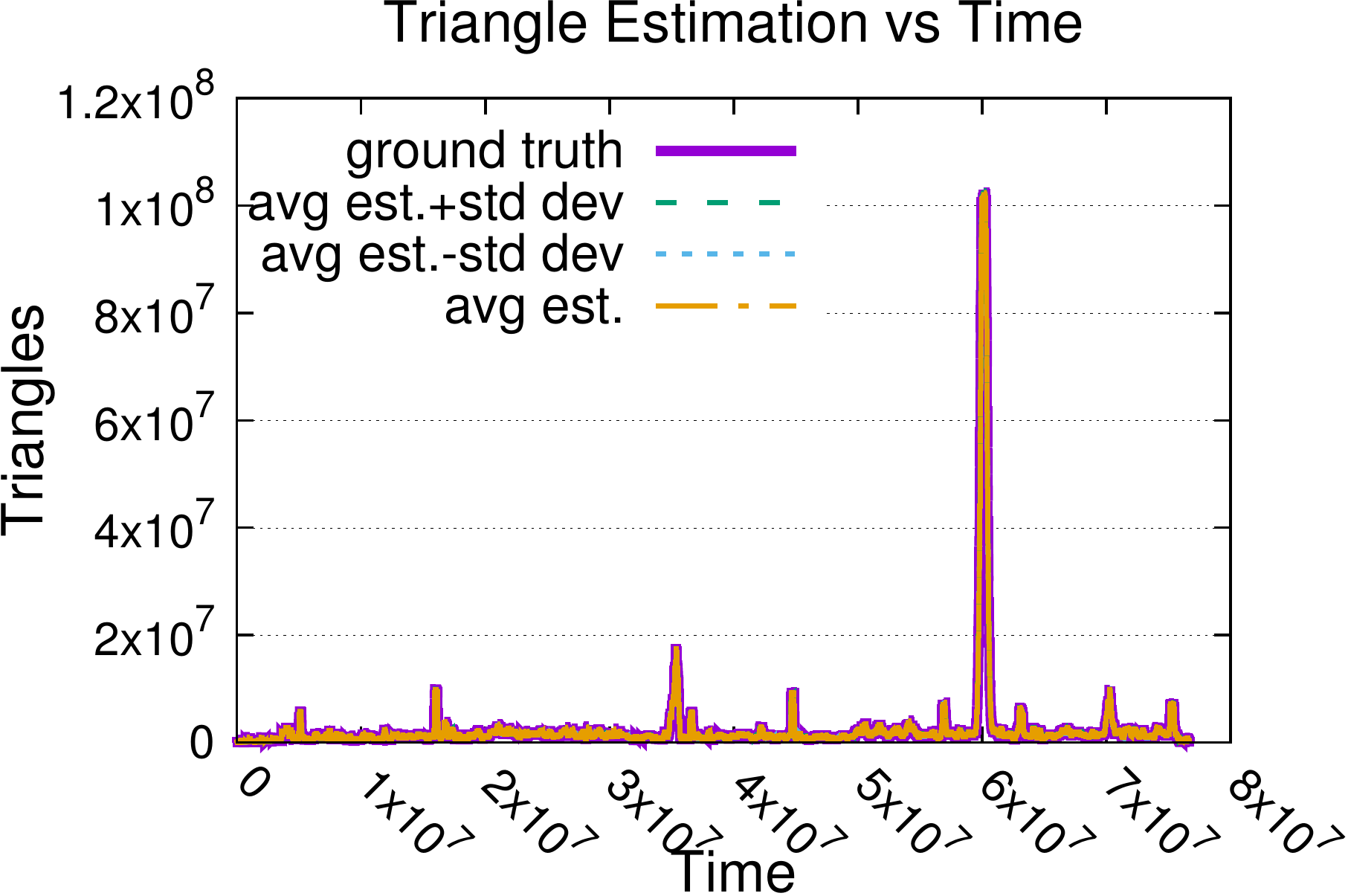}}
\subfigure[Yahoo!\ Answers]{\includegraphics[width=0.49\textwidth]{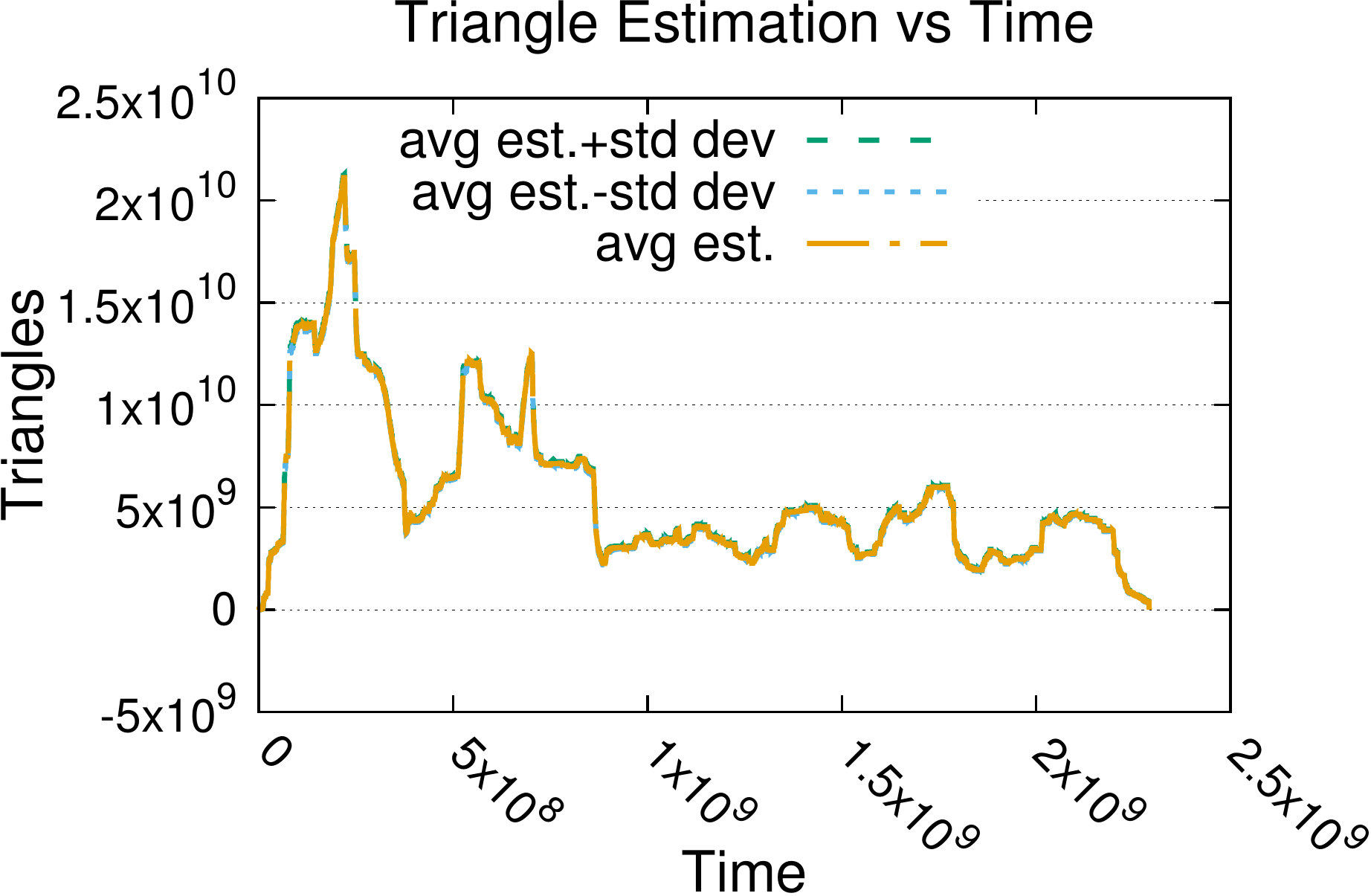}}
%\subfigure[Twitter]{\includegraphics[scale=0.2]{triangle-twitter-add-RH-1000000.pdf}}
\caption{Evolution of the global number of triangles in the fully-dynamic case
  (sliding window model for edge deletion). The curves are
  \emph{indistinguishable on purpose}, to remark the fact that \algofd
  estimations are extremely accurate and consistent. We comment on the observed
  patterns in the text.}
\label{fig:add-rem}
\end{figure}

Continuous monitoring of triangle counts with \algofd allows to detect patterns
that would otherwise be difficult to notice. For LastFm
(Fig.~\ref{fig:add-rem}(c)) we observe a sudden
spike of several order of magnitudes. The dataset is anonymized so we cannot
establish which songs are responsible for this spike. In Yahoo!\ Answers
(Fig.~\ref{fig:add-rem}(d)) a
popular topic can create a sudden (and shortly lived) increase in the number of
triangles, while the evolution of the Patent co-authorship and co-citation
networks is slower, as the creation of an edge requires filing a patent
(Fig.~\ref{fig:add-rem}(a) and (b)). The almost constant increase over
time\footnote{The decline at the end is due to the removal of the last edges
from the sliding window after there are no more edge additions.} of the number
of triangles in Patent graphs is consistent with previous observations of {\it
densification} in collaboration networks as in the case of nodes'
degrees~\citep{leskovec2007graph} and the observations on the density of the
densest subgraph~\citep{epasto2015efficient}.

\begin{figure}[ht]
  \centering
  \includegraphics[width=0.6\textwidth]{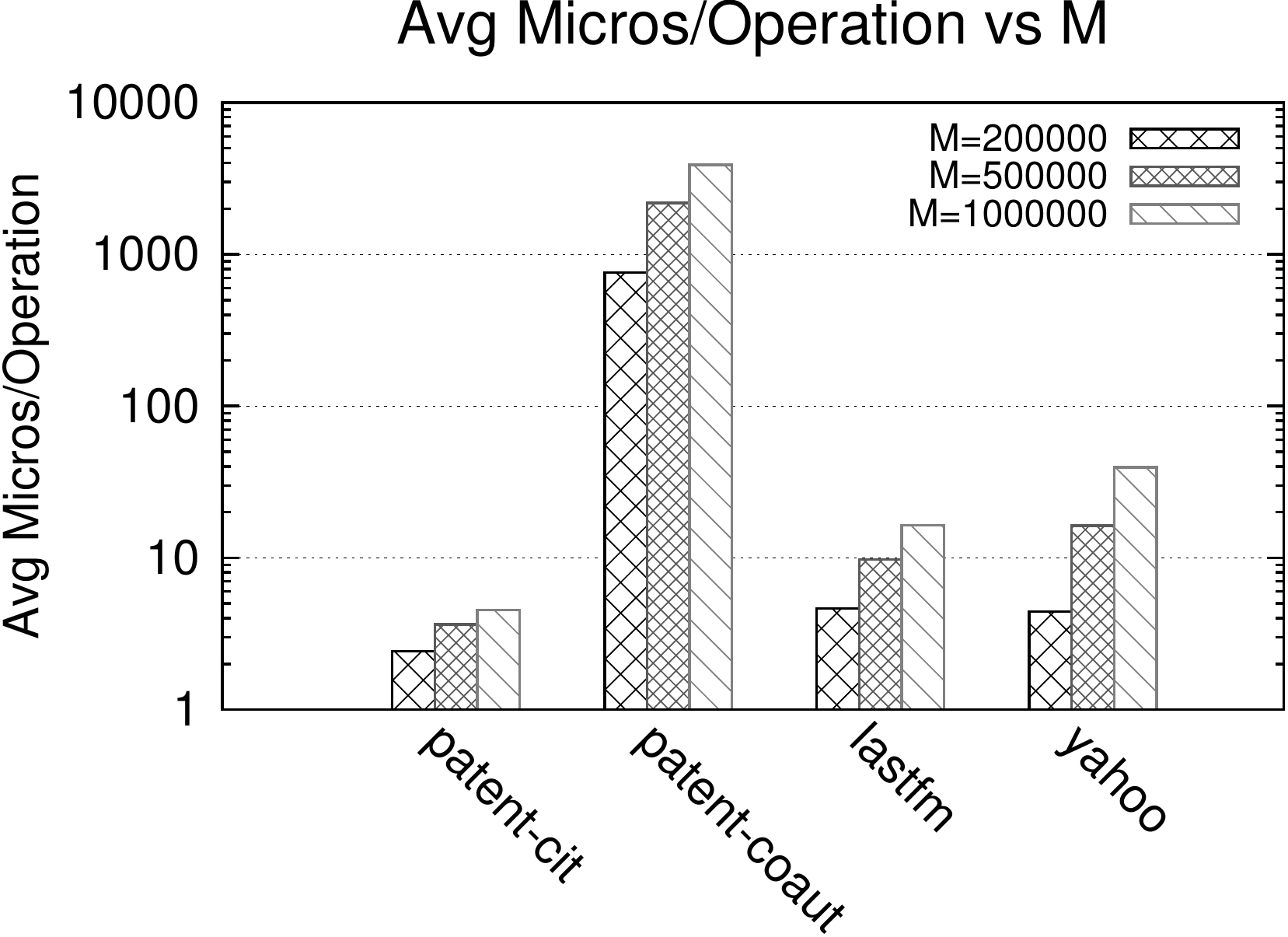}
  \caption{Trade-offs between the avg.~update time ($\mu$s) and $M$ for \algofd.}
  \label{fig:time-vs-m-add-rem}
\end{figure}

Table~\ref{table:res-global-local-add-rem} shows the results for both the local
and global triangle counting estimation provided by \algofd. In this case we
can not compare with previous works, as they only handle insertions. It is
evident that precision improves with $M$ values, and even relatively small $M$
values result in a low MAPE (global
estimation), high Pearson correlation and low $\varepsilon$ error (local
estimation). Figure~\ref{fig:time-vs-m-add-rem} shows the tradeoffs between
memory (i.e., accuracy) and time. In all cases our algorithm is very fast and it
presents update times in the order of hundreds of microseconds for datasets with
billions of updates (Yahoo!\ Answers).

\begin{table}[ht]
  \tbl{Estimation errors for \algofd.
  }{
    \begin{tabular}{crccc}
      \toprule
      \multicolumn{2}{c}{}&\multicolumn{1}{c}{Avg. Global}&\multicolumn{2}{c}{Avg. Local}\\
      \cmidrule(l{2pt}r{2pt}){3-3} \cmidrule(l{2pt}r{2pt}){4-5}
      Graph & \multicolumn{1}{c}{$M$} & MAPE & Pearson & $\varepsilon$ Err. \\
      \midrule
      \multirow{2}{*}{LastFM}&200000&0.005&0.980&0.020\\
      &1000000&0.002&0.999&0.001\\
      \midrule
      \multirow{2}{*}{Pat. (Co-Aut.)}&200000&0.010&0.660&0.300\\
      &1000000&0.001&0.990&0.006\\
      \midrule
      \multirow{2}{*}{Pat. (Cit.)}&200000&0.170&0.090&0.160\\
      &1000000&0.040&0.600&0.130\\
      \bottomrule
    \end{tabular}
  } % end \tbl
  \label{table:res-global-local-add-rem}
\end{table}

\paragraph{Alternative models for deletion}
We evaluate \algofd using other models for deletions than the sliding
window model. To assess the resilience of the algorithm to massive deletions we
run the following experiments. We added edges in their natural order but each edge
addition is followed with probability $q$ by a mass deletion event where each
edge currently in the the graph is deleted with probability $d$ independently.
We run experiments with $q = 3{,}000{,}000^{-1}$ (i.e., a mass deletion expected
every $3$ millions edges) and $d=0.80$ (in expectation $80\%$ of edges are
deleted).
The results are shown in Table~\ref{table:res-global-local-add-rem-mass-deletion}.

\begin{table}[ht]
  \tbl{Estimation errors for \algofd -- mass deletion experiment, $q =
  3{,}000{,}000^{-1}$ and $d=0.80$.
  }{
    \begin{tabular}{crccc}
      \toprule
      \multicolumn{2}{c}{}&\multicolumn{1}{c}{Avg. Global}&\multicolumn{2}{c}{Avg. Local}\\
      \cmidrule(l{2pt}r{2pt}){3-3} \cmidrule(l{2pt}r{2pt}){4-5}
      Graph & \multicolumn{1}{c}{$M$} & MAPE & Pearson & $\varepsilon$ Err. \\
      \midrule
      \multirow{2}{*}{LastFM}&200000&0.040&0.620&0.53\\
      &1000000&0.006&0.950&0.33\\
      \midrule
      \multirow{2}{*}{Pat. (Co-Aut.)}&200000&0.060&0.278&0.50\\
      &1000000&0.006&0.790&0.21\\
      \midrule
      \multirow{2}{*}{Pat. (Cit.)}&200000&0.280&0.068&0.06\\
      &1000000&0.026&0.510&0.04\\
      \bottomrule
    \end{tabular}
  } % end \tbl
  \label{table:res-global-local-add-rem-mass-deletion}
\end{table}
We observe that \algofd maintains a good accuracy and scalability even in
face of a massive (and unlikely) deletions of the vast majority of the edges:
e.g., for LastFM with $M=200000$ (resp. $M=1{,}000{,}000$) we observe
$0.04$ (resp. $0.006$) Avg. MAPE.

\section{Conclusions}\label{sec:concl}
We presented \algo, the first suite of algorithms that use reservoir sampling
and its variants to continuously maintain unbiased, low-variance estimates of
the local and global number of triangles in fully-dynamic graphs streams of
arbitrary edge/vertex insertions and deletions using a fixed, user-specified
amount of space. Our experimental evaluation shows that \algo outperforms
state-of-the-art approaches and achieves high accuracy on real-world datasets
with more than one billion of edges, with update times of hundreds of
microseconds.

\appendix
\section*{APPENDIX}

\section{Additional theoretical results}\label{sec:appendix}
In this section we present the theoretical results (statements and proofs) not
included in the main body.

\subsection{Theoretical results for \algobase}\label{app:algobase}

Before proving Lemma~\ref{lem:reservoirhighorder}, we need to introduce the
following lemma, which states a well known property of the reservoir sampling
scheme.
\begin{lemma}[{\citep[Sect.~2]{Vitter85}}]\label{lem:reservoir}
  For any $t>M$, let $A$ be any subset of $E^{(t)}$ of size $|A|=M$. Then, at
  the end of time step $t$,
  \[
    \Pr(\Sam=A)=\frac{1}{\binom{|E^{(t)}|}{M}}=\frac{1}{\binom{t}{M}},
  \]
  i.e., the set of edges in $\Sam$ at the end of time $t$
  is a subset of $E^{(t)}$ of size $M$ chosen \emph{uniformly} at random from
  all subsets of $E^{(t)}$ of the same size.
\end{lemma}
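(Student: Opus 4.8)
The plan is to prove this by induction on $t$, using the single-step transition rule of the reservoir sampling scheme employed by \algobase; this is essentially the classical argument of~\citet[Sect.~2]{Vitter85}.

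First I would establish the base case $t=M$: here \algobase inserts the first $M$ edges deterministically, so $E^{(M)}$ is the unique size-$M$ subset of itself and $\Pr(\Sam=E^{(M)})=1=1/\binom{M}{M}$. This anchors the induction, which then yields the claim for every $t>M$.

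For the inductive step I would assume the claim holds at the end of time $t-1$ (with $t-1\ge M$), fix an arbitrary target set $A\subseteq E^{(t)}$ with $|A|=M$, and compute $\Pr(\Sam=A)$ at the end of time $t$ by conditioning on the edge $e_t$ arriving at time $t$. If $e_t\notin A$, then $A\subseteq E^{(t-1)}$ and the event $\{\Sam=A\}$ occurs iff $\Sam=A$ already at the end of time $t-1$ \emph{and} the biased coin with heads probability $M/t$ comes up tails, giving probability $\frac{1}{\binom{t-1}{M}}\cdot\frac{t-M}{t}$. If instead $e_t\in A$, then $\{\Sam=A\}$ occurs iff at the end of time $t-1$ the sample was $(A\setminus\{e_t\})\cup\{f\}$ for exactly one of the $t-M$ edges $f\in E^{(t-1)}\setminus(A\setminus\{e_t\})$, the coin came up heads, and the evicted edge was precisely $f$; summing the per-$f$ contribution $\frac{1}{\binom{t-1}{M}}\cdot\frac{M}{t}\cdot\frac{1}{M}$ over the $t-M$ admissible $f$ gives $\frac{t-M}{t}\cdot\frac{1}{\binom{t-1}{M}}$ again. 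In both cases the elementary identity $\frac{t-M}{t}\cdot\frac{1}{\binom{t-1}{M}}=\frac{1}{\binom{t}{M}}$ closes the induction, and the ``uniform at random'' reformulation is then immediate since all $\binom{t}{M}$ size-$M$ subsets of $E^{(t)}$ receive equal probability.

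The only step requiring care is the $e_t\in A$ case: one must correctly enumerate the predecessor samples (one for each edge $f$ not forced into $A\setminus\{e_t\}$, of which there are $t-M$ because $|E^{(t-1)}|=t-1$ and $e_t\notin E^{(t-1)}$ by the insertion-only, effect-having assumptions) and combine the insertion probability $M/t$ with the conditional probability $1/M$ that the uniformly chosen evicted edge is exactly $f$. Everything else reduces to the single binomial-coefficient identity above, so I would not belabor it.
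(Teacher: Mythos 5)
Your proof is correct, and since the paper does not prove this lemma at all --- it is quoted directly from \citet[Sect.~2]{Vitter85} --- your induction is exactly the classical argument being invoked. Both cases of the inductive step (the $(t-M)/t$ tails contribution when $e_t\notin A$, and the enumeration of the $t-M$ predecessor samples combined with the $\frac{M}{t}\cdot\frac{1}{M}$ insertion-and-eviction probability when $e_t\in A$) are handled correctly, and the closing identity $\frac{t-M}{t}\binom{t-1}{M}^{-1}=\binom{t}{M}^{-1}$ is right.
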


\begin{proof}[of Lemma~\ref{lem:reservoirhighorder}]
  If $k>\min\{M,t\}$, we have $\Pr(B\subseteq\Sam)=0$ because it is impossible
  for $B$ to be equal to $\Sam$ in these cases. From now on
  we then assume $k\le\min\{M,t\}$.

  If $t\le M$, then $E^{(t)}\subseteq\Sam$ and $\Pr(B\subseteq\Sam)=1=\xi_{k,t}^{-1}$.

  Assume instead that $t>M$, and let $\mathcal{B}$ be the family of subsets of
  $E^{(t)}$ that {\em 1.}~have size $M$, and {\em 2.}~contain
  $B$:
  \[
    \mathcal{B}=\{C\subset E^{(t)} ~:~ |C|=M, B\subseteq C\}\enspace.
  \]
  We have
  \begin{equation}
    |\mathcal{B}|=\binom{|E^{(t)}|-k}{M-k}=\binom{t-k}{M-k}\enspace.
  \end{equation}
  From this and and Lemma~\ref{lem:reservoir} we then have
  \begin{align*}
    \Pr(B\subseteq\Sam)&=\Pr(\Sam\in\mathcal{B})=\sum_{C\in\mathcal{B}}\Pr(\Sam=C)\\
    &=\frac{\binom{t-k}{M-k}}{\binom{t}{M}}=\frac{\binom{t-k}{M-k}}{\binom{t-k}{M-k}\prod_{i=0}^{k-1}\frac{t-i}{M-i}}=\prod_{i=0}^{k-1}\frac{M-i}{t-i}=\xi_{k,t}^{-1}\enspace.
  \end{align*}
\end{proof}

\subsubsection{Expectation}

\begin{proof}[of Lemma~\ref{lem:baseunbiasedaux}]
  We only show the proof for $\tau$, as the proof for the local counters
  follows the same steps.

  The proof proceeds by induction. The thesis is true after the first call to
  \textsc{UpdateCounters} at time $t=1$. Since only one edge is in $\Sam$ at
  this point, we have $\Delta^\Sam=0$, and $\neigh^\Sam_{u,v}=\emptyset$, so
  \textsc{UpdateCounters} does not modify $\tau$, which was initialized to
  $0$. Hence $\tau=0=\Delta^\Sam$.

  Assume now that the thesis is true for any subsequent call to
  \textsc{UpdateCounters} up to some point in the execution of the algorithm
  where an edge $(u,v)$ is inserted or removed from
  $\Sam$. We now show that the thesis is still true after the call to
  \textsc{UpdateCounters} that follows this change in $\Sam$. Assume
  that $(u,v)$ was \emph{inserted} in $\Sam$ (the proof for the case of an edge
  being removed from $\Sam$ follows the same steps). Let
  $\Sam^\mathrm{b}=\Sam\setminus\{(u,v)\}$ and $\tau^\mathrm{b}$ be the value of
  $\tau$ before the call to \textsc{UpdateCounters} and, for any $w\in
  V_{\Sam^\mathrm{b}}$, let $\tau_w^\mathrm{b}$ be the value of $\tau_w$
  before the call to \textsc{UpdateCounters}. Let $\Delta^{\Sam}_{u,v}$ be
  the set of triangles in $G_\Sam$ that have $u$ and $v$ as corners. We need
  to show that, after the call, $\tau = |\Delta^{\Sam}|$. Clearly we have
  $\Delta^{\Sam}=\Delta^{\Sam^{\mathrm{b}}} \cup \Delta^{\Sam}_{u,v}$ and
  $\Delta^{\Sam^{\mathrm{b}}} \cap \Delta^{\Sam}_{u,v} = \emptyset$, so
  \[
    |\Delta^{\Sam}|=|\Delta^{\Sam^{\mathrm{b}}}| + |\Delta^{\Sam}_{u,v}|
  \]
  We have $|\Delta^{\Sam}_{u,v}|=|\neigh^\Sam_{u,v,}|$ and, by the inductive
  hypothesis, we have that $\tau^\mathrm{b}=|\Delta^{\Sam^{\mathrm{b}}}|$.
  Since \textsc{UpdateCounters} increments $\tau$ by $|\neigh^\Sam_{u,v,}|$,
  the value of $\tau$ after \textsc{UpdateCounters} has completed is exactly
  $|\Delta^\Sam|$.
\end{proof}

We can now prove Thm.~\ref{thm:baseunbiased} on the unbiasedness of the
estimation computed by \algobase (and on their exactness for $t\le M$).

\begin{proof}[of Thm.~\ref{thm:baseunbiased}]
  We prove the statement for the estimation of global triangle count. The
  proof for the local triangle counts follows the same steps.

  If $t\le M$, we have $G_\Sam=G^{(t)}$ and from
  Lemma~\ref{lem:baseunbiasedaux} we have
  $\tau^{(t)}=|\Delta^{\Sam}|=|\Delta^{(t)}|$, hence the thesis holds.

  Assume now that $t > M$, and assume that $|\Delta^{(t)}|>0$, otherwise, from
  Lemma~\ref{lem:baseunbiasedaux}, we have $\tau^{(t)}=|\Delta^{\Sam}|=0$ and
  \algobase estimation is deterministically correct. Let
  $\lambda=(a,b,c)\in\Delta^{(t)}$, (where $a,b,c$ are edges in
  $E^{(t)}$) and let $\delta_\lambda^{(t)}$ be a random variable that takes
  value $\xi^{(t)}$ if $\lambda\in\Delta_\Sam$
  (i.e., $\{a,b,c\}\subseteq\Sam$) at the end of the step instant $t$, and $0$
  otherwise. From
  Lemma~\ref{lem:reservoirhighorder}, we have that
  \begin{equation}\label{eq:baseunbiasedexpectation}
    \mathbb{E}\left[\delta_\lambda^{(t)}\right]=\xi^{(t)}\Pr(\{a,b,c\}\subseteq\Sam)=\xi^{(t)}\frac{1}{\xi_{3,t}}=\xi^{(t)}\frac{1}{\xi^{(t)}}=1\enspace.
  \end{equation}
  We can write
  \[
    \xi^{(t)}\tau^{(t)}=\sum_{\lambda \in \Delta^{(t)}}
    \delta_{\lambda}^{(t)}
  \]
  and from this, \eqref{eq:baseunbiasedexpectation}, and linearity of expectation,
  we have
  \[
    \mathbb{E}\left[\xi^{(t)}\tau^{(t)}\right] = \sum_{\lambda
    \in \Delta^{(t)}}\mathbb{E}\left[\delta_{\lambda}^{(t)} \right] =
    |\Delta^{(t)}|\enspace.
  \]
\end{proof}

\subsubsection{Concentration}

\begin{proof}[of Lemma~\ref{lem:equivMp}]
  Using the law of total probability, we have
  \begin{align}
    \Pr\left(f(\Sam_\textsc{in}) = 1\right) &=\sum_{k=0}^t
    \Pr\left(f(\Sam_\textsc{in}) = 1~\left|~|\Sam_\textsc{in}|=k\right.\right)\Pr(|\Sam_\textsc{in}|=k)\nonumber\\
    &\ge \Pr\left(f(\Sam_\textsc{in}) = 1~\left|~
    |\Sam_\textsc{in}|=M\right.\right)\Pr(|\Sam_\textsc{in}|=M)\nonumber\\
    &\ge \Pr\left(f(\Sam_\textsc{mix}) = 1\right) \Pr\left(|\Sam_\textsc{in}|= M\right),\label{eq:boundrareevent}
  \end{align}
  where the last inequality comes from Lemma~\ref{lem:reservoir}: the set of
  edges included in $\Sam_\textsc{mix}$ is a uniformly-at-random subset of $M$
  edges from $E^{(t)}$, and the same holds for $\Sam_\textsc{in}$ when
  conditioning its size being $M$.

  Using the Stirling approximation $\sqrt{2\pi n}(\frac{n}{e})^n\le n!\le
  e\sqrt{n}(\frac{n}{e})^n$ for any positive integer $n$, we have
  \begin{align*}
    \Pr\left(|\Sam_\textsc{in}| = M\right) &= \binom{t}{M} \left(\frac{M}{t}\right)^M  \left(\frac{t-M}{t}\right)^{t-M} \\
    &\ge \frac{t^t \sqrt{t} \sqrt{2\pi} e^{-t}}{e^2 \sqrt{M} \sqrt{t-M} e^{-t} M^M (t-M)^{t-M}} \frac{M^M(t-M)^{t-M}}{t^t}\\
    &\ge \frac{1}{e\sqrt{M}}\enspace.
  \end{align*}
  Plugging this into~\eqref{eq:boundrareevent} concludes the proof.
\end{proof}

\begin{fact}\label{fact:first}
  For any $x>2$, we have
  \[
    \frac{x^2}{(x-1)(x-2)} \le 1+\frac{4}{x-2}\enspace.
  \]
\end{fact}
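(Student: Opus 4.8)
The plan is to compute the difference between the two sides directly and show it simplifies to something manifestly nonnegative on the domain $x > 2$, rather than chasing the inequality around.

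First I would note that, since $x > 2$, both $x - 1$ and $x - 2$ are positive, so all the expressions involved are well defined and we may write $1 + \frac{4}{x-2} = \frac{x-2+4}{x-2} = \frac{x+2}{x-2}$. Putting this and $\frac{x^2}{(x-1)(x-2)}$ over the common denominator $(x-1)(x-2)$ and subtracting gives
\[
  \left(1 + \frac{4}{x-2}\right) - \frac{x^2}{(x-1)(x-2)} = \frac{(x+2)(x-1) - x^2}{(x-1)(x-2)}.
\]
Expanding the numerator, $(x+2)(x-1) - x^2 = (x^2 + x - 2) - x^2 = x - 2$, so the difference equals $\frac{x-2}{(x-1)(x-2)} = \frac{1}{x-1}$. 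Since $x > 2$ forces $x - 1 > 1 > 0$, this difference is strictly positive, which proves the claimed inequality (in fact strictly, so the $\le$ in the statement is never tight for $x > 2$).

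There is no real obstacle here: the statement is pure elementary algebra. The only point requiring any care is to check that the factors $x - 1$ and $x - 2$ are positive before combining fractions, which is immediate from the hypothesis $x > 2$. As a sanity check, the closed form $\frac{1}{x-1}$ for the difference vanishes as $x \to \infty$ (both sides of the inequality tend to $1$ there) and stays bounded as $x \to 2^+$ (both sides diverge like $\frac{4}{x-2}$), which is consistent with the bound being stated in the form $1 + \frac{4}{x-2}$.
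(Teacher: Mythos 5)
Your proof is correct: the paper states this as a Fact without supplying a proof, and your direct computation showing that the difference of the two sides equals $\frac{1}{x-1}>0$ for $x>2$ is exactly the elementary verification one would expect. Nothing to add.
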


\begin{proof}[of Lemma~\ref{lem:estimatorsrelationship}]
  We start by looking at the ratio between
  $\frac{t(t-1)(t-2)}{M(M-1)(M-2)}$ and $(t/M)^3$. We have:
  \begin{align*}
    1\le \frac{t(t-1)(t-2)}{M(M-1)(M-2)}\left(\frac{M}{t}\right)^3 &=
    \frac{M^2}{(M-1)(M-2)} \frac{(t-1)(t-2)}{t^2}\\
    &\le \frac{M^2}{(M-1)(M-2)}\\
    &\le 1+ \frac{4}{M-2}
  \end{align*}
  where the last step follows from Fact~\ref{fact:first}. Using this, we
  obtain
  \begin{align*}
    \left|\phi^{(t)} - \phi_\textsc{mix}^{(t)}\right| &=
    \left|\tau^{(t)}\frac{t(t-1)(t-2)}{M(M-1)(M-2)} - \tau^{(t)} \left(\frac{t}{M}\right)^3\right|\\
    &=\left|\tau^{(t)} \left(\frac{t}{M}\right)^3 \left
    (\frac{t(t-1)(t-2)}{M(M-1)(M-2)}\left(\frac{M}{t}\right)^3  -
    1\right)\right|  \\
    &\le \tau^{(t)}\left(\frac{t}{M}\right)^3 \frac{4}{M-2}\\
    &= \phi_\textsc{mix}^{(t)}\frac{4}{M-2}\enspace.
  \end{align*}
\end{proof}

\subsubsection{Variance comparison}
We now prove Lemma~\ref{lem:variancecomparison}, about the fact that the
variance of the estimations computed by \algobase is smaller, for most of the
stream, than the variance of the estimations computed by
\textsc{mascot-c}~\citep{LimK15}. We first need the following technical fact.

\begin{fact}\label{fact:second}
  For any $x>42$, we have
  \[
    \frac{x^2}{(x-3)(x-4)} \le 1+\frac{8}{x}\enspace.
  \]
\end{fact}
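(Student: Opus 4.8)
The plan is to clear denominators and reduce the claim to an elementary quadratic inequality. Since $x>42>4$, both $(x-3)(x-4)$ and $x$ are strictly positive, so the asserted bound $\frac{x^2}{(x-3)(x-4)}\le 1+\frac{8}{x}=\frac{x+8}{x}$ is equivalent (cross-multiplying, which preserves the direction of the inequality precisely because all factors are positive in this range) to
\[
  x^3 \le (x+8)(x-3)(x-4)\enspace.
\]

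Next I would expand the right-hand side. Using $(x-3)(x-4)=x^2-7x+12$, we get $(x+8)(x^2-7x+12)=x^3+x^2-44x+96$, so after cancelling the $x^3$ terms the inequality becomes
\[
  0\le x^2-44x+96\enspace.
\]
It then remains only to verify that the quadratic $p(x)=x^2-44x+96$ is nonnegative on $(42,\infty)$. This is immediate: $p$ opens upward with vertex at $x=22$, hence is increasing on $[22,\infty)$, and $p(42)=1764-1848+96=12>0$, so $p(x)\ge p(42)>0$ for every $x\ge 42$. (Equivalently, the larger root of $p$ is $22+\sqrt{388}<42$.)

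There is essentially no obstacle here, as this is a routine one-variable inequality; the only point requiring a moment's care is checking that the quantities being cross-multiplied are positive for $x>42$ so that the inequality direction is preserved, which it is. The bound $42$ is, of course, exactly what makes the final quadratic step go through.
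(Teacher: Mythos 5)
Your proof is correct: the paper states Fact~\ref{fact:second} without proof, and your reduction to the quadratic $x^2-44x+96\ge 0$ (whose larger root $22+\sqrt{388}\approx 41.7$ lies below $42$) is the natural elementary verification. The cross-multiplication step is justified exactly as you say, since $x$ and $(x-3)(x-4)$ are positive for $x>42$.
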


\begin{proof}[of Lemma~\ref{lem:variancecomparison}]
  We focus on $t>M> 42$ otherwise the theorem is immediate.
  We show that for such conditions $f(M,t) < \bar{f}(M/T)$ and  $g(M,t) <
  \bar{g}(M/T)$. Using the fact that $t\le\alpha T$ and Fact~\ref{fact:first}, we have
  \begin{align}
    f(M,t) - \bar{f}(M/T) & = \frac{t(t-1)(t-2)}{M(M-1)(M-2)} -\frac{T^3}{M^3} \nonumber \\
    & < \frac{\alpha^3 T^3}{M^3}\frac{M^2}{(M-1)(M-2)} -\frac{T^3}{M^3} \nonumber \\
    & \le \frac{\alpha^3 T^3}{M^3} \left(1 + \frac{4}{M-2}\right) - \frac{T^3}{M^3}\nonumber\\
    & \le \frac{T^3}{M^3}\left(\alpha^3 +\frac{4\alpha^3 }{M-2}-1\right)\enspace.
    \label{eq:inequalityf}
  \end{align}
  Given that $T$ and $M$ are $\ge 42$, the r.h.s.~of~\eqref{eq:inequalityf} is
  non-positive iff
  \[
    \alpha^3+\frac{4\alpha^3}{M-2}-1\le 0\enspace.
  \]
  Solving for $M$ we have that the above is verified when
  $M\ge\frac{4\alpha^3}{1-\alpha^3}+2$. This is always true given our assumption
  that $M>\max(\frac{8\alpha}{1-\alpha}, 42)$: for any $0<\alpha <0.6$, we have
  $\frac{4\alpha^3}{1-\alpha^3}+2 < 42\le M$ and for any $0.6\le \alpha <1$ we have
  $\frac{4\alpha^3}{1-\alpha^3}+2 < \frac{8\alpha}{1-\alpha}\le M$. Hence the
  r.h.s.~of~\eqref{eq:inequalityf} is $\le 0$ and $f(M,t)< \bar{f}(M/T)$.

  We also have:
  \begin{align}
    g(M,t) -\bar{g}(M/T) & = \frac{t(t-1)(t-2)(M-3)(M-4)}{(t-3)(t-4)M(M-1)(M-2)} -\frac{T}{M} \nonumber \\
    & < \frac{t}{M}\frac{t^2}{(t-3)(t-4)} -\frac{T}{M} \nonumber \\
    & \le \frac{t}{M}\left(1+\frac{8}{t}\right) -\frac{T}{M},\label{eq:inequalityg}
  \end{align}
  where the last inequality follow from Fact~\ref{fact:second}, since $t>M>42$.
  Now, from~\eqref{eq:inequalityg} since $t\le \alpha T$ and $t> M$, we can write:
  \[
    g(M,t) - \bar{g}(M/T) <
    \frac{T}{M}\left(\alpha+\frac{8\alpha}{M}-1\right)\enspace.
  \]
  The r.h.s.~of this equation is non-positive given the assumption
  $M>\frac{8\alpha}{1-\alpha}$, hence $g(M,t)<\bar{g}(M/T)$.
\end{proof}

\subsection{Theoretical results for \algoimproved}\label{app:algoimproved}

\subsubsection{Expectation}
\begin{proof}[of Thm.~\ref{thm:improvedunbiased}]

  If $t\le M$ \algoimproved behaves exactly like \algobase, and the statement
  follows from Lemma~\ref{thm:baseunbiased}.

  Assume now $t>M$ and assume that $|\Delta^{(t)}|>0$, otherwise, the algorithm
  deterministically returns $0$ as an estimation and the thesis follows. Let
  $\lambda\in\Delta^{(t)}$ and denote with $a$, $b$, and $c$ the edges of
  $\lambda$ and assume, w.l.o.g., that they appear in this
  order (not necessarily consecutively) on the stream. Let $t_\lambda$ be the
  time step at which $c$ is on the stream. Let $\delta_\lambda$ be a random
  variable that takes value $\xi_{2,t_\lambda-1}$ if $a$ and $b$
  are in $\Sam$ at the end of time step $t_\lambda-1$, and $0$ otherwise.
  Since it must be $t_\lambda-1\ge 2$, from Lemma~\ref{lem:reservoirhighorder}
  we have that \begin{equation}\label{eq:improvedprob}
    \Pr\left(\delta_\lambda=\xi_{2,t_\lambda-1}\right)=\frac{1}{\xi_{2,t_\lambda-1}}\enspace.
  \end{equation}
  When $c=(u,v)$ is on the stream, i.e., at time $t_\lambda$, \algoimproved
  calls \textsc{UpdateCounters} and increments the counter $\tau$ by
  $|\mathcal{N}^\Sam_{u,v}|\xi_{2,t_\lambda-1}$, where
  $|\mathcal{N}^\Sam_{u,v}|$ is the number of triangles with $(u,v)$ as an
  edge in $\Delta^{\Sam\cup\{c\}}$. All these triangles have the corresponding
  random variables taking the same value $\xi_{2,t_\lambda-1}$. This
  means that the random variable $\tau^{(t)}$ can be expressed as
  \[
    \tau^{(t)}=\sum_{\lambda\in\Delta^{(t)}}\delta_\lambda\enspace.
  \]
  From this, linearity of expectation, and~\eqref{eq:improvedprob}, we get
  \[
    \mathbb{E}\left[\tau^{(t)}\right]=\sum_{\lambda\in\Delta^{(t)}}
    \mathbb{E}[\delta_\lambda]
    =\sum_{\lambda\in\Delta^{(t)}}
    \xi_{2,t_\lambda-1}\Pr\left(\delta_\lambda=\xi_{2,t_\lambda-1}\right)=\sum_{\lambda\in\Delta^{(t)}}\xi_{2,t_\lambda-1}\frac{1}{\xi_{2,t_\lambda-1}}=|\Delta^{(t)}|\enspace.
  \]
\end{proof}

\subsubsection{Variance}

\begin{proof}[of Lemma~\ref{lem:negativedependence}]
  Consider first the case where \emph{all} edges of $\lambda$ appear on the
  stream before \emph{any} edge of $\gamma$, i.e.,
  \[
    t_{\ell_1}<t_{\ell_2}<t_{\ell_3}<t_{g_1}<t_{g_2}<t_{g_3}\enspace.
  \]
  The presence or absence of either or both $\ell_1$ and $\ell_2$ in $\Sam$ at
  the beginning of time step $t_{\ell_3}$ (i.e., whether $D_\lambda$
  happens or not) has no effect whatsoever on the probability that $g_1$ and
  $g_2$ are in the sample $\Sam$ at the beginning of time step $t_{g_3}$.
  Hence in this case,
  \[
    \Pr(D_\gamma~|~D_\lambda)=\Pr(D_\gamma)\enspace.
  \]
  Consider now the case where, for any $i\in\{1,2\}$, the edges
  $g_1,\dotsc,g_i$ appear on the stream before $\ell_3$ does.
  Define now the events
  \begin{itemize}
    \item $A_i$: ``the edges $g_1,\dotsc,g_i$ are in the sample $\Sam$ at the
      beginning of time step $t_{\ell_3}$.'' \item $B_i$: if $i=1$, this is the
      event ``the edge $g_2$ is inserted in the sample $\Sam$ during time step
      $t_{g_2}$.'' If $i=2$, this event is the whole event space, i.e., the
      event that happens with probability $1$.
    \item $C$: ``neither $g_1$ nor $g_2$ were among the edges removed from
      $\Sam$ between the beginning of time step $t_{\ell_3}$ and the beginning
      of time step $t_{g_3}$.''
  \end{itemize}
  We can rewrite $D_\gamma$ as
  \[
    D_\gamma=A_i\cap B_i\cap C\enspace.
  \]
  Hence
  \begin{align}\label{eq:decomposition}
    \Pr(D_\gamma~|~D_\lambda)&=\Pr\left(A_i\cap
  B_i\cap C~|~D_\lambda\right)\nonumber\\
  &= \Pr\left(A_i~|~D_\lambda\right)\Pr\left(B_i\cap
C~|~A_i\cap D_\lambda\right)\enspace.
  \end{align}
  We now show that
  \[
    \Pr\left(A_i~|~D_\lambda\right)\le
    \Pr\left(A_i\right)\enspace.
  \]
  If we assume that $t_{\ell_3}\le M+1$, then all the edges that appeared on
  the stream up until the beginning of $t_{\ell_3}$ are in $\Sam$. Therefore,
  \[
    \Pr\left(A_i~|~D_\lambda\right)=\Pr(A_i)=1\enspace.
  \]
  Assume instead that $t_{\ell_3}> M+1$. Among the $\binom{t_{\ell_3}-1}{M}$
  subsets of $E^{(t_{\ell_3}-1)}$ of size $M$, there are
  $\binom{t_{\ell_3}-3}{M-2}$ that contain $\ell_1$ and $\ell_2$. From
  Lemma~\ref{lem:reservoir}, we have that at the beginning of time
  $t_{\ell_3}$, $\Sam$ is a subset of size $M$ of $E^{(t_{\ell_3}-1)}$ chosen
  uniformly at random. Hence, if we condition on the fact that
  $\{\ell_1,\ell_2\}\subset\Sam$, we have that $\Sam$ is chosen uniformly at
  random from the $\binom{t_{\ell_3}-3}{M-2}$ subsets of $E^{(t_{\ell_3}-1)}$
  of size $M$ that contain $\ell_1$ and $\ell_2$. Among these, there are
  $\binom{t_{\ell_3}-3-i}{M-2-i}$ that also contain $g_1,\dotsc,g_i$.
  Therefore,
  \[
    \Pr(A_i~|~D_\lambda) =
    \frac{\binom{t_{\ell_3}-3-i}{M-2-i}}{\binom{t_{\ell_3}-3}{M-2}}=\prod_{j=0}^{i-1}\frac{M-2-j}{t_{\ell_{3}}-3-j}\enspace.
  \]
  From Lemma~\ref{lem:reservoirhighorder} we have
  \[
    \Pr(A_i)=\frac{1}{\xi_{i,t_{\ell_3}-1}}=\prod_{j=0}^{i-1}\frac{M-j}{t_{\ell_3}-1-j},
  \]
  where the last equality comes from the assumption $t_{\ell_3}> M+1$. From
  the same assumption and from the fact that for any $j\ge 0$ and any $y\ge x> j$ it holds
  $\frac{x-j}{y-j}\le\frac{x}{y}$, then we have
  \[
    \Pr(A_i~|~D_\lambda)\le \Pr(A_i)\enspace.
  \]
  This implies, from~\eqref{eq:decomposition}, that
  \begin{equation}\label{eq:eventsinequal}
    \Pr(D_\gamma~|~D_\lambda)\le\Pr(A_i)\Pr(B_i\cap C~|~A_i\cap D_\lambda)\enspace.
  \end{equation}

  Consider now the events $B_i$ and $C$. When conditioned on
  $A_i$, these event are \emph{both independent} from
  $D_\lambda$: if the edges $g_1,\dotsc,g_i$ are in $\Sam$ at the beginning of
  time $t_{\ell_3}$, the fact that the edges $\ell_1$ and $\ell_2$ were also
  in $\Sam$ at the beginning of time $t_{\ell_3}$ has no influence whatsoever
  on the actions of the algorithm (i.e., whether an edge is inserted in or
  removed from $\Sam$). Thus,
  \[
    \Pr(A_i)\Pr(B_i\cap C~|~A_i\cap D_\lambda) =
    \Pr(A_i)\Pr(B_i\cap C~|~A_i)\enspace.
  \]
  Putting this together with~\eqref{eq:eventsinequal}, we obtain
  \[
    \Pr(D_\gamma~|~D_\lambda)\le\Pr(A_i)\Pr(B_i\cap
    C~|~A_i)\le\Pr(A_i\cap B_i\cap C)\le\Pr(D_\gamma)\enspace,
  \]
  where the last inequality follows from the fact that $D_\gamma=A_i\cap B_i\cap
  C$ by definition.
\end{proof}

\subsection{Theoretical results for \algofd}\label{app:algofd}

\subsubsection{Expectation}
Before proving Thm.~\ref{thm:fdunbiased} we need the following technical lemmas.

The following is a corollary of~\citep[Thm.~1]{GemullaLH08}.
\begin{lemma}\label{lem:gemulla}
  For any $t>0$, and any $j$, $0\le j\le s^{(t)}$,  let $\mathcal{B}^{(t)}$ be
  the collection of subsets of $E^{(t)}$ of size $j$. For any
  $B\in\mathcal{B}^{(t)}$ it holds
  \[
    \Pr\left(\Sam=B~|~M^{(t)}=j\right)=\frac{1}{\binom{|E^{(t)}|}{j}}\enspace.
  \]
  That is, conditioned on its size at the end of time step $t$, $\Sam$ is
  equally likely to be, at the end of time step $t$, any of the subsets of
  $E^{(t)}$ of that size.
\end{lemma}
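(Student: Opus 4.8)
The plan is to obtain this as a direct corollary of Theorem~1 of~\citep{GemullaLH08}, which characterizes the sampling distribution produced by the Random Pairing scheme. First I would recall precisely what that theorem states: for Random Pairing over a stream of insertions and deletions, conditioned on the current sample size, the sample is a uniform random subset of the current population of that size. The key point is that \algofd is a faithful implementation of Random Pairing applied to the population $E^{(t)}$ of currently-present edges, where insertions on the stream correspond to ``+'' transactions and deletions to ``$-$'' transactions. So the only real work is to verify that the correspondence between \algofd and the abstract RP scheme is exact, and then invoke the cited theorem.

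The steps, in order, would be: (i) set up the correspondence — identify the stream elements $(+,(u,v))$ and $(-,(u,v))$ with RP insert/delete transactions on the population $E$, and check that the counters $d_\mathrm{i}, d_\mathrm{o}$ in Alg.~\ref{algo: TRIEST-FD} match the ``uncompensated deletions'' bookkeeping of RP (deletions of sampled vs.\ unsampled items), and that $s^{(t)} = |E^{(t)}|$; (ii) check that the three branches of \textsc{SampleEdge} — deterministic insertion when $|\Sam|<M$ and $d_\mathrm{i}+d_\mathrm{o}=0$, reservoir-style insertion with probability $M/t$ when the sample is full and $d_\mathrm{i}+d_\mathrm{o}=0$, and insertion with probability $d_\mathrm{i}/(d_\mathrm{i}+d_\mathrm{o})$ when there are pending uncompensated deletions — are exactly the RP insertion rules; (iii) check that the handling of a deletion (remove from $\Sam$ and increment $d_\mathrm{i}$ if the edge is sampled, else increment $d_\mathrm{o}$) matches RP; (iv) conclude from~\citep[Thm.~1]{GemullaLH08} that, conditioned on $M^{(t)} = j$, $\Sam$ is uniform over all size-$j$ subsets of $E^{(t)}$, i.e.\ $\Pr(\Sam = B \mid M^{(t)}=j) = 1/\binom{|E^{(t)}|}{j}$ for every such $B \in \mathcal{B}^{(t)}$.

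I expect the main (and essentially only) obstacle to be the bookkeeping verification in steps (i)--(iii): making sure that \algofd's use of the global time counter $t$ (rather than the current population size $|E^{(t)}|$) in the coin probability $M/t$, together with the condition ``$|\Sam|<M$,'' really does reproduce the RP dynamics. The subtlety is that when $d_\mathrm{i}+d_\mathrm{o}=0$ one must have processed exactly $t$ net insertions worth of population history in the RP sense, so $t$ plays the role RP would assign to the ``number of insertions seen'', and the branch structure guarantees no edge is evicted spuriously. Once this identification is made carefully, the statement is immediate from the cited theorem, so the lemma is genuinely a corollary and the proof is short; I would present it as such rather than re-deriving the RP sampling distribution from scratch.
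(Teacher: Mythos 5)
Your proposal takes exactly the route the paper does: the paper offers no proof at all beyond the single sentence ``The following is a corollary of~\citep[Thm.~1]{GemullaLH08},'' so your plan --- verify that \algofd is a faithful implementation of Random Pairing and then invoke that theorem --- is the intended argument, and if anything more thorough, since you explicitly flag the bookkeeping correspondence (the $d_\mathrm{i},d_\mathrm{o}$ counters and the branch structure of \textsc{SampleEdge}) that the paper leaves implicit. Your identified obstacle, namely checking that the coin probability $M/t$ in the pseudocode reproduces the RP/reservoir dynamics, is indeed the only point requiring care, and is precisely what the paper silently assumes.
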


The next lemma is an immediate corollary of~\citep[Thm.~2]{GemullaLH08}
\begin{lemma}\label{lem:kappa}
  Recall the definition of $\kappa^{(t)}$ from~\eqref{eq:kappa}. We have
  \[
    \kappa^{(t)}=\Pr(M^{(t)}\ge 3)\enspace.
  \]
\end{lemma}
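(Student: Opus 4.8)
The plan is to derive the claim directly from the characterization of the Random Pairing sample size in~\citep[Thm.~2]{GemullaLH08}, which is a statement about the \emph{distribution of $M^{(t)}$} itself (as opposed to Lemma~\ref{lem:gemulla}, which only describes the conditional law of $\Sam$ once its size is fixed). That theorem says that, at the end of time step $t$, $M^{(t)}$ is distributed exactly as the number of ``valid'' items obtained by drawing $\omega^{(t)}=\min\{M,\,s^{(t)}+d_\mathrm{i}^{(t)}+d_\mathrm{o}^{(t)}\}$ items uniformly at random, without replacement, from a virtual pool of $s^{(t)}+d_\mathrm{i}^{(t)}+d_\mathrm{o}^{(t)}$ items, of which exactly $s^{(t)}=|E^{(t)}|$ are valid (the edges currently in the graph) and $d_\mathrm{i}^{(t)}+d_\mathrm{o}^{(t)}$ correspond to as-yet-uncompensated deletions. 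First I would restate this fact in our notation, obtaining that $M^{(t)}$ is a hypergeometric random variable, so that for every integer $0\le j\le\omega^{(t)}$,
\[
  \Pr\left(M^{(t)}=j\right)=\binom{s^{(t)}}{j}\binom{d_\mathrm{i}^{(t)}+d_\mathrm{o}^{(t)}}{\omega^{(t)}-j}\bigg/\binom{s^{(t)}+d_\mathrm{i}^{(t)}+d_\mathrm{o}^{(t)}}{\omega^{(t)}}\enspace.
\]

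Next I would simply complement and sum over the three relevant values of $j$: since $\{M^{(t)}\ge 3\}$ is the complement of $\{M^{(t)}\le 2\}=\bigcup_{j=0}^{2}\{M^{(t)}=j\}$, we get
\[
  \Pr\left(M^{(t)}\ge 3\right)=1-\sum_{j=0}^2\Pr\left(M^{(t)}=j\right)
  =1-\sum_{j=0}^2\binom{s^{(t)}}{j}\binom{d_\mathrm{i}^{(t)}+d_\mathrm{o}^{(t)}}{\omega^{(t)}-j}\bigg/\binom{s^{(t)}+d_\mathrm{i}^{(t)}+d_\mathrm{o}^{(t)}}{\omega^{(t)}}\enspace,
\]
which is exactly $\kappa^{(t)}$ by~\eqref{eq:kappa}, proving the lemma.

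The only point requiring any care -- and hence the closest thing to an obstacle -- is checking the boundary cases so that the displayed hypergeometric formula is literally valid as written, using the convention $\binom{0}{0}=1$. When $s^{(t)}+d_\mathrm{i}^{(t)}+d_\mathrm{o}^{(t)}\le M$ we have $\omega^{(t)}=s^{(t)}+d_\mathrm{i}^{(t)}+d_\mathrm{o}^{(t)}$, the draw exhausts the virtual pool, and $M^{(t)}=s^{(t)}$ deterministically; one verifies that the formula collapses to $\Pr(M^{(t)}=s^{(t)})=1$ and that $\kappa^{(t)}$ then equals $1$ if $s^{(t)}\ge 3$ and $0$ otherwise, matching $\Pr(M^{(t)}\ge 3)$. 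Likewise, the out-of-range terms (those with $\omega^{(t)}-j>d_\mathrm{i}^{(t)}+d_\mathrm{o}^{(t)}$ or $j>s^{(t)}$) vanish because the relevant binomial coefficient is $0$, so no further case analysis is needed. With these sanity checks in place, the proof is just the two displays above.
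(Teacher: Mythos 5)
Your proof is correct and follows essentially the same route as the paper, which states the lemma as an immediate corollary of~\citep[Thm.~2]{GemullaLH08} (the hypergeometric law of the Random Pairing sample size, restated later in the paper as Lemma~\ref{lem:mtprops}) and leaves the complement-and-sum step implicit. Your explicit check of the boundary cases and the $\binom{0}{0}=1$ convention is a reasonable elaboration of what the paper treats as immediate.
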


The next lemma follows from Lemma~\ref{lem:gemulla} in the same way
as Lemma~\ref{lem:reservoirhighorder} follows from Lemma~\ref{lem:reservoir}.

\begin{lemma}\label{lem:gemullahighorder}
  For any time step $t$  and any $j$, $0\le j\le s^{(t)}$, let $B$ be any
  subset of $E^{(t)}$ of size $|B|=k\le s^{(t)}$.
  Then, at the end of time step $t$,
  \[
    \Pr\left(B\subseteq\Sam~|~M^{(t)}=j\right)=\left\{\begin{array}{cl}0 & \text{if } k>j \\
    \displaystyle\frac{1}{\psi_{k,j,s^{(t)}}} &
  \text{otherwise}\end{array}\right.\enspace.
\]
\end{lemma}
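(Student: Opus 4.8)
The plan is to mirror exactly the structure of the proof of Lemma~\ref{lem:reservoirhighorder}, with Lemma~\ref{lem:gemulla} playing the role that Lemma~\ref{lem:reservoir} played there. First I would dispose of the trivial case: if $k>j$, then $B$ cannot be contained in a sample of size $j$, so conditioned on $M^{(t)}=j$ we have $\Pr(B\subseteq\Sam\mid M^{(t)}=j)=0$. From now on assume $k\le j$ (and recall $k\le s^{(t)}$ and $j\le s^{(t)}$).

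Next, conditioned on the event $M^{(t)}=j$, Lemma~\ref{lem:gemulla} tells us that $\Sam$ is a uniformly random subset of $E^{(t)}$ of size $j$; there are $\binom{s^{(t)}}{j}$ such subsets, each equally likely. I would then count how many of them contain the fixed set $B$ of size $k$: choosing such a subset amounts to choosing the remaining $j-k$ edges from the $s^{(t)}-k$ edges not in $B$, so there are $\binom{s^{(t)}-k}{j-k}$ of them. Hence
\[
  \Pr\left(B\subseteq\Sam~\middle|~M^{(t)}=j\right)
  = \frac{\binom{s^{(t)}-k}{j-k}}{\binom{s^{(t)}}{j}}.
\]

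Finally I would simplify this ratio to match the stated form. Using the definition $\psi_{k,j,s^{(t)}}=\binom{s^{(t)}}{j}\big/\binom{s^{(t)}-k}{j-k}=\prod_{i=0}^{k-1}\frac{s^{(t)}-i}{j-i}$, the displayed ratio is precisely $1/\psi_{k,j,s^{(t)}}$, which completes the proof. (One can double-check the product form by the standard binomial identity $\binom{s}{j}=\binom{s-k}{j-k}\prod_{i=0}^{k-1}\frac{s-i}{j-i}$, exactly as in the proof of Lemma~\ref{lem:reservoirhighorder}.) There is essentially no obstacle here: the only subtlety is the bookkeeping of which quantities ($k$ versus $j$ versus $s^{(t)}$) play which role, and making sure the edge case $k>j$ is handled separately so that the binomial coefficients in the non-trivial case are all well-defined and nonzero. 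The conditioning on $M^{(t)}=j$ is what makes the uniformity available, so all the work is offloaded onto Lemma~\ref{lem:gemulla}.
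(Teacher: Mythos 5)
Your proof is correct and is exactly the argument the paper intends: the paper simply remarks that this lemma follows from Lemma~\ref{lem:gemulla} in the same way that Lemma~\ref{lem:reservoirhighorder} follows from Lemma~\ref{lem:reservoir}, and your write-up carries out precisely that counting argument (uniformity of $\Sam$ conditioned on $M^{(t)}=j$, the ratio $\binom{s^{(t)}-k}{j-k}/\binom{s^{(t)}}{j}$, and its simplification to $1/\psi_{k,j,s^{(t)}}$). No gaps.
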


The next two lemmas discuss properties of \algofd for $t< t^*$, where $t^*$ is
the first time that $|E^{(t)}|$ had size $M+1$ ($t^*\ge M+1$).

\begin{lemma}\label{lem:equalsize}
  For all $t<t^*$, we have:
  \begin{enumerate}
    \item $d_\mathrm{o}^{(t)}=0$; and
    \item $\Sam=E^{(t)}$; and
    \item $M^{(t)}=s^{(t)}$.
  \end{enumerate}
\end{lemma}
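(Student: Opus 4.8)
The plan is to establish the three statements simultaneously by induction on $t$, for $0\le t<t^*$. The base case $t=0$ is immediate, since $E^{(0)}=\Sam=\emptyset$ gives $d_\mathrm{o}^{(0)}=0$, $\Sam=E^{(0)}$, and $M^{(0)}=0=s^{(0)}$. Before turning to the inductive step, I would record two elementary facts. First, $|E^{(t')}|\le M$ for every $t'<t^*$: the quantity $|E^{(\cdot)}|$ starts at $0$, changes by exactly one per step, and $t^*$ is the \emph{first} time it equals $M+1$, so it cannot have exceeded $M$ beforehand; in particular $s^{(t)}\le M$ whenever $t<t^*$. Second, since every stream operation ``has effect'', an insertion at time $t$ is always of an edge $e\notin E^{(t-1)}$, and a deletion at time $t$ is always of an edge $e\in E^{(t-1)}$.

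For the inductive step, assume the three claims hold at time $t-1$ (legitimate because $t<t^*$ forces $t-1<t^*$), and consider the element on the stream at time $t$. If it is a deletion of an edge $e$, then $E^{(t)}=E^{(t-1)}\setminus\{e\}$ and $s^{(t)}=s^{(t-1)}-1$; by the inductive hypothesis $\Sam=E^{(t-1)}$, so $e\in\Sam$ and \algofd follows the branch that removes $e$ from $\Sam$ and increments $d_\mathrm{i}$, while leaving $d_\mathrm{o}$ untouched. Hence at the end of time $t$ we get $d_\mathrm{o}^{(t)}=d_\mathrm{o}^{(t-1)}=0$, $\Sam=E^{(t-1)}\setminus\{e\}=E^{(t)}$, and $M^{(t)}=M^{(t-1)}-1=s^{(t-1)}-1=s^{(t)}$, so all three claims hold at $t$.

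If instead the element at time $t$ is an insertion of an edge $e$, then $E^{(t)}=E^{(t-1)}\cup\{e\}$ and $s^{(t)}=s^{(t-1)}+1\le M$ by the first fact, so by the inductive hypothesis $|\Sam|=M^{(t-1)}=s^{(t-1)}\le M-1<M$ and $d_\mathrm{o}^{(t-1)}=0$. If moreover $d_\mathrm{i}^{(t-1)}=0$, then \textsc{SampleEdge} operates in the $d_\mathrm{i}+d_\mathrm{o}=0$ regime and, because $|\Sam|<M$, inserts $e$ deterministically without evicting any edge and without modifying $d_\mathrm{i}$ or $d_\mathrm{o}$. If instead $d_\mathrm{i}^{(t-1)}>0$, then \textsc{SampleEdge} operates in the $d_\mathrm{i}+d_\mathrm{o}>0$ regime and flips a coin with heads probability $d_\mathrm{i}/(d_\mathrm{i}+d_\mathrm{o})=d_\mathrm{i}/d_\mathrm{i}=1$, hence always inserts $e$ into $\Sam$ and decrements $d_\mathrm{i}$, again never touching $d_\mathrm{o}$. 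In either sub-case $d_\mathrm{o}^{(t)}=0$, $\Sam=E^{(t-1)}\cup\{e\}=E^{(t)}$, and $M^{(t)}=M^{(t-1)}+1=s^{(t-1)}+1=s^{(t)}$, which completes the induction.

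The argument is essentially routine, and the only place that calls for a little care is the bookkeeping of $d_\mathrm{i}$: the statement deliberately does \emph{not} assert $d_\mathrm{i}^{(t)}=0$, since a deletion can leave $d_\mathrm{i}$ positive until the next insertion ``pays it back''. Consequently, in the insertion case one has to treat the $d_\mathrm{i}>0$ sub-case separately and observe that, because $d_\mathrm{o}$ is pinned at $0$, the random-pairing coin is forced to come up heads — which is exactly what keeps $\Sam$ equal to $E^{(t)}$ and prevents $d_\mathrm{o}$ from ever becoming positive while $t<t^*$.
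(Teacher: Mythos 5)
Your proof is correct and follows essentially the same route as the paper's: induction on $t$, splitting the step into the deletion case and the two insertion sub-cases $d_\mathrm{i}=0$ and $d_\mathrm{i}>0$, with the key observation that $d_\mathrm{o}=0$ forces the random-pairing coin to come up heads. The only cosmetic differences are that the paper derives the third claim as an immediate consequence of the second rather than carrying it through the induction, and that you make explicit (via $s^{(t)}\le M$ for $t<t^*$) the reason $|\Sam|<M$ in the $d_\mathrm{i}=0$ sub-case, which the paper asserts more tersely.
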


\begin{proof}
  Since the third point in the thesis follows immediately from the second, we
  focus on the first two points.

  The proof is by induction on $t$. In the base base for $t=1$: the
  element on the stream must be an insertion, and the algorithm
  deterministically inserts the edge in $\Sam$.  Assume now that it is true for
  all time steps up to (but excluding) some $t\le t^*-1$. We now show that it is
  also true for $t$.

  Assume the element on the stream at time $t$ is a deletion. The
  corresponding edge must be in $\Sam$, from the inductive hypothesis. Hence
  \algofd removes it from $\Sam$ and increments the counter $d_\mathrm{i}$ by
  1. Thus it is still true that $\Sam=E^{(t)}$ and $d_\mathrm{o}^{(t)}=0$, and
  the thesis holds.

  Assume now that the element on the stream at time $t$ is an insertion. From
  the inductive hypothesis we have that the current value of the counter
  $d_\mathrm{o}$ is 0.

  If the counter $d_\mathrm{i}$ has currently value 0 as well, then, because of the hypothesis that $t<t^*$, it must
  be that $|\Sam|=M^{(t-1)}=s^{(t-1)}< M$. Therefore \algofd always inserts the edge in $\Sam$. Thus it is
  still true that $\Sam=E^{(t)}$ and $d_\mathrm{o}^{(t)}=0$, and the thesis
  holds.

  If otherwise $d_\mathrm{i}>0$, then \algofd flips a biased coin with
  probability of heads equal to
  \[
    \frac{d_\mathrm{i}}{d_\mathrm{i}+d_\mathrm{o}}=\frac{d_\mathrm{i}}{d_\mathrm{i}}=1,
  \]
  therefore \algofd always inserts the edge in $\Sam$ and decrements
  $d_\mathrm{i}$ by one. Thus it is still true that $\Sam=E^{(t)}$ and
  $d_\mathrm{o}^{(t)}=0$, and the thesis holds.
\end{proof}

The following result is an immediate consequence of Lemma~\ref{lem:kappa} and
Lemma~\ref{lem:equalsize}.
\begin{lemma}\label{lem:kappaone}
  For all $t<t^*$ such that $s^{(t)}\ge 3$, we have $\kappa^{(t)}=1$.
\end{lemma}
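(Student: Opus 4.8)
The plan is to derive the identity directly from the two results to which the lemma is attributed. The key observation is that Lemma~\ref{lem:equalsize} is a \emph{deterministic} statement: for every $t<t^*$ we have $\Sam=E^{(t)}$ with probability $1$, and hence $M^{(t)}=|\Sam|=|E^{(t)}|=s^{(t)}$ holds with certainty, not merely in expectation. Intuitively this is because before time $t^*$ the sample never overflows, so no eviction coin is ever flipped: every insertion is forced into $\Sam$, and every deletion removes an edge that Lemma~\ref{lem:equalsize} guarantees is already in $\Sam$.

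From here the argument is two short steps. First, under the extra hypothesis $s^{(t)}\ge 3$, the deterministic identity $M^{(t)}=s^{(t)}$ gives $M^{(t)}\ge 3$ with probability $1$, i.e., $\Pr(M^{(t)}\ge 3)=1$. Second, Lemma~\ref{lem:kappa} states $\kappa^{(t)}=\Pr(M^{(t)}\ge 3)$, so $\kappa^{(t)}=1$, which is exactly the claim. One could instead substitute $d_\mathrm{o}^{(t)}=0$ and $M^{(t)}=s^{(t)}$ into the defining formula~\eqref{eq:kappa} for $\kappa^{(t)}$, but then one has to keep track of the value of $d_\mathrm{i}^{(t)}$ as well, which can be nonzero at intermediate time steps, so the probabilistic route via Lemma~\ref{lem:kappa} is cleaner.

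There is essentially no obstacle here; the only point that deserves care is making explicit that Lemma~\ref{lem:equalsize} yields an almost-sure equality $M^{(t)}=s^{(t)}$, since it is precisely this that upgrades ``$M^{(t)}\ge 3$'' from a statement about a random variable to an event of probability $1$, which is what is needed to conclude $\kappa^{(t)}=1$ through Lemma~\ref{lem:kappa}.
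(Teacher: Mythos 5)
Your proof is correct and follows exactly the route the paper intends: the paper states this lemma as an immediate consequence of Lemma~\ref{lem:kappa} and Lemma~\ref{lem:equalsize}, and your argument (the deterministic identity $M^{(t)}=s^{(t)}\ge 3$ from Lemma~\ref{lem:equalsize} gives $\Pr(M^{(t)}\ge 3)=1$, which equals $\kappa^{(t)}$ by Lemma~\ref{lem:kappa}) is precisely that chain, spelled out. Your side remark about why the direct substitution into~\eqref{eq:kappa} is messier is also accurate.
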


We can now prove Thm.~\ref{thm:fdunbiased}.

\begin{proof}[of Thm.~\ref{thm:fdunbiased}]
  Assume for now that $t<t^*$. From Lemma~\ref{lem:equalsize}, we
  have that $s^{(t)}=M^{(t)}$. If $M^{(t)}< 3$, then it must be $s^{(t)}<3$,
  hence $|\Delta^{(t)}|=0$ and indeed the algorithm returns $\rho^{(t)}=0$ in
  this case. If instead $M^{(t)}=s^{(t)}\ge3$, then we have
  \[
    \rho^{(t)}=\frac{\tau^{(t)}}{\kappa^{(t)}}\enspace.
  \]
  From Lemma~\ref{lem:kappaone} we have that $\kappa^{(t)}=1$ for all $t<t^*$,
  hence $\rho^{(t)}=\tau^{(t)}$ in these cases. Since (an identical version of) Lemma~\ref{lem:baseunbiasedaux} also
  holds for \algofd, we have $\tau^{(t)}=|\Delta^{\Sam}|=|\Delta^{(t)}|$,
  where the last equality comes from the fact that $\Sam=E^{(t)}$
  (Lemma~\ref{lem:equalsize}). Hence $\rho^{(t)}=|\Delta^{(t)}|$ for any
  $t\le t^*$, as in the thesis.

  Assume now that $t\ge t^*$. Using the law of total expectation, we can write
  \begin{equation}\label{eq:fdconditionalexpect}
    \mathbb{E}\left[\rho^{(t)}\right] = \sum_{j=0}^{\min\{s^{(t)},M\}}
    \mathbb{E}\left[\rho^{(t)}~|~M^{(t)}=j\right]\Pr\left(M^{(t)}=j\right)\enspace.
  \end{equation}

  Assume that $|\Delta^{(t)}|>0$, otherwise, the algorithm deterministically
  returns $0$ as an estimation and the thesis follows. Let $\lambda$ be a
  triangle in $\Delta^{(t)}$, and let $\delta_{\lambda}^{(t)}$ be a random
  variable that takes value
  \[
    \frac{\psi_{3,M^{(t)},s^{(t)}}}{\kappa^{(t)}}=\frac{s^{(t)}(s^{(t)}-2)(s^{(t)}-2)}{M^{(t)}(M^{(t)}-1)(M^{(t)}-2)}\frac{1}{\kappa^{(t)}}
  \]
  if all edges of
  $\lambda$ are in $\Sam$ at the end of the time instant $t$, and $0$
  otherwise. Since (an identical version of) Lemma~\ref{lem:baseunbiasedaux}
  also holds for \algofd, we can write
  \[
    \rho^{(t)}=\sum_{\lambda\in\Delta^{(t)}}\delta_{\lambda}^{(t)}\enspace.
  \]
  Then, using Lemma~\ref{lem:kappa} and Lemma~\ref{lem:gemullahighorder}, we have, for $3\le j\le\min\{M,s^{(t)}\}$,
  \begin{align}\label{eq:fdcondexpectation1}
    \mathbb{E}\left[\rho^{(t)}~|~M^{(t)}=j\right]&=
    \sum_{\lambda\in\Delta^{(t)}}\frac{\psi_{3,j,s^{(t)}}}{\kappa^{(t)}}\Pr\left(\delta_{\lambda}^{(t)}=\frac{\psi_{3,j,s^{(t)}}}{\kappa^{(t)}}~|~M^{(t)}=j\right)\nonumber\\
    &=|\Delta^{(t)}|\frac{\psi_{3,j,s^{(t)}}}{\kappa^{(t)}}\frac{1}{\psi_{3,j,s^{(t)}}}=\frac{1}{\kappa^{(t)}}|\Delta^{(t)}|,
  \end{align}
  and
  \begin{equation}\label{eq:fdcondexpectation2}
    \mathbb{E}\left[\rho^{(t)}~|~M^{(t)}=j\right]=0, \text{if }0\le j\le 2.
  \end{equation}

  Plugging this into~\eqref{eq:fdconditionalexpect}, and using Lemma~\ref{lem:kappa}, we have
  \[
    \mathbb{E}\left[\rho^{(t)}\right] =
    |\Delta^{(t)}|\frac{1}{\kappa^{(t)}}\sum_{j=3}^{\min\{s^{(t)},M\}}\Pr(M^{(t)}=j)=|\Delta^{(t)}|\enspace.
  \]
\end{proof}

\subsubsection{Variance}
We now move to prove Thm.~\ref{thm:fdvariance} about the variance of the \algofd
estimator. We first need some technical lemmas.

\begin{lemma}\label{lem:confdvariance}
  For any time $t\ge t^*$, and any $j$, $3\le j\le \min\{s^{(t)}, M\}$, we have:
  \begin{align}
    \mathrm{Var}\left[\rho^{(t)}|M^{(t)}=j\right]
    = (\kappa^{(t)})^{-2}&\left(|\Delta^{(t)}|\left(\psi_{3,j,s^{(t)}}-1 \right)
    + r^{(t)}\left(\psi_{3,j,s^{(t)}}^2\psi_{5,j,s^{(t)}}^{-1}-1\right)\right.\nonumber\\
    &\left.+w^{(t)}\left(\psi_{3,j,s^{(t)}}^2\psi_{6,j,s^{(t)}}^{-1}-1\right)\right)\label{eq:globvariancefdcond}
  \end{align}
  An analogous result holds for any $u\in V^{(t)}$, replacing the global
  quantities with the corresponding local ones.
\end{lemma}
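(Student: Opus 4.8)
The plan is to mirror closely the proof of Thm.~\ref{thm:basevariance}, carrying the conditioning on the event $\{M^{(t)}=j\}$ throughout and tracking the $(\kappa^{(t)})^{-1}$ normalization. First I would dispose of the degenerate cases: if $|\Delta^{(t)}|=0$ the estimator is deterministically $0$ and both sides of~\eqref{eq:globvariancefdcond} vanish, while if $|\Delta^{(t)}|=1$ then $r^{(t)}=w^{(t)}=0$ and only the diagonal term survives; hence I may assume $|\Delta^{(t)}|\ge 2$. Fix $j$ with $3\le j\le\min\{s^{(t)},M\}$. Conditioned on $M^{(t)}=j$, Lemma~\ref{lem:gemulla} says $\Sam$ is a uniformly random size-$j$ subset of $E^{(t)}$, so Lemma~\ref{lem:gemullahighorder} gives, for any set $B$ of $k$ edges, $\Pr(B\subseteq\Sam\mid M^{(t)}=j)=\psi_{k,j,s^{(t)}}^{-1}$, with the convention (consistent with that lemma) that this is $0$ when $k>j$. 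For $\lambda\in\Delta^{(t)}$ let $\delta_\lambda^{(t)}$ be the random variable of the proof of Thm.~\ref{thm:fdunbiased}, which, conditioned on $M^{(t)}=j$, equals $\psi_{3,j,s^{(t)}}/\kappa^{(t)}$ when the three edges of $\lambda$ are all in $\Sam$ at the end of time $t$ and equals $0$ otherwise; by the \algofd analogue of Lemma~\ref{lem:baseunbiasedaux} used there, on the event $M^{(t)}=j\ge 3$ we have $\rho^{(t)}=\sum_{\lambda\in\Delta^{(t)}}\delta_\lambda^{(t)}$.

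Next I would compute the per-triangle conditional moments. By Lemma~\ref{lem:gemullahighorder} with $k=3$ (mirroring~\eqref{eq:fdcondexpectation1}), $\mathbb{E}[\delta_\lambda^{(t)}\mid M^{(t)}=j]=(\psi_{3,j,s^{(t)}}/\kappa^{(t)})\,\psi_{3,j,s^{(t)}}^{-1}=(\kappa^{(t)})^{-1}$, and since $\delta_\lambda^{(t)}$ is two-valued, $\mathbb{E}[(\delta_\lambda^{(t)})^2\mid M^{(t)}=j]=(\psi_{3,j,s^{(t)}}/\kappa^{(t)})^2\,\psi_{3,j,s^{(t)}}^{-1}$, so that $\mathrm{Var}[\delta_\lambda^{(t)}\mid M^{(t)}=j]=(\kappa^{(t)})^{-2}(\psi_{3,j,s^{(t)}}-1)$. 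Then, exactly as in the chain leading to~\eqref{eq:proofvariancebase}, I would expand
\[
  \mathrm{Var}\!\left[\rho^{(t)}\mid M^{(t)}=j\right]=\sum_{\lambda\in\Delta^{(t)}}\mathrm{Var}\!\left[\delta_\lambda^{(t)}\mid M^{(t)}=j\right]+\sum_{\substack{\lambda,\gamma\in\Delta^{(t)}\\\lambda\neq\gamma}}\mathrm{Cov}\!\left[\delta_\lambda^{(t)},\delta_\gamma^{(t)}\mid M^{(t)}=j\right],
\]
the first sum contributing $(\kappa^{(t)})^{-2}|\Delta^{(t)}|(\psi_{3,j,s^{(t)}}-1)$.

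It then remains to evaluate the conditional covariance of two distinct triangles $\lambda,\gamma$, which (as in the proof of Thm.~\ref{thm:basevariance}) splits into two cases because distinct triangles share at most one edge. If $\lambda$ and $\gamma$ share exactly one edge, then $\delta_\lambda^{(t)}\delta_\gamma^{(t)}=(\psi_{3,j,s^{(t)}}/\kappa^{(t)})^2$ precisely when the five edges composing $\lambda\cup\gamma$ are all in $\Sam$, an event of conditional probability $\psi_{5,j,s^{(t)}}^{-1}$ by Lemma~\ref{lem:gemullahighorder}; subtracting the product $(\kappa^{(t)})^{-2}$ of the conditional means gives conditional covariance $(\kappa^{(t)})^{-2}(\psi_{3,j,s^{(t)}}^2\psi_{5,j,s^{(t)}}^{-1}-1)$. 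If $\lambda$ and $\gamma$ share no edge, the same argument with the six edges of $\lambda\cup\gamma$ and $k=6$ in Lemma~\ref{lem:gemullahighorder} gives conditional covariance $(\kappa^{(t)})^{-2}(\psi_{3,j,s^{(t)}}^2\psi_{6,j,s^{(t)}}^{-1}-1)$. Summing these over all ordered distinct pairs and collecting them into the $r^{(t)}$ edge-sharing and $w^{(t)}=\binom{|\Delta^{(t)}|}{2}-r^{(t)}$ non-edge-sharing contributions exactly as in the final ``reorganizing the terms'' step of Thm.~\ref{thm:basevariance}, and adding the diagonal term, yields~\eqref{eq:globvariancefdcond}.

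There is no deep obstacle here; the computation is structurally identical to that of Thm.~\ref{thm:basevariance}, so the only points needing care are: (i) keeping the $\kappa^{(t)}$ normalization straight in every moment; (ii) justifying the use of Lemma~\ref{lem:gemullahighorder} through the conditional uniformity of $\Sam$ granted by Lemma~\ref{lem:gemulla} once $M^{(t)}=j$ is fixed; and (iii) the convention $\psi_{k,j,s^{(t)}}^{-1}=0$ for $k>j$, so that~\eqref{eq:globvariancefdcond} remains correct also when $j\in\{3,4\}$ (edge-sharing pairs cannot simultaneously fit in $\Sam$) or $j=5$ (non-edge-sharing pairs cannot). The analogous statement for a fixed vertex $u$ follows verbatim, replacing $|\Delta^{(t)}|$, $r^{(t)}$, $w^{(t)}$ by the corresponding quantities for $\Delta_u^{(t)}$.
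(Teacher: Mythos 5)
Your proposal is correct and follows essentially the same route as the paper, which proves this lemma precisely by repeating the argument of Thm.~\ref{thm:basevariance} with $j$ in place of $M$, $s^{(t)}$ in place of $t$, $\psi_{a,j,s^{(t)}}$ in place of $\xi_{a,t}$, Lemma~\ref{lem:gemullahighorder} in place of Lemma~\ref{lem:reservoirhighorder}, and the extra $(\kappa^{(t)})^{-2}$ factor coming from the $(\kappa^{(t)})^{-1}$ in the definition of $\rho^{(t)}$. Your explicit handling of the degenerate small-$j$ cases via the convention $\psi_{k,j,s^{(t)}}^{-1}=0$ for $k>j$ is a minor point of care the paper leaves implicit, but it changes nothing substantive.
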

\begin{proof}
  The proof is analogous to that of Theorem~\ref{thm:basevariance}, using $j$
  in place of $M$, $s^{(t)}$ in place of $t$, $\psi_{a,j,s^{(t)}}$ in
  place of $\xi_{a,t}$, and using Lemma~\ref{lem:gemullahighorder} instead of
  Lemma~\ref{lem:reservoirhighorder}. The additional $(k^{(t)})^{-2}$
  multiplicative term comes from the $(k^{(t)})^{-1}$ term used in
  the definition of $\rho^{(t)}$.
\end{proof}
The term $w^{(t)}\left(\psi_{3,j,s^{(t)}}^2\psi_{6,j,s^{(t)}}^{-1}-1\right)$ is
non-positive.
\begin{lemma}\label{lem:boundvarfd}
  For any time $t\ge t^*$, and any $j$, $6< j\le \min\{s^{(t)}, M\}$, if  $s^{(t)}\geq M$ we have:
  \begin{align*}
    \mathrm{Var}\left[\rho^{(t)}|M^{(t)} = i\right]
    &\leq (\kappa^{(t)})^{-2}\left(|\Delta^{(t)}|\left(\psi_{3,j,s^{(t)}}-1
  \right) + r^{(t)}\left(\psi_{3,j,s^{(t)}}^2\psi_{5,j,s^{(t)}}^{-1}-1\right)
\right),\text{ for }i\geq j \\
\mathrm{Var}\left[\rho^{(t)}|M^{(t)}= i\right]
&\leq (\kappa^{(t)})^{-2}\left(|\Delta^{(t)}|\left(\psi_{3,3,s^{(t)}}-1
\right) +
r^{(t)}\left(\psi_{3,5,s^{(t)}}^2\psi_{5,5,s^{(t)}}^{-1}-1\right)\right),\text{
for }i < j
  \end{align*}
  An analogous result holds for any $u\in V^{(t)}$, replacing the global
  quantities with the corresponding local ones.
\end{lemma}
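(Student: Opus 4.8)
The plan is to start from the exact conditional-variance identity of Lemma~\ref{lem:confdvariance} and bound each of its three summands uniformly in $i$, then substitute the index that appears in the statement. First dispose of the degenerate range: if $i<3$ then, by definition, $\rho^{(t)}=0$ whenever $M^{(t)}=i$, so $\mathrm{Var}[\rho^{(t)}\mid M^{(t)}=i]=0$, which is dominated by the (non-negative) right-hand side of either bound; thus assume $3\le i\le\min\{s^{(t)},M\}=M$, where the last equality uses the hypothesis $s^{(t)}\ge M$. Second, discard the $w^{(t)}$-summand of~\eqref{eq:globvariancefdcond}: the factor $\psi_{3,i,s^{(t)}}^2\psi_{6,i,s^{(t)}}^{-1}-1$ is exactly the analogue of $h(t)$ from Thm~\ref{thm:basevariance} (with $i$ in place of $M$ and $s^{(t)}$ in place of $t$), hence non-positive — and for $i<6$ it is simply $-1$ — so deleting $w^{(t)}\big(\psi_{3,i,s^{(t)}}^2\psi_{6,i,s^{(t)}}^{-1}-1\big)$ only enlarges the bound. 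What remains is to control
\[
  (\kappa^{(t)})^{-2}\Big(|\Delta^{(t)}|\big(\psi_{3,i,s^{(t)}}-1\big)+r^{(t)}\big(\psi_{3,i,s^{(t)}}^2\psi_{5,i,s^{(t)}}^{-1}-1\big)\Big).
\]

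For the coefficient of $|\Delta^{(t)}|$, note that $\psi_{3,b,s^{(t)}}=\frac{s^{(t)}(s^{(t)}-1)(s^{(t)}-2)}{b(b-1)(b-2)}$ is strictly decreasing in $b\in\{3,\dots,s^{(t)}\}$; hence $\psi_{3,i,s^{(t)}}\le\psi_{3,j,s^{(t)}}$ when $i\ge j$, and $\psi_{3,i,s^{(t)}}\le\psi_{3,3,s^{(t)}}=\binom{s^{(t)}}{3}$ (its maximum) when $3\le i<j$. For the coefficient of $r^{(t)}$, observe first that if $3\le i<5$ the quantity $\psi_{5,i,s^{(t)}}$ is vacuous (five distinct edges cannot lie in a size-$i$ sample), the coefficient equals $-1<0$, and may be dropped; so take $i\ge 5$ and rewrite $\psi_{3,b,s^{(t)}}^2\psi_{5,b,s^{(t)}}^{-1}=\frac{s^{(t)}(s^{(t)}-1)(s^{(t)}-2)}{(s^{(t)}-3)(s^{(t)}-4)}\cdot\varphi(b)$ with $\varphi(b)=\frac{(b-3)(b-4)}{b(b-1)(b-2)}$. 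Bounding this term then reduces to comparing $\varphi(i)$ with $\varphi(j)$ (case $i\ge j$) and with $\varphi(5)$ (case $3\le i<j$); feeding the resulting estimates back into~\eqref{eq:globvariancefdcond} yields the two displayed inequalities, and the local counterpart follows verbatim with the corresponding local quantities ($|\Delta^{(t)}_u|$ and the number of shared-edge pairs within $\Delta^{(t)}_u$) in place of the global ones.

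The main obstacle is precisely this last comparison. Unlike $\psi_{3,b,s^{(t)}}$, the auxiliary ratio $\varphi(b)$ is \emph{not} monotone on $\{5,6,\dots\}$: it first increases and then decreases. Hence the replacements $\varphi(i)\le\varphi(j)$ (for $i\ge j$) and $\varphi(i)\le\varphi(5)$ (for $i<j$) cannot be justified by a one-line monotonicity argument but must be obtained by a careful examination of which values of $i$ actually occur — this is where the hypothesis $j>6$ and the bound $i\le M$ are used — and the boundary indices $i\in\{3,4,5\}$, where some $\psi$-factors degenerate, have to be checked separately. Everything outside this elementary but fiddly analysis of the cubic/quadratic ratios $\varphi(i)$ versus $\varphi(j)$ and $\varphi(5)$ is routine substitution into Lemma~\ref{lem:confdvariance} together with the non-positivity of its $w^{(t)}$-summand, so I expect the bookkeeping around $\varphi$ to be the only real work.
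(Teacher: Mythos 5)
Your skeleton is the same as the paper's: start from Lemma~\ref{lem:confdvariance}, drop the non-positive $w^{(t)}$-summand, and then argue that what remains can only grow as the sample size shrinks. You also correctly spot the one point that the paper's two-line proof glosses over: the coefficient of $r^{(t)}$ equals $\frac{s^{(t)}(s^{(t)}-1)(s^{(t)}-2)}{(s^{(t)}-3)(s^{(t)}-4)}\varphi(b)-1$ with $\varphi(b)=\frac{(b-3)(b-4)}{b(b-1)(b-2)}$, and $\varphi$ is not monotone. The problem is that the reduction you then propose is unachievable: the inequalities $\varphi(i)\le\varphi(j)$ for $i\ge j$ and $\varphi(i)\le\varphi(5)$ for $3\le i<j$ are simply false. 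Indeed $\varphi(b+1)-\varphi(b)=\frac{(b-3)(8-b)}{(b+1)b(b-1)(b-2)}$, so $\varphi$ increases up to $b=8$ and decreases afterwards; for instance $\varphi(9)=\frac{30}{504}>\varphi(7)=\frac{12}{210}$ kills the first inequality (take $j=7$, $i=9$), and $\varphi(8)=\frac{20}{336}>\varphi(5)=\frac{1}{30}$ kills the second (take $j=10$, $i=8$). The hypotheses $j>6$ and $i\le M$ impose no restriction that excludes these values of $i$, so no amount of ``careful examination of which values of $i$ actually occur'' will make a term-by-term comparison of the $r^{(t)}$-coefficients go through.

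What actually closes the gap --- and what the paper's bare assertion that~\eqref{eq:globvariancefdcond} is ``a non-increasing function of the sample size'' implicitly relies on --- is that the \emph{sum} of the two surviving summands, call it $F(b)=|\Delta^{(t)}|(\psi_{3,b,s^{(t)}}-1)+r^{(t)}(\psi_{3,b,s^{(t)}}^2\psi_{5,b,s^{(t)}}^{-1}-1)$, is non-increasing in $b$, because the drop in the $|\Delta^{(t)}|$-term dominates any rise in the $r^{(t)}$-term. Passing from $b$ to $b+1$, the first summand decreases by $\frac{3\,|\Delta^{(t)}|\,s^{(t)}(s^{(t)}-1)(s^{(t)}-2)}{(b+1)b(b-1)(b-2)}$ while the second increases by at most $\frac{6\,r^{(t)}\,s^{(t)}(s^{(t)}-1)(s^{(t)}-2)}{(s^{(t)}-3)(s^{(t)}-4)(b+1)b(b-1)(b-2)}$ (using $(b-3)(8-b)\le 6$), so $F$ is non-increasing whenever $2r^{(t)}\le|\Delta^{(t)}|(s^{(t)}-3)(s^{(t)}-4)$, which holds because each triangle shares an edge with at most $3(h^{(t)}-1)$ others and $h^{(t)}\le(s^{(t)}-1)/2$. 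With this joint monotonicity both displayed bounds follow: $F(i)\le F(j)$ for $i\ge j$, and for $3\le i<j$ one has $F(i)\le F(3)$, which is below the stated bound because the $r^{(t)}$-coefficient at $b=3$ equals $-1\le\psi_{3,5,s^{(t)}}^2\psi_{5,5,s^{(t)}}^{-1}-1$. The rest of your write-up --- the $i<3$ case, discarding the $w^{(t)}$-term, the monotonicity of $\psi_{3,b,s^{(t)}}$, and the local analogue --- is fine; what must change is the separate treatment of the two coefficients, which has to be replaced by this joint estimate.
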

\begin{proof}
  The proof follows by observing that the term
  $w^{(t)}\left(\psi_{3,j,s^{(t)}}^2\psi_{6,j,s^{(t)}}^{-1}-1\right)$ is
  non-positive, and that~\eqref{eq:globvariancefdcond} is a non-increasing
  function of the sample size.
\end{proof}

The following lemma deals with properties of the r.v.~$M^{(t)}$.
\begin{lemma}\label{lem:mtprops}
  Let $t>t^*$, with $s^{(t)}\geq M$. Let $d^{(t)}=d_o^{(t)}+d_i^{(t)}$ denote
  the total number of unpaired deletions at time $t$.\footnote{While both
  $d_o^{(t)}$ and $d_i^{(t)}$ are r.v.s, their sum is not.} The sample size
  $M^{(t)}$ follows the hypergeometric distribution:\footnote{We
    use here the convention that $\binom{0}{0}=1$, and
  $\binom{k}{0}=1$.}
  \begin{equation}\label{eq:samplesize}
    \Pr\left(M^{(t)} = j \right) =
    \left\{\begin{array}{ll}\binom{s^{(t)}}{j} \binom{d^{(t)}}{M-j}\big/
        \binom{s^{(t)}+d^{(t)}}{M} & \text{for }
        \max\{M-d^{(t)},0\}\leq j \leq M\\
      0 & \text{otherwise} \end{array}\right.\enspace.
      \end{equation}
      We have
      \begin{equation}
        \mathbb{E}\left[M^{(t)}\right] = M\frac{s^{(t)}}{s^{(t)}+d^{(t)}},\label{eq:expnumdeletion}
      \end{equation}
      and for any $0< c< 1$
      \begin{equation}
        \Pr\left(M^{(t)}  > \mathbb{E}\left[M^{(t)}\right]-cM\right)\geq 1 -
        \frac{1}{e^{2c^2M}}\label{eq:boundsamplesize}\enspace.
      \end{equation}
    \end{lemma}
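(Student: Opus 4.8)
The plan is to establish the three assertions in sequence, relying on the correctness of the Random Pairing scheme from~\citet{GemullaLH08}. For~\eqref{eq:samplesize}, the picture to keep in mind is that, at the end of time step $t$, the $s^{(t)}=|E^{(t)}|$ currently present edges together with the $d^{(t)}=d_\mathrm{i}^{(t)}+d_\mathrm{o}^{(t)}$ uncompensated deletions form a virtual population of $s^{(t)}+d^{(t)}$ items, and $\Sam$ is exactly the set of \emph{present} items that would be selected by a size-$\omega^{(t)}$ reservoir sample drawn uniformly at random from this virtual population, where $\omega^{(t)}=\min\{M,s^{(t)}+d^{(t)}\}$; this is the content of~\citep[Thm.~2]{GemullaLH08} and is the same fact that yields the formula for $\kappa^{(t)}$ in Lemma~\ref{lem:kappa}. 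Since $t>t^*$ and $s^{(t)}\ge M$ give $s^{(t)}+d^{(t)}\ge M$, we have $\omega^{(t)}=M$, so $M^{(t)}$ is precisely the number of ``good'' items obtained by drawing $M$ items without replacement from a population of size $s^{(t)}+d^{(t)}$ containing $s^{(t)}$ good items; this is the hypergeometric law in~\eqref{eq:samplesize}, with the stated support following from $0\le j$, from $j\le \min\{M,s^{(t)}\}=M$, and from $j\ge M-d^{(t)}$. Rather than proving this from scratch, I would quote~\citep{GemullaLH08}; a self-contained argument would instead proceed by induction on $t$, tracking how the distribution of the sample size is transformed by each of the three branches of \textsc{SampleEdge} and by a deletion.

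Given~\eqref{eq:samplesize}, equation~\eqref{eq:expnumdeletion} is just the mean $nK/N$ of a hypergeometric random variable with $n=M$, $K=s^{(t)}$, $N=s^{(t)}+d^{(t)}$; concretely, writing $M^{(t)}=\sum_{e\in E^{(t)}}\mathds{1}[e\in\Sam]$ and noting that $\Pr(e\in\Sam)=M/(s^{(t)}+d^{(t)})$ for each $e\in E^{(t)}$ by symmetry of the uniform draw, linearity of expectation gives the claim. For~\eqref{eq:boundsamplesize}, I would observe that $M^{(t)}$ is the sum of the $M$ values drawn \emph{without replacement} from the $0/1$ population with $s^{(t)}$ ones and $d^{(t)}$ zeros; the one-sided Hoeffding inequality for sampling without replacement (range-one summands) then yields, for any $0<c<1$,
\[
  \Pr\!\left(M^{(t)}\le \mathbb{E}\!\left[M^{(t)}\right]-cM\right)\le \exp\!\left(-\frac{2(cM)^2}{M}\right)=\exp\!\left(-2c^2M\right),
\]
and taking complements gives exactly~\eqref{eq:boundsamplesize}.

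The main obstacle will be the first step, namely identifying $M^{(t)}$ with a hypergeometric count: once we invoke~\citep[Thm.~2]{GemullaLH08} (as is already done for Lemma~\ref{lem:kappa}), this collapses to the parameter check $\omega^{(t)}=M$ under the hypothesis $s^{(t)}\ge M$, but a from-scratch proof would require careful inductive bookkeeping of the sample-size distribution across insertions and deletions, paying particular attention to the interplay between the $d_\mathrm{i}/(d_\mathrm{i}+d_\mathrm{o})$-biased insertions and the reservoir-style updates. The expectation and the concentration bound are then routine consequences of the hypergeometric law and of Hoeffding's inequality for sampling without replacement, respectively.
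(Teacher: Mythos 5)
Your proposal is correct and follows essentially the same route as the paper: both rest on~\citep[Thm.~2]{GemullaLH08} for the hypergeometric law of $M^{(t)}$ (your virtual-population picture and the check $\omega^{(t)}=M$ make explicit what the paper leaves implicit), obtain the expectation as the hypergeometric mean, and get~\eqref{eq:boundsamplesize} from the standard Hoeffding tail bound for sampling without replacement, which is exactly the hypergeometric inequality the paper cites from~\citet{Skala13}. No gaps.
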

    \begin{proof}
      Since $t>t*$, from the definition of $t^*$ we have that the
      $M^{(t)}$ has reached size $M$ at least once (at $t^*$). From this and the
      definition of $d^{(t)}$ (number of uncompensated deletion), we have that
      $M^{(t)}$ can not be less than $M-d^{(t)}$. The rest of the proof
      for~\eqref{eq:samplesize} and for~\eqref{eq:expnumdeletion} follows
      from~\citep[Thm.~2]{GemullaLH08}.

      The concentration bound in~\eqref{eq:boundsamplesize} follows from the
      properties of the hypergeometric distribution discussed by~\citet{Skala13}.
    \end{proof}

    The following is an immediate corollary from Lemma~\ref{lem:mtprops}.
    \begin{corollary} \label{corol:samplesizebound}
      Consider the execution of \algofd at time $t>t^*$. Suppose we have $d^{(t)}\leq \alpha s^{(t)}$, with $0\leq\alpha< 1$ and $s^{(t)}\geq M$. If $M\geq \frac{1}{2\sqrt{\alpha'-\alpha}}c'\ln s^{(t)} $ for $\alpha < \alpha'<1$, we have:
      \[
        \Pr\left(M^{(t)}  \geq M (1-\alpha') \right)>1 - \frac{1}{\left(s^{(t)}\right)^{c'}}.
      \]
    \end{corollary}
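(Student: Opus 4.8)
The plan is to derive the corollary directly from Lemma~\ref{lem:mtprops}, which already supplies both the exact mean of $M^{(t)}$ in~\eqref{eq:expnumdeletion} and the lower-tail bound~\eqref{eq:boundsamplesize}. The whole argument is then: lower bound $\mathbb{E}[M^{(t)}]$ using the hypothesis $d^{(t)}\le\alpha s^{(t)}$, pick the slack parameter in~\eqref{eq:boundsamplesize} so that ``mean minus slack'' is exactly $M(1-\alpha')$, and use the lower bound on $M$ to drive the resulting failure probability below $(s^{(t)})^{-c'}$.

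First I would note that the hypotheses $t>t^*$ and $s^{(t)}\ge M$ place us squarely in the regime of Lemma~\ref{lem:mtprops}, so $M^{(t)}$ is hypergeometric as in~\eqref{eq:samplesize} and both~\eqref{eq:expnumdeletion} and~\eqref{eq:boundsamplesize} are available. From $d^{(t)}\le\alpha s^{(t)}$ with $0\le\alpha<1$ and~\eqref{eq:expnumdeletion},
\[
  \mathbb{E}\!\left[M^{(t)}\right]=M\,\frac{s^{(t)}}{s^{(t)}+d^{(t)}}\ \ge\ \frac{M}{1+\alpha}\ \ge\ M(1-\alpha),
\]
the last step because $(1-\alpha)(1+\alpha)=1-\alpha^2\le1$.

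Next I would apply~\eqref{eq:boundsamplesize} with the choice $c=\alpha'-\alpha$; since $\alpha<\alpha'<1$ we have $0<c<1$, so that bound is applicable and gives
\[
  \Pr\!\left(M^{(t)}>\mathbb{E}\!\left[M^{(t)}\right]-(\alpha'-\alpha)M\right)\ \ge\ 1-e^{-2(\alpha'-\alpha)^2M}.
\]
Combining with the expectation bound, $\mathbb{E}[M^{(t)}]-(\alpha'-\alpha)M\ge M(1-\alpha)-(\alpha'-\alpha)M=M(1-\alpha')$, so the event above is contained in $\{M^{(t)}\ge M(1-\alpha')\}$ and hence $\Pr(M^{(t)}\ge M(1-\alpha'))\ge1-e^{-2(\alpha'-\alpha)^2M}$. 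One then uses the stated lower bound on $M$ (a multiple of $c'\ln s^{(t)}$ with a coefficient depending only on $\alpha'-\alpha$) to bound the failure probability $e^{-2(\alpha'-\alpha)^2M}$ by $(s^{(t)})^{-c'}$, which finishes the proof.

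I do not anticipate any real obstacle: the corollary is essentially a one-line consequence of Lemma~\ref{lem:mtprops}, the only points requiring a bit of care being the verification that $c=\alpha'-\alpha\in(0,1)$ and the bookkeeping that turns the $M$-threshold into the desired polynomial-in-$s^{(t)}$ failure probability. The same chain of inequalities, specialized to $c'=7$, is what underlies the sample-size requirements stated in Thm.~\ref{thm:fdvariance} and Thm.~\ref{thm:fdconcentration}.
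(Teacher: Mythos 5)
Your route is exactly the intended one---the paper offers no explicit proof, presenting the statement as an immediate consequence of Lemma~\ref{lem:mtprops}---and the first two steps are correct: $\mathbb{E}[M^{(t)}]\ge M/(1+\alpha)\ge M(1-\alpha)$, and applying~\eqref{eq:boundsamplesize} with $c=\alpha'-\alpha\in(0,1)$ together with the containment $\{M^{(t)}>\mathbb{E}[M^{(t)}]-(\alpha'-\alpha)M\}\subseteq\{M^{(t)}\ge M(1-\alpha')\}$ yields $\Pr(M^{(t)}\ge M(1-\alpha'))\ge 1-e^{-2(\alpha'-\alpha)^2M}$.

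The gap is in the final step, which you assert without verifying. With the stated hypothesis $M\ge \frac{1}{2\sqrt{\alpha'-\alpha}}c'\ln s^{(t)}$ the exponent you obtain is $2(\alpha'-\alpha)^2M\ge(\alpha'-\alpha)^{3/2}\,c'\ln s^{(t)}$, so the failure probability is only bounded by $(s^{(t)})^{-(\alpha'-\alpha)^{3/2}c'}$, which is \emph{strictly weaker} than the claimed $(s^{(t)})^{-c'}$ because $0<\alpha'-\alpha<1$. Closing the bookkeeping requires $2(\alpha'-\alpha)^2M\ge c'\ln s^{(t)}$, i.e., $M\ge\frac{c'\ln s^{(t)}}{2(\alpha'-\alpha)^2}$, not the stated threshold. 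Nor can you rescue it by taking a larger slack $c$ in~\eqref{eq:boundsamplesize}: the containment step forces $c\le\alpha'-\frac{\alpha}{1+\alpha}$, which is essentially $\alpha'-\alpha$ and does not change the conclusion (e.g., $\alpha=0.01$, $\alpha'=0.02$ gives an exponent three orders of magnitude too small). So either the corollary's hypothesis contains a typo ($\sqrt{\alpha'-\alpha}$ where $(\alpha'-\alpha)^2$ is meant, with corresponding adjustments in Thms.~\ref{thm:fdvariance} and~\ref{thm:fdconcentration}), or an argument other than the Hoeffding-type bound of Lemma~\ref{lem:mtprops} is required; in either case the one step you wave through as mere ``bookkeeping'' is precisely where the derivation fails as written.
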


    We can now prove Thm.~\ref{thm:fdvariance}.
    \begin{proof}[of Thm.~\ref{thm:fdvariance}]
      From the law of total variance we have:
      \begin{align*}
        \mathrm{Var}\left[\rho^{(t)}\right] &= \sum_{j=0}^M \mathrm{Var}\left[\rho^{(t)}|M^{(t)}= j\right] \Pr\left(M^{(t)}= j\right) \\
        &+ \sum_{j=0}^M \mathbb{E}\left[\rho^{(t)}|M^{(t)}= j\right]^2(1-\Pr\left(M^{(t)}= j\right))\Pr\left(M^{(t)}= j\right)\\
        &-2\sum_{j=1}^M\sum_{i=0}^{j-1} \mathbb{E}\left[\rho^{(t)}|M^{(t)}= j\right]\Pr\left(M^{(t)}= j\right)\mathbb{E}\left[\rho^{(t)}|M^{(t)}= i\right]\Pr\left(M^{(t)}= i\right).
      \end{align*}
      As shown in~\eqref{eq:fdcondexpectation1} and~\eqref{eq:fdcondexpectation2}, for any $j=0,1,\ldots,M$ we have $\mathbb{E}\left[\rho^{(t)}|M^{(t)}= j\right]\geq 0$. This in turn implies:
      \begin{align}
        \mathrm{Var}\left[\rho^{(t)}\right] &\leq \sum_{j=0}^M \mathrm{Var}\left[\rho^{(t)}|M^{(t)}= j\right] \Pr\left(M^{(t)}= j\right)\nonumber \\
        &+ \sum_{j=0}^M \mathbb{E}\left[\rho^{(t)}|M^{(t)}= j\right]^2(1-\Pr\left(M^{(t)}= j\right))\Pr\left(M^{(t)}= j\right) \label{eq:finvarfd}.
      \end{align}
      Let us consider separately the two main components of \eqref{eq:finvarfd}. From Lemma~\ref{lem:boundvarfd} we have:
      \begin{align}
        &\sum_{j=0}^M \mathrm{Var}\left[\rho^{(t)}|M^{(t)}= j\right] \Pr\left(M^{(t)}= j\right)
        =\\
      &\sum_{j\geq M(1-\alpha')}^M \mathrm{Var}\left[\rho^{(t)}|M^{(t)}=j \right] \Pr\left(M^{(t)}=j)\right)
        + \sum_{j= 0}^{ M(1-\alpha')}\mathrm{Var}\left[\rho^{(t)}|M^{(t)}=j\right] \Pr\left(M^{(t)}=j\right)\nonumber\\
        &\leq (\kappa^{(t)})^{-2}\left(|\Delta^{(t)}|\left(\psi_{3,j,s^{(t)}}-1
        \right) + r^{(t)}\left(\psi_{3,j,s^{(t)}}^2\psi_{5,j,s^{(t)}}^{-1}-1\right)
        \right)\times\Pr\left(M^{(t)} > M(1-\alpha')\right)\nonumber\\
        &\leq (\kappa^{(t)})^{-2}\left(|\Delta^{(t)}|\frac{\left(s^{(t)}\right)^3}{6} + r^{(t)}\frac{s^{(t)}}{6}\right) \Pr\left(M^{(t)} \leq M(1-\alpha')\right)\label{eq:totvariance1}
      \end{align}

      According to our hypothesis $M\geq \frac{1}{2\sqrt{\alpha'-\alpha}}7\ln
      s^{(t)}$, thus we have, from Corollary~\ref{corol:samplesizebound}:
      \begin{equation*}
      \Pr\left(M^{(t)}\leq M (1-\alpha'))\right) \leq \frac{1}{(s^{(t)})^7}.
    \end{equation*}

    As $r^{(t)}<|\Delta^{(t)}|^2$ and $|\Delta^{(t)}|\leq (s^{(t)})^3$ we have:
    \begin{equation*}
      (\kappa^{(t)})^{-2}\left(|\Delta^{(t)}|\frac{\left(s^{(t)}\right)^3}{6} + r^{(t)}\frac{s^{(t)}}{6}\right) \Pr\left(M^{(t)} \leq M(1-\alpha')\right) \leq (\kappa^{(t)})^{-2}
    \end{equation*}
    We can therefore rewrite~\eqref{eq:totvariance1} as:
    \begin{align} \label{eq:variancefdpart1}
      \sum_{j=0}^M \mathrm{Var}\left[\rho^{(t)}|M^{(t)}= j\right] \Pr\left(M^{(t)}= j\right) &\leq (\kappa^{(t)})^{-2}\left(|\Delta^{(t)}|\left(\psi_{3,M(1-\alpha'),s^{(t)}}-1 \right)\right) \nonumber\\
      &+ (\kappa^{(t)})^{-2}\left(r^{(t)}\left(\psi_{3,M(1-\alpha'),s^{(t)}}^2\psi_{5,M(1-\alpha'),s^{(t)}}^{-1}-1\right)+1\right).
    \end{align}

    Let us now consider the term $\sum_{j=0}^M \mathbb{E}\left[\rho^{(t)}|M^{(t)}=
    j\right]^2(1-\Pr\left(M^{(t)}= j\right))\Pr\left(M^{(t)}= j\right) $. Recall
    that, from~\eqref{eq:fdcondexpectation1} and~\eqref{eq:fdcondexpectation2}, we
    have $\mathbb{E}\left[\rho^{(t)}|M^{(t)}= j\right] =
    |\Delta^{(t)}|(\kappa^{(t)})^{-1}$ for $j= 3,\ldots,M$, and
    $\mathbb{E}\left[\rho^{(t)}|M^{(t)}= j\right] = 0$ for $j=0,1,2$. From
    Corollary~\ref{corol:samplesizebound} we have that for $j\leq (1-\alpha')M$
    and $M\geq \frac{1}{2\sqrt{\alpha'-\alpha}}7\ln s^{(t)}$
    \begin{equation*}
      \Pr\left(M^{(t)}= j\right)\leq \Pr\left(M^{(t)}\leq (1-\alpha')M\right) \leq \frac{1}{\left(s^{(t)}\right)^7},
    \end{equation*}
    and thus:
    \begin{align}\label{eq:variancefdpart2}
      \sum_{j=0}^{(1-\alpha')M} \mathbb{E}\left[\rho^{(t)}|M^{(t)}= j\right]^2(1-\Pr\left(M^{(t)}= j\right))\Pr\left(M^{(t)}= j\right) &\leq \frac{(1-\alpha')M |\Delta^{(t)}|^2(\kappa^{(t)})^{-2}}{\left(s^{(t)}\right)^7}\nonumber\\
      &\leq (1-\alpha')(\kappa^{(t)})^{-2},
    \end{align}
    where the last passage follows since, by hypothesis, $M\leq s^{(t)}$.

    Let us now consider the values $j>(1-\alpha')M$, we have:
    \begin{align}\label{eq:variancefdpart3}
      &\sum_{j>(1-\alpha')M}^M \mathbb{E}\left[\rho^{(t)}|M^{(t)}= j\right]^2(1-\Pr\left(M^{(t)}= j\right))\Pr\left(M^{(t)}= j\right)&\ \ \ \ \ \ \ \ \ \ \ \ \ \ \ \ \ \ \nonumber \\
      &\ \ \ \ \ \ \ \ \ \ \ \ \ \ \ \ \ \ \leq \alpha'M |\Delta^{(t)}|^2(\kappa^{(t)})^{-2}\left(1- \sum_{j>(1-\alpha')M}^M \Pr\left(M^{(t)}= j\right)\right)\nonumber\\
      &\ \ \ \ \ \ \ \ \ \ \ \ \ \ \ \ \ \ \leq \alpha'M |\Delta^{(t)}|^2(\kappa^{(t)})^{-2}\left(1- \Pr\left(M^{(t)}> (1-\alpha')M\right)\right) \nonumber\\
      &\ \ \ \ \ \ \ \ \ \ \ \ \ \ \ \ \ \ \leq \frac{\alpha'M |\Delta^{(t)}|^2(\kappa^{(t)})^{-2}}{\left(s^{(t)}\right)^7}\nonumber\\
      &\ \ \ \ \ \ \ \ \ \ \ \ \ \ \ \ \ \ \leq \alpha'(\kappa^{(t)})^{-2},
    \end{align}
    where the last passage follows since, by hypothesis, $M\leq s^{(t)}$.

    The theorem follows from composing the upper bounds obtained
    in~\eqref{eq:variancefdpart1},~\eqref{eq:variancefdpart2}
    and~\eqref{eq:variancefdpart3} according to~\eqref{eq:finvarfd}.
  \end{proof}

\subsubsection{Concentration}
We now prove Thm.~\ref{thm:fdconcentration} about \algofd.

\begin{proof}[of Thm.~\ref{thm:fdconcentration}]
  By Chebyshev’s inequality it is sufficient to prove that
  $$\frac{\variance[\rho^{(t)}]}{\varepsilon^2|\Delta^{(t)}|^2} <\delta\enspace.$$
  From Lemma~\ref{thm:fdvariance}, for $M\geq \frac{1}{\sqrt{a'-\alpha}}7\ln s^{(t)}$ we have:
  \begin{align*}
    \mathrm{Var}\left[\rho^{(t)}\right] &\leq
    (\kappa^{(t)})^{-2}|\Delta^{(t)}|\left(\psi_{3,M(1-\alpha'),s^{(t)}}-1 \right) \\
    &+(\kappa^{(t)})^{-2}r^{(t)}\left(\psi_{3,M(1-\alpha'),s^{(t)}}^2\psi_{5,M(1-\alpha'),s^{(t)}}^{-1}-1\right)\\
    &+(\kappa^{(t)})^{-2}2
  \end{align*}
  Let $M'= (1-\alpha')M$. In order to verify that the lemma holds, it is sufficient to impose the following two conditions:
  \begin{description}
    \item[Condition (1)]
      \begin{equation*}
        \frac{\delta}{2} > \frac{(\kappa^{(t)})^{-2}\left(|\Delta^{(t)}|\left(\psi_{3,M(1-\alpha'),s^{(t)}}-1 \right)+2\right)}{\varepsilon^2|\Delta^{(t)}|^2}.
      \end{equation*}
      As by hypothesis $|\Delta^{(t)}|>0$, we can rewrite this condition as:
      \begin{equation*}
        \frac{\delta}{2} > \frac{(\kappa^{(t)})^{-2}\left(\psi_{3,M(1-\alpha'),s^{(t)}}-(\frac{|\Delta^{(t)}|-2}{|\Delta^{(t)}|} \right)}{\varepsilon^2|\Delta^{(t)}|}
        \end{equation*}
        which is verified for:
        \begin{align*}
          M'(M'-1)(M'-2) &> \frac{2s^{(t)}(s^{(t)}-1)(s^{(t)}-2)}{\delta \varepsilon^2|\Delta^{(t)}|(\kappa^{(t)})^{2}+2\frac{|\Delta^{(t)}|-2}{|\Delta^{(t)}|}},\\
          M'&> \sqrt[3]{\frac{2s^{(t)}(s^{(t)}-1)(s^{(t)}-2)}{\delta \varepsilon^2|\Delta^{(t)}|(\kappa^{(t)})^{2}+2\frac{|\Delta^{(t)}|-2}{|\Delta^{(t)}|}}}+2, \\
          M &> (1-\alpha')^{-1}\left(\sqrt[3]{\frac{2s^{(t)}(s^{(t)}-1)(s^{(t)}-2)}{\delta \varepsilon^2|\Delta^{(t)}|(\kappa^{(t)})^{2}+2\frac{|\Delta^{(t)}|-2}{|\Delta^{(t)}|}}}+2 \right).
        \end{align*}
      \item[Condition (2)]
        \begin{equation}
          \frac{\delta}{2} > \frac{(\kappa^{(t)})^{-2}}{\varepsilon^2|\Delta^{(t)}|^2} r^{(t)}\left(\psi_{3,M(1-\alpha'),s^{(t)}}^2\psi_{5,M(1-\alpha'),s^{(t)}}^{-1}-1\right). \label{eq:fdcond2}
        \end{equation}

        As we have:
        \begin{equation*}
          (\kappa^{(t)})^{-2}r^{(t)}\left(\psi_{3,M(1-\alpha'),s^{(t)}}^2\psi_{5,M(1-\alpha'),s^{(t)}}^{-1}-1\right) \leq (\kappa^{(t)})^{-2}r^{(t)}\left(\frac{s^{(t)}}{6M(1-\alpha')}-1\right)
        \end{equation*}
        The condition~\eqref{eq:fdcond2} is verified for:
        \begin{equation*}
          M > \frac{(1-\alpha')^{-1}}{3} \left( \frac{r^{(t)}s^{(t)}}{\delta \varepsilon^2 |\Delta^{(t)}|^2(\kappa^{(t)})^{-2}+2r^{(t)}}\right).
        \end{equation*}
    \end{description}
    The theorem follows.
  \end{proof}

\begin{acknowledgments}
  This work was supported in part by NSF grant IIS-1247581 and NIH grant
  R01-CA180776.
\end{acknowledgments}

\bibliographystyle{abbrvnat}
\bibliography{bibliography_full}

\received{Month Year}{Month Year}{Month Year}
\end{document}